\lstdefinestyle{customc}{
  belowcaptionskip=1\baselineskip,
  breaklines=true,
  frame=L,
  xleftmargin=\parindent,
  language=C,
  showstringspaces=false,
  basicstyle=\footnotesize\ttfamily,
  keywordstyle=\bfseries\color{blue!90!black},
  commentstyle=\itshape\color{purple},
  identifierstyle=\color{black},
  stringstyle=\color{orange},
}
\newcommand{\lctrs}{LCTRS}
\newcommand{\lcstrs}{LCSTRS}
\newcommand{\lctrss}{LCTRSs}
\newcommand{\lcstrss}{LCSTRSs}
\newcommand{\simplify}{(Simplify)}
\newcommand{\induct}{(Induct)}
\newcommand{\case}{(Case)}
\newcommand{\generalize}{(Generalize)}
\newcommand{\alter}{(Alter)}
\newcommand{\delete}{(Delete)}
\newcommand{\hdelete}{($\hs$-Delete)}
\newcommand{\disprove}{(Disprove)}
\newcommand{\semicons}{(Semi-constructor)}
\newcommand{\postulate}{(Postulate)}
\newcommand{\calc}{(Calc)}
\newcommand{\axiom}{(Axiom)}
\newcommand{\expand}{(Expand)}
\newcommand{\Sorts}{\mathcal{S}}
\newcommand{\thSorts}{\Sorts_{theory}}
\newcommand{\Types}{\mathcal{T}}
\newcommand{\thTypes}{\Types_{theory}}
\newcommand{\Sig}{\Sigma}
\newcommand{\thSig}{\Sig_{theory}}
\newcommand{\termsSig}{\Sig_{terms}}
\newcommand{\Var}{\mathcal{V}}
\newcommand{\Vars}[1]{Var(#1)}
\newcommand{\Val}{\mathcal{V}\textit{al}}
\newcommand{\f}{\mathsf{f}}
\newcommand{\nil}{\mathsf{nil}}
\newcommand{\cons}{\mathsf{cons}}
\newcommand{\init}{\mathsf{init}}
\newcommand{\sumfun}{\mathsf{sumfun}}
\newcommand{\map}{\mathsf{map}}
\newcommand{\fold}{\mathsf{fold}}
\newcommand{\append}{\mathsf{app}}
\newcommand{\rev}{\mathsf{rev}}
\newcommand{\recdown}{\mathsf{recdown}}
\newcommand{\tailup}{\mathsf{tailup}}
\newcommand{\Terms}{T(\Sig, \Var)}
\newcommand{\preTerms}{\mathbb{T}}
\newcommand{\myrule}{\ell \to r \ [\varphi]}
\newcommand{\bool}{\mathsf{bool}}
\newcommand{\true}{\top}
\newcommand{\false}{\bot}
\renewcommand{\int}{\mathsf{int}}
\newcommand{\lijst}{\mathsf{list}}
\newcommand{\rules}{\mathcal{R}}
\newcommand{\calcrules}{\rules_{calc}}
\newcommand{\Defined}{\mathcal{D}}
\newcommand{\Cons}{\mathcal{C}}
\newcommand{\semiCons}{\mathcal{SCT}}
\newcommand{\gsemiCons}{\mathcal{SCT}^\emptyset}
\newcommand{\coverset}{\mathcal{C}}
\newcommand{\Q}{\mathcal{Q}}
\newcommand{\reqs}{\mathsf{REQS}}
\newcommand{\eqs}{\mathcal{E}}
\newcommand{\hs}{\mathcal{H}}
\newcommand{\pfst}[2]{\left(#1, #2\right)}
\newcommand{\deduces}[2]{#1 \vdash^{*} #2}
\newcommand{\typeInterpret}[1]{\mathcal{I}_{#1}}
\newcommand{\termInterpret}[1]{[\![ #1 ]\!]}
\newcommand{\rw}{\to_{\rules}}
\newcommand{\rwleft}{\leftarrow_{\rules}}
\newcommand{\rweq}[3]{{#1} \mathrel{\leftrightarrow^{*}_{#3}} {#2}}
\newcommand{\rwseq}[3]{#1 \leftrightarrow_{#3} #2}
\newcommand{\arrztop}[1]{\arrz_{#1,\mathtt{top}}}
\newcommand{\arrzin}[1]{\arrz_{#1,\mathtt{in}}}
\newcommand{\critpairs}[1]{\mathsf{CP}(#1)}
\newcommand{\arrtype}{\to}
\newcommand{\atype}{\sigma}
\newcommand{\btype}{\tau}
\newcommand{\asort}{\iota}
\newcommand{\bsort}{\kappa}
\newcommand{\arrz}{\to}
\newcommand{\afun}{\mathsf{f}}
\newcommand{\bfun}{\mathsf{g}}
\newcommand{\typeof}{\mathit{typeof}}
\newcommand{\strue}{\mathtt{true}}
\newcommand{\sfalse}{\mathtt{false}}
\newcommand{\domain}{\mathit{dom}}
\newcommand{\image}{\mathit{im}}
\newcommand{\avar}{x}
\newcommand{\Pos}{\mathit{Pos}}
\newcommand{\prefix}[1]{\mathsf{[#1]}}
\newcommand{\symb}[1]{\mathsf{#1}}
\newcommand{\arity}{\mathit{ar}}
\newcommand{\alcstrs}{\mathcal{L}}
\theoremstyle{plain} 
\newcommand{\head}{\mathit{head}}
\newcommand{\supterm}{\rhd}
\newcommand{\suptermeq}{\unrhd}
\newcommand{\csucc}[3]{#1 \succ #2\ [#3]}
\newcommand{\csucceq}[3]{#1 \succeq #2\ [#3]}
\newcommand{\csucceqstrong}[3]{#1 \succeq^! #2\ [#3]}
\newcommand{\sterm}{\varsigma}
\newcommand{\tterm}{\uptau}
\newcommand{\hypothesis}{(Hypothesis)}
\newcommand{\eqconwc}[4]{(#1\ ;\ #2 \approx #3\ ;\  #4)}
\newcommand{\eqcon}[5]{\eqconwc{#1}{#2}{#3}{#4}\ [#5]}
\newcommand{\eqconsim}[5]{(#1\ ;\ #2 \simeq #3\ ;\ #4)\ [#5]}
\newcommand{\eqsN}{\eqs_N}
\newcommand{\succeqmul}{\succeq_{mul}}
\newcommand{\succmul}{\succ_{mul}}
\newcommand{\converteither}[6]{\mathrel{\xleftrightarrow[#3;#4;#5]{\{#1,#2\}\ #6}}}
\newcommand{\convert}[5]{\converteither{#1}{#2}{#3}{#4}{#5}{*}}
\newcommand{\convertsingle}[5]{\converteither{#1}{#2}{#3}{#4}{#5}{}}
\newcommand{\bconvertsingle}[5]{\convertsingle{#1}{#2}{#3}{#4}{#5}}
\newcommand{\bconvert}[5]{\convert{#1}{#2}{#3}{#4}{#5}}
\newcommand{\bconvertRsingle}[2]{\mathrel{\xleftrightarrow[\rules]{\{#1;#2\}}}}
\newcommand{\bconvertR}[2]{\mathrel{\xleftrightarrow[\rules]{\{#1;#2\}}\!\!{}^*}}
\newcommand{\samereduct}[3]{#1 \rightarrow_{#3}^*\leftarrow\! #2}
\newcommand{\multiset}[1]{\{\!\{#1\}\!\}}
\newcommand{\peak}[4]{#2 \leftarrow #1 \rightarrow #3\ [#4]}
\newcommand{\brightRp}[1]{\mathrel{\xrightarrow{#1\succ*}}_\rules}
\newcommand{\bleftRp}[1]{\mathrel{\xleftarrow{#1\succ*}}_\rules}
\newcommand{\bconvertRp}[1]{\mathrel{\xleftrightarrow{#1\succ*}}_\rules}
\newcommand{\DP}{\mathtt{DP}}
\newcommand{\Defs}{\mathit{Defs}}
\newcommand{\Pstrong}{\mathcal{P}_{\text{strong}}}
\newcommand{\Pweak}{\mathcal{P}_{\text{weak}}}
\newcommand{\cora}{\textsf{Cora}}
\begin{document}

\title{Bounded Rewriting Induction for \lcstrss}

\author[K.~Hagens]{Kasper Hagens\lmcsorcid{0009-0005-2382-0559}}
\author[C.~Kop]{Cynthia Kop\lmcsorcid{0000-0002-6337-2544}}

\address{Radboud University, Nijmegen}
\email{kasper.hagens@ru.nl, c.kop@cs.ru.nl}  

\begin{abstract}
  \noindent Rewriting Induction (RI) is a method to prove inductive theorems, originating from equational reasoning.
  By using Logically Constrained Simply-typed Term Rewriting Systems\ (\lcstrss) as an intermediate language,
  rewriting induction becomes a tool for program verification, 
  with inductive theorems taking the role of equivalence predicates.
  
  Soundness of RI depends on well-founded induction, and one of the core obstacles for obtaining a practically useful proof system is to find suitable well-founded orderings automatically.
  Using naive approaches, all induction hypotheses must be oriented within the well-founded ordering,
  which leads to very strong termination requirements. This, in turn, severely limits the proof capacity of RI. 
  Here, we introduce \emph{Bounded RI}: an adaption of RI for \lcstrss\ where such termination requirements are minimized.

  Traditionally, RI 
  can be used not only to prove equivalence, but also to establish ground confluence. 
  Moreover, for ground confluent TRSs, RI can be extended to a system for disproving inductive theorems.
  We will show that this is also possible with Bounded RI.
\end{abstract}

\maketitle

\section{Introduction} 
\noindent Rewriting Induction (RI) is a proof system for showing equations to be inductive theorems. 
It was introduced by Reddy~\cite{red:90} 
as a method to validate inductive proof procedures based on Knuth-Bendix completion.
Classically, RI is used in equational reasoning to prove properties of inductively defined mathematical structures
like natural numbers or lists.
For example, one could use RI to 
prove an equation
$\symb{add}(x,y)\approx \symb{add}(y,x)$, expressing commutativity of addition on the natural numbers.  
RI has been extended to constrained rewriting~\cite{fal:kap:12,fuh:kop:nis:17,nak:nis:kus:sak:sak:10}, and recently to higher-order constrained rewriting~\cite{hag:kop:24}. 
These formalisms closely relate to real
programming and therefore have a natural place in the larger toolbox for program verification. 
Programs are represented by term rewriting systems, and equivalence of two functions within a program is modeled by an equation being an inductive theorem.

\paragraph{Why constrained rewriting?}
Using RI for program equivalence somewhat differs from the standard setting in equational reasoning where, for example, the Peano axioms are used to prove statements about the natural numbers. 
In our case, we are not so much interested in proving properties about numbers themselves, but about programs that operate on them.
Of course, we can define functions like $\mathsf{add}$, $\mathsf{mul}$ and
$\mathsf{greater}$, express the Peano axioms as rewrite rules and use this to
define programs on natural numbers.
However, doing so 
studying program equivalence becomes a cumbersome experience that requires both reasoning about the arithmetic and the program definition itself.
Moreover, in practice, we typically want to reason about integers or even bit vectors rather than natural numbers, which requires correspondingly harder arithmetic reasoning.
Intermingling these two different kinds of reasoning makes it hard to scale analysis.
Ideally, we would want arithmetic to be given ``for free'', as it is in real life programming.
With standard term rewriting this is not possible. 

Constrained term rewriting provides a solution here, as it natively supports primitive data structures, such as integers, bit vectors and floating point numbers. 
This makes it possible to distinguish between the program definition (represented by rewrite rules), and underlying data structures with their operators
(represented by distinguished terms with pre-determined semantical interpretations). 
This allows us to shift some of the proof-burden from the rewriting side to the semantical side, where dedicated SMT solvers can be used.

In constrained rewriting, rewrite rules have a shape $s \to t\ [\varphi]$ where the boolean constraint $\varphi$ acts as a guard, in order to manage control flow over primitive data structures.
Here, we will consider Logically Constrained Simply-typed Term Rewriting Systems (\lcstrss), which considers applicative higher-order rewriting (without $\lambda$ abstractions) and first-order constraints~\cite{guo:kop:24}.
In particular, we will build on our earlier work~\cite{hag:kop:24} where we defined RI for \lcstrss\ 
(but do not assume familiarity with this work or other definitions of RI).

\paragraph{Goals}
Our goals are threefold:
\begin{itemize}[label=$\triangleright$]
\item 
\emph{Redefine RI for \lcstrss\ in such a way that we minimize the termination requirements.}
Our primary goal.
Induction proofs in RI require induction hypotheses to be oriented in a well-founded ordering.
This has the potential to give very strong termination requirements -- which, as we will see,
is not necessary.  By adapting the definition of RI, we can significantly reduce these termination
requirements and thereby make (automatic) proof search easier.
\item 
\emph{Using RI as a method for proving ground confluence.}
For first-order unconstrained rewriting it has been shown that RI can be combined with critical pairs to obtain a method for proving ground confluence~\cite{aot:toy:16}. 
Here, we extend this result to \lcstrss.    
\item
\emph{Disproving equations.}
For first-order, ground confluent \lctrss, RI can also be used to prove that equations are not inductive theorems~\cite{fuh:kop:nis:17},
but the higher-order proof system in~\cite{hag:kop:24} does not yet support this.
Here, we extend this result to \lcstrss.    
\end{itemize}

\paragraph{Termination requirements}
The name \emph{Rewriting Induction} refers to the principle that for a terminating rewrite system $\rules$, the reduction relation $\rw^+$ defines a well-founded order on the set of all terms, and therefore can be used for proofs by well-founded induction.
In many cases, however, we will need a well-founded order $\succ$ which is strictly larger than $\rw^+$.

In particular, the role of induction hypotheses in RI is taken by equations, which,
when applied, must be oriented w.r.t.~$\succ$.
That is, we 
can only
use an induction hypothesis $s \approx t$ if $s \succ t$ or $t \succ s$ holds.
Consequently, termination of $\rules$ itself is not enough, since equations are not usually orientable by \( \rw^+ \).
Instead, we for instance might let \( \succ \ = \ \to^+_{\rules \cup \{s \to t\}}
\), or in the case of multiple induction hypotheses, orient them as rewrite rules,
collect them into a set $\hs$, and use \( \succ \ = \ \to^+_{\rules \cup \hs} \).  
However, doing so leaves us with an
obligation to show termination of $\rules \cup \hs$.
Even if \( \rules \) is known to be terminating, it may not be easy or even
possible to prove the same for \( \rules \cup \hs \)
(think for instance of an induction hypothesis \( \symb{add}(x,y) \approx \symb{add}(y,x) \), which is not orientable in either direction).
In such a situation a RI proof might get stuck.  

As already observed by Reddy~\cite{red:90}, we do not necessarily need every induction hypothesis to be oriented, so long as we guarantee that an induction rule $s \to t$ is only applied to terms $\succ$-smaller than $s$. 
For this, it is not required to choose the well-founded ordering $\succ \ = \ \to^+_{\rules \cup \hs}$. 
Reddy proposed to use modulo rewriting, to build a well-founded $\succ$ which may not need to contain all induction rules.  
This approach was investigated by Aoto, who introduced several extensions of RI for first-order unconstrained rewriting~\cite{aot:06,aot:08a,aot:08b}. 
Here, we will follow a strategy along the same idea: by redefining RI we construct a well-founded relation $\succ$ during the RI process, aiming to keep it as small as possible.

\paragraph{Paper setup and contributions}
In \autoref{sec:preliminaries} we recap \lcstrss, equations and inductive theorems.  The following sections present our contributions:
\begin{itemize}[label=$\triangleright$]
\item In \autoref{sec:RI} we introduce Bounded RI for \lcstrss.
  This system builds on the one introduced in~\cite{hag:kop:24} -- and is thus designed for \emph{higher-order}, \emph{constrained} systems -- but provides a more semantic approach and strictly contains it.
  Contributions include:
  \begin{itemize}
  \item We introduce the notion of \emph{bounding pair} $(\succ,\succeq)$, providing the fundamental ingredient by which we express the ordering requirements for induction proofs.
  \item We replace equations by \emph{equation contexts}, containing the extra information of two \emph{bounding terms} which define an upper bound for applying an induction hypothesis. 
    More specifically, the bounding terms provide us with a way to keep track of terms to be in used in the ordering requirements (along the equation we are proving) --
    instead of orienting the induction hypothesis themselves -- yielding milder termination requirements.
  \item In contrast to~\cite{hag:kop:24}, we also include derivation rules to deduce non-equivalence (\autoref{subsec:completeness}) for ground confluent \lcstrss. This is thus far only achieved for first-order definitions of RI~\cite{fuh:kop:nis:17}.
  \end{itemize}
\item In \autoref{sec:proofs}, we prove soundness and completeness of Bounded RI.
  Apart from its necessity for the results of \autoref{sec:RI} to be meaningful, there are two additional contributions:
  \begin{itemize}
  \item We show that Bounded RI can be used to prove that equations are \emph{bounded ground convertible}, which is strictly stronger than the
    property of ground convertibility that was used in~\cite{hag:kop:24}, and which explicitly relates to the bounding pair $(\succ, \succeq)$.
  \item We structure our soundness proof in a very modular way, by focusing on two properties that all but one of our deduction
    rules satisfy.  This makes it possible to 
    easily add new derivation rules to the existing system without having to redo the full proof.
  \end{itemize}
\item In \autoref{sec:GroundConfluence} we show that Bounded RI can be used to prove ground confluence of an \lcstrs. This is a known result, but only for first-order systems without constraints~\cite{aot:toy:16}.
  \begin{itemize}
  \item We recap the definition of first-order critical pairs for \lctrss\ from~\cite{sch:mid:23}, extend it to \lcstrss\ and prove the Critical Peak Lemma.
    This is a first critical peak / pair definition for higher-order constrained systems.
  \item We introduce the Ground Critical Peak Theorem, and show how this allows us to use RI for proving ground confluence.
    This follows the idea of~\cite{aot:toy:16}, but makes important adaptations to work well with the new RI method.
  \end{itemize}
\item In \autoref{sec:howToFindOrdering} we discuss two strategies to construct a suitable well-founded ordering.
  In particular, we can define the ordering as a reduction relation \( \to^+_{\rules \cup \Q} \) and briefly discuss
  strategies to choose the set $\Q$.  We also show that we can use a version of the dependency pair framework directly
  to obtain more liberal requirements.
\item In \autoref{sec:newDedEasy} we illustrate how the two soundness properties -- \emph{Preserving Bounds} (\autoref{def:boundPreserve})
  and \emph{Base Soundness Property} (\autoref{def:BaseSound}) -- can be exploited to obtain new deduction rules,
  by extending Bounded RI with some practically useful deduction rules.
\item We have implemented all methods in this work in our tool \texttt{Cora}, to allow for human-guided equivalence proofs
  with fully automatic termination and SMT implication checks.
  In \autoref{sec:implementation} we discuss our implementation.
\end{itemize}

We relate our contributions to existing work in \autoref{sec:RelatedWork}, and conclude in \autoref{sec:conclusionFuture}.

\section{Preliminaries}\label{sec:preliminaries}

\subsection{Logically Constrained Simply Typed Rewriting Systems}
We will recap \lcstrss~\cite{guo:kop:24}, 
a higher-order rewriting formalism with built-in support for data structures such as integers and booleans
(or in fact any theory like bitvectors, floating point numbers or integer arrays) as well as logical constraints to model control flow. 
This considers \emph{applicative} higher-order term rewriting (without $\lambda$) and \emph{first-order} constraints. 

\paragraph{Types}
Assume given a set of sorts (base types) $\Sorts$; the set $\Types$ of types is defined by the
grammar
\(
    \Types
    ::=
    \Sorts
    \mid
    \Types \arrtype \Types
\).
Here, $\arrtype$ is right-associative, so all types may be written as
$\mathit{type}_1 \arrtype \dots \arrtype \mathit{type}_m \arrtype \mathit{sort}$
with $m \geq 0$.

We also assume given a subset $\thSorts \subseteq \Sorts$ of \emph{theory sorts} (e.g., $\int$ and
$\bool$), and define the \emph{theory types} by the grammar
\(
    \thTypes
    ::=
    \thSorts
    \mid
    \thSorts \to \thTypes
\).
Each theory sort $\iota \in \thSorts$ is associated with a non-empty interpretation set $\typeInterpret{\iota}$ (e.g.,
$\typeInterpret{\int} = \mathbb{Z}$, the set of all integers). 
We define $\typeInterpret{\asort \arrtype \atype}$ as the set of all total functions from $\typeInterpret{\asort}$ to $\typeInterpret{\atype}$.

\paragraph{Terms}

We assume given a signature $\Sig$ of \emph{function symbols} and a disjoint set $\Var$ of
variables, and a function $\typeof$ from $\Sig \cup \Var$ to $\Types$; we require that there
are infinitely many variables of all types.
The set of terms $\Terms$ over $\Sig$ and $\Var$ are the expressions in
$\preTerms$ -- defined by the grammar
\(
    \preTerms
    ::=
    \Sig
    \mid
    \Var
    \mid
    \preTerms\ \preTerms
\) -- that are \emph{well-typed}:
$a :: \typeof(a)$ for $a \in \Sig \cup \Var$, and
if $s :: \atype \arrtype \btype$ and $t :: \atype$ then $s\ t :: \btype$.
Application is left-associative, which allows all terms to be written in a form
$a\ t_1 \cdots t_n$ with $a \in  \Sig \cup \Var$ and $n \geq 0$.
Writing $t =a\ t_1 \cdots t_n$, we define $\head(t)=a$.
For a term $t$, let $\Vars{t}$ be the set of variables occurring in $t$.
For multiple terms $t_1,\dots,t_n$, let $\Vars{t_1,\dots,t_n}$ denote
$\Vars{t_1} \cup \dots \cup \Vars{t_n}$.
A term $t$ is \emph{ground} if $\Vars{t} = \emptyset$.
It is \emph{linear} if no variable occurs more than once in $t$.

We assume that $\Sig$ is the disjoint union $\thSig \uplus \termsSig$, where
$\typeof(\afun) \in \thTypes$ for all $\afun \in \thSig$.
We use infix notation for the binary symbols, or use $\prefix{\afun}$ for prefix or partially
applied notation (e.g., $\prefix{+}\ x\ y$ and $x + y$ are the same).
Each $\afun \in \thSig$ has an interpretation $\termInterpret{\afun} \in \typeInterpret{\typeof(\afun)}$.
For example, a theory symbol $*::\int \to \int \to \int$ may be interpreted as
multiplication on $\mathbb{Z}$.
Symbols in $\termsSig$ do not have an interpretation.
Values are theory symbols of base type, i.e.
\(
    \Val
    =
    \{
        v \in \thSig
        \mid
        \typeof(v) \in \thSorts
    \}
\).
We assume there is exactly one value for each element of $\typeInterpret{\asort}$
($\asort \in \thSorts$).
Elements of $T(\thSig, \Var)$ are called \emph{theory terms}.  For \emph{ground} theory terms, we define
$\termInterpret{s\ t} = \termInterpret{s}(\termInterpret{t})$, thus mapping each ground term of type
$\atype$ to an element of $\typeInterpret{\atype}$.
We fix a theory sort $\bool$ with $\typeInterpret{\bool} = \{\true,\false\}$.
A \emph{constraint} is a theory term $s$ of type $\bool$, such that $\typeof(\avar) \in \thSorts$
for all $x \in \Vars{s}$.

\begin{exa}\label{ex:theorysig}
Throughout this text we will always use $\thSorts = \{\int,\bool\}$ and
$\thSig = \{+,-,*,<,\le,>,\geq,=,\wedge,\vee,\neg,\strue,\sfalse\} \cup
\{\symb{n} \mid n \in \mathbb{Z}\}$, with $+,-,*
::\int \to \int \to \int$, $<,\le,>,\geq,=::\int \to \int \to \bool$, 
$\wedge, \vee :: \bool \to \bool \to \bool$,
$\neg :: \bool \to \bool$, 
$\strue,\sfalse :: \bool$
and $\symb{n} :: \int$.  
We let $\typeInterpret{\int} = \mathbb{Z}$, $\typeInterpret{\bool} =
\{\true,\false\}$ and interpret all symbols as expected.
The values are $\strue,\sfalse$ and all $\mathsf{n}$.  Theory terms are for instance
$x + \symb{3}$, $\strue$ and $\symb{7} * \symb{0}$.  The latter two are ground.  We have
$\termInterpret{\symb{7} * \symb{0}} = 0$.
The theory term $x > \mathsf{0}$ is a constraint, but the theory term $(f\ x) > \mathsf{0}$ with
$f \in \Var$ of type $\int \arrtype \int$ is not (since $\typeof(f) \notin \thSorts$),
nor is $\prefix{>}\ \mathsf{0} :: \int \arrtype \bool$ (since constraints must have type $\bool$).
\end{exa}
\begin{rem}
Most programming languages have pre-defined (non-recursive) data structures and operators, e.g. the integers with a multiplication operator $*$.  
This makes it, for instance, possible to define the factorial function without first having to define multiplication.  
The same is true for \lcstrss : $\thSig$ contains all pre-defined operators, including constants. 
\end{rem}

\paragraph{Substitutions}

A substitution is a type-preserving mapping $\gamma: \Var \to \Terms$.
The domain of a substitution is defined as $\domain(\gamma) = \{x \in \Var \mid \gamma(x) \ne x\}$,
and its image as $\image(\gamma) = \{\gamma(x) \mid x \in \domain(\gamma)\}$.
A substitution on finite domain $\{\avar_1,\dots,\avar_n\}$ is often denoted $[\avar_1:=s_1,\dots,
\avar_n:=s_n]$.
A substitution $\gamma$ is extended to a function $s \mapsto s\gamma$ on terms by placewise
substituting variables in the term by their image:
(\textbf{i}) $t \gamma = t$ if $t \in \Sig$,
(\textbf{ii}) $t \gamma = \gamma(t)$ if $t \in \Var$, and
(\textbf{iii}) $(t_0\ t_1) \gamma = (t_0 \gamma)\ (t_1 \gamma)$.
If $M \subseteq \Terms$ then $\gamma(M)$ denotes 
$\{t \gamma \mid t \in M \}$.

A \emph{ground substitution} is a substitution \( \gamma \) such that for all
variables \( x \) in its domain, \( \gamma(x) \) is a ground term.
A \emph{unifier} of terms $s,t$ is a substitution $\gamma$ such that $s\gamma = t\gamma$; a
\emph{most general unifier} or \emph{mgu} is a unifier $\gamma$ such that all other unifiers are
instances of $\gamma$.  For unifiable terms, an mgu always exists, and is unique
modulo variable renaming.

We say that a substitution \( \gamma \) \emph{respects} a constraint \( \varphi \) if
$\gamma(\Vars{\varphi}) \subseteq \Val$ and $\termInterpret{\varphi\gamma} = \true$.
We say that a constraint \( \varphi \) is \emph{satisfiable} if there exists a substitution
\( \gamma \) that respects \( \varphi \). 
It is \emph{valid} if \( \termInterpret{\varphi\gamma} =
\top \) for all substitutions \( \gamma \) such that \( \gamma(\Vars{\varphi}) \subseteq \Val\).

\paragraph{Contexts and subterms}
Let $\square_1,\dots,\square_n$ be fresh, typed constants ($n \geq 1$).
A context $C[\square_1,\dots,\square_n]$ (or just: $C$) is a term in $T(\Sig \cup \{\square_1, \ldots, \square_n\}, \Var)$
in which each $\square_i$ occurs exactly once.
(Note that $\square_i$ may occur at the head of an application.)
The term obtained from $C$ by replacing each $\square_i$ by a term $t_i$ of the same type is denoted by $C[t_1,\dots,t_n]$.

We say that $t$ is a \emph{subterm} of $s$, notation $s \suptermeq t$,
if there is a context $C[\square]$ such that $s = C[t]$.
We say that $t$ is a \emph{strict} subterm of $s$, notation $s \supterm t$, if $s \suptermeq t$ and $s \neq t$.

\paragraph{Rewrite rules}
A rule is an expression $\myrule$.
Here $\ell$ and $r$ are terms of the same type, $\ell$ has a form $\afun\ \ell_1 \cdots \ell_k$ with
$\afun \in \Sig$ and $k \geq 0$, $\varphi$ is a constraint and
\( \Vars{r} \subseteq \Vars{\ell} \cup \Vars{\varphi} \).
If $\varphi = \strue$, we may denote the rule as just $\ell \to r$.

Fixing a signature $\Sig$, we define the set of \emph{calculation rules} as:
\[
\calcrules 
=
\left\{ 
\afun\ x_1 \cdots x_m \to y \ [ y = \afun\ x_1 \cdots x_m ]\ 
\middle|
\begin{array}{l}
\afun \in \thSig \setminus \Val,
\text{all}\ x_i\ \text{and}\ y \in \Var,\\
\typeof(\afun) = \asort_1 \arrtype \dots \arrtype \asort_m \arrtype \bsort
\end{array}
\right\}
\]

We furthermore assume given a set of rules \( \rules \) that satisfies the following properties:
\begin{itemize}[label=$\triangleright$]
\item for all \( \myrule \in \rules \): \( \ell \) is not a theory term
  (such rules are contained in \( \calcrules \))
\item for all \( \afun\ \ell_1 \cdots \ell_k \to r \ [\varphi],\ \bfun\ \ell_1' \cdots \ell_n' \to r'\ [\psi] \in \rules \cup \calcrules \):
  if \( \afun = \bfun \) then \( k = n \)
\end{itemize}

The latter restriction blocks us for instance from having both a rule $\symb{append}\ \symb{nil} \to \symb{id}$
and a rule $\symb{append}\ (\symb{cons}\ x\ y)\ z \to \symb{cons}\ x\ (\symb{append}\ y\ z)$.  While such rules
would normally be allowed in higher-order rewriting, we need to impose this limitation for the notion of
\emph{quasi-reductivity} to make sense, as discussed in \cite{hag:kop:24}.
This does not really limit expressivity, since we can 
pad both sides 
with variables, e.g.,
replacing the first rule above by $\symb{append}\ \symb{nil}\ x \to \symb{id}\ x$.

For a fixed signature and rules $\rules$ as above, we define three classes of function symbols:
\begin{itemize}[label=$\triangleright$]
\item elements of $\Defined = \{ \afun \in \Sig \mid$ there is a rule $\afun\ \ell_1 \cdots \ell_k \to r\ [\varphi] \in \rules \}$ are called \emph{defined symbols};
\item elements of $\Cons = \Val \cup (\termsSig \setminus \Defined)$ are called \emph{constructors};
\item elements of $\Sig_{calc}=\thSig \setminus \Val$ are called \emph{calculation symbols}.
\end{itemize}

For every defined or calculation symbol 
$\f :: \atype_1 \arrtype \dots \arrtype \atype_m \arrtype \asort$ with $\asort \in \Sorts$,  
we let $\arity(\afun)$ be the unique number $0 \le k \le m$ 
such that for every rule of the form \( \afun\ \ell_1 \cdots \ell_k \to r\ [\varphi] \) in
\( \rules \cup \calcrules \) we have \( \arity(\afun) = k \).
(By the restrictions above, such $k$ always exists.)
For all constructors $\afun \in \Cons$, we define $\arity(\afun) = \infty$.

We say that a substitution $\gamma$ respects $\myrule$ if it respects $\varphi$.

\begin{rem}
Note that it is theoretically possible for a rule to have a calculation symbol at the head of
its left-hand side, so a function symbol \emph{can} be both a calculation symbol and a defined
symbol.  However, this would rarely occur in practice -- it is allowed because methods like
rewriting induction might create ``rules'' of unusual forms to be used in a termination proof, and
it does not harm our method or the difficulty of the proofs to admit such rules.
\end{rem}

\paragraph{Reduction relation}
For a fixed signature $\Sig$ and set of rules $\rules$ as discussed above, the reduction relation
$\rw$ is defined by:
\[
C[l \gamma]
\rw
C[r \gamma]
\text{ if }
\myrule \in
\rules \cup \calcrules
\text{ and }
\gamma
\text{ respects }
\varphi
\]
Note that by definition of context, reductions may occur at the head of an application.
For example, if $\symb{append}\ \symb{nil} \to \symb{id} \in \rules$, then we could reduce
$\symb{append}\ \symb{nil}\ s\ \rw \symb{id}\ s$.
We say that $s$ \emph{has normal form} $t$ if $s \rw^* t$ and $t$ cannot be reduced.
If we want to emphasize that reduction $s \rw t$ is performed with a rule in $\calcrules$, we write $s \to_{\calcrules} t$ instead. 

\paragraph{\lcstrs}
A \emph{Logically Constrained Simply-typed Term Rewriting System (\lcstrs)} is a pair
$(\Terms,\arrz_\rules)$ generated by $(\Sorts,\thSorts,\termsSig,\thSig,\Var,\typeof,\typeInterpret{},\termInterpret{\cdot},\rules)$.
We often refer to an \lcstrs\ by $\alcstrs = (\Sig, \rules)$, or just $\rules$, leaving the rest implicit.
\begin{exa}\label{example:runningExample}
Let $\rules$ consist of the following rules 
\[
\begin{array}{lrcll}
\textbf{(R1)} & \recdown\ f\ n\ i\ a & \to & a & [i<n] \\
\textbf{(R2)} & \recdown\ f\ n\ i\ a & \to &
  f\ i\ (\recdown\ f\ n\ (i-\symb{1})\ a) & [i \ge n] \\
\textbf{(R3)} & \tailup\ f\ i\ m\ a & \to & a & [i > m] \\
\textbf{(R4)} & \tailup\ f\ i\ m\ a & \to &
  \tailup\ f\ (i+\symb{1})\ m\ (f\ i\ a) & [i \le m] \\
\end{array}
\]
We have     
$\Sorts = \thSorts = \{\int, \bool\}$,
$\termsSig = 
\{
\recdown,\ \tailup :: 
(\int \to \int \to \int)
\to 
\int 
\to
\int 
\to
\int 
\to 
\int
\}$ 
and   
$\thSig$ the same as in \autoref{ex:theorysig}. 
We have $\Sig_{calc}=\{+, -, *, <, \le, >,  \ge, =, \wedge, \vee\}$,
$\Defined = \termsSig$ and $\Cons = \Val = \{\strue,\sfalse\} \cup \{\symb{n} \mid n \in \mathbb{Z}\}$.  
The substitution $\gamma = [n:=\symb{0},\ i:=\symb{1}]$ respects \textbf{(R2)},
and induces a reduction
\[
\begin{array}{llll}
\recdown\ f\ \symb{0}\ \symb{1}\ a
  & \to_{\rules} &
f\ \symb{1}\ (\recdown\ f\ \symb{0}\ (\symb{1}-\symb{1})\ a)
  & \to_{\calcrules} \\
f\ \symb{1}\ (\recdown\ f\ \symb{0}\ \symb{0}\ a)
  & \to_{\rules} &
f\ \symb{1}\ 
(
    f\ \symb{0}\
    (
    \recdown\ f\ \symb{0}\ (\symb{0}\symb{-1})\ a
    )
)
  & \to_{\calcrules} \\
f\ \symb{1}\ 
(
    f\ \symb{0}\
    (
    \recdown\ f\ \symb{0}\ (\symb{-1})\ a
    )
)
  & \to_{\rules} &
f\ \symb{1}\ 
(
    f\ \symb{0}\
    a
)
\end{array}
\] 
It is easy to check that
$
(\tailup\ f\ n\ i\ a) \gamma 
=
\tailup\ f\ \symb{0}\ \symb{1}\ a
$
also reduces to 
$ 
f\ \symb{1}\ (f\ \symb{0}\ a)
$.

Intuitively, $\recdown$ and $\tailup$ define recursors which capture a class of simple programs that compute a return value via a recursive or tail-recursive procedure.
This considers programs using a loop index $i$, which is decreased or increased by $1$ at each recursive call, until $i$ is below the lower bound $n$ or above the upper bound $m$.
Then, the computation terminates and returns 
$a$.
With this, we can for example represent the two programs below
\hfill
\begin{minipage}[l]{0.45\textwidth}
\lstset{style=customc}
\begin{lstlisting}[language=C, style=customc]
int factRec(int x){
   if (x >= 1) 
      return(x*factRec(x-1));
   else 
      return 1; } 
\end{lstlisting}
\end{minipage}
\hfill
\begin{minipage}[l]{0.5\textwidth}
\lstset{style=customc}
\begin{lstlisting}
int factTail(int x){
   int a = 1, i = 1;
   while (i<=x){ a = i*a; i = i+1; }
   return a; }
\end{lstlisting}
\end{minipage}
\\
by introducing rewrite rules $
\symb{factRec}\ x
\to 
\recdown\ \prefix{*}\ \symb{1}\ x\ \symb{1}
$ 
and  
$
\symb{factTail}\ x
\to 
\tailup\ \prefix{*}\ \symb{1}\ x\ \symb{1}
$. 
Specifically, $\recdown\ \prefix{*}\ n\ i\ a$ computes
$(\prod_{k=n}^i k) \cdot a$ and $\tailup\ \prefix{*}\ j\ m\ b$ computes $(\prod_{k=j}^m k) \cdot b$. 
Hence, all ground instances of $\recdown\ \prefix{*}\ n\ i\ a$ and $\tailup\ \prefix{*}\ n\ i\ a$ produce the same result. 
In \autoref{sec:boundedRI} we will prove this with Bounded RI 
for arbitrary 
$f :: \int \to \int \to \int$. 
\end{exa}

We consider several properties an \lcstrs\ $\alcstrs = (\Sig, \rules)$ can exhibit:
\begin{itemize}[label = $\triangleright$]
\item $\alcstrs$ is \emph{terminating} if there is no infinite reduction sequence $s_0 \rw s_1 \rw s_2 \rw \dots$ for any $s_0 \in \Terms$
\item $\alcstrs$ is \emph{weakly normalising} if every term has at least one normal form \\
  (note that termination implies weak normalisation, but not the other way around)
\item $\alcstrs$ is \emph{confluent} if for all $s,t,q \in \Terms$: if $s \rw^*t$ and $s \rw^*q$ then there is some $w \in \Terms$ such that $t\rw^*w$ and $q\rw^*w$
\item $\alcstrs$ is \emph{ground confluent} if it is confluent on ground terms:
  for all $s,t,q \in T(\Sigma, \emptyset)$: if $s \rw^*t$ and $s \rw^*q$ then there is some $w \in T(\Sigma, \emptyset)$ such that $t\rw^*w$ and $q\rw^*w$
\item $\alcstrs$ \emph{has unique normal forms} if for any term $s$ there is at most one term $t$ in normal form so that
  $s \rw^* t$ \\
  (note that confluence implies having unique normal forms, but not the other way around)
\end{itemize}
Viewing \lcstrss\ as programs, uniqueness of \emph{ground} normal forms -- which is implied by ground confluence -- essentially expresses that
output is produced deterministically.  If, moreover, the \lcstrs\ is weakly normalising, every term computes a unique result.

\subsection{Rewriting induction prerequisites}

Finally, we will recap the notions of equations, inductive theorems, and the restrictions that an
\lcstrs\ must satisfy to be able to apply rewriting induction.
For this, we follow the definitions from~\cite{hag:kop:24}.

\paragraph{Equations and inductive theorems}
An \emph{equation} is a triple $s \approx t \ [\varphi]$ with
$\typeof(s) = \typeof(t)$ 
and $\varphi$ a constraint.,
If the constraint $\varphi$ is $\strue$, we will simply write the equation as $s \approx t$.
A substitution $\gamma$ respects $s \approx t \ [\varphi]$ if $\gamma$ respects $\varphi$
and $\Vars{s,t} \subseteq \domain(\gamma)$.

An equation $s \approx t \ [\varphi]$ is an \emph{inductive theorem} (aka \emph{ground convertible})
if $\rweq{s \gamma}{t \gamma}{\rules}$ for every ground substitution $\gamma$ that respects
$s \approx t \ [\varphi]$.
Here $\leftrightarrow_\rules \ =\  \rightarrow_\rules \cup \leftarrow_\rules$,
and $\leftrightarrow_\rules^*$ is its transitive, reflexive closure. 

For a set of equations $\eqs$, we define $\rwseq{}{}{\eqs}$ as follows:
\[
    \rwseq{C[s \gamma]}{ C[t \gamma]}{\eqs}
    \text{ if }
    s \approx t\ [\varphi] \in \eqs
    \text{ or }
    t \approx s\ [\varphi] \in \eqs
    \text{ and }
    \gamma
    \text{ respects }
    \varphi
\]
Note that $\eqs$ consists of inductive theorems if and only if $\rwseq{}{}{\eqs}\ \subseteq\  \leftrightarrow^*_{\rules}$ holds on ground terms (i.e. when restricting to ground substitutions $\gamma$ and ground contexts $C$). 

\begin{rem}
In higher-order rewriting, there are multiple ways to define inductive theorems.
In particular one could choose to take into account extensionality~\cite{aot:yam:toy:04}, which equates two functions if they are equal on all their arguments
(for example, in such a definition $\prefix{+}\ \symb{0} \approx \prefix{*}\ \symb{1}\ [\strue]$ would be an inductive theorem,
which is not the case in our definition).
Here, we will not further motivate our choice, as this is a separate topic that we have extensively discussed in~\cite{hag:kop:24}.
One may also consider \emph{extensibility}~\cite{aot:yam:chi:11}, which limits inductive theorems to those equations that are still ground convertible in ``any reasonable extension'' of $\rules$.
We will briefly discuss this in \autoref{sec:conclusionFuture}.
\end{rem}

\begin{exa}\label{example:equation}
The \lcstrs\ from \autoref{example:runningExample} admits an equation $\recdown\ f\ n\ i\ a \linebreak
\approx 
\tailup \ f\ n\ i\ a$. 
Since it has constraint $\strue$, any substitution on domain $\supseteq \{f,n,i,a\}$ respects it. 
In \autoref{sec:boundedRI} we will prove that this equation is an inductive theorem, meaning that $\rweq{(\recdown\ f\ n\ i\ a
) \gamma}{(\tailup \ f\ n\ i\ a)\gamma}{\rules}$ for any ground substitution $\gamma$. 
\end{exa}
We will limit our interest to \emph{quasi-reductive} \lcstrss\ (defined below), which is needed to guarantee soundness of RI. 
Intuitively, this property expresses that pattern matching on ground terms is exhaustive (i.e. there are no missing reduction cases).
For example, the rewrite system $\rules = \{\textbf{(R1)}, \textbf{(R2)}\}$ is quasi-reductive because $i<n$ and $i \ge n$ together cover all ground instances of $\recdown\ f\ n\ i\ a$. But if, for example, we replace $\textbf{(R2)}$ by $\recdown\ f\ n \ i\ a \to f\ i\ (\recdown\ f\ n\ (i-\symb{1})\ a)\ [i>n]$ then it is not, as we are missing all ground reduction cases for $i=n$ (for example $\recdown\ \prefix{*}\ \symb{0} \ \symb{0}\ \symb{0}$ does not reduce anymore).

For first-order \lctrss, quasi-reductivity is achieved by requiring that there are no other ground normal forms than the ground constructor terms $T(\Cons, \emptyset)$. 
For higher-order \lcstrss, however, this definition does not work as we can have ground normal forms with partially applied defined symbols
(for example, $\recdown\ \prefix{+}$).
We therefore use the higher-order generalization of the notion of constructor terms that was introduced in~\cite{hag:kop:24}.

\paragraph{Semi-constructor terms}
Let $\alcstrs=(\Sig,\rules)$ be some \lcstrs.
The semi-constructor terms over $\alcstrs$, notation $\semiCons_\alcstrs$, are defined by
\begin{enumerate*}[(i). ]
\item $\Var \subseteq \semiCons_\alcstrs$
\item if $\f \in \Sig$ with $\f :: \atype_1 \arrtype \dots \arrtype \atype_m \arrtype \asort$, $\asort \in \Sorts$
and $s_1 :: \atype_1, \ldots, s_n :: \atype_n \in \mathcal{SCT}_\alcstrs$ with $n \leq m$ and $n < \arity(\f)$, then
$\f\ s_1 \cdots s_n \in \mathcal{SCT}_\alcstrs$. 
\end{enumerate*}

Semi-constructor terms are always normal forms. 
Furthermore, as $\arity(\afun) = \infty$ for every 
constructor $\afun$, the constructor terms $T(\Cons, \Var)$ are contained in $\semiCons_\alcstrs$.
However, also terms with partial applications, such as $\recdown\ \prefix{+}\ \symb{3}$, are included.
The set $\gsemiCons_\alcstrs$ refers to \emph{ground} semi-constructor terms, built without (i).
A \emph{ground semi-constructor substitution} (abbreviated to \emph{gsc substitution}) is a
substitution such that $\image(\gamma)\subseteq \gsemiCons_{\alcstrs}$. 

\paragraph{Quasi-reductivity}
An \lcstrs\ $\alcstrs=(\Sig, \rules)$ is quasi-reductive if
for every $t \in T(\Sig, \emptyset)$ we have $t \in \semiCons_{\alcstrs}^{\emptyset}$ or $t$ reduces with
$\to_{\rules}$.  Put differently, the only irreducible ground terms are semi-constructor terms.
Weak normalization and quasi-reductivity together ensure that every ground term reduces to a semi-constructor term.
Note that, if $s_1, \ldots, s_n$ are ground normal forms and $\afun \in \Sigma$, then $\afun\ s_1 \cdots s_n$ is a ground normal form if and only if $n < \arity(\afun)$.

We say that an \lcstrs\ has \emph{inextensible theory sorts} if the only
constructors with a type \( \atype_1 \arrtype \dots \arrtype \atype_n \arrtype
\asort \) with \( \asort \in \thSorts \) are values (in which case \( n = 0 \)).
Hence, it is not for instance allowed to have a constructor $\symb{error} ::
\int$.  While we do not explicitly limit interest to \lcstrss\ with inextensible
theory sorts, it is satisfied in all our examples, because in practice it is
very difficult to achieve quasi-reductivity without this property.

\section{Rewriting Induction}\label{sec:RI}
RI was introduced by Reddy~\cite{red:90} in the year 1990 as a deduction system for proving inductive theorems, using unconstrained first-order term rewriting systems.
Since then, many variations on the system have appeared (e.g., \cite{aot:06,aot:08a,aot:08b,fal:kap:12,fuh:kop:nis:17,hag:kop:24}). 
All are based on well-founded induction, using some well-founded relation $\succ$. 
Depending on how $\succ$ is being constructed these versions of RI can be categorized into two approaches
\begin{itemize}[label = $\triangleright$]
\item 
  The first, being used in~\cite{red:90,fal:kap:12,fuh:kop:nis:17,hag:kop:24},
  employs a fixed strategy to construct a
  terminating rewrite system $A \supseteq \rules$ and then chooses $\succ \ = \ \to_{A}^+$.  
\item
  The second, used in~\cite{aot:06,aot:08a,aot:08b}, employs a well-founded relation $\succ$ that satisfies
  certain requirements (like monotonicity and stability, but also ground totality), and constructs the proof
  in a more targeted way.
  This relation may either be fixed beforehand (e.g., the lexicographic path ordering), or constructed
  during or after the proof, as the proof process essentially accumulates termination requirements.
\end{itemize}
In practice, the former approach leads to quite heavy termination requirements because it forces $A$
to include all (oriented) induction hypotheses (represented by rewrite rules), while the latter is
designed to keep termination requirements as mild as possible; for example by 
orienting some requirements
using a relation $\succeq$ rather than 
$\succ$.
However, the latter approach imposes more bureaucracy, since derivation rules rely on several steps
being done at once -- for example, by reasoning \emph{modulo} the set of induction hypotheses.
This makes it quite hard to use especially when 
$\succ$ is not fixed beforehand but
rather constructed on the fly.

In \autoref{sec:boundedRI}, we will introduce Bounded RI, which aims to combine the best of both
worlds. 
We reduce termination requirements by using a pair $(\succ,\succeq)$, which may either be fixed in advance, or
constructed as part of the proof process.
In particular, we do not orient induction hypotheses themselves: we only require that a particular instance of the induction hypothesis is allowed to be applied whenever it is strictly dominated by some efficiently chosen bounding term, being associated to the particular equation under consideration.

Importantly, we do not impose the ground totality requirement (which would be extremely restrictive
in higher-order rewriting!), and thus allow for $\succ$ to for instance be a relation $(\to_{A}
\cup \supterm)^+$, or built by a construction based on dependency pairs (see~\autoref{sec:howToFindOrdering}).

First, we will define the properties that our pair $(\succ,\succeq)$ should satisfy:

\begin{defi}[Ordering and Bounding Pair]\label{defi:ord-pair}\label{def:boundpair}
For a fixed set \( \mathcal{A} \),
an \emph{ordering pair} is a pair \( (\succ,\succeq) \) of a \emph{well-founded partial
ordering} \( \succ \) on \( \mathcal{A} \) (that is, \( \succ \) is a transitive, anti-symmetric,
irreflexive and well-founded relation) and a \emph{quasi-order} \( \succeq \) (that is,
\( \succeq \) is a transitive and reflexive relation) such that \( \succ\;\subseteq\;\succeq\), and
both \( \succ \cdot \succeq\;\subseteq\;\succ \) and \( \succeq \cdot \succ\;\subseteq\;\succ \)
(that is, for \( a,b,c \in \mathcal{A} \), \( a \succ b \succeq c \) and \( a \succeq b \succ c \)
imply \( a \succ c \)).
A \emph{bounding pair} for a fixed \lcstrs\ with rules \( \rules \) is an ordering pair
\( (\succ,\succeq) \) 
on \( T(\Sig,\emptyset) \) such that \( s \succeq t \) whenever \( s \rw t \)
or \( s \supterm t \).

We extend an ordering pair \( (\succ,\succeq) \) to non-ground terms with a constraint as follows.
A substitution $\gamma$ respects $s,t,\psi$ if $\Vars{s,t} \subseteq \domain(\gamma)$
and $\gamma$ respects $\psi$.  Then:
\[
\begin{array}{rcl}
\csucc{s}{t}{\psi} & \Longleftrightarrow &
s \gamma \succ t \gamma
\text{ for all ground substitutions }\gamma\text{ that respect }s,t,\psi\\
\csucceq{s}{t}{\psi} & \Longleftrightarrow &
s \gamma \succeq t \gamma
\text{ for all ground substitutions }\gamma\text{ that respect }s,t,\psi
\end{array}
\]
\end{defi}

\subsection{Bounded Rewriting Induction}\label{sec:boundedRI}
Traditionally, RI is a deduction system on proof states, which are pairs 
$\pfst{\eqs}{\hs}$, where (in the existing literature),
$\eqs$ is a set of equations, describing all proof goals, and $\hs$ is the set
of induction hypotheses that have been assumed.  At the start $\eqs$ consists of
all equations that we want to prove to be inductive theorems, and
$\hs = \emptyset$.
With a deduction rule we may transform a proof state $(\eqs, \hs)$ into another
proof state $(\eqs', \hs')$, denoted $(\eqs, \hs) \vdash (\eqs', \hs')$. 
We write $\vdash^*$ for the reflexive-transitive closure of $\vdash$.  
Soundness of RI is guaranteed by the following principle:
``If $(\eqs, \emptyset) \vdash^* (\emptyset, \hs)$ for some set $\hs$, then every equation in $\eqs$ is an inductive theorem''.
Thus, we aim to make $\eqs$ empty.

In bounded RI, we deviate in one respect: instead of letting $\eqs$ be a set of equations, we will
use a set of \emph{equation contexts}.  This new notion lets us avoid the bureaucracy of
combining steps by keeping track of two \emph{bounding terms}, which together really dictate a bound: we are only allowed to apply induction hypotheses below this bound.

\begin{defi}[Equation context]\label{defi:eq-cont}
Let a bounding pair \( (\succ,\succeq) \) be given.
Let $\bullet$ be a fresh symbol, and define $\bullet \succ s$ and $\bullet \succeq s$ for all
$s \in \Terms$, and also $\bullet \succeq \bullet$.
An \emph{equation context} $\eqcon{\sterm}{s}{t}{\tterm}{\psi}$
is a tuple of two elements $\sterm,\tterm \in \Terms \cup
\{\bullet\}$, two terms $s,t$ and a constraint $\psi$. 
We write $\eqconsim{\sterm}{s}{t}{\tterm}{\psi}$ (so with $\simeq$ instead of $\approx$) to
denote either an equation context $\eqcon{\sterm}{s}{t}{\tterm}{\psi}$ or an
equation context $\eqcon{\tterm}{t}{s}{\sterm}{\psi}$.
A substitution \( \gamma \) \emph{respects} an equation context
\( \eqcon{\sterm}{s}{t}{\tterm}{\psi} \) if \( \gamma \) respects \( \psi \)
and $\Vars{\sterm,s,t,\tterm} \subseteq \domain(\gamma)$.
\end{defi}

An equation context couples an equation with a bound on the induction: we
implicitly use the induction hypothesis: ``all ground instances of an equation in \( \hs \)
that are strictly smaller than the current instance of \( \sterm \approx \tterm\ [\psi] \) are
convertible''.  For example, in a proof that two instances of the factorial function are equivalent,
we may encounter an induction hypothesis \( \symb{fact}_1\ m \approx \symb{fact}_2\ m\ [m \geq 0]
\), and an equation context \( \eqcon{\symb{fact}_1\ n}{\symb{fact}_1\ k}{\symb{fact}_2\ k}{
\symb{fact}_2\ n}{n > 0 \wedge n = k + 1} \).  
For an appropriately chosen \( \succ \) we have both
\( \csucc{\symb{fact}_1\ n}{\symb{fact}_1\ k}{n > 0 \wedge n = k + 1} \) and
\( \csucc{\symb{fact}_2\ n}{\symb{fact}_2\ k}{n > 0 \wedge n = k + 1} \).
We can also see that \( n > \symb{0} \wedge n = k + \symb{1} \) implies
\( k \geq \symb{0} \).
Hence, we can apply the induction hypothesis.

\begin{defi}[Proof state]
A proof state is a tuple $(\eqs, \hs)$  with $\eqs$ a set of equation contexts and $\hs$ a set of equations. 
\end{defi}  

From \autoref{defi:eq-cont} we can see that $\bullet$ behaves as an infinity-term with respect to $\succ$ and $\succeq$. 
As expressed below in \autoref{theorem:soundnessRI}: when using bounded RI to prove a set of equations, we put them into a set $\eqs$ of equation contexts using infinite bounds $\sterm = \tterm = \bullet$. 
This is not a problem, because we always start with the proof state $(\eqs, \emptyset)$, so there are no induction hypotheses available yet. 
Once we add an induction hypothesis to the proof state, the bounds will be
correctly lowered, as dictated by \autoref{fig:boundedRIrules}\induct.


\newcommand{\NAMERULE}[1]{\begin{flushleft}\textbf{#1}\end{flushleft}}
\newcommand{\DEDUCRULE}[3]{
  \[
  \AxiomC{$#1$}
  \RightLabel{\scriptsize\quad \(\begin{aligned}&#2\end{aligned}\)}
  \UnaryInfC{$#3$}
  \DisplayProof
  \]}

\newcommand{\RULEsimplify}{
  \DEDUCRULE{(\eqs \uplus \{\eqconsim{\sterm}{C[\ell\delta]}{t}{\tterm}{\psi}\}, \hs)}
            {\ell \to r\ [\varphi] \in \rules \cup \calcrules \text{ and } \psi \models^\delta \varphi}
            {(\eqs \cup \{\eqcon{\sterm}{C[r \delta]}{t}{\tterm}{\psi}\}, \hs)}
}

\newcommand{\RULEcase}{
  \DEDUCRULE{(\eqs \uplus \{\eqcon{\sterm}{s}{t}{\tterm}{\psi}\}, \hs)}
            {\coverset \text{ a cover set of } s \approx t\ [\psi] \\ &\text{(see \autoref{def:coverset})}}
            {(\eqs \cup \{\eqcon{\sterm\delta}{s\delta}{t\delta}{\tterm\delta}{\psi\delta \wedge \varphi} \mid (\delta,\varphi) \in \coverset\}, \hs)}
}

\newcommand{\RULEdelete}{
  \DEDUCRULE{(\eqs \uplus \{\eqcon{\sterm}{s}{t}{\tterm}{\psi}\}, \hs)}
            {\psi \text{ unsatisfiable, or }s=t}
            {(\eqs, \hs)}
}

\newcommand{\RULEsemicons}{
  \DEDUCRULE{(\eqs \uplus \{ \eqcon{\sterm}{f\ s_1 \cdots s_n}{f\ t_1 \cdots t_n}{\tterm}{\psi} \}, \hs)}
            {n > 0 \text{ and } (f \in \Var\text{ or }n<\arity(f))}
            {(\eqs \cup \left\{ \eqcon{\sterm}{s_i}{t_i}{\tterm}{\psi} \mid 1 \le i \le n \right\}, \hs)}
}

\newcommand{\RULEinductbounded}{
  \DEDUCRULE{(\eqs \uplus \{\eqcon{\sterm}{s}{t}{\tterm}{\psi}\}, \hs)}
            {}
            {(\eqs \cup \{\eqcon{s}{s}{t}{t}{\psi}\}, \hs \cup \{s \approx t\ [\psi]\})}
}

\newcommand{\RULEinductgeneral}{
  \DEDUCRULE{(\eqs \uplus \{\eqcon{\sterm}{s}{t}{\tterm}{\psi}\}, \hs)}
            {\multiset{\sterm,\tterm} \succeqmul \multiset{s,t}\ [\psi]}
            {(\eqs \cup \{\eqcon{s}{s}{t}{t}{\psi}\}, \hs \cup \{s \approx t\ [\psi]\})}
}

\newcommand{\RULEhypothesisbounded}{
  \DEDUCRULE{(\eqs \uplus \{\eqconsim{\sterm}{C[\ell\delta]}{t}{\tterm}{\psi}\}, \hs)}
            {\ell \simeq r\ [\varphi]\in \hs \text{ and } \psi \models^\delta \varphi \text{ and} \\
              &\csucc{\sterm}{\ell\delta}{\psi} \text{ and } \csucc{\sterm}{r\delta}{\psi} \text{ and } \csucceq{\sterm}{C[r\delta]}{\psi}}
            {(\eqs \cup \{\eqcon{\sterm}{C[r \delta]}{t}{\tterm}{\psi}\}, \hs)}
}

\newcommand{\RULEhypothesisgeneral}{
  \DEDUCRULE{(\eqs \uplus \{\eqconsim{\sterm}{C[\ell\delta]}{t}{\tterm}{\psi}\}, \hs)}
            {\ell \simeq r\ [\varphi]\in \hs \text{ and } \psi \models^\delta \varphi \text{ and} \\
              &\multiset{\sterm,\tterm} \succmul \multiset{\ell\delta,r\delta}\ [\psi]}
            {(\eqs \cup \{\eqcon{\sterm}{C[r \delta]}{t}{\tterm}{\psi}\}, \hs)}
}

\newcommand{\RULEhdelete}{
  \DEDUCRULE{(\eqs \uplus \{\eqconsim{\sterm}{C[\ell\delta]}{C[r\delta]}{\tterm}{\psi}\}, \hs)}
            {\ell \simeq r\ [\varphi]\in \hs \text{ and } \psi \models^\delta \varphi \text{ and} \\
              &\csucc{\sterm}{\ell\delta}{\psi} \text{ or } \csucc{\tterm}{r\delta}{\psi}}
            {(\eqs, \hs)}
}

\newcommand{\RULEgeneralizealter}{
  \DEDUCRULE{(\eqs \uplus \{\eqcon{\sterm}{s}{t}{\tterm}{\psi}\}, \hs)}
            {\eqcon{\sterm^\prime}{s^\prime}{t^\prime}{\tterm^\prime}{\varphi} \text{ generalizes/alters } \\
             &\eqcon{\sterm}{s}{t}{\tterm}{\psi} \text{ (see \autoref{def:generalize})}}
            {(\eqs \cup \{\eqcon{\sterm^\prime}{s^\prime}{t^\prime}{\tterm^\prime}{\varphi}\}, \hs)}
}

\newcommand{\RULEgeneralizealterbounded}{
  \DEDUCRULE{(\eqs \uplus \{\eqcon{\sterm}{s}{t}{\tterm}{\psi}\}, \hs)}
            {\eqcon{\sterm^\prime}{s^\prime}{t^\prime}{\tterm^\prime}{\psi^\prime} \text{ generalizes/alters } \eqcon{\sterm}{s}{t}{\tterm}{\psi} \\
             &\text{(see \autoref{def:generalize}), and }
              \csucceq{\sterm^\prime}{s^\prime}{\psi^\prime} \text{ and } \csucceq{\tterm^\prime}{t^\prime}{\psi^\prime}}
            {(\eqs \cup \{\eqcon{\sterm^\prime}{s^\prime}{t^\prime}{\tterm^\prime}{\psi^\prime}\}, \hs)}
}

\newcommand{\RULEpostulate}{
  \DEDUCRULE{(\eqs,\hs)}
            {}
            {(\eqs \cup \{\eqcon{\bullet}{s}{t}{\bullet}{\psi}\},\hs)}
}

\begin{figure}[tp] 
\caption{Derivation rules for Bounded Rewriting Induction, given a bounding pair $(\succ,\succeq)$.}
\label{fig:boundedRIrules}
\noindent\fbox{\begin{minipage}{\textwidth}
\NAMERULE{\simplify}\RULEsimplify\smallskip
\NAMERULE{\case}\RULEcase\smallskip
\NAMERULE{\delete}\RULEdelete\smallskip
\NAMERULE{\semicons}\RULEsemicons\smallskip
\NAMERULE{\induct}\RULEinductbounded\smallskip
\NAMERULE{\hypothesis}\RULEhypothesisbounded\smallskip
\NAMERULE{\hdelete}\RULEhdelete\smallskip
\NAMERULE{\generalize/\alter}\RULEgeneralizealterbounded\smallskip
\NAMERULE{\postulate}\RULEpostulate
\end{minipage}}
\end{figure}


\newcommand{\theoremsoundnessRI}{%
  Let $\alcstrs$ be a weakly normalizing, quasi-reductive \lcstrs;
  $\mathcal{A}$ a set of equations;
  and let \( \eqs \) be the set of equation contexts
  \( \{ \eqcon{\bullet}{s}{t}{\bullet}{\psi} \mid s \approx t\ [\psi] \in \mathcal{A} \} \).
  Let $(\succ, \succeq)$ be some bounding pair, such that $(\eqs,\emptyset) \vdash^* 
  (\emptyset, \hs)$, for some $\hs$ using the derivation rules in \autoref{fig:boundedRIrules}.
  Then every equation in $\mathcal{A}$ is an inductive theorem.
}
\begin{thm}[Soundness of Bounded RI]\label{theorem:soundnessRI}
\theoremsoundnessRI
\end{thm}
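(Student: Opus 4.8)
The plan is to prove soundness via well-founded induction on ground instances, following the standard RI paradigm but carefully accounting for the bounding pair $(\succ,\succeq)$ and the equation-context machinery. The central claim I would establish is an invariant: if $(\eqs_0,\emptyset) \vdash^* (\eqs,\hs)$ and every equation in $\hs$ is "bounded ground convertible below its bound" (a notion I will make precise), then every equation context in $\eqs$ is bounded ground convertible in an appropriate sense; when $\eqs = \emptyset$ this collapses to the statement that the original equations in $\mathcal{A}$ are inductive theorems. Concretely, for an equation context $\eqcon{\sterm}{s}{t}{\tterm}{\psi}$, the target property is: for every ground substitution $\gamma$ respecting it, $s\gamma \leftrightarrow^*_\rules t\gamma$, \emph{and moreover} this conversion can be carried out using only rewrite steps and hypothesis-applications whose terms are $\succ$-dominated by $\sterm\gamma$ (resp. $\tterm\gamma$). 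The two bounding terms are exactly what lets us apply induction hypotheses: by \autoref{def:boundpair}, $\rw$ and $\supterm$ are contained in $\succeq$, so rewriting and taking subterms can only maintain the bound, and the side conditions of \hypothesis\ ($\csucc{\sterm}{\ell\delta}{\psi}$ etc.) guarantee that hypothesis applications strictly decrease.

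First I would set up the well-founded induction. Since $\alcstrs$ is weakly normalizing and quasi-reductive, every ground term reduces to a ground semi-constructor term; this is what makes the base of the induction meaningful. The induction is on ground instances of the starting equations $s\approx t\ [\psi] \in \mathcal{A}$, ordered by the well-founded relation $\succ$ applied to (say) the multiset $\multiset{s\gamma, t\gamma}$ — or more precisely, I would do induction along the derivation, but the genuinely inductive content is: when \induct\ fires and adds $s\approx t\ [\psi]$ to $\hs$, any later use of this hypothesis via \hypothesis\ is on an instance strictly $\succ$-below the current bound, hence (by a secondary induction, or by folding everything into one well-founded induction on $\succ$) we may assume the hypothesis holds. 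I would phrase this as: proceed by induction on the proof-state sequence, maintaining the invariant that $\hs$ consists of equations each of which is "conditionally bounded-convertible" relative to all strictly smaller instances, and that $\eqs$ consists of equation contexts that are bounded-convertible assuming $\hs$.

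The bulk of the work is then a rule-by-rule verification that each derivation rule in \autoref{fig:boundedRIrules} preserves the invariant. The excerpt signals that the authors isolate two abstract properties — \emph{Preserving Bounds} (\autoref{def:boundPreserve}) and the \emph{Base Soundness Property} (\autoref{def:BaseSound}) — satisfied by all rules but one, so I would mirror that: show each of \simplify, \case, \delete, \semicons, \hypothesis, \hdelete, \generalize/\alter, \postulate\ satisfies these two properties, and then handle \induct\ separately. For \simplify\ and \hypothesis, the key point is that $C[\ell\delta] \to C[r\delta]$ is a single $\rw$-step (or $\hs$-step), so $\csucceq{\sterm}{C[r\delta]}{\psi}$ follows from $\sterm \succeq C[\ell\delta]$ (inductively maintained) plus $\succeq \cdot \succeq \subseteq \succeq$ — and for \hypothesis\ the explicit side conditions $\csucc{\sterm}{\ell\delta}{\psi}$, $\csucc{\sterm}{r\delta}{\psi}$, $\csucceq{\sterm}{C[r\delta]}{\psi}$ are exactly what is needed to both apply the induction hypothesis (strictness) and keep the bound. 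For \case, quasi-reductivity guarantees the cover set $\coverset$ is exhaustive on ground instances, so any ground instance of the parent context is a ground instance of some child; the bounds descend by instantiation, using that $\csucc{}{}{}$ and $\csucceq{}{}{}$ are stable under substitution by definition. For \semicons, one uses $s\gamma \suptermeq s_i\gamma$ hence $\sterm\gamma \succeq s_i\gamma$, plus that $f\ s_1\cdots s_n$ with $f\in\Var$ or $n<\arity(f)$ forces the heads to match in any conversion of the ground instances (here one needs that such terms are semi-constructor-headed / the head cannot be rewritten away — this is the subtle part and where ground confluence-style reasoning or the structure of semi-constructor terms enters). For \postulate, adding an equation context with $\bullet$ bounds is harmless since no convertibility obligation is discharged (the new context must itself later be proved). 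For \generalize/\alter\ and \hdelete, the side conditions $\csucceq{\sterm'}{s'}{\psi'}$ etc. directly give bound preservation, and the convertibility transfers along the generalization/alteration relation (\autoref{def:generalize}).

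The main obstacle I anticipate is the \induct\ rule combined with making the well-founded induction airtight. When \induct\ fires on $\eqcon{\sterm}{s}{t}{\tterm}{\psi}$, it produces $\eqcon{s}{s}{t}{t}{\psi}$ — so the new bounds are $s$ and $t$ themselves — and adds $s\approx t\ [\psi]$ to $\hs$. The delicate point is circularity: the hypothesis $s\approx t\ [\psi]$ will later be used (via \hypothesis) to help prove $\eqcon{s}{s}{t}{t}{\psi}$ itself, and we must ensure this is only ever on \emph{strictly smaller} ground instances, so that a genuine well-founded induction on $\succ$ goes through rather than a vicious circle. This forces the proof to be organized as a single well-founded induction over $(\mathcal{A}, \succ)$ applied to ground instances of starting equations (or a suitable lifting), with the invariant stated so that "the hypothesis is available at instance $u$" means "available strictly below $u$", and \hypothesis's side condition $\csucc{\sterm}{\ell\delta}{\psi}$ is precisely the bridge. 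Getting the quantifier structure right here — what is fixed, what varies, over which ordered set the induction runs — is where I expect to spend most of the care; the individual rule checks, while numerous, are largely routine applications of stability of $\succ/\succeq$ under substitution, the compatibility laws $\succ\cdot\succeq\subseteq\succ$ and $\succeq\cdot\succ\subseteq\succ$, and $\rw,\supterm\;\subseteq\;\succeq$.
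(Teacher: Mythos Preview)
Your plan captures the right ingredients --- the two abstract properties (Preserving Bounds, Base Soundness), the special treatment of \induct, and the well-founded induction on multisets of ground instances --- and will succeed. Two points of comparison with the paper are worth flagging.

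First, organization: the paper runs the induction \emph{backwards} along the derivation (induction on $N-i$ for $(\eqs_1,\emptyset)\vdash\cdots\vdash(\eqs_N,\hs_N)$), not forwards. The key device is a three-parameter convertibility relation $\bconvert{a}{b}{\rules}{\hs}{\eqs}$ that permits steps not only by $\rules$ and by $\hs$ (at strictly smaller instances, via $\succmul$), but also by equation contexts in the \emph{fixed final} set $\eqs_N$ (at $\succeqmul$-smaller bounds). The base case $i=N$ is then trivial (each equation context converts to itself in one step using itself), and the inductive step for non-\induct\ rules is exactly your Base Soundness check: the removed equation context converts using the equations in $\eqs'$, which by the outer IH convert using $\eqs_N$. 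The \induct\ case is isolated in a separate lemma (\autoref{lem:multisetInduction}) that downgrades $\hs\cup\{s\approx t\ [\psi]\}$ to $\hs$ by a nested well-founded induction on $\multiset{s\gamma,t\gamma}$ under $\succmul$. Your forward formulation (``$\hs$ consists of conditionally valid hypotheses, $\eqs$ is convertible assuming $\hs$'') can be made to work, but stating the invariant precisely is delicate because $\hs$ acquires equations before they are proved; the backward direction sidesteps this entirely.

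Second, a small correction on \semicons: for \emph{soundness}, no head-matching or confluence-style reasoning is needed. The Base Soundness obligation is to show that the \emph{removed} context $\eqcon{\sterm}{f\,s_1\cdots s_n}{f\,t_1\cdots t_n}{\tterm}{\psi}$ converts using the \emph{new} contexts $\eqcon{\sterm}{s_i}{t_i}{\tterm}{\psi}\in\eqs'$, which is a plain congruence argument (replace one argument at a time). The ``heads must match'' reasoning you describe is what is needed for \emph{completeness} of \semicons\ (\autoref{lem:completeprop}), and there it does require ground confluence and the restriction $f\in\Sig$.
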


\begin{rem}
To avoid confusion we will clarify what we mean by soundness. 
Those familiar with equational reasoning may expect a statement expressing the implication 
\[
(\star)\quad \quad 
\text{RI-provability} \Longrightarrow \text{equality in every possible model of }\rules
\]
This is \emph{not} what \autoref{theorem:soundnessRI} refers to. 
Our notion of soundness does not consider all possible models, but rather one
particular model: we fix our semantics to inductive theorems, i.e.
$\rules$-ground convertibility. 
Thus, 
\emph{soundness} should be interpreted 
as
the following implication 
\[
\text{RI-provability} \Longrightarrow \rules\text{-ground convertibility}
\]
Those interested in a more semantical discussion of RI -- and in particular the 
implication  $(\star)$ -- may consider the recent
publication~\cite{aot:nis:sch:24}, which studies this question for \lctrss.
\end{rem}

\autoref{theorem:soundnessRI} will be proved in \autoref{sec:proofs}. 
The deduction rules for Bounded RI are provided in
\autoref{fig:boundedRIrules}, and will be explained in detail in \autoref{sec:example}.
While they assume a fixed bounding pair as given, in practice we can leave this
pair undecided, use the proof system to collect requirements about it,
and then select a suitable bounding pair at the end.

First, we introduce a particular notation, used by some of the deduction rules in \autoref{fig:boundedRIrules}.

\begin{defi}[$\models^\delta$]\label{par:ConstrainedReductions}
Let $\delta$ be a substitution and $\varphi$, $\psi$ be constraints. 
We write \(\psi \models^\delta \varphi\)
if $\delta(\Vars{\varphi}) \subseteq \Val \cup \Vars{\psi}$, and
$\psi \Longrightarrow \varphi\delta$ is a valid constraint.
\end{defi} 

Hence, if \( \psi \models^\delta \varphi \) and \( \gamma \) is a substitution that respects
\( \psi \), then \( \gamma \) also respects \( \varphi\delta \).
This is for example used in the deduction rule \simplify\ to ensure that every ground instance of an equation
$C[\ell\delta] \approx t\ [\psi]$ can be reduced with a rewrite rule
$\ell \to r\ [\varphi]$.

\begin{exa}
Consider the equation $\recdown\ f\ i'\ i\ a \approx t\ [i'=i+1]$ and rewrite rule 
$\recdown\ f\ n\ i\ a \to a \ [i<n]$, and let $\delta=[n:=i']$.
We have $(i<n) \delta=i<i'$ and $(i'=i+1) \models^\delta (i<n)$ holds, because
\begin{itemize}
[label = $\triangleright$]
\item $\delta(\Vars{i<n})=\delta(\{i,n\})=\{i,i'\}\subseteq \Vars{i'=i+1}=\{i, i'\}$
\item $(i'=i+1) \Longrightarrow (i<n) \delta$ is valid 
\end{itemize}   
\end{exa} 

As we will see in \autoref{sec:bounded-eq-ct}, the proof system in \autoref{fig:boundedRIrules} satisfies a property called \emph{Preserving Bounds}, meaning that it only produces a specific kind of equation contexts:

\begin{defi}[Bounded equation context]\label{defi:bounded-eq-cont}
A \emph{bounded equation context} is an equation context $\eqcon{\sterm}{s}{t}{\tterm}{\psi}$ such that both
$\csucceq{\sterm}{s}{\psi}$ and $\csucceq{\tterm}{t}{\psi}$.
\end{defi}
This property plays an important role in the soundness proof of Bounded RI,
since it allows us to avoid ordering requirements that would otherwise
need to be added to \autoref{fig:boundedRIrules}.

In many practical cases, we will also consider an even stronger restriction:
\begin{defi}[Strongly bounded equation context]\label{defi:strongly-bounded-eq-cont}
An equation context $\eqcon{\sterm}{s}{t}{\tterm}{\psi}$ is \emph{strongly
bounded} if either
$\sterm = s$ or $\csucc{\sterm}{s}{\psi}$, and also either
$\tterm = t$ or $\csucc{\tterm}{t}{\psi}$.
\end{defi}

Clearly, all equation contexts at the start of a RI deduction are strongly
bounded, as all initial equation contexts have \( \sterm = \tterm = \bullet \).
At first it may not be clear why \emph{strongly bounded} is a strictly stronger
property than \emph{bounded}.  To illustrate this, consider the following example
with a well-founded ordering $\succ$ whose restriction to integers is defined as
\[
n \succ m 
\Longleftrightarrow
n>m \wedge n \ge \symb{0}
\]
Then $n \succeq m\ [n \ge m \wedge n \ge \symb{0}]$ holds, but
we have neither $n=m$ nor $n \succ m\ [n \ge m \wedge n \ge \symb{0}]$.

In practice, we highly benefit from strong boundedness because many of
the required inequalities in \autoref{fig:boundedRIrules} can be established by
a straightforward syntactic check: whenever $\sterm \ne s$, we immediately
conclude $\sterm \succ s\ [\psi]$.  While strong boundedness is \emph{not}
necessarily preserved by the rules of \autoref{fig:boundedRIrules}, we can
maintain it by a suitable choice of reduction pair and careful application of
the \hypothesis, \generalize, and \alter\ rules.

\subsection{Explanation by example}\label{sec:example}

We will now elaborate on the rules of \autoref{fig:boundedRIrules}, and
illustrate their use through examples.  We will also introduce the definitions
of \emph{cover set} and \emph{generalizing} / \emph{altering}, which are used in
\autoref{fig:boundedRIrules}.
To start, we will consider the \lcstrs\ from \autoref{example:runningExample} applied on the
equation $\recdown\ f\ n\ i\ a \approx \tailup \ f\ n\ i\ a$.

Following \autoref{theorem:soundnessRI}, we aim to show that there is a set $\hs$ such that 
\[
    (\eqs_1, \emptyset)
    \vdash^* 
    (\emptyset, \hs)
    \ 
    \text{ with }
    \eqs_1 
    := 
    \{
    \eqcon{\bullet}{\recdown\ f\ n\ i\ a}{\tailup \ f\ n\ i\ a}{\bullet}{\strue}
    \} 
\]
We do \emph{not} fully choose the bounding pair \( (\succ,\succeq) \) in advance.  Rather, we will
use the proof process to accumulate requirements on the well-founded ordering \( \succ \) to be
used. 
However, we precommit to a bounding pair
where
  \( \succeq \) is the reflexive closure of \( \succ \),
and
  \( s \succ t \) whenever \( s \rw t \) or \( s \supterm t \).
With these assumptions, it is easy to maintain the property that all equation
contexts are \emph{strongly bounded}.  We will ensure this invariant throughout
our RI proof.

\paragraph*{\induct} 
We introduce the following way to start an induction proof.

\RULEinductbounded

\smallskip
Here, two things happen.
First, 
the current equation is added to the set
\( \hs \) of induction hypotheses, making it available for later application of
\hypothesis\ or \hdelete.
Second, the bounding terms \( \sterm,\tterm \) are replaced by \( s,t \).
This ensures that, when an induction hypothesis is applied, it is only on equations that are
strictly smaller than \( s \approx t\ [\psi] \).

\medskip
In our running example, we use \induct\ to obtain \( (\eqs_1, \emptyset) \vdash (\eqs_2, \hs_2) \) where
\[
\begin{array}{rcl}
\eqs_2 & = & \{ \eqcon{\sterm_2}{\recdown\ f\ n\ i\ a}{\tailup \ f\ n\ i\ a}{\tterm_2}{\strue} \} \\
\hs_2 & = & \{ \recdown\ f\ n\ i\ a \approx \tailup \ f\ n\ i\ a \} \\
\end{array}
\]
We will retain \( \sterm_2 = \recdown\ f\ n\ i\ a \) and \( \tterm_2 = \tailup \ f\ n\ i\ a \), for later reference.

\paragraph{\case}
Comparing the equation in $\eqs_2$ to $\rules$, which of the rules should we apply? 
As we will see in \simplify, this requires information about how the variables $i,n$ in the equation are instantiated;   
in this example the reduction behavior depends on which of the two cases, $i<n$ or $i \ge n$, holds. 
This is where \case\ can help us, splitting an equation into multiple cases.
\pagebreak
Of course, we have to make sure that the cases together cover the original equation.     

\begin{defi}[Cover set]\label{def:coverset}
A cover set of an equation \( s \approx t\ [\psi] \) is a set \( \coverset \)
of pairs \( (\delta,\varphi) \), with $\delta$ a substitution and $\varphi$ a
constraint, such that for every gsc substitution \( \gamma \) respecting
\( s \approx t\ [\psi] \), there exists \( (\delta,\varphi) \in \coverset \)
and a substitution $\sigma$ such that
\( \gamma(x) = \delta(x)\sigma \) for all \( x \in \domain(\gamma) \), and
\( \sigma \) respects \( \psi\delta \wedge \varphi \).
(Hence, \( s\gamma \approx t\gamma\ [\psi\gamma] \) is an instance of
\( s\delta \approx t\delta\ [\psi\delta \wedge \varphi] \).)
\end{defi}

Now, the deduction rule \case\ reads as follows:
\RULEcase

\smallskip
Continuing our example, we observe the only gsc terms of type \( \int \)
are values (since we have inextensible theory sorts).  Hence, $\coverset 
= 
\{
    ([], i < n),\ 
    ([], i \ge n)
\}$
is a cover set of $\recdown\ f\ n\ i\ a \approx \tailup \ f\ n\ i\ a$.
Using \case, we obtain  $(\eqs_2,\hs_2) \vdash (\eqs_3,\hs_2) $ with  
\[
\eqs_3 = 
\left\{
\begin{aligned}
\eqcon{\sterm_2}{\recdown\ f\ n\ i\ a}{\tailup \ f\ n\ i\ a}{\tterm_2}{i < n} \\
\eqcon{\sterm_2}{\recdown\ f\ n\ i\ a}{\tailup \ f\ n\ i\ a}{\tterm_2}{i \ge n} \\
\end{aligned}
\right\}
\]
Note that the bounding terms \( \sterm_2,\tterm_2 \) are unchanged because the
substitutions in the cover set were both empty.  This is, however,  not true in general
(and we will see an alternative situation later in this section).
Strong boundedness is still satisfied, by \( \sterm = s \) and \( \tterm = t \).

\paragraph{\simplify}
Next,
we can use a rule $\myrule \in \rules \cup \calcrules$ to rewrite an equation $C[\ell \delta] \simeq t\ [\psi]$.

\RULEsimplify
The requirement $\psi \models^\delta \varphi$ makes sure that the $\delta$-instance of $\myrule$ is actually applicable.  
The bounding terms are not affected by the reduction.

\medskip
Continuing our example,
the first equation in $\eqs_3$ has constraint $i<n$, so we apply \simplify\ on both sides, using $\textbf{(R1)}$ and $\textbf{(R3)}$.
For the second equation, we also apply \simplify\ to both sides, using $\textbf{(R2)}$ and $\textbf{(R4)}$. 
We obtain $\deduces{(\eqs_3,\hs_2)}{(\eqs_4,\hs_2)}$ with
\[
\eqs_4 = 
\left\{
\begin{array}{cl}
\ \ \ \eqconwc{\sterm_2}{a}{a}{\tterm_2} & [i < n] \\
\eqconwc{\sterm_2}{f\ i\ (\recdown\ f\ n\ (i-\symb{1})\ a)}{\tailup \ f\ (n+\symb{1})\ i\ (f\ n\ a)}{\tterm_2} & [i \ge n] \\
\end{array}
\right\}
\]
Note that for both equations \( \eqcon{\sterm_2}{s}{t}{\tterm_2}{\psi} \) we now have the property:
\( \csucc{\sterm_2}{s}{\psi} \) and \( \csucc{\tterm_2}{t}{\psi} \), since
\( \rw \) is included in \( \succ \).  Thus, strong boundedness is preserved.

\paragraph{\delete}
The following deduction rule allows us to remove an equation that has an unsatisfiable constraint, or whose two sides are syntactically equal. 

\RULEdelete

\medskip
In our example, we use \delete\ we obtain 
$(\eqs_4,\hs_2) \vdash (\eqs_5, \hs_2)$ with 
\[
\eqs_5 = \left\{
\eqcon{\sterm_2}{f\ i\ (\recdown\ f\ n\ (i-\symb{1})\ a)}{\tailup \ f\ (n+\symb{1})\ i\ (f\ n\ a)}{\tterm_2}{i \ge n}
\right\}
\]

\paragraph{\alter}
It is often useful to rewrite an equation context to another that might be syntactically
different, but has the same ground semi-constructor instances.
Indeed, this may even be necessary, for instance to support the application of a
rewrite rule through \simplify.  This is supported by the deduction rule \alter, which relies on
\pagebreak
the following definition:

\begin{defi}\label{def:generalize}
We say that an equation context \( \eqcon{\sterm'}{s'}{t'}{\tterm'}{\psi^\prime} \)
\begin{itemize}[label = $\triangleright$]
\item \emph{generalizes} \( \eqcon{\sterm}{s}{t}{\tterm}{\psi} \) if for every gsc substitution $\gamma$ that respects
  \( \eqcon{\sterm}{s}{t}{\tterm}{\psi} \) there is a substitution $\delta$ that respects \( \eqcon{\sterm'}{s'}{t'}{\tterm'}{\psi^\prime} \)
  such that $s\gamma = s'\delta$ and $t\gamma = t'\delta$, and $\sterm\gamma \succeq \sterm'\delta$ and $\tterm\gamma \succeq \tterm'\delta$.
\item \emph{alters} \( \eqcon{\sterm}{s}{t}{\tterm}{\psi} \) if both
  \begin{itemize} 
  \item \( \eqcon{\sterm'}{s'}{t'}{\tterm'}{\psi^\prime} \) generalizes \( \eqcon{\sterm}{s}{t}{\tterm}{\psi} \), and
  \item \( \eqcon{\sterm}{s}{t}{\tterm}{\psi} \) generalizes \( \eqcon{\sterm'}{s'}{t'}{\tterm'}{\psi^\prime} \).
  \end{itemize}     
\end{itemize} 
\end{defi}

\noindent
Now, \alter\ is defined as follows
\newcommand{\RULEalterbounded}{
  \DEDUCRULE{(\eqs \uplus \{\eqcon{\sterm}{s}{t}{\tterm}{\psi}\}, \hs)}
               {\eqcon{\sterm^\prime}{s^\prime}{t^\prime}{\tterm^\prime}{\psi^\prime} \text{ alters }
                \eqcon{\sterm}{s}{t}{\tterm}{\psi}\\
                &\text{and } \csucceq{\sterm^\prime}{s^\prime}{\psi} \text{ and }\csucceq{\tterm^\prime}{t^\prime}{\psi}}
               {(\eqs \cup \{\eqcon{\sterm^\prime}{s^\prime}{t^\prime}{\tterm^\prime}{\psi^\prime}\}, \hs)}
}

\RULEalterbounded

\smallskip
There are many ways to use this deduction rule, but following the discussion in~\cite{hag:kop:24}, we
will particularly consider two ways:
\begin{enumerate}[I]
\item\label{alter:equisat} Replacing a constraint by an equi-satisfiable one: that is, altering
  \( \eqcon{\sterm}{s}{t}{\tterm}{\psi} \) into
  \( \eqcon{\sterm}{s}{t}{\tterm}{\psi^\prime} \) if
  \( (\exists \vec{x}. \psi) \Longleftrightarrow (\exists \vec{y}.\psi^\prime) \) is logically valid,
  where \( \{ \vec{x} \} = \Vars{\psi} \setminus \Vars{\sterm,s,t,\tterm} \) and
  \( \{ \vec{y} \} = \Vars{\psi^\prime} \setminus \Vars{\sterm,s,t,\tterm} \).
  (This assumes that the system has inextensible theory sorts; if not, we must
  also require that $\Vars{\psi} \setminus \{\vec{x}\} =
  \Vars{\psi'} \setminus \{\vec{y}\}$.)

  A particular example of this case is to replace
  \( \eqcon{\sterm}{s}{t}{\tterm}{\psi} \) by
  \( \eqcon{\sterm}{s}{t}{\tterm}{\psi \wedge \linebreak x_1 = u_1 \wedge \dots \wedge x_n = u_n} \), where
  all \( x_i \) are fresh variables, with \( x_i \notin \Vars{u_j} \) for \( j \leq i \).
\item\label{alter:substitute} Replacing variables by equivalent variables or values: that is, altering
  \( \eqcon{\sterm}{s}{t}{\tterm}{\psi} \) into
  \( \eqcon{\sterm\gamma}{s\gamma}{t\gamma}{\tterm\gamma}{\psi} \) if
  \( \gamma = [x_1:=u_1,\dots,x_n:=u_n] \) and \( \psi \Longrightarrow x_1 = u_1 \wedge \dots
  x_n = u_n \) is valid, where \( x_1,\dots,x_n \) are variables and \( u_1,\dots,u_n \) are each
  variables or values.
\end{enumerate}

\medskip
Back to our example!  By case \ref{alter:equisat} above, we can use \alter\ to obtain
$(\eqs_5,\hs_2) \vdash (\eqs_6, \hs_2)$:
\[
\eqs_6 = \left\{
\begin{array}{l}
\eqconwc{\sterm_2}{f\ i\ (\recdown\ f\ n\ (i-\symb{1})\ a)}{\tailup \ f\ (n+\symb{1})\ i\ (f\ n\ a)}{\tterm_2}\quad\quad \\
{}\hfill{}[i' = i - \symb{1} \wedge n' = n + \symb{1} \wedge i \ge n] \\
\end{array}
\right\}
\]
To allow this rule to be applied, we must have \( \csucceq{\sterm_2}{f\ i\ (\recdown\ f\ n\ (i-\symb{1})\ a)}{\varphi} \)
and \( \csucceq{\tterm_2}{\tailup \ f\ (n+\symb{1})\ i\ (f\ n\ a)}{\varphi} \) where \( \varphi \) is the constraint
\( i' = i - \symb{1} \wedge n' = n + \symb{1} \wedge i \ge n \).  But this follows immediately from the fact that our
previous equation context was strongly bounded:
if \( \csucc{\sterm}{s}{i \ge n} \) then also \( \csucc{\sterm}{s}{i' = i - \symb{1} \wedge n' = n + \symb{1} \wedge
i \ge n} \), and similar for \( \csucc{\tterm}{t}{i \ge n} \).  Moreover, strong boundedness is clearly still satisfied.

\begin{rem}\label{rem:alter}
The preservation of strong boundedness in our example is not a coincidence:
as we will see in \autoref{lem:stronglyBoundedPreserve}, most of the deduction
rules of \autoref{fig:boundedRIrules} preserve this property automatically if
$\succ$ includes $\rw$ and $\supterm$.
In \alter\ and \generalize\ this is not necessarily the case, but it is when
\alter\ is used in either way \ref{alter:equisat} or
\ref{alter:substitute} above (see \autoref{lem:altersafe}).
\end{rem}

We continue the example by two successive \simplify\ steps, using calculation rules $i-\symb{1} \to i'\ [i' = i-\symb{1}]$ and $n+\symb{1} \to n'\ [n' = n+\symb{1}]$,
to obtain \( (\eqs_7,\hs_2) \)
\[
\eqs_7 = \left\{
\begin{array}{l}
\eqconwc{\sterm_2}{f\ i\ (\recdown\ f\ n\ i'\ a)}{\tailup \ f\ n'\ i\ (f\ n\ a)}{\tterm_2}\quad\quad\quad\quad \\
{}\hfill{}[i' = i - \symb{1} \wedge n' = n + \symb{1} \wedge i \ge n] \\
\end{array}
\right\}
\]
Note that these steps were only possible because of the \alter\ step that preceded them.

(Again, strong boundedness is preserved because \( \succ \) includes \( \rw \).)

\paragraph{\hypothesis} Similar to \simplify, we can use an induction hypothesis to reduce either
side of an equation.  Here, finally, the bounding terms \( \sterm,\tterm \) come into play, as we
need to make sure that we have a decrease of some kind, to apply induction.

\RULEhypothesisbounded

\medskip
We will use \hypothesis\ to reduce the lhs of the equation in \( \eqs_7 \) with the only induction
hypothesis from \( \hs_2 \) in the direction 
$\recdown\ f\ n\ i\ a\ 
\to\   
\tailup \ f\ n\ i\ a$, with substitution \( [i:=i'] \).
This lets us deduce $(\eqs_7,\hs_2) \vdash (\eqs_8,\hs_2)$ with
\[
\eqs_8 = \left\{
\begin{array}{l}
\eqconwc{\sterm_2}{f\ i\ (\tailup\ f\ n\ i'\ a)}{\tailup \ f\ n'\ i\ (f\ n\ a)}{\tterm_2}\quad\quad\quad\quad \\
{}\hfill{}[i' = i - \symb{1} \wedge n' = n + \symb{1} \wedge i \ge n]
\end{array}
\right\}
\]
However, to be allowed to apply this deduction rule, we must show that the \( \succ \)
requirements are satisfied; that is, that we have:
\[
\begin{array}{rclcc}
\recdown\ f\ n\ i\ a & \succ & \recdown\ f\ n\ i'\ a & [i' = i - \symb{1} \wedge n' = n + \symb{1} \wedge i \ge n] \\
\recdown\ f\ n\ i\ a & \succ & \tailup\ f\ n\ i'\ a & [i' = i - \symb{1} \wedge n' = n + \symb{1} \wedge i \ge n] \\
\recdown\ f\ n\ i\ a & \succeq & f\ i\ (\tailup\ f\ n\ i'\ a) & [i' = i - \symb{1} \wedge n' = n + \symb{1} \wedge i \ge n] \\
\end{array}
\]
The first of these is satisfied by the strong boundedness property.
The second is an immediate consequence of the third, since \( \f\ i\ (\tailup\ f\ n\ i'\ a)
\supterm \tailup\ f\ n\ i'\ a \) and we have committed to let \( \supterm \) be included in
\( \succ \).
For the third, we remember that \textbf{(REQ1)} still needs to be satisfied:
\[
\begin{array}{lrclcc}
\textbf{(REQ1)} & \recdown\ f\ n\ i\ a & \succ & f\ i\ (\tailup\ f\ n\ i'\ a) & [i' = i - \symb{1} \wedge n' = n + \symb{1} \wedge i \ge n] \\
\end{array}
\]
Here, we have replaced the $\succeq$ by a $\succ$ to ensure that strong
boundedness is preserved.  Since \( \succeq \) is the reflexive closure of \( \succ \),
this is not actually a stronger requirement.

Let \( \sterm_9 = f\ i\ (\tailup\ f\ n\ i'\ a) \), \( \tterm_9 = \tailup \ f\ n'\ i\ (f\ n\ a) \),
and apply \induct\ to \( (\eqs_8,\hs_2) \):
\[
\eqs_9 = \left\{
\begin{array}{l}
\eqconwc{\sterm_9}{f\ i\ (\tailup\ f\ n\ i'\ a)}{\tailup \ f\ n'\ i\ (f\ n\ a)}{\tterm_9}\quad\quad\quad\quad \\
{}\hfill{}[i' = i - \symb{1} \wedge n' = n + \symb{1} \wedge i \ge n]
\end{array}
\right\}
\]
\[
\hs_9 
= 
\left\{
\begin{array}{rcll}
\recdown\ f\ n\ i\ a & \approx & \tailup \ f\ n\ i\ a \\
f\ i\ (\tailup\ f\ n\ i'\ a) & \approx & \tailup \ f\ n'\ i\ (f\ n\ a) & [i'=i-\symb{1} \wedge n'=n+\symb{1} \wedge i \ge n] \\
\end{array}
\right\}
\]
Next, we use \case\ again to split the constraint in \( \eqs_9 \) into \( i = n \) and \( i > n \), giving \( (\eqs_{10},\hs_9) \):
\[
\eqs_{10} = \left\{
\begin{array}{l}
\eqconwc{\sterm_9}{f\ i\ (\tailup\ f\ n\ i'\ a)}{\tailup \ f\ n'\ i\ (f\ n\ a)}{\tterm_9}\quad\quad\quad\quad \\
{}\hfill{}[i' = i - \symb{1} \wedge n' = n + \symb{1} \wedge i = n] \\
\eqconwc{\sterm_9}{f\ i\ (\tailup\ f\ n\ i'\ a)}{\tailup \ f\ n'\ i\ (f\ n\ a)}{\tterm_9}\quad\quad\quad\quad \\
{}\hfill{}[i' = i - \symb{1} \wedge n' = n + \symb{1} \wedge i > n] \\
\end{array}
\right\}
\]
Observing that \( i' = i - \symb{1} \wedge n' = n + \symb{1} \wedge i = n \) implies both \( n > i' \) and \( n' > i \),
and that \( i' = i - \symb{1} \wedge n' = n + \symb{1} \wedge i > n \) implies both \( n \leq i' \) and \( n' \leq i \),
we use \simplify\ on both sides of the first equation with \textbf{(R3)} and on both sides of the second equation with
\textbf{(R4)} respectively, to deduce \( \deduces{(\eqs_{10},\hs_9)}{(\eqs_{11},\hs_9)} \):
\[
\eqs_{11} = \left\{
\begin{array}{ll}
\eqconwc{\sterm_9}{f\ i\ a}{f\ n\ a}{\tterm_9} \hfill {}[i' = i - \symb{1} \wedge n' = n + \symb{1} \wedge i = n] \\
\eqconwc{\sterm_9}{f\ i\ (\tailup\ f\ (n+1)\ i'\ (f\ n\ a))}{\tailup \ f\ (n'+1)\ i\ (f\ n'\ (f\ n\ a))}{\tterm_9} \\
{}\hfill{}[i' = i - \symb{1} \wedge n' = n + \symb{1} \wedge i > n] \\
\end{array}
\right\}
\]
Note that the first equation above does not yet satisfy the requirements for \delete, even though the \( i = n \)
part of the constraint makes it look very delete-worthy.  We resolve this by using \alter\ (case
\ref{alter:substitute}), replacing the first equation context by \( \eqcon{\sterm_9}{f\ n\ a}{f\ n\ a}{\tterm_9}{
i' = i - \symb{1} \wedge n' = n + \symb{1} \wedge i = n} \), after which it can immediately be deleted.
Also using \alter\ (now case \ref{alter:equisat}) on the second equation, and then using \simplify\ with
calculation rules as we did before, we are left with \( (\eqs_{12},\hs_9) \):
\[
\eqs_{12} = \left\{
\begin{array}{ll}
\eqconwc{\sterm_9}{f\ i\ (\tailup\ f\ n'\ i'\ (f\ n\ a))}{\tailup \ f\ n''\ i\ (f\ n'\ (f\ n\ a))}{\tterm_9} \\
{}\hfill{}[i' = i - \symb{1} \wedge n' = n + \symb{1} \wedge n'' = n' + \symb{1} \wedge i > n] \\
\end{array}
\right\}
\]

\paragraph{\hdelete.} With this deduction rule we may delete any equation that has a subequation which is an instance of an equation in $\hs$. 
The rule looks very similar to \hypothesis, but with lighter requirements on \( \succ \). 

\RULEhdelete

\medskip
Consider our example.  Renaming the variables to avoid confusion, the second
hypothesis in \( \hs_9 \) reads:
\( g\ x\ (\tailup\ g\ y\ x'\ z) \approx \tailup \ g\ y'\ x\ (g\ y\ z)\ [x'=x-\symb{1} \wedge y'=y+\symb{1} \wedge x \ge y] \).
Let \( \delta \) be the substitution \( [g:=f,x:=i,x':=i',y:=n',y':=n'',z:=f\ n\ a] \).
We can now deduce \( (\eqs_{12},\hs_9) \vdash (\emptyset,\hs_9) \) if one of the following
ordering requirements are satisfied:
\[
\begin{array}{rcl}
\sterm_9 = & f\ i\ (\tailup\ f\ n\ i'\ a) \succ f\ i\ (\tailup\ f\ n'\ i'\ (f\ n\ a)) & [i' = i - \symb{1} \wedge n' = n + \symb{1} \wedge i \ge n] \\
\tterm_9 = & \tailup \ f\ n'\ i\ (f\ n\ a) \succ \tailup \ f\ n''\ i\ (f\ n'\ (f\ n\ a)) & [i' = i - \symb{1} \wedge n' = n + \symb{1} \wedge i \ge n] \\
\end{array}
\]
In fact, both are satisfied by the strong boundedness property so there is nothing to check.

\smallskip
Having used \hdelete\ to remove the last remaining equation,
we have shown \( \deduces{(\eqs_1,\emptyset)}{(\emptyset,\hs_9)} \),
so by \autoref{theorem:soundnessRI} the equation
$\recdown\ f\ n\ i\ a \approx \tailup \ f\ n\ i\ a$ is an inductive theorem --
provided we indeed have a suitable bounding pair that satisfies \textbf{(REQ1)}.
But this is easily achieved: we let \( \succ \) be \((\to_{\rules \cup \mathcal{Q}} \cup \supterm)^+\) where \( \mathcal{Q} = \{ \recdown\ f\ n\ i\ a \to
f\ i\ (\tailup\ f\ n\ i'\ a)\ [i' = i - \symb{1} \wedge n' = n + \symb{1} \wedge i \ge n] \} \).
This is a bounding pair because \( \to_{\rules \cup \mathcal{Q}} \) is
terminating (which can for instance be proved using static dependency pairs
\cite{guo:hag:kop:val:24}).

\begin{rem}
The choice to take \( \succ \ = (\to_{\rules \cup \mathcal{Q}} \cup \supterm)^+\) is
quite natural: in many traditional definitions of rewriting induction
\cite{red:90,fal:kap:12,fuh:kop:nis:17,hag:kop:24} this is the only
choice for \( (\succ,\succeq) \), with \( \mathcal{Q} \) always being a directed version of
the final \( \hs \) (so in the case of this example, \( \hs_9 \)).
However, while such a choice is natural in strategies for rewriting induction, we leave it open in
the definition to allow for alternative orderings, as we will discuss in
\autoref{sec:howToFindOrdering}.
\end{rem}

\paragraph{An example with structural induction}
We continue with another example, both to introduce the remaining deduction rules and to illustrate
that the method can be used not only for induction on integers, but also for structural induction on
terms.
Consider
\( \termsSig = \{ \nil :: \lijst,\ \cons :: \int \to \lijst \to \lijst,\ \append :: \lijst \to \lijst \to \lijst \} \)
with rules
\[
\begin{array}{rlcrl}
\textbf{(R1)} & \append\ \nil\ ys \to ys & &
\textbf{(R2)} & \append\ (\cons\ x\ xs)\ ys \to \cons\ x\ (\append\ xs\ ys) \\
\textbf{(R3)} & \rev\ \nil\ ys \to ys & &
\textbf{(R4)} & \rev\ (\cons\ x\ xs)\ ys \to \rev\ xs\ (\cons\ x\ ys) \\
\end{array}
\]

\smallskip
Suppose we wish to show that \( \rev\ (\append\ xs\ ys)\ \nil \approx \append\ (\rev\ ys\ \nil)\ (\rev\ xs\ \nil) \)
is an inductive theorem.
We start with \( \eqs_1 \) containing this equation coupled with bullets as bounding terms,
and after an \induct\ step end up with \( (\eqs_2,\hs_2) \) where:
\[
\begin{array}{rcl}
\eqs_2 & = & \{\eqconwc{\sterm_2}{\rev\ (\append\ xs\ ys)\ \nil}{\append\ (\rev\ ys\ \nil)\ (\rev\ xs\ \nil)}{\tterm_2}\} \\
\hs_2 & = & \{ \rev\ (\append\ xs\ ys)\ \nil \approx \append\ (\rev\ ys\ \nil)\ (\rev\ xs\ \nil) \} \\
\sterm_2 & = & \rev\ (\append\ xs\ ys)\ \nil \\
\tterm_2 & = & \append\ (\rev\ ys\ \nil)\ (\rev\ xs\ \nil) \\
\end{array}
\]
(We will omit the constraint from equation contexts when it is just \( [\strue] \).)

Now we apply \case, using the cover set \( ([xs:=\cons\ a\ as],\strue),\ ([xs:=\nil],\strue) \).
This is indeed a cover set because every ground semiconstructor instance of $xs$ must be either
\( \nil \) or headed by the list constructor \( \cons \).  We obtain \( (\eqs_2,\hs_2) \vdash
(\eqs_3,\hs_3) \) with:
\[
\eqs_3 = \left\{
\begin{array}{l}
\eqconwc{\sterm_3}{\rev\ (\append\ (\cons\ a\ as)\ ys)\ \nil}{\append\ (\rev\ ys\ \nil)\ (\rev\ (\cons\ a\ as)\ \nil)}{\tterm_3} \\
\eqconwc{\sterm_4}{\rev\ (\append\ \nil\ ys)\ \nil}{\append\ (\rev\ ys\ \nil)\ (\rev\ \nil\ \nil)}{\tterm_4} \\
\end{array}
\right\}
\]
Where:
\[
\begin{array}{rclcrcl}
\sterm_3 & = & \rev\ (\append\ (\cons\ a\ as)\ ys)\ \nil & \quad &
\sterm_4 & = & \rev\ (\append\ \nil\ ys)\ \nil \\
\tterm_3 & = & \append\ (\append\ (\cons\ a\ as)\ ys)\ zs & \quad &
\tterm_4 & = & \append\ (\append\ \nil\ ys)\ zs \\
\end{array}
\]
Note that here the bounding terms are substituted along with the equation.
This is the only deduction rule that does so.

After a few \simplify\ steps on both equations, we end up with \( (\eqs_3,\hs_2) \vdash^* (\eqs_4,\hs_2) \):
\[
\eqs_4 = \left\{
\begin{array}{l}
\eqconwc{\sterm_3}{\rev\ (\append\ as\ ys)\ (\cons\ a\ \nil)}{\append\ (\rev\ ys\ \nil)\ (\rev\ as\ (\cons\ a\ \nil))}{\tterm_3} \\
\eqconwc{\sterm_4}{\rev\ ys\ \nil}{\append\ (\rev\ ys\ \nil)\ \nil}{\tterm_4} \\
\end{array}
\right\}
\]

\paragraph{\generalize}
Consider the latter equation context.  We could try to continue with the rules as they are,
interleaving \simplify, \induct\ and \case, but doing so continues to yield new equations that
cannot be eliminated easily; there is no place where we can apply an induction hypothesis.  Instead,
it will prove beneficial to abstract this equation by generalizing it:

\newcommand{\RULEgeneralizebounded}{
  \DEDUCRULE{(\eqs \uplus \{\eqcon{\sterm}{s}{t}{\tterm}{\psi}\}, \hs)}
            {\eqcon{\sterm^\prime}{s^\prime}{t^\prime}{\tterm^\prime}{\varphi} \text{ generalizes }
             \eqcon{\sterm}{s}{t}{\tterm}{\psi}\\
             &\text{and } \csucceq{\sterm^\prime}{s^\prime}{\psi} \text{ and }\csucceq{\tterm^\prime}{t^\prime}{\psi}}
            {(\eqs \cup \{\eqcon{\sterm^\prime}{s^\prime}{t^\prime}{\tterm^\prime}{\varphi}\}, \hs)}
}
\RULEgeneralizebounded

This rule is quite similar to \alter\ (and in fact, every step that can be done by \alter\ can also
be done by \generalize), but they are used quite differently: \alter\ is designed to set up an
equation for the use of simplification or deletion, while \generalize\ is a form of lemma
generation, very similar to \postulate.

In the second equation of \(\eqs_4\), we use \generalize\ to abstract the term \( \rev\ ys\ \nil \) into a fresh variable \(zs\),
deriving \( (\eqs_4,\hs_2) \vdash (\eqs_5,\hs_2) \) with:
\[
\eqs_5 = \left\{
\begin{array}{l}
\eqconwc{\sterm_3}{\rev\ (\append\ as\ ys)\ (\cons\ a\ \nil)}{\append\ (\rev\ ys\ \nil)\ (\rev\ as\ (\cons\ a\ \nil))}{\tterm_3} \\
\eqconwc{zs}{zs}{\append\ zs\ \nil}{\append\ zs\ \nil} \\
\end{array}
\right\}
\]
This is a generalization because for every gsc substitution \( \gamma \) that respects \( \strue \)
we can choose the substitution \( \delta = [zs:=\rev\ \gamma(ys)\ \nil] \cup [\gamma(x) \mid x \in
\Var \setminus \{zs\}] \) and have both \( (\rev\ ys\ \nil)\gamma = zs\delta \) and
\( (\append\ (\rev\ ys\ \nil)\ \nil)\gamma = (\append\ zs\ \nil)\delta \), as well as
\( \sterm_4\gamma \succeq zs\delta = (\rev\ ys\ \nil)\gamma \) and
\( \tterm_4\gamma \succeq (\append\ zs\ \nil)\delta \) because \( \eqconwc{\sterm_4}{\rev\ ys\ \nil
}{\append\ (\rev\ ys\ \nil)\ \nil}{\tterm_4} \) is a bounded equation context.  Moreover, the
ordering requirements of the derivation rule, \( \csucceq{zs}{zs}{\strue} \) and
\( \csucceq{\append\ zs\ \nil}{\append\ zs\ \nil}{\strue} \), clearly hold.

\smallskip
Next, we use \induct\ to deduce \( (\eqs_5,\hs_2) \vdash (\eqs_5,\hs_5) \), with:
\[
\hs_5 = \left\{
\begin{array}{rcl}
\rev\ (\append\ xs\ ys)\ \nil & \approx & \append\ (\rev\ ys\ \nil)\ (\rev\ xs\ \nil) \\
zs & \approx & \append\ zs\ \nil \\
\end{array}
\right\}
\]
We use \case\ on the second equation, to split it up into separate cases for \( \nil \) and
\( \cons \):
\[
\eqs_5 = \left\{
\begin{array}{l}
\eqconwc{\sterm_3}{\rev\ (\append\ as\ ys)\ (\cons\ a\ \nil)}{\append\ (\rev\ ys\ \nil)\ (\rev\ as\ (\cons\ a\ \nil))}{\tterm_3} \\
\eqconwc{\nil}{\nil}{\append\ \nil\ \nil}{\append\ \nil\ \nil} \\
\eqconwc{\cons\ x\ xs}{\cons\ x\ xs}{\append\ (\cons\ x\ xs)\ \nil}{\append\ (\cons\ x\ xs)\ \nil} \\
\end{array}
\right\}
\]
The second of these equations is quickly dispatched through applications of \simplify, ending in a
\delete\ step.  For the third equation, we use \simplify\ to arrive at \( (\eqs_6,\hs_5) \):
\[
\eqs_6 = \left\{
\begin{array}{l}
\eqconwc{\sterm_3}{\rev\ (\append\ as\ ys)\ (\cons\ a\ \nil)}{\append\ (\rev\ ys\ \nil)\ (\rev\ as\ (\cons\ a\ \nil))}{\tterm_3} \\
\eqconwc{\cons\ x\ xs}{\cons\ x\ xs}{\cons\ x\ (\append\ xs\ \nil)}{\append\ (\cons\ x\ xs)\ \nil} \\
\end{array}
\right\}
\]

\paragraph{\semicons}
To deal with constructors and partially applied function symbols, we introduce our second-to-last
rule from \autoref{fig:boundedRIrules}.

\RULEsemicons
Note that \( \arity(f) = \infty \) when \( f \) is a constructor, so certainly \( n < \arity(f) \) holds.

\smallskip
In our example, we use \semicons\ to deduce \( (\eqs_6,\hs_5) \vdash (\eqs_7,\hs_5) \):
\[
\eqs_7 = \left\{
\begin{array}{l}
\eqconwc{\sterm_3}{\rev\ (\append\ as\ ys)\ (\cons\ a\ \nil)}{\append\ (\rev\ ys\ \nil)\ (\rev\ as\ (\cons\ a\ \nil))}{\tterm_3} \\
\eqconwc{\cons\ x\ xs}{x}{x}{\append\ (\cons\ x\ xs)\ \nil} \\
\eqconwc{\cons\ x\ xs}{xs}{\append\ xs\ \nil}{\append\ (\cons\ x\ xs)\ \nil} \\
\end{array}
\right\}
\]
The second equation context is immediately deleted.  For the third, note that the equation is an
instance of the induction hypothesis \( zs \approx \append\ zs\ \nil \in \hs \), so we can dispatch
it using \hdelete, provided the ordering requirements are satisfied.  If we again choose a bounding
pair with \( \supterm\;\subseteq\;\succ \), we certainly have \( (\cons\ x\ xs)
\gamma \succ xs\gamma \) for all \( \gamma \), which suffices.

\paragraph{\postulate}
To introduce our final rule, we observe that it would be really useful to have some lemmas to reason
about the relation between \( \rev \) and \( \append \).  This leads us to introduce:
\RULEpostulate

\smallskip
In our example, we use two successive applications of \postulate\ to obtain
\( (\eqs_8,\hs_5) \):
\[
\eqs_8 = \left\{
\begin{array}{l}
\eqconwc{\sterm_3}{\rev\ (\append\ as\ ys)\ (\cons\ a\ \nil)}{\append\ (\rev\ ys\ \nil)\ (\rev\ as\ (\cons\ a\ \nil))}{\tterm_3} \\
\eqconwc{\bullet}{\rev\ xs\ ys}{\append\ (\rev\ xs\ \nil)\ ys}{\bullet} \\
\eqconwc{\bullet}{\append\ (\append\ xs\ ys)\ zs}{\append\ xs\ (\append\ ys\ zs)}{\bullet} \\
\end{array}
\right\}
\]
The last equation is easily removed through \induct\ followed by a \case\ on the
instantiation of \( xs \), some simplifications, a \( \delete \) in the \( \nil \) case and a use of
\semicons\ followed by \hdelete\ in the \( \cons \) case; doing this causes no new
ordering requirements to be added since we had already set \( \supterm\;\subseteq\;\succ \).
Hence, we obtain \( (\eqs_8,\hs_5) \vdash (\eqs_9,\hs_9) \) where:
\[
\eqs_9 = \left\{
\begin{array}{l}
\eqconwc{\sterm_3}{\rev\ (\append\ as\ ys)\ (\cons\ a\ \nil)}{\append\ (\rev\ ys\ \nil)\ (\rev\ as\ (\cons\ a\ \nil))}{\tterm_3} \\
\eqconwc{\bullet}{\rev\ xs\ ys}{\append\ (\rev\ xs\ \nil)\ ys}{\bullet} \\
\end{array}
\right\}
\]
\[
\hs_9 = \left\{
\begin{array}{rcl}
\rev\ (\append\ xs\ ys)\ \nil & \approx & \append\ (\rev\ ys\ \nil)\ (\rev\ xs\ \nil) \\
zs & \approx & \append\ zs\ \nil \\
\append\ (\append\ xs\ ys)\ zs & \approx & \append\ xs\ (\append\ ys\ zs) \\
\end{array}
\right\}
\]
Hence, the key benefit of \postulate\ is that we are left with an additional element of \( \hs \),
which can be used in the proof of the remaining equations.

Having introduced all our rules, we do not show the rest of the proof, but leave it as an exercise
to the reader.  Proving the remaining equations does cause some new ordering requirements to be
imposed, but all of these are easily satisfied for instance by choosing for \( \succ \) the
lexicographic path ordering with \( \rev > \append > \cons > \nil \).

\subsection{Completeness}\label{subsec:completeness}
Thus far, we have used rewriting induction to prove inductive theorems.
However, we can also use it to derive that some equations are \emph{not} inductive theorems.

\begin{exa}\label{example:nonEquivalence}
Consider 
$
\termsSig 
=\{
    \symb{G} :: (\int \to \int) \to \int \to \int \to \int,\ 
    \symb{H} :: (\int \to \int) \to \int \to \int \to \int \to \int
\}
$ 
with rules
\[
\begin{array}{rcll}
\symb{G} \ f \ n\ x & \arrz & \symb{G} \ f \ (n-\symb{1})\ (f\ x)  & [n > \symb{0}] \\
\symb{G} \ f \ n\ x & \arrz & x  & [n \le \symb{0}] \\
\symb{H} \ f \ n\ m\ x & \arrz & \symb{H} \ f \ (n-\symb{1})\ m\ (f\ x) & [n > \symb{0}] \\
\symb{H} \ f \ n\ m\ x & \arrz & \symb{H} \ f \ (m-\symb{1})\ n\ (f\ x) & [m > \symb{0}] \\
\symb{H} \ f \ n \ m\ x & \arrz & x & [n \le \symb{0} \wedge m \le \symb{0}] \\
\end{array}
\]
Intuitively, $\symb{G}$ computes the function $(f, n, x) \mapsto f^{n}(x)$ for all $n \ge 0$, and $\symb{H}$ computes the function $(f, n, m, x) \mapsto f^{(n+m)}(x)$ for all $n,m \ge 0$.
In particular, we have an inductive theorem $\symb{G}\ f\ k\ x \approx \symb{H}\ f\ n\ m\ x\ [k=n+m \wedge n \ge \symb{0} \wedge m \ge \symb{0}]$. 
The condition $n,m \ge 0$ is really necessary: 
the equation  
$\symb{G}\ f\ k\ x \approx \symb{H}\ f\ n\ m\ x\ [k=n+m]$ is \emph{not} an inductive theorem. 
For example, there does not even exist a single ground substitution $\gamma=[x:=t]$ such that $\rweq{(\symb{G}\ (\prefix{+}\ \symb{1})\ (-\symb{1})\ x)\gamma}{(\symb{H}\ (\prefix{+}\ \symb{1})\ \symb{1}\ (\symb{-2})\ x)\gamma}{\rules}$.
\end{exa}

To prove non-equivalences, we will extend RI with a new deduction rule \disprove.
However, we must take some care: it is entirely possible, in the course of a RI proof, to end up
with unsound equations even if the original equations are all inductive theorems.  For example,
\( \recdown\ \prefix{+}\ n\ \symb{0}\ a \approx \recdown\ \prefix{+}\ n\ (\symb{-1})\ a \) is an
inductive theorem, but if we use \generalize\ in the RI proof to obtain an equation
\( \recdown\ \prefix{+}\ n\ i\ a \approx \recdown\ \prefix{+}\ n\ i'\ a\ [i' = i - \symb{1}] \), we
will encounter a contradiction.  Hence, we must carefully consider the derivation path.

\begin{defi}[Completeness Property]\label{def:complete}
We say a deduction rule has the Completeness Property if, whenever
\( \rules \) is a weakly normalizing, quasi-reductive, ground confluent \lcstrs, if
we can deduce
\( (\eqs,\hs) \vdash (\eqs',\hs') \) by this rule, then
\[
\leftrightarrow_{\eqs} \cup \leftrightarrow_{\hs}\ \subseteq\ \leftrightarrow^*_{\rules}
\text{ on ground terms}
\quad 
\ \ \Longrightarrow \ \ 
\quad 
\leftrightarrow_{\eqs'} \cup \leftrightarrow_{\hs'}\ \subseteq\ \leftrightarrow^*_{\rules}
\text{ on ground terms}
\]  
Here, \( \leftrightarrow_{\eqs} \) is the relation \( \leftrightarrow_{\mathcal{A}} \) where
\( \mathcal{A} = \{ s \approx t\ [\psi] \mid \eqcon{\sterm}{s}{t}{\tterm}{\psi} \in \eqs \} \).
\end{defi}

\begin{lem}\label{lem:completeprop}
The derivation rules \simplify, \case, \delete, \induct, \hypothesis, \alter\ and \hdelete\ 
all have the Completeness Property, as does the limitation of \semicons\ to cases with
\( f \in \Sig \).
\end{lem}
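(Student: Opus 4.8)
The plan is a case analysis over the eight rules named in the statement. Throughout I would rely on three facts: on ground terms, $\leftrightarrow^*_{\rules}$ is an equivalence relation that is closed under ground contexts and contains $\rw$; weak normalization together with quasi-reductivity forces every ground term to reduce to a ground semi-constructor normal form; and ground confluence forces ground normal forms to be unique. I also use the observation after Definition~\ref{par:ConstrainedReductions}: if $\psi\models^\delta\varphi$ and a ground substitution $\gamma$ respects $\psi$, then $\delta\gamma$ respects $\varphi$. Three rules are immediate: for \delete\ and \hdelete\ we have $\eqs'\subseteq\eqs$ and $\hs'=\hs$, so $\leftrightarrow_{\eqs'}\cup\leftrightarrow_{\hs'}$ only shrinks; for \induct\ the underlying equation sets of $\eqs$ and $\eqs'$ coincide, and the equation $s\approx t\ [\psi]$ added to $\hs$ is still present (as an equation context) in $\eqs'$, so nothing new is asserted. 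For \simplify\ and \hypothesis, the only change to the underlying equations is that $C[\ell\delta]\approx t\ [\psi]$ is replaced by $C[r\delta]\approx t\ [\psi]$, so I would check the latter: for a ground $\gamma$ respecting $\psi$, $\delta\gamma$ respects $\varphi$, so $C[\ell\delta]\gamma\rw C[r\delta]\gamma$ via $\ell\to r\ [\varphi]\in\rules\cup\calcrules$ (for \simplify) or $C[\ell\delta]\gamma\leftrightarrow_{\hs}C[r\delta]\gamma$ via $\ell\simeq r\ [\varphi]\in\hs$ (for \hypothesis, which lies in $\leftrightarrow^*_{\rules}$ by assumption); combining with $C[\ell\delta]\gamma\leftrightarrow^*_{\rules}t\gamma$ and closing under contexts finishes these. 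The $\succ$/$\succeq$ side conditions of \hypothesis\ are not used here; they serve soundness only.

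The rules \case\ and \alter\ need a small normalization argument. For \case, each produced equation $s\delta\approx t\delta\ [\psi\delta\wedge\varphi]$ is a specialization of $s\approx t\ [\psi]$: a ground instance of it by a substitution $\sigma$ respecting $\psi\delta\wedge\varphi$ yields, after $\calcrules$-normalizing the images under $\delta\sigma$ of the constraint variables of $\psi$ (these are theory terms, since $\psi\delta$ must be a constraint), a genuine ground instance of $s\approx t\ [\psi]$; hence $s\delta\sigma\leftrightarrow^*_{\rules}t\delta\sigma$ follows by chaining $\calcrules$-steps with the assumed soundness of $s\approx t\ [\psi]$, and closing under contexts does the rest. (Notably this uses only the syntactic shape of $\coverset$, not the covering condition of Definition~\ref{def:coverset}, which is what soundness of \case\ needs.) For \alter, since the new equation context both generalizes and is generalized by the old one, the ``old generalizes new'' half of Definition~\ref{def:generalize} turns any \emph{gsc} ground instance of $s'\approx t'\ [\psi']$ into a ground instance of $s\approx t\ [\psi]$; an arbitrary ground instance is first brought into this form by reducing its substitution to a gsc one, which leaves the values substituted for constraint variables untouched. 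Again the assumed soundness of the original equation closes the case.

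The $f\in\Sig$ case of \semicons\ is the crux and the only place where ground confluence enters. Here the original equation $f\ s_1\cdots s_n\approx f\ t_1\cdots t_n\ [\psi]$ has $n>0$ and $n<\arity(f)$ (the restriction $f\in\Sig$ excludes $f\in\Var$), and it is replaced by the $n$ equations $s_i\approx t_i\ [\psi]$. For a ground $\gamma$ respecting $\psi$, I would reduce each $s_i\gamma$ and $t_i\gamma$ to ground semi-constructor normal forms $\hat s_i$ and $\hat t_i$; since $n<\arity(f)$ and the $\hat s_i$ are ground normal forms, $f\ \hat s_1\cdots\hat s_n$ is a ground normal form, and likewise $f\ \hat t_1\cdots\hat t_n$. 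The assumed soundness of the original equation gives $f\ \hat s_1\cdots\hat s_n\leftrightarrow^*_{\rules}f\ \hat t_1\cdots\hat t_n$, and uniqueness of ground normal forms (from ground confluence) forces these terms to be syntactically equal, hence $\hat s_i=\hat t_i$ for every $i$; so $s_i\gamma$ and $t_i\gamma$ have a common reduct and $s_i\gamma\leftrightarrow^*_{\rules}t_i\gamma$. Closing under contexts finishes this rule.

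I expect this last case to be the main obstacle: it is the only one requiring ground confluence, and it also makes transparent why the lemma excludes \semicons\ with a head variable — a variable head may be instantiated by a function that collapses distinct arguments, so $f\ s_1\cdots s_n\gamma\leftrightarrow^*_{\rules}f\ t_1\cdots t_n\gamma$ would no longer force $s_i\gamma\leftrightarrow^*_{\rules}t_i\gamma$. The remaining work is routine bookkeeping: checking that the various constraint-respecting conditions survive when one passes between an arbitrary ground substitution and its semi-constructor normal form, and when one composes with the cover-set or generalization substitutions.
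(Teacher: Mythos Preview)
Your proposal is correct and follows the same case-by-case structure as the paper's proof. The only notable deviations are minor: for \semicons\ the paper obtains a common reduct directly from ground confluence (observing that $n<\arity(f)$ forces the head $f$ to persist under reduction) rather than passing to normal forms, and for \case\ your calc-normalization of the constraint-variable images is in fact more careful than the paper, which simply asserts that $\delta\gamma$ respects $\psi$ --- your extra step covers the edge case where $\delta$ sends a constraint variable to a compound theory term.
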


Hence, all deduction rules of \autoref{fig:boundedRIrules} other than
\generalize, \postulate\ and the general case of \semicons, are complete.

\begin{proof}
Suppose that \( \rules \) is weakly normalizing, quasi-reductive and ground confluent, and that
\( \leftrightarrow_{\eqs} \cup \leftrightarrow_{\hs}\ \subseteq\ \leftrightarrow^*_{\rules} \)
on ground terms.
Write \( \mathcal{A} = \{ s \approx t\ [\psi] \mid \eqcon{\sterm}{s}{t}{\tterm}{\psi} \in \eqs \} \).

For any \( s \approx t\ [\psi] \in \hs' \) we have either \( s \approx t\ [\psi] \in \hs \) or, if
\induct\ was used, \( s \approx t\ [\psi] \in \mathcal{A} \).  Either way, a step by this equation
can also be done by \( \leftrightarrow_{\hs} \cup \leftrightarrow_{\eqs} \), so indeed
\( \leftrightarrow_{\hs'}\ \subseteq\ \leftrightarrow^*_{\rules} \) on ground terms.
To show that \( \leftrightarrow_{\eqs'}\;\subseteq\;\leftrightarrow^*_\rules \) on ground terms,
let \( \gamma \) be a ground substitution respecting some equation context in \( \eqs' \).
Note that we only need to consider equation contexts in \( \eqs' \setminus \eqs \) because for an
equation context \( \eqcon{\sterm}{s}{t}{\tterm}{\psi} \in \eqs \) we can immediately conclude that
\( \rweq{s\gamma}{t\gamma}{\rules} \) by assumption.
We consider every deduction rule in the lemma.

\medskip\noindent\emph{\simplify, \hypothesis.}
Then \( \eqs' \setminus \eqs = \{ \eqcon{\sterm}{C[r\delta]}{t}{\tterm}{\psi} \} \) while
\( \mathcal{A} \) contains \( C[\ell\delta] \approx t\ [\psi] \) for either some
\( \ell \to r\ [\varphi] \in \rules \cup \calcrules \) or some
\( \ell \simeq r\ [\varphi] \in \hs \),
and \( \delta \) such that \( \psi \models^\delta \varphi \).
Since \( \gamma \) respects \( \psi \) we have $\rweq{C[\ell \delta] \gamma}{t \gamma}{\rules}$ by
assumption on \( \leftrightarrow_{\mathcal{A}} \).
Furthermore, in either case we have $\rweq{C[r \delta]\gamma}{C[\ell \delta]\gamma}{\rules}$,
whether because \( \leftarrow_{\rules} \) is included in \( \leftrightarrow_\rules^* \) or by assumption on
\( \leftrightarrow_\hs \).

\medskip\noindent\emph{\case.}
Then the only elements of \( \eqs' \setminus \eqs \) have a form
\( \eqcon{\sterm\delta}{s\delta}{t\delta}{\tterm\delta}{\psi\delta \wedge \varphi} \), where
\( s \approx t\ [\psi] \in \mathcal{A} \).
As $\gamma$ respects $\psi \delta \wedge \varphi$, it respects $\psi \delta$.
Therefore, \( \sigma = \delta\gamma \) respects \( \psi \), and is a ground substitution;
\( \domain(\sigma) \supseteq \Vars{s,t} \).
Hence, \( s\delta\gamma = \rweq{s\sigma}{t\sigma}{\rules} = t\delta\gamma \), as desired.

\medskip\noindent\emph{\delete, \hdelete.}
We have \( \eqs' \setminus \eqs = \emptyset \) so there is nothing to prove.

\medskip\noindent\emph{\induct.}
For all \( \eqcon{\sterm}{s}{t}{\tterm}{\psi} \in \eqs' \) we have \( s \approx t\ [\psi] \in
\mathcal{A} \).

\medskip\noindent\emph{\alter.}
Then \( \eqs' \setminus \eqs = \{ \eqcon{\sterm'}{s'}{t'}{\tterm'}{\psi'} \} \) and there is
some \( \eqcon{\sterm}{s}{t}{\tterm}{\psi} \in \eqs \) which generalizes
\( \eqcon{\sterm'}{s'}{t'}{\tterm'}{\psi'} \).
Since $\rules$ is weakly normalizing and quasi-reductive,
$\gamma^\downarrow = [x:=\gamma(x)\! \!\downarrow_\rules\ \mid x \in
\domain(\gamma)]$ is a well-defined, gsc substitution that respects
\( \eqcon{\sterm'}{s'}{t'}{\tterm'}{\psi'} \).
By definition of generalization, there is a
substitution $\sigma$ that respects \( \eqcon{\sterm}{s}{t}{\tterm}{\psi} \) with
$s\sigma = s'\gamma^\downarrow$ and $t\sigma = t'\gamma^\downarrow$.
Hence, $s'\gamma \rw^* s'\gamma^\downarrow =
\rweq{s\sigma}{t\sigma}{\rules} = t'\gamma^\downarrow \leftarrow_{\rules}^* t'\gamma$.

\medskip\noindent\emph{\semicons\ (with \( f \in \Sig \)).}
Then \( \eqs' \setminus \eqs = \{ \eqcon{\sterm}{s_i}{t_i}{\tterm}{\psi} \mid 1 \leq i \leq n \} \)
for some \( \eqcon{\sterm}{\afun\ s_1 \cdots s_n}{\afun\ t_1 \cdots t_n}{\tterm}{\psi} \in \eqs \),
with \( n < \arity(\afun) \).
By assumption, \( \rweq{(\afun\ s_1 \cdots s_n)\gamma}{(\afun\ t_1 \cdots t_n)\gamma}{\rules} \).
Now, note that \( \rules \) is ground confluent.  So in fact there is some \( u \) such that both
\( (\afun\ s_1 \cdots s_n)\gamma \rw^* u \) and \( (\afun\ t_1 \cdots t_n)\gamma \rw^* u \).
Since \( n < \arity(\afun) \), necessarily \( u \) has the form \( \afun\ u_1 \cdots u_n \) with both
\( s_i\gamma \rw^* u_i \) and \( t_i\gamma \rw^* u_i \) for all \( 1 \le j \le n \).
\end{proof}

With the notion of a complete deduction step, we can define the states in a derivation path that
are suitable for the new deduction rule that we wish to introduce:

\begin{defi}[Complete proof state]\label{def:completestate}
Let $\mathcal{P}_1 \vdash^* \mathcal{P}_n$ be some RI deduction sequence and $1 \le i \le n$. 
Proof state $\mathcal{P}_i=(\eqs_i, \hs_i)$ is complete if one of the following holds
\begin{enumerate}[(a). ]
\item $i=1$;
\item $\mathcal{P}_{i-1}$ is complete and $\mathcal{P}_{i-1} \vdash \mathcal{P}_i$ by a deduction
  rule with the Completeness Property;
\item there is a complete proof state $\mathcal{P}_{j}$ with $j<i$ and $\eqs_{i} \subseteq \eqs_{j}$.
\end{enumerate}
Cases (a) and (b) together ensure that deductions remain complete as long as they only use steps with
the Completeness Property.
With (c) we can restore completeness, once we have removed all equations originating from deduction
steps that cause loss of completeness. 
\end{defi} 

\begin{defi}[Contradictory equation]
\label{def:contradictoryEq}
An equation $s \approx t\ [\psi]$ is contradictory if there exists a
ground substitution $\delta$ that respects $\psi$, such that one of the
following holds:
\begin{enumerate}[(1). ]
\item
  there exist $\symb{f}, \symb{g} \in \Sig$ with $\symb{f} \ne \symb{g}$,
  as well as $s_1',\dots,s_n',t_1',\dots,t_m'$ such that $s\delta =\symb{f}\ 
  s_1' \cdots s_n'$ and $t\delta=\symb{g}\ t_1' \cdots t_m'$ and
  \( n < \arity(\symb{f}) \) and \( m < \arity(\symb{g}) \)
\item $s, t \in T(\thSig, \Var)$ and $\termInterpret{(\psi \wedge s \neq t)\gamma}=\top$
\end{enumerate}
\end{defi}
Note that if $s$ and $t$ only have base-type variables, then the existence of a
suitable substitution \( \gamma \) for case (2) corresponds exactly to satisfiability of
$\psi \wedge (s \neq t)$.
However, when $s$ or $t$ contains higher-order variables, then we first have to choose an
instantiation for every such variable before we can check for satisfiability. 
For example, if we have an equation $x \approx f\ x\ [\psi]$ with $x :: \int$ and
$f :: \int \to \int$ then $\psi \wedge (x \ne f\ x)$ is not a constraint and we first have
to instantiate $f$ as a theory term (e.g. $f:=\prefix{+}\ \symb{1}$) before we can establish
satisfiability. 

Now, we formulate our deduction rule for proving non-theorems:
\paragraph{\disprove}
A complete proof state with a contradictory equation yields $\bot$. 
\begin{prooftree}
\AxiomC{$\mathcal{P} = (\eqs \cup \{\eqcon{\sterm}{s}{t}{\tterm}{\psi}\}, \hs)$}
\RightLabel{\scriptsize \quad
\(
\begin{aligned}
&\mathcal{P} \text{ is complete}\\
&s \approx t\ [\psi]\text{ is contradictory}
\end{aligned}
\)
}
\UnaryInfC{$\bot$}
\end{prooftree}
In such a case, $s \approx t\ [\psi]$ cannot be an inductive theorem, as expressed by the following theorem (which we will prove in \autoref{sec:completenessProofRI}). 
\newcommand{\thmCompleteness}{%
  Let $\alcstrs$ be a weakly normalizing, quasi-reductive, ground confluent \lcstrs\ and let $\eqs$ be a set of equations.
  If
  \( 
      (\eqs, \emptyset)
      \vdash^*
      \bot
  \) 
  then there is an equation in \(\eqs\) which is not an inductive theorem. 
}
\begin{thm}[Completeness of RI]\label{thm:completeness}
\thmCompleteness
\end{thm}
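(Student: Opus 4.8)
The plan is to trace back along the derivation path from $\bot$ and extract a ground counterexample. Suppose $(\eqs,\emptyset) \vdash^* \bot$. The final step must be an application of \disprove\ to some complete proof state $\mathcal{P} = (\eqs^* \cup \{\eqcon{\sterm}{s}{t}{\tterm}{\psi}\}, \hs^*)$ where $s \approx t\ [\psi]$ is contradictory. So there is a ground substitution $\delta$ respecting $\psi$ witnessing one of the two cases of \autoref{def:contradictoryEq}. First I would argue that $s\delta$ and $t\delta$ are \emph{not} $\rules$-convertible: in case (1), since $\rules$ is ground confluent it has unique normal forms on ground terms, and $s\delta = \symb{f}\ s_1'\cdots s_n'$ with $n < \arity(\symb{f})$ reduces only under its arguments, so any normal form of $s\delta$ is headed by $\symb{f}$, while any normal form of $t\delta$ is headed by $\symb{g} \ne \symb{f}$; hence they cannot reach a common reduct (using weak normalization to get normal forms at all). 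In case (2), $s$ and $t$ are theory terms; after instantiating $\delta$ so that all variables of $s,t$ map to values, $s\delta$ and $t\delta$ are ground theory terms with $\termInterpret{s\delta} \ne \termInterpret{t\delta}$, and since $\rw$ preserves the interpretation of ground theory terms (calculation rules are sound, and no other rule applies at positions of a pure theory term), again $s\delta \not\leftrightarrow^*_\rules t\delta$. Thus $s \approx t\ [\psi]$ is \emph{not} ground convertible, i.e.\ not an inductive theorem.

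Next I would use the Completeness Property to push this non-theorem back to the initial equation set. By \autoref{def:completestate}, completeness of $\mathcal{P}$ is established either directly at step $1$, by a chain of deduction steps each having the Completeness Property, or by the subset clause (c) referring to an earlier complete state. In all cases, following the chain of complete proof states back to $\mathcal{P}_1 = (\eqs,\emptyset)$, every step along this chain satisfies the implication of \autoref{def:complete}: if $\leftrightarrow_{\eqs_i} \cup \leftrightarrow_{\hs_i} \subseteq \leftrightarrow^*_\rules$ on ground terms, then the same holds for $\eqs_{i+1},\hs_{i+1}$ (clause (c) only shrinks $\eqs$, which trivially preserves the inclusion). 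Taking the contrapositive: since $\leftrightarrow_{\eqs^*} \cup \leftrightarrow_{\hs^*} \not\subseteq \leftrightarrow^*_\rules$ on ground terms (witnessed by $s\delta, t\delta$ above, noting $s \approx t\ [\psi]$ contributes to $\leftrightarrow_{\eqs^*}$), and since $\hs_1 = \emptyset$ contributes nothing, we conclude $\leftrightarrow_{\eqs} \not\subseteq \leftrightarrow^*_\rules$ on ground terms. Hence there is some equation $s' \approx t'\ [\psi'] \in \eqs$ and ground substitution $\gamma$ respecting $\psi'$ with $s'\gamma \not\leftrightarrow^*_\rules t'\gamma$ — i.e.\ an equation in $\eqs$ that is not an inductive theorem.

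The main obstacle I expect is the bookkeeping around \autoref{def:completestate}(c): a complete proof state need not have a complete \emph{predecessor}, only an earlier complete state $\mathcal{P}_j$ with $\eqs_i \subseteq \eqs_j$, and the chain we follow backwards is not the linear derivation path but a path through the "completeness justification" tree. I would handle this by a well-founded induction on the step index $i$: the claim "$\mathcal{P}_i$ complete implies $\leftrightarrow_{\eqs_i} \cup \leftrightarrow_{\hs_i} \subseteq \leftrightarrow^*_\rules$ on ground terms" is proved by induction on $i$, with clause (a) as base case (using $\hs_1 = \emptyset$ and that $\eqs$ is assumed to consist of inductive theorems — wait, it is not so assumed; rather we run the contrapositive throughout), clause (b) handled by \autoref{lem:completeprop}, and clause (c) handled because $\eqs_i \subseteq \eqs_j$ with $j < i$ and $\mathcal{P}_j$ complete gives the inclusion for $\eqs_j$ by induction hypothesis, hence for the subset $\eqs_i$, while the $\hs_i$ part needs a separate small argument that the hypotheses present at step $i$ are a subset of, or derivable from, those justified earlier (this is where care is needed, and may require strengthening the invariant to track $\hs$ alongside $\eqs$, or observing that every rule only adds to $\hs$ equations already appearing in some $\eqs$). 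Once this invariant is in place, the contrapositive at $i$ corresponding to the \disprove\ state, combined with the non-convertibility of $s\delta, t\delta$ established in the first paragraph, finishes the proof.
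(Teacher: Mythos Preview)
Your first paragraph is essentially the paper's \autoref{prop:completenessBaseCase}, and is fine.

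The second part has a real gap, precisely where you flag it. Your proposed invariant is ``$\mathcal{P}_i$ complete $\Rightarrow$ $\leftrightarrow_{\eqs_i} \cup \leftrightarrow_{\hs_i} \subseteq \leftrightarrow^*_\rules$ on ground terms'', proved by induction on $i$ over complete states only. In clause~(c), you get $\eqs_i \subseteq \eqs_j$ with $j < i$ and $\mathcal{P}_j$ complete, so the induction hypothesis handles $\eqs_i$; but it only gives you $\hs_j$, not $\hs_i$, and $\hs$ may have grown between steps $j$ and $i$. Your suggested patch --- ``every rule only adds to $\hs$ equations already appearing in some $\eqs$'' --- is true, but the relevant $\eqs_m$ (the state just before the \induct\ step) need not be a \emph{complete} proof state, so your induction hypothesis does not apply to it. And you cannot simply drop $\hs$ from the invariant, because the Completeness Property for \hypothesis\ and \hdelete\ genuinely needs $\leftrightarrow_{\hs_{i-1}} \subseteq \leftrightarrow^*_\rules$ as a premise.

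The paper closes this gap by proving the invariant for \emph{all} $i$, not just complete ones, and the key extra ingredient is the soundness result (\autoref{theorem:boundedGroundConvertibilityRI}): for an arbitrary step $i$, the $\hs_i$ part is handled exactly as you suggest (elements come from $\eqs_{i-1}$, to which the full induction hypothesis now applies), while the $\eqs_i$ part is handled by locating the least complete state $\mathcal{P}_N$ with $N \geq i$, arguing that its completeness must come via clause~(c) from some $\mathcal{P}_k$ with $k < i$, applying the induction hypothesis to $k$ to get $\leftrightarrow_{\eqs_N} \subseteq \leftrightarrow^*_\rules$, and then invoking \autoref{theorem:boundedGroundConvertibilityRI} to conclude that every equation in $\eqs_i$ is $\rules/\hs_i/\eqs_N$-bounded ground convertible --- which, since both $\hs_i$ and $\eqs_N$ are now known to sit inside $\leftrightarrow^*_\rules$, gives ordinary $\rules$-convertibility. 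So the missing idea is that the soundness theorem is needed inside the completeness proof to handle the non-complete intermediate states.
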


We assume weak normalization in \autoref{def:complete} and \autoref{thm:completeness}
because, without this assumption, \alter\ would not have the Completeness Property.
If the derivation $(\eqs, \emptyset) \vdash^* \bot$ does not use the \alter\ step,
then the conclusion of \autoref{thm:completeness} can still be obtained without
having to require weak normalization.

\begin{exa}
Consider the \lcstrs\ from \autoref{example:nonEquivalence}. 
The system is terminating and in \autoref{example:groundConfluent} we will also prove ground confluence. 
Let $\eqs_1=\{\eqcon{\bullet}{\symb{G}\ f\ k\ x}{\symb{H}\ f\ n\ m\ x}{\bullet}{k=n+m}\}$. 
We will derive $(\eqs_1, \emptyset) \vdash^* \bot$. First, we apply \case\ to obtain 
\[
\eqs_2 
= 
\left\{ 
\begin{aligned}
\textbf{(E1)}
&&
\eqconwc{\bullet}{\symb{G}\ f\ k\ x}{\symb{H}\ f\ n\ m\ x}{\bullet}\quad 
&
[
    k<\symb{0} \wedge n > \symb{0} \wedge k=n+m
]\\
\textbf{(E2)}
& & 
\eqconwc{\bullet}{\symb{G}\ f\ k\ x}{\symb{H}\ f\ n\ m\ x}{\bullet}\quad 
&
[
    k \ge \symb{0} \wedge k=n+m 
]\\
\textbf{(E3)}
& &
\eqconwc{\bullet}{\symb{G}\ f\ k\ x}{\symb{H}\ f\ n\ m\ x}{\bullet}\quad 
&
[
    n \le \symb{0} \wedge k=n+m
]
\end{aligned}
\right\}
\] 
Equation context \textbf{(E1)} will bring us to $\bot$, so we can forget about \textbf{(E2)} and \textbf{(E3)}.
After applying some \simplify\ steps, together with \alter, we obtain
\[
\eqs_3 
\supseteq
\left\{ 
\begin{aligned}
\eqcon{\bullet}{x}{\symb{H}\ f\ n'\ m\ (f\ x)}{\bullet}{n' = n-\symb{1} \wedge k<\symb{0} \wedge n > \symb{0} \wedge k=n+m}
\end{aligned}
\right\}
\] 
A new application of \case,\ followed by a number of \simplify-steps with
$\rules$ yields 
\( \eqs_4 \supseteq \{ \textbf{(E1)'}, \textbf{(E2)'} \} \), with:
\[
\begin{aligned}
\textbf{(E1)'}
&&
&\eqconwc{\bullet}{x}{f\ x}{\bullet}
&
[
    n' \le \symb{0} 
    \wedge 
    n' = n-\symb{1} \wedge k<\symb{0} \wedge n > \symb{0} \wedge k=n+m
]\\
\textbf{(E2)'}
&&
&\eqconwc{\bullet}{x}{\symb{H}\ f\ n'\ m\ (f\ x)}{\bullet}
&
[
    n' > \symb{0} 
    \wedge 
    n' = n-\symb{1} \wedge k<\symb{0} \wedge n > \symb{0} \wedge k=n+m
]
\end{aligned}
\]
Now, \textbf{(E1)'} is contradictory by \autoref{def:contradictoryEq}.(2), so with \disprove\ we obtain $\bot$. 
By \autoref{thm:completeness} it follows that $\symb{G}\ f\ k\ x \approx \symb{H}\ f\ n\ m\ x\ [k=n+m]$ is not an inductive theorem. 
\end{exa}

\section{Proofs}\label{sec:proofs}

In this section, we shall prove the soundness and completeness theorems of \autoref{sec:RI}, in
a form that allows for easily adding new deduction rules.  In addition, we will supply the results
mentioned in \autoref{rem:alter}, which allow us to limit interest to bounded equation contexts
when using Bounded Rewriting Induction.
This section is organized as follows 
\begin{itemize}[label = $\triangleright$]
\item In \autoref{sec:bounded-eq-ct} we show that Bounded RI -- as presented in \autoref{fig:boundedRIrules} -- restricts to a proof system on bounded equation contexts. 
This property is called \emph{Preserving Bounds}. 
\item In \autoref{sec:multisets} we introduce some necessary
  prerequisites on multisets.
\item In \autoref{subsec:boundedconvertibility} we introduce $\rules$-\emph{bounded} ground convertibility; a restriction of $\rules$-ground convertibility which limits intermediate terms in the conversion using the bounding pair $(\succ, \succeq)$.
We use this notion to express the \emph{Base Soundness property}, and prove that
every deduction rule except \induct\ satisfies this property.  This allows us
to obtain a more transparent and modular proof procedure towards proving
\autoref{theorem:soundnessRI}.
\item In \autoref{subsec:soundnessProofBounded} we accumulate the results of the preceding subsections to finally conclude the statement of \autoref{theorem:soundnessRI}. 
In particular, 
\autoref{theorem:boundedGroundConvertibilityRI} shows that every deduction rule that satisfies Preserving Bounds and the Base Soundness property is sound. 
\item
In \autoref{sec:completenessProofRI} we prove the full completeness statement of 
\autoref{thm:completeness} by induction on the length of a derivation sequence that ends in $\bot$. The base case is covered by \autoref{prop:completenessBaseCase}, after which the full proof is obtained 
using \autoref{lem:completeprop}.
\end{itemize}

\subsection{Bounded equation contexts}\label{sec:bounded-eq-ct}
In the following, we assume given a fixed bounding pair \( (\succ,\succeq) \) on $\Terms$.
In this subsection we will show that Bounded Rewriting Induction can always be limited to bounded equation
contexts.

\begin{defi}[Preserving Bounds]\label{def:boundPreserve}
We say that a deduction rule \emph{preserves bounds} if, whenever
\( (\eqs,\hs) \vdash (\eqs',\hs') \) and all equation contexts in \( \eqs \)
are bounded, then also all equation contexts in \( \eqs' \) are bounded.

We say that it \emph{preserves strong bounds} if, whenever
\( (\eqs,\hs) \vdash (\eqs',\hs') \) and all equation contexts in \( \eqs \)
are \emph{strongly} bounded, then also all equation contexts in \( \eqs' \)
are \emph{strongly} bounded.
\end{defi}

\begin{lem}\label{lem:boundedPreserve}
All derivation rules in \autoref{fig:boundedRIrules} preserve bounds.
\end{lem}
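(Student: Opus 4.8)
The plan is to verify the claim rule-by-rule against \autoref{def:boundPreserve}, using in each case the characterization of a bounded equation context from \autoref{defi:bounded-eq-cont}: an equation context $\eqcon{\sterm}{s}{t}{\tterm}{\psi}$ is bounded iff $\csucceq{\sterm}{s}{\psi}$ and $\csucceq{\tterm}{t}{\psi}$. So for each deduction rule I assume every equation context in $\eqs$ is bounded and check that each \emph{new} equation context produced in $\eqs'$ is bounded; equation contexts carried over unchanged from $\eqs$ are trivially fine. The two defining properties of a bounding pair that I will lean on repeatedly are: $s \succeq t$ whenever $s \rw t$ or $s \supterm t$, and the compatibility properties $\succ \cdot \succeq \subseteq \succ$ and $\succeq \cdot \succ \subseteq \succ$ of an ordering pair, together with transitivity of $\succeq$. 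I will also use the straightforward fact that instantiation and constraint-strengthening preserve the lifted relations $\csucceq{\cdot}{\cdot}{\cdot}$: if $\csucceq{u}{v}{\psi}$ holds and $\psi'$ implies $\psi$ (after the relevant substitution), then $\csucceq{u}{v}{\psi'}$ still holds, since the set of respecting ground substitutions only shrinks.

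The rules split into easy and slightly-less-easy cases. \simplify: the new context is $\eqcon{\sterm}{C[r\delta]}{t}{\tterm}{\psi}$; we have $C[\ell\delta]\gamma \rw C[r\delta]\gamma$ for every respecting ground $\gamma$ (by $\psi \models^\delta \varphi$), hence $\csucceq{C[\ell\delta]}{C[r\delta]}{\psi}$, and composing with the hypothesis $\csucceq{\sterm}{C[\ell\delta]}{\psi}$ via transitivity of $\succeq$ gives $\csucceq{\sterm}{C[r\delta]}{\psi}$; the $\tterm$-side is unchanged. \hypothesis\ is handled exactly the same way, except that the replacement step $C[\ell\delta]\gamma \leftrightarrow C[r\delta]\gamma$ is justified by the explicit side condition $\csucceq{\sterm}{C[r\delta]}{\psi}$ in the rule rather than by a rewrite step — indeed this side condition is precisely what is needed, so the case is immediate. \semicons: from $\csucceq{\sterm}{f\,s_1\cdots s_n}{\psi}$ and $f\,s_1\cdots s_n \supterm s_i$ (which needs $n>0$ and the head-at-application subterm convention, both guaranteed by the rule's condition) we get $\csucceq{\sterm}{s_i}{\psi}$, and symmetrically for $t_i$. \delete\ and \hdelete\ remove contexts and add none, so there is nothing to check. \postulate\ adds $\eqcon{\bullet}{s}{t}{\bullet}{\psi}$, and $\bullet \succeq s$, $\bullet \succeq t$ hold by definition of $\bullet$. \induct\ replaces $\eqcon{\sterm}{s}{t}{\tterm}{\psi}$ by $\eqcon{s}{s}{t}{t}{\psi}$, and $\csucceq{s}{s}{\psi}$, $\csucceq{t}{t}{\psi}$ hold by reflexivity of $\succeq$. \case: the new contexts are $\eqcon{\sterm\delta}{s\delta}{t\delta}{\tterm\delta}{\psi\delta\wedge\varphi}$; here I instantiate the bound $\csucceq{\sterm}{s}{\psi}$ by $\delta$ and strengthen the constraint — any ground substitution respecting $\psi\delta\wedge\varphi$ composed with $\delta$ respects $\psi$ — so $\csucceq{\sterm\delta}{s\delta}{\psi\delta\wedge\varphi}$ follows, and likewise for the $\tterm$-side. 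Finally \generalize/\alter: the new context $\eqcon{\sterm'}{s'}{t'}{\tterm'}{\psi'}$ comes with the explicit side conditions $\csucceq{\sterm'}{s'}{\psi'}$ and $\csucceq{\tterm'}{t'}{\psi'}$ built into the rule, so boundedness is immediate.

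The main thing to be careful about — the only place that is more than bookkeeping — is the interplay between substitution and constraint-strengthening in \case\ (and implicitly in \simplify\ via $\models^\delta$): one must check that the lifted relation $\csucceq{\cdot}{\cdot}{\cdot}$ genuinely behaves well under the passage from $\psi$ to $\psi\delta\wedge\varphi$ and along the substitution $\delta$, i.e. that every ground substitution $\gamma$ respecting the new constraint yields, via $\delta\gamma$, a ground substitution respecting the old one with matching domains. I would state this as a small auxiliary observation (arguably worth its own unnumbered remark) and then the seven rule-cases are each one or two lines. I do not expect any genuine obstacle; the lemma is essentially the assertion that the side conditions decorating \generalize/\alter\ and \hypothesis\ were chosen correctly, and that the remaining rules only ever move $\sterm,\tterm$ "downward" through steps already covered by the bounding-pair axioms.
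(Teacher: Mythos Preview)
Your proposal is correct and follows essentially the same case-by-case approach as the paper's own proof: each derivation rule is checked individually, using reflexivity of $\succeq$ for \induct, the inclusions $\rw,\supterm \subseteq \succeq$ plus transitivity for \simplify\ and \semicons, the explicit side conditions for \hypothesis\ and \generalize/\alter, the substitution/constraint argument for \case, and trivialities for \delete, \hdelete, and \postulate. Your remark about carefully handling the composed substitution $\delta\gamma$ in the \case\ rule matches exactly how the paper argues that step.
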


\begin{proof}
All elements of \( \eqs' \cap \eqs \) are bounded equation contexts by assumption; as for the rest,
consider the rule by which \( (\eqs,\hs) \vdash (\eqs',\hs') \):
\begin{description}
\item[\simplify]
  \( \eqs' \setminus \eqs = \{ \eqcon{\sterm}{C[r\delta]}{t}{\tterm}{\psi} \} \) with
  \( \eqcon{\sterm}{C[\ell\delta]}{t}{\tterm}{\psi} \in \eqs \).
  Because the original equation context is bounded, we have both
  \( \csucceq{\tterm}{t}{\psi} \) and \( \csucceq{\sterm}{C[\ell\delta]}{\psi} \).
  Because \( \rw\ \subseteq\ \succeq \), we have
  \( \csucceq{C[\ell\delta]}{C[r\delta]}{\psi} \).
  Then \( \csucceq{\sterm}{C[r\delta]}{\psi} \) by
  transitivity of \( \succeq \).

\item[\case]
  \( \eqs \) contains a bounded equation context \( \eqcon{\sterm}{s}{t}{\tterm}{\psi} \), and
  \( \eqs' \setminus \eqs \) contains equation contexts
  \( \eqcon{\sterm\delta}{s\delta}{t\delta}{\tterm\delta}{\psi\delta \wedge \varphi} \).
  To see that these are bounded as well, consider a ground substitution \( \sigma \) that respects
  \( \psi\delta \wedge \varphi \).  This substitution certainly respects \( \psi\delta \), so the
  composition \( \delta\sigma \) respects \( \psi \).  Since by assumption \( \csucceq{\sterm}{s}{
  \psi} \) and \( \csucceq{\tterm}{t}{\psi} \) hold, we have \( \sterm(\delta\sigma) \succeq
  s(\delta\sigma) \) and \( \tterm(\delta\sigma) \succeq t(\delta\sigma) \).

\item[\delete,\hdelete]
  \( \eqs' \setminus \eqs = \emptyset \)

\item[\semicons] \( \eqs' \setminus \eqs \) contains only equation contexts
  \( \eqcon{\sterm}{s_i}{t_i}{\tterm}{\psi} \) where \( \eqs \) contains a bounded equation
  context \( \eqcon{\sterm}{f\ s_1 \cdots s_n}{f\ t_1 \cdots t_n}{\tterm}{\psi} \);
  since \( \sterm\gamma \succeq (f\ s_1 \cdots s_n)\gamma \) implies \( \sterm\gamma
  \succeq s_i\gamma \) (as \( \succeq \) is transitive and includes \( \supterm \)), and we
  similarly have \( \tterm\gamma \succeq t_i\gamma \), all elements of \( \eqs' \setminus
  \eqs \) are bounded.

\item[\induct] \( \eqs' \setminus \eqs = \{ \eqcon{s}{s}{t}{t}{\psi} \} \) and
  \( \csucceq{s}{s}{\psi} \) and \( \csucceq{t}{t}{\psi} \) hold by reflexivity of
  \( \succeq \).

\item[\hypothesis] \( \eqs' \setminus \eqs = \{ \eqcon{\sterm}{C[r]}{t}{\tterm}{\psi} \} \)
  where \( \csucceq{\sterm}{C[r]}{\psi} \) by definition of the deduction rule, and
  \( \csucceq{\tterm}{t}{\psi} \) because the input equation is bounded.

\item[\generalize/\alter] Boundedness of the result is a requirement of the deduction rule.

\item[\postulate] The only fresh equation context is \( \eqcon{\bullet}{s}{t}{\bullet}{
  \psi} \), and \( \bullet \succ s\gamma, t\gamma \) holds.
  \qedhere
\end{description}
\end{proof}

By almost the same proof we can see that the derivation rules -- when applied
in certain ways and with a suitably restricted bounding pair -- can also
preserve \emph{strong} bounds.

\begin{lem}\label{lem:stronglyBoundedPreserve}
Suppose $s \succ t$ whenever $s \rw t$ or $s \supterm t$.
Let $s\ \succeq^! t\ [\psi]$ denote that either $s = t$ or $\csucc{s}{t}{\psi}$.
Consider the variation of the proof system of \autoref{fig:boundedRIrules} where
any use of \( \succeq \) is replaced by \( \succeq^! \).
All derivation rules in this variant preserve strong bounds.
\end{lem}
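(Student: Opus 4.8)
The plan is to essentially re-run the proof of \autoref{lem:boundedPreserve}, but replacing every appeal to ``\( \csucceq{\sterm}{s}{\psi} \)'' with the stronger claim ``\( \sterm\ \succeq^!\ s\ [\psi] \)'', and using the extra hypothesis \( \rw \cup \supterm\ \subseteq\ \succ \) exactly where the original proof only needed \( \rw \cup \supterm\ \subseteq\ \succeq \). The key observation that makes this go through cheaply is that \( \succeq^! \) is still transitive when composed on the correct side with \( \succ \): if \( a\ \succeq^!\ b \) and \( b \succ c \) then \( a \succ c \) (either \( a = b \succ c \), or \( a \succ b \succ c \) and we use transitivity of \( \succ \)); and if \( a \succ b \) and \( b\ \succeq^!\ c \) then \( a \succ c \) similarly. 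In particular \( \succ \cdot \succeq^!\ \subseteq\ \succ \) and \( \succeq^! \cdot \succeq^!\ \subseteq\ \succeq^! \), which are the only closure properties the case analysis uses.

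First I would note that in this variant the equation contexts at the start of a derivation are strongly bounded, since they all have \( \sterm = \tterm = \bullet \) and \( \bullet \succ s\gamma \) for every \( s \) and \( \gamma \); so the property is non-vacuous. Then I would go through the rules one by one, as in \autoref{lem:boundedPreserve}. For \simplify: from \( \sterm\ \succeq^!\ C[\ell\delta]\ [\psi] \) and \( C[\ell\delta]\gamma \rw C[r\delta]\gamma \) (hence \( C[\ell\delta]\gamma \succ C[r\delta]\gamma \) by the hypothesis on \( \succ \)), we get \( \sterm\gamma \succ C[r\delta]\gamma \) by the composition fact above, so \( \csucc{\sterm}{C[r\delta]}{\psi} \), which is a fortiori \( \succeq^! \). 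For \semicons: from \( \sterm\ \succeq^!\ (f\ s_1\cdots s_n)\ [\psi] \) and \( (f\ s_1\cdots s_n)\gamma \supterm s_i\gamma \) (so \( \succ \) by hypothesis), we get \( \csucc{\sterm}{s_i}{\psi} \). For \hypothesis\ (in the variant, the side condition reads \( \csucceqstrong{\sterm}{C[r\delta]}{\psi} \), i.e.\ the \( \succeq^! \) form), strong boundedness of the new context is literally the side condition, and \( \tterm\ \succeq^!\ t\ [\psi] \) is inherited. For \hdelete, \delete: \( \eqs'\setminus\eqs = \emptyset \). For \induct: the new context is \( \eqcon{s}{s}{t}{t}{\psi} \), strongly bounded since both sides equal their bounding term. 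For \case: the new contexts are \( \eqcon{\sterm\delta}{s\delta}{t\delta}{\tterm\delta}{\psi\delta\wedge\varphi} \); if \( \sterm = s \) then \( \sterm\delta = s\delta \), and if \( \csucc{\sterm}{s}{\psi} \) then \( \csucc{\sterm\delta}{s\delta}{\psi\delta\wedge\varphi} \) since any \( \sigma \) respecting \( \psi\delta\wedge\varphi \) gives a \( \delta\sigma \) respecting \( \psi \); either way \( \sterm\delta\ \succeq^!\ s\delta\ [\psi\delta\wedge\varphi] \). For \postulate: the fresh context has bounding terms \( \bullet \), and \( \csucc{\bullet}{s}{\psi} \), \( \csucc{\bullet}{t}{\psi} \) hold.

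The one case needing genuine care — and the main obstacle — is \generalize/\alter. Here strong boundedness is \emph{not} automatic: the side condition of the rule (even in the \( \succeq^! \) variant) gives only \( \sterm'\ \succeq^!\ s'\ [\psi] \), and moreover the whole point of \generalize\ is that \( s' \) may be a fresh variable with \( \sterm' = s' \), which is fine, but in general a user could supply an altered context that is bounded without being strongly bounded. The honest statement is therefore that \alter\ and \generalize\ preserve strong bounds \emph{when the side condition is taken in its \( \succeq^! \) form} — which is exactly what ``replace any use of \( \succeq \) by \( \succeq^! \)'' prescribes — so for these two rules the conclusion is, again, immediate from the (strengthened) side condition, with no appeal to the hypothesis on \( \succ \) at all. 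I would close by remarking that this is why \autoref{rem:alter} and \autoref{lem:altersafe} are needed separately: they identify the concrete, practically-occurring ways of applying \alter\ (cases \ref{alter:equisat} and \ref{alter:substitute}) for which the \( \succeq^! \) side condition can in fact be discharged automatically.
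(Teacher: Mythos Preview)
Your proposal is correct and follows essentially the same approach as the paper: both proofs re-run the case analysis of \autoref{lem:boundedPreserve}, replacing \( \succeq \) by \( \succeq^! \) and using the strengthened hypothesis \( \rw \cup \supterm \subseteq \succ \) precisely in the \simplify\ and \semicons\ cases. The paper is slightly terser (it singles out \simplify, \case, \semicons\ and declares the rest ``immediately obvious''), whereas you spell out every rule; your closure observation \( \succeq^! \cdot \succeq^! \subseteq \succeq^! \) is exactly the transitivity remark the paper opens with. One tiny slip: in the \generalize/\alter\ case the side condition is \( \csucceqstrong{\sterm'}{s'}{\psi'} \) (with the \emph{new} constraint \( \psi' \)), not \( [\psi] \), but this does not affect your argument.
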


Note that \( \succeq \) is only used in the definition of \hypothesis,
\generalize\ and \alter, so all other derivation rules preserve strong bounds
without modification.

\begin{proof}
The proof is a straightforward adaptation of the proof of
\ref{lem:boundedPreserve}, observing that if $\csucceqstrong{a}{b}{\varphi}$
and $\csucceqstrong{b}{c}{\varphi}$ then $\csucceqstrong{a}{c}{\varphi}$
immediately follows.
\begin{itemize}[label = $\triangleright$]
\item For \simplify, we use that certainly $\csucc{C[\ell\delta]}{C[r\delta]}{\psi}$ because
  $\rw$ is included in $\succ$.
\item For \case, we observe that $\csucceqstrong{a}{b}{\varphi}$ clearly implies
  $\csucceqstrong{a\delta}{b\delta}{\varphi\delta}$ as well.
\item For \semicons, we observe that $\csucceqstrong{f\ s_1 \cdots s_n}{s_i}{\psi}$ for any
  $\psi$ because $\rhd$ is included in $\succ$ (as $\succeq$ is the reflexive closure of
  $\succ$ and $a \supterm b$ implies $a \neq b$).
\item All other cases are immediately obvious.
  \qedhere
\end{itemize}
\end{proof}

We also observe that, as stated in \autoref{rem:alter}, the typical ways of
using \alter\ preserve both boundedness and strong boundedness without a need
for additional checks:

\begin{lem}\label{lem:altersafe}
Suppose \( \eqcon{\sterm}{s}{t}{\tterm}{\psi} \) is a bounded equation context,
and \( \eqcon{\sterm'}{s'}{t'}{\tterm'}{\psi'} \) alters it by either method
\ref{alter:equisat} or \ref{alter:substitute}.
Then also \( \eqcon{\sterm'}{s'}{t'}{\tterm'}{\psi'} \) is a bounded equation
context.
If \( \eqcon{\sterm}{s}{t}{\tterm}{\psi} \) is strongly bounded, then so is
\( \eqcon{\sterm'}{s'}{t'}{\tterm'}{\psi'} \).
\end{lem}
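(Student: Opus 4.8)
The plan is to treat the two altering methods separately, and in each case to exhibit, for every ground substitution $\gamma$ that could witness a failure of (strong) boundedness of $\eqcon{\sterm'}{s'}{t'}{\tterm'}{\psi'}$, a matching ground substitution for the original context $\eqcon{\sterm}{s}{t}{\tterm}{\psi}$ that produces \emph{the same} pair of ground terms on each side. Since $\csucceq{\sterm}{s}{\psi}$ (resp.\ $\csucc{\sterm}{s}{\psi}$), as defined in \autoref{defi:ord-pair}, is quantified over exactly such ground substitutions, boundedness -- and the disjunct $\csucc{\sterm}{s}{\psi}$ of strong boundedness -- then transfers immediately. Moreover, in both methods $\sterm = s$ implies $\sterm' = s'$, so the remaining disjunct of \autoref{defi:strongly-bounded-eq-cont} is preserved as well. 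Hence it suffices to prove the $\succeq$-transfer, and to rerun the identical argument with $\succ$ in place of $\succeq$ for the strong case.

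First I would dispatch method \ref{alter:substitute}, where $\eqcon{\sterm'}{s'}{t'}{\tterm'}{\psi'}$ is $\eqcon{\sterm\gamma_0}{s\gamma_0}{t\gamma_0}{\tterm\gamma_0}{\psi}$ with $\gamma_0 = [x_1 := u_1,\dots,x_n := u_n]$ and $\psi \Longrightarrow \bigwedge_i x_i = u_i$ valid. Given a ground $\gamma$ respecting $\psi$ with $\Vars{\sterm\gamma_0,s\gamma_0} \subseteq \domain(\gamma)$, note that since $\gamma$ respects $\psi$ and the implication is valid, $\gamma(x_i)$ and $u_i\gamma$ denote the same value (extending $\gamma$ trivially on any $x_i \notin \domain(\gamma)$, which keeps $\gamma$ respecting $\psi$). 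The key observation is then that $w\gamma_0\gamma = w\gamma$ for every term $w$, because the only effect of $\gamma_0$ is to rename each $x_i$ to $u_i$. Consequently $\sterm'\gamma = \sterm(\gamma_0\gamma) = \sterm\gamma$ and $s'\gamma = s\gamma$, so $\csucceq{\sterm}{s}{\psi}$ yields $\sterm'\gamma \succeq s'\gamma$; the same computation handles the right-hand pair $\tterm',t'$.

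Next I would treat method \ref{alter:equisat}: here the four terms are unchanged and only $\psi$ is replaced by an equi-satisfiable $\psi'$ with $(\exists \vec{x}.\,\psi) \Longleftrightarrow (\exists \vec{y}.\,\psi')$ valid for the appropriate hidden-variable lists (and, in the non-inextensible case, with the visible variables of $\psi$ and $\psi'$ coinciding). Given a ground $\gamma$ respecting $\psi'$ and defined on $\Vars{\sterm,s,t,\tterm}$, restrict $\gamma$ to the free variables of $\exists \vec{y}.\,\psi'$; this assignment satisfies $\exists \vec{y}.\,\psi'$ (witnessed by $\gamma$ itself on $\Vars{\psi'}$), hence by the biconditional it satisfies $\exists \vec{x}.\,\psi$, producing a value assignment $\eta$ on $\{\vec{x}\}$ with $\termInterpret{\psi(\gamma \cup \eta)} = \true$. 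Then $\gamma' = \gamma \cup \eta$ respects $\psi$, is ground, and agrees with $\gamma$ on $\Vars{\sterm,s,t,\tterm}$; applying $\csucceq{\sterm}{s}{\psi}$ to $\gamma'$ and rewriting along this agreement gives $\sterm\gamma \succeq s\gamma$, and similarly $\tterm\gamma \succeq t\gamma$.

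The main obstacle is the variable bookkeeping in the second case: ensuring that the witness $\gamma$ for the new context extends to a substitution respecting the old constraint $\psi$ \emph{without} altering the ground instances of $\sterm,s,t,\tterm$. This is exactly where the side condition of method \ref{alter:equisat} on the non-hidden variables of $\psi$ versus $\psi'$ is used; once that is pinned down, the remainder is a direct unfolding of \autoref{defi:bounded-eq-cont}, \autoref{defi:strongly-bounded-eq-cont}, and the extension of the ordering pair to constrained terms in \autoref{defi:ord-pair}.
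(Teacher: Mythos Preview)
Your proposal is correct and follows essentially the same approach as the paper's proof: for method~\ref{alter:equisat} you use the equi-satisfiability biconditional to convert a ground substitution respecting $\psi'$ into one respecting $\psi$ that agrees on all variables of $\sterm,s,t,\tterm$, and for method~\ref{alter:substitute} you use validity of $\psi \Rightarrow \bigwedge_i x_i = u_i$ to conclude $w\gamma_0\gamma = w\gamma$ for every term $w$; the paper does exactly this, in the same order of cases reversed. Your explicit remark that $\sterm = s$ implies $\sterm' = s'$ in both methods, and that the $\succeq$-argument reruns verbatim with $\succ$, matches the paper's treatment of strong boundedness.
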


\begin{proof}
First consider case \ref{alter:equisat}: in this case \( \eqcon{\sterm'}{s'}{t'}{\tterm'}{\psi'} \)
is 
\( \eqcon{\sterm}{s}{t}{\tterm}{\varphi} \), where:
\begin{itemize}[label=$\triangleright$]
\item \( \{x_1,\dots,x_n\} = \Vars{\psi} \setminus \Vars{\sterm,s,t,\tterm} \) and
  \( \{y_1,\dots,y_m\} = \Vars{\varphi} \setminus \Vars{\sterm,s,t,\tterm} \);
\item \( \{z_1,\dots,z_k\} = \Vars{\varphi,\psi} \cap \Vars{\sterm,s,t,\tterm} \);
\item \( (\exists x_1 \dots x_n.\psi) \Leftrightarrow (\exists y_1 \dots y_m.\varphi) \) is
  logically valid; that is, \\
  for all appropriately typed values \( v_1,\dots,v_k \), \\
  there exist values \( u_1,\dots,u_n \) s.t.~\( \termInterpret{\psi[z_1:=v_1,\dots,z_k:=v_k,
  x_1:=u_1,\dots,x_n:=u_n]} = \true \), \\
  if and only if \\
  there exist values \( w_1,\dots,w_m \) s.t.~\( \termInterpret{\varphi[z_1\!:=\!v_1,\dots,
  z_k\!:=\!v_k,y_1\!:=\!w_1,\dots,y_m\!:=\!w_m]} = \true \).
\end{itemize}

Now, let \( \gamma \) be a ground substitution that respects \( \varphi \).
From ``respects'', we know that \( \termInterpret{\varphi\gamma} = \true \), so for the fixed
values \( v_1 = \gamma(z_1), \dots,v_k = \gamma(z_k) \), if we choose \( w_1 = \gamma(y_1),\dots,
w_m = \gamma(y_m) \) then the second part of the ``if and only if'' above is satisfied, so we
can find values \( u_1,\dots,u_n \) that satisfy the first part.
Now, let \( \delta \) be the substitution that maps each \( x_i \) to \( u_i \) and all other
variables \( a \) to \( \gamma(a) \).  Then indeed \( \termInterpret{\psi\delta} =
\termInterpret{\psi[z_1:=v_1,\dots,z_k:=v_k,x_1:=u_1,\dots,x_n:=u_n]} = \true \), so \( \delta \)
respects \( \psi \), and since \( \delta \) is identical to \( \gamma \) on all variables in
\( \sterm,s,t,\tterm \), we have \( \sterm\gamma = \sterm\delta \succeq s\delta = s\gamma \) and
\( \tterm\gamma = \tterm\delta \succeq t\delta = t\gamma \) as required.

As for strong boundedness: if \( \sterm = s \) for the original, then clearly
the same holds for the altered context.  Otherwise \( \csucc{\sterm}{s}{\psi}
\) so, as we saw above, for all \( \gamma \) we find \( \delta \) such that
\( \sterm\gamma = \sterm\delta \succ s\delta = s\gamma \).  The same holds for
\( \tterm \) and \( t \).

Now consider case \ref{alter:substitute}: in this case
\( \eqcon{\sterm'}{s'}{t'}{\tterm'}{\psi'} \) has the form
\( \eqcon{\sterm\gamma}{s\gamma}{t\gamma}{\tterm\gamma}{\psi} \) where
\( \gamma = [x_1:=v_1,\dots,x_n:=v_n] \) such that \( \psi \Longrightarrow x_1 = v_1 \wedge
x_n = v_n \) is valid; that is, for every substitution \( \delta \) that respects \( \psi \) we
have \( \delta(x_i) = v_i\delta \).  Hence, \( \sterm\gamma\delta = \sterm\delta \succeq s\delta =
s\gamma\delta \) and \( \tterm\gamma\delta = \tterm\delta \succeq t\delta = t\gamma\delta \) as
required, and similar for strong boundedness.
\end{proof}

\subsection{Multiset orderings}
\label{sec:multisets}
The proofs in the following subsections make substantial use of the multiset
extension of an ordering pair $(\succ, \succeq)$ on $\mathcal{A} = \Terms$.
We particularly consider multisets containing exactly 2 elements, which allows
for a much simpler representation than the common definitions in the literature
(where multisets may have infinite size).  Here, we present our version of the
multiset extension of an ordering pair, as follows:

For an ordering pair $(\succ, \succeq)$ on $\mathcal{A}$, a \emph{size-2 multiset} over \( \mathcal{A} \) is an unordered pair \( \multiset{a,b} \) with \(a,
b \in \mathcal{A} \) (it is allows to have $a = b$).  We say that:
\begin{itemize}
\item \( \multiset{a_1,a_2} = \multiset{b_1,b_2} \) if either \( a_1 = b_1 \) and \( a_2 = b_2 \),
  or \( a_1 = b_2 \) and \( a_2 = b_1 \)
\item \( \multiset{a_1,a_2} \succeqmul \multiset{b_1,b_2} \) if one of the following holds:
  \begin{itemize}
  \item there is some \( i \in \{1,2\} \) such that both \( a_i \succ b_1 \) and \( a_i \succ b_2 \)
  \item either \( a_1 \succeq b_1 \) and \( a_2 \succeq b_2 \), or \( a_1 \succeq b_2 \) and
    \( a_2 \succeq b_1 \)
  \end{itemize}
\item \( \multiset{a_1,a_2} \succmul \multiset{b_1,b_2} \) if one of the following holds:
  \begin{itemize}
  \item there is some \( i \in \{1,2\} \) such that both \( a_i \succ b_1 \) and \( a_i \succ b_2 \)
  \item there exist \( i,j \in \{1,2\} \) such that \( a_i \succ b_j \) and \( a_{3-i} \succeq
    b_{3-j} \)
  \end{itemize}
\end{itemize}

Note that \( \multiset{a_1,a_2} \succmul \multiset{b_1,b_2} \) implies
\( \multiset{a_1,a_2} \succeqmul \multiset{b_1,b_2} \), because \( a_i \succ b_j \) implies
\( a_i \succeq b_j \).
Hence, we easily see that the multiset extension \( (\succmul,\succeqmul) \) is
itself an ordering pair on size-2 multisets over \( \mathcal{A} \).

\subsection{Bounded ground convertibility}\label{subsec:boundedconvertibility}

In principle, to prove \autoref{theorem:soundnessRI}, it suffices to show $\rules$-ground
convertibility for every equation in $\eqs$.  However, as we will see, bounded rewriting induction
actually proves a stronger property, called \emph{bounded} ground convertibility.

\begin{defi}[$\rules$/$\hs/\eqs$-Bounded convertility]\label{def:boundconvert}
Let \( \rules \) be a set of rewrite rules,
$(\succ, \succeq)$ a bounding pair,
\( \hs \) a set of equations,
and \( \eqs \) a set of equation contexts.
For \(a,b \in \Terms \cup \{\bullet\}\) and \(u,v \in \Terms\) we define \( u \convertsingle{a}{b}{\rules}{\hs}{\eqs} v \) if one of
the following holds:
\begin{enumerate}[(A)]
\item\label{boundconvert:rule}\label{convert:rule}
  \( u = C[\ell\sigma] \) and \( v = C[r\sigma] \) for some \( \ell \to r\ [\varphi] \) or
  \( r \to \ell\ [\varphi] \) in \( \rules \), such that:
  \begin{itemize}[label=$\triangleright$]
  \item \( \sigma \) is a ground substitution that respects the rule
  \item \( a \succeq \ell\sigma \) or \( b \succeq \ell\sigma \), and
  \item \( a \succeq r\sigma \) or \( b \succeq r\sigma \)
  \end{itemize}
\item\label{boundconvert:hypo}\label{convert:hypo}
  \( u = C[\ell\sigma] \) and \( v = C[r\sigma] \) for some \( \ell \approx r\ [\varphi] \) or
  \( r \approx \ell\ [\varphi] \) in \( \hs \) such that:
  \begin{itemize}[label=$\triangleright$]
  \item \( \sigma \) is a ground substitution that respects the equation, and
  \item \( \{a,b\} \succ_{\text{mul}} \{\ell\sigma,r\sigma\} \)
  \end{itemize}
\item\label{boundconvert:eqs}\label{convert:eqs}
  \( u = C[\ell\sigma] \) and \( v = C[r\sigma] \) for some \( \eqcon{d}{\ell}{r}{e}{\varphi} \in \eqs \)
  such that:
  \begin{itemize}[label=$\triangleright$]
  \item \( \sigma \) is a ground substitution that respects the equation context, and
  \item \( \{a,b\} \succeq_{\text{mul}} \{d\sigma,e\sigma\} \)
  \end{itemize}
\end{enumerate}
We write \( \bconvert{a}{b}{\rules}{\hs}{\eqs} \) for the reflexive, transitive closure of
\( \bconvertsingle{a}{b}{\rules}{\hs}{\eqs} \).
\end{defi}

\begin{defi}[\( \rules/\hs/\eqs \)-Bounded ground convertibility]
\label{def:boundedGroundConv}
Assume given a fixed \lcstrs\ with rules \( \rules \), and a bounding pair $(\succ,\succeq)$.
Let $\hs$ be a set of equations and $\eqs$ be a set of equation contexts. 
An equation context \( \eqcon{\sterm}{s}{t}{\tterm}{\psi} \) is
\( \rules/\hs/\eqs \)-bounded ground convertible if
\( s\gamma \bconvert{\sterm\gamma}{\tterm\gamma}{\rules}{\hs}{\eqs} t\gamma \) for every ground
substitution \( \gamma \) that respects \( \psi \).
\end{defi}

We will prove (in \autoref{thm:boundedRI}) that, if
\( (\eqs,\emptyset) \vdash^* (\emptyset,\hs) \), then
every equation context in \( \eqs \) is \( \rules/\hs/\eqs \)-bounded ground
convertible.
This result not only implies \autoref{theorem:soundnessRI}, but will also be invaluable to use
rewriting induction for ground confluence (\autoref{cor:RIforGroundConfluence}).

\medskip
To start with the proof, we observe that both relations
\( \bconvertsingle{a}{b}{\rules}{\hs}{\eqs} \) and \( \bconvert{a}{b}{\rules}{\hs}{\eqs} \) are
symmetric.  We will often use this property without explicitly stating it.

We also immediately see that our relations are preserved under contexts:

\begin{lem}\label{lem:convertContext}
Let \( D \) be a context.

\begin{enumerate*}
\item\label{lem:convertContext:bcs}
  If \( u \bconvertsingle{a}{b}{\rules}{\hs}{\eqs} v \), then \( D[u] \bconvertsingle{a}{b}{\rules}{\hs}{\eqs} D[v] \).
\item\label{lem:convertContext:bc}
  If \( u \bconvert{a}{b}{\rules}{\hs}{\eqs} v \), then \( D[u] \bconvert{a}{b}{\rules}{\hs}{\eqs} D[v] \).
\end{enumerate*}
\end{lem}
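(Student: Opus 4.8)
The plan is to establish part~\ref{lem:convertContext:bcs} by a direct case analysis on which clause of \autoref{def:boundconvert} witnesses the single step $u \bconvertsingle{a}{b}{\rules}{\hs}{\eqs} v$, and then to obtain part~\ref{lem:convertContext:bc} by a straightforward induction on the length of a conversion sequence.

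For part~\ref{lem:convertContext:bcs}, note that in each of the three clauses \ref{convert:rule}, \ref{convert:hypo}, \ref{convert:eqs} the step is witnessed by a context $C$, a ground substitution $\sigma$, and a pair of terms $p,q$ --- taken, in one of the two orientations, from a rule of $\rules$, an equation of $\hs$, or an equation context of $\eqs$ --- with $u = C[p\sigma]$ and $v = C[q\sigma]$, subject to some side conditions. The first thing I would observe is that $C' := D[C[\square]]$ is again a context with exactly one hole: this holds even when the hole of $C$ or of $D$ occurs at the head of an application, which is permitted by the notion of context recalled in \autoref{sec:preliminaries}. Consequently $D[u] = C'[p\sigma]$ and $D[v] = C'[q\sigma]$. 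The second, and crucial, observation is that the side conditions of each clause --- that $\sigma$ is a ground substitution respecting the rule/equation/equation context, together with the ordering requirements, which only mention $a$, $b$, the terms $p\sigma, q\sigma$, or (in clause~\ref{convert:eqs}) the bounding terms $d\sigma, e\sigma$ --- make no reference to the surrounding context. Hence the very same clause, now read with $C'$ in place of $C$, witnesses $D[u] \bconvertsingle{a}{b}{\rules}{\hs}{\eqs} D[v]$.

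For part~\ref{lem:convertContext:bc}, I would unfold $\bconvert{a}{b}{\rules}{\hs}{\eqs}$ as the reflexive--transitive closure of $\bconvertsingle{a}{b}{\rules}{\hs}{\eqs}$: a conversion $u \bconvert{a}{b}{\rules}{\hs}{\eqs} v$ is a finite chain $u = u_0 \bconvertsingle{a}{b}{\rules}{\hs}{\eqs} u_1 \bconvertsingle{a}{b}{\rules}{\hs}{\eqs} \cdots \bconvertsingle{a}{b}{\rules}{\hs}{\eqs} u_n = v$ with $n \ge 0$. Applying part~\ref{lem:convertContext:bcs} to each link gives $D[u_i] \bconvertsingle{a}{b}{\rules}{\hs}{\eqs} D[u_{i+1}]$ for $0 \le i < n$, and concatenating these --- with the empty chain ($n = 0$) handled by reflexivity, since then $D[u] = D[v]$ --- yields $D[u] \bconvert{a}{b}{\rules}{\hs}{\eqs} D[v]$.

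There is no genuine obstacle here; the one point deserving a moment's attention is the bookkeeping that a composite of two contexts is again a context in the applicative setting of \lcstrss\ (where holes may sit at the head of an application), after which both parts are entirely routine.
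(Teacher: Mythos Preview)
Your proposal is correct and follows exactly the paper's approach: the paper's proof is a one-line remark that in each clause of \autoref{def:boundconvert} one may replace the witnessing context $C[\,]$ by $D[C[\,]]$ without affecting the ordering requirements, which is precisely the argument you spell out in detail.
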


\begin{proof}
Obvious by definition of the relation: in each place where we use a context \( C[] \) we can also
use the context \( D[C[]] \) instead, without affecting the ordering requirements.
\end{proof}

Similarly, the relations are preserved under an increase of the bounding terms \( a,b \):

\begin{lem}\label{lem:convertMul}
Suppose \( \{c,d\} \succeq_{mul} \{a,b\} \).

\begin{enumerate*}
\item\label{lem:convertMul:bcs}
  If \( u \bconvertsingle{a}{b}{\rules}{\hs}{\eqs} v \), then \( u \bconvertsingle{c}{d}{\rules}{\hs}{\eqs} v \).
\item\label{lem:convertMul:bc}
  If \( u \bconvert{a}{b}{\rules}{\hs}{\eqs} v \), then \( u \bconvert{c}{d}{\rules}{\hs}{\eqs} v \).
\end{enumerate*}
\end{lem}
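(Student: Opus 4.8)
## Proof Plan for Lemma \ref{lem:convertMul}

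The plan is to prove both parts by unfolding the definition of the single-step relation and checking each of the three clauses \ref{convert:rule}, \ref{convert:hypo}, \ref{convert:eqs} separately, then lifting to the transitive-reflexive closure. Part \ref{lem:convertMul:bc} follows immediately from part \ref{lem:convertMul:bcs}: if $u \bconvert{a}{b}{\rules}{\hs}{\eqs} v$, decompose the derivation into a finite chain of single steps $u = u_0 \bconvertsingle{a}{b}{\rules}{\hs}{\eqs} u_1 \bconvertsingle{a}{b}{\rules}{\hs}{\eqs} \cdots \bconvertsingle{a}{b}{\rules}{\hs}{\eqs} u_k = v$, apply part \ref{lem:convertMul:bcs} to each step to get $u_i \bconvertsingle{c}{d}{\rules}{\hs}{\eqs} u_{i+1}$, and recompose. (The reflexive case $u = v$ is trivial since the same bounding terms need not even match.) So the real work is entirely in part \ref{lem:convertMul:bcs}.

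For part \ref{lem:convertMul:bcs}, suppose $u \bconvertsingle{a}{b}{\rules}{\hs}{\eqs} v$ via one of the three clauses, with witnessing context $C[]$, rule/equation/equation-context, and ground substitution $\sigma$. I keep the same $C$, the same rewrite object, and the same $\sigma$, and only need to re-verify the ordering side-conditions with $\{c,d\}$ in place of $\{a,b\}$. For clause \ref{convert:rule}: the hypotheses give ``$a \succeq \ell\sigma$ or $b \succeq \ell\sigma$'' and ``$a \succeq r\sigma$ or $b \succeq r\sigma$''. Since $\{c,d\} \succeqmul \{a,b\}$, I need the auxiliary fact that for any term $w$, if ``$a \succeq w$ or $b \succeq w$'' and $\{c,d\} \succeqmul \{a,b\}$, then ``$c \succeq w$ or $d \succeq w$''. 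This unwinds the definition of $\succeqmul$ on size-2 multisets: in the first disjunct of $\succeqmul$ we have some $c_i \succ a_1, a_2$, hence $c_i \succ w$ (using $a_j \succeq w$ and $\succ \cdot \succeq \subseteq \succ$, part of the ordering-pair axioms), so in particular $c_i \succeq w$; in the second disjunct $c \succeq a$ or $c \succeq b$ paired up with $d \succeq$ the other, and combined with transitivity of $\succeq$ we again conclude $c \succeq w$ or $d \succeq w$. Applying this to $w := \ell\sigma$ and $w := r\sigma$ gives exactly the two side-conditions required by clause \ref{convert:rule} for $\{c,d\}$.

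For clauses \ref{convert:hypo} and \ref{convert:eqs} the side-condition is a single multiset comparison: $\{a,b\} \succmul \{\ell\sigma, r\sigma\}$ (resp. $\{a,b\} \succeqmul \{d\sigma, e\sigma\}$). Here I invoke transitivity of the multiset ordering pair $(\succmul, \succeqmul)$, which the excerpt already establishes (it notes $(\succmul,\succeqmul)$ is itself an ordering pair on size-2 multisets, so in particular $\succeqmul \cdot \succmul \subseteq \succmul$ and $\succeqmul$ is transitive). From $\{c,d\} \succeqmul \{a,b\}$ and $\{a,b\} \succmul \{\ell\sigma,r\sigma\}$ we get $\{c,d\} \succmul \{\ell\sigma,r\sigma\}$; from $\{c,d\} \succeqmul \{a,b\}$ and $\{a,b\} \succeqmul \{d\sigma,e\sigma\}$ we get $\{c,d\} \succeqmul \{d\sigma,e\sigma\}$ by transitivity. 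The rest of the clause data ($C$, the rewrite object, $\sigma$ respecting the relevant constraint) is untouched, so $u \bconvertsingle{c}{d}{\rules}{\hs}{\eqs} v$ holds.

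The only mildly delicate point — the ``main obstacle'', such as it is — is the auxiliary fact used for clause \ref{convert:rule}, namely that a one-sided bound ``$a \succeq w \lor b \succeq w$'' is monotone under $\succeqmul$-increase of $\{a,b\}$; this requires carefully matching up indices in the two disjuncts of the definition of $\succeqmul$ and using the compatibility axioms $\succ \cdot \succeq \subseteq \succ$ and transitivity of $\succeq$ from the ordering-pair definition. Everything else is bookkeeping. I would state this auxiliary fact as an inline observation (or a one-line sub-claim) before the case split, and then the three cases are short.
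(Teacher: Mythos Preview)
Your proposal is correct and follows essentially the same approach as the paper: part \ref{lem:convertMul:bc} is reduced to part \ref{lem:convertMul:bcs}, clauses \ref{convert:hypo} and \ref{convert:eqs} are handled by the ordering-pair properties of $(\succmul,\succeqmul)$, and clause \ref{convert:rule} uses the auxiliary observation that every element of $\multiset{a,b}$ is $\succeq$-dominated by some element of $\multiset{c,d}$. The paper states this last observation a bit more tersely (picking $e \in \multiset{a,b}$ with $e \succeq \ell\sigma$ and noting that by definition of $\succeqmul$ some $f \in \multiset{c,d}$ has $f \succeq e$), but the content is the same.
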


\begin{proof}
The second statement follows immediately from the first.  For the first, recall that
\( (\succmul,\succeqmul) \) is an ordering pair; in particular, \( X \succeqmul Y \succeqmul Z \)
implies \( X \succeqmul Z \), and \( X \succeqmul Y \succmul Z \) implies \( X \succmul Z \).
Hence, both cases \ref{convert:hypo} and \ref{convert:eqs} are preserved.
If \( u \bconvertsingle{a}{b}{\rules}{\hs}{\eqs} v \) by case \ref{boundconvert:rule}, then
there exists some \( e \in \multiset{a,b} \) such that \( e \succeq \ell\sigma \).
By definition of \( \succeqmul \), there must be some \( f \in \multiset{c,d} \) such that
\( f \succ e \) or \( f \succeq e \); so either way, \( f \succeq e \succeq \ell\sigma \)  and
\( f \succeq \ell\sigma \) holds by transitivity.  Similarly, some element \( g \in \multiset{c,d}
\) exists with \( g \succeq r\sigma \).
\end{proof}

In the proof of \autoref{thm:boundedRI}, we will study derivation sequences \( (\eqs_1,\emptyset)
\vdash (\eqs_2,\hs_2) \vdash \dots \vdash (\emptyset,\hs_N) \), and
show \( \rules/\hs_i/\emptyset \)-bounded ground convertibility of all elements of any \( \eqs_i \) (since $\hs_1 = \emptyset$ this in particular implies our desired result \( s\gamma \bconvert{\sterm\gamma}{
\tterm\gamma}{\rules}{\emptyset}{\emptyset} t\gamma \) for every equation context
\( \eqcon{\sterm}{s}{t}{\tterm}{\psi} \in \eqs_1 \) and every ground substitution \( \gamma \) that
respects \( \psi \)).
The most difficult deduction rule in these proofs is \induct.  To present the
proofs in a comprehensible way, we will therefore first prove a property for
every deduction rule other than \induct. 

\begin{defi}[Base Soundness Property]
\label{def:BaseSound}
We say a deduction rule has the Base Soundness Property if, whenever we can deduce
\( (\eqs,\hs) \vdash (\eqs',\hs') \) by this rule and all equation contexts in \( \eqs \) are
bounded, then \( \hs' = \hs \) and for all \( \eqcon{\sterm}{s}{t}{\tterm}{\psi} \in \eqs
\setminus \eqs' \), and for every ground substitution \( \gamma \) that respects \( \psi \), we
have: \( s\gamma \bconvert{\sterm\gamma}{\tterm\gamma}{\rules}{\hs}{\eqs'} t\gamma \).
\end{defi}

Now we can see that all rules other than \induct\ indeed have this property.

\begin{lem}\label{lem:simplifybase}
\simplify\ has the Base Soundness Property.
\end{lem}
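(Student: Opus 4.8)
The plan is to unfold the definitions and exhibit a single-step conversion witnessing the claim. Suppose $(\eqs,\hs) \vdash (\eqs',\hs')$ by \simplify, with all equation contexts in $\eqs$ bounded. By the shape of the rule, $\hs' = \hs$, and $\eqs' = \eqs \cup \{\eqcon{\sterm}{C[r\delta]}{t}{\tterm}{\psi}\}$ where $\eqconsim{\sterm}{C[\ell\delta]}{t}{\tterm}{\psi} \in \eqs$, for some rule $\ell \to r\ [\varphi] \in \rules \cup \calcrules$ with $\psi \models^\delta \varphi$. Consequently $\eqs \setminus \eqs'$ consists (up to the $\simeq$ symmetry) of a single equation context, namely $\eqconsim{\sterm}{C[\ell\delta]}{t}{\tterm}{\psi}$, which — since $\eqcon{\sterm}{C[r\delta]}{t}{\tterm}{\psi} \in \eqs'$ and the converted side is what matters — I must show is $\rules/\hs/\eqs'$-bounded ground convertible. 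Let $\gamma$ be any ground substitution respecting $\psi$; I need $C[\ell\delta]\gamma \bconvert{\sterm\gamma}{\tterm\gamma}{\rules}{\hs}{\eqs'} t\gamma$ (the symmetric orientation is handled identically using symmetry of the relation).

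The key step is to split this into two single-step links through the intermediate term $C[r\delta]\gamma$. First, $C[r\delta]\gamma \bconvertsingle{\sterm\gamma}{\tterm\gamma}{\rules}{\hs}{\eqs'} t\gamma$ holds by case \ref{convert:eqs} of \autoref{def:boundconvert}: the equation context $\eqcon{\sterm}{C[r\delta]}{t}{\tterm}{\psi} \in \eqs'$ is respected by $\gamma$ (it respects $\psi$, and the relevant variables are in $\domain(\gamma)$ since $\gamma$ respects the original context), and the bounding-term requirement $\{\sterm\gamma,\tterm\gamma\} \succeqmul \{\sterm\gamma,\tterm\gamma\}$ holds by reflexivity of $\succeqmul$. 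Second, $C[\ell\delta]\gamma \bconvertsingle{\sterm\gamma}{\tterm\gamma}{\rules}{\hs}{\eqs'} C[r\delta]\gamma$ by case \ref{convert:rule}: writing $\sigma = \delta\gamma$, this $\sigma$ is a ground substitution which respects the rule (because $\psi \models^\delta \varphi$ and $\gamma$ respects $\psi$, so $\gamma$ respects $\varphi\delta$, hence $\delta\gamma$ respects $\varphi$), we have $C[\ell\delta]\gamma = C\gamma[\ell\sigma]$ and $C[r\delta]\gamma = C\gamma[r\sigma]$, and it remains to check the ordering side conditions $\sterm\gamma \succeq \ell\sigma \text{ or } \tterm\gamma \succeq \ell\sigma$, and likewise for $r\sigma$. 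But $\ell\sigma = \ell\delta\gamma$ occurs as a subterm of $C[\ell\delta]\gamma$, which the bounding-term $\sterm\gamma$ dominates: since $\eqcon{\sterm}{C[\ell\delta]}{t}{\tterm}{\psi}$ is a bounded equation context, $\csucceq{\sterm}{C[\ell\delta]}{\psi}$, so $\sterm\gamma \succeq C[\ell\delta]\gamma \succeq \ell\delta\gamma = \ell\sigma$ using that $\supterm\;\subseteq\;\succeq$ and transitivity. For $r\sigma$, similarly $\csucceq{\sterm}{C[r\delta]}{\psi}$ holds — this is exactly the boundedness of the new context, which follows from \autoref{lem:boundedPreserve} (\simplify\ preserves bounds), or can be re-derived here directly since $C[\ell\delta]\gamma \succeq C[r\delta]\gamma$ as $\rw\;\subseteq\;\succeq$ — so $\sterm\gamma \succeq C[r\delta]\gamma \succeq r\delta\gamma = r\sigma$.

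Chaining these two single steps through $C[r\delta]\gamma$ gives $C[\ell\delta]\gamma \bconvert{\sterm\gamma}{\tterm\gamma}{\rules}{\hs}{\eqs'} t\gamma$, which is exactly what the Base Soundness Property demands for this equation context. I do not anticipate a genuine obstacle here; the only point requiring a little care is making sure the side conditions of case \ref{convert:rule} are discharged using boundedness of the surrounding equation context rather than anything about the specific rule, and tracking the $\simeq$ orientation so that the argument covers $\eqconsim{\sterm}{C[\ell\delta]}{t}{\tterm}{\psi}$ in both directions — both handled by the symmetry of $\bconvert{}{}{\rules}{\hs}{\eqs'}$.
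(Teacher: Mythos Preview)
Your proof is correct and follows essentially the same approach as the paper: split the required conversion through the intermediate term $C[r\delta]\gamma$, using case~\ref{convert:rule} for the rewrite step and case~\ref{convert:eqs} (with reflexivity of $\succeqmul$) for the step to $t\gamma$. The only cosmetic difference is that the paper bounds $r\sigma$ via $\sterm\gamma \succeq \ell\sigma \succeq r\sigma$ (using $\ell\sigma \rw r\sigma$ directly at the redex), whereas you go through $C[r\delta]\gamma$; both routes are valid.
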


\begin{proof}
We can write \( \eqs = \eqs_0 \uplus \{ \eqconsim{\sterm}{C[\ell\delta]}{t}{\tterm}{\psi} \} \) and
\( \eqs' = \eqs_0 \cup \{ \eqcon{\sterm}{C[r \delta]}{t}{\tterm}{\psi} \} \) for some
\( \ell \to r\ [\varphi] \in \rules \cup \calcrules\) such that \( \psi \models^\delta \varphi \).

Let \( \gamma \) be a ground subsitution that respects \( \psi \).  Then we have:
\begin{itemize}[label = $\triangleright$]
\item \(C[\ell\delta]\gamma \bconvertsingle{\sterm\gamma}{\tterm\gamma}{\rules}{\hs}{
  \eqs'} C[r\delta]\gamma \) by \ref{boundconvert:rule}. This holds because:
  \begin{itemize}
  \item From \( \psi \models^\delta \varphi \) we know that \( \gamma \) respects
    \( \varphi\delta \).  Phrased differently, the composed substitution \( \sigma :=
    \delta\gamma \) respects \( \varphi \).
    Therefore, \(s \gamma =  C[\ell\delta]\gamma = C\gamma[\ell\sigma] \rw
    C\gamma[r\sigma] = C[r\delta]\gamma \).
  \item Since \( \csucceq{\sterm}{s}{\psi} \), and \( \succeq \) includes
    \( \supterm \), we have \( \sterm\gamma \succeq  s\gamma \succeq
    \ell\sigma \).
  \item Since \( \rw \) is included in \( \succeq \), it follows that
    \( \sterm\gamma \succeq \ell\sigma \succeq r\sigma \) as well.
  \end{itemize}
\item \( C[r\delta]\gamma \bconvertsingle{\sterm\gamma}{\tterm\gamma}{\rules}{\hs}{\eqs'}
  t\gamma \) by \ref{boundconvert:eqs}, because
  \( \{\sterm\gamma,\tterm\gamma \} \succeq_{mul} \{\sterm\gamma,\tterm\gamma \}
  \) by reflexivity.
\end{itemize}
Hence, putting these steps together we have
\( s\gamma \bconvert{\sterm\gamma}{\tterm\gamma}{\rules}{\hs}{\eqs'} t\gamma \).
\end{proof}

\begin{lem}\label{lem:casebase}
If \( \rules \) is weakly normalising and quasi-reductive, then \case\ has the
Base Soundness Property.
\end{lem}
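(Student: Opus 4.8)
The plan is to follow the same template as the proof of \autoref{lem:simplifybase} for \simplify, with one extra ingredient: weak normalisation together with quasi-reductivity, which lets us replace an arbitrary ground instance by a ground semi-constructor instance --- the form to which \autoref{def:coverset} applies. Write $\eqs = \eqs_0 \uplus \{\eqcon{\sterm}{s}{t}{\tterm}{\psi}\}$ and $\eqs' = \eqs_0 \cup \{\eqcon{\sterm\delta}{s\delta}{t\delta}{\tterm\delta}{\psi\delta \wedge \varphi} \mid (\delta,\varphi) \in \coverset\}$ for a cover set $\coverset$ of $s \approx t\ [\psi]$. Then $\hs' = \hs$, and $\eqs \setminus \eqs'$ is either empty (nothing to prove) or equals $\{\eqcon{\sterm}{s}{t}{\tterm}{\psi}\}$. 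In the latter case I fix a ground substitution $\gamma$ respecting $\psi$ (with the variables of $\sterm,s,t,\tterm$ in its domain, as in \autoref{def:BaseSound}) and build a conversion $s\gamma \bconvert{\sterm\gamma}{\tterm\gamma}{\rules}{\hs}{\eqs'} t\gamma$ in three segments: first $s\gamma \rw^* s\gamma^\downarrow$, then a single $\bconvertsingle{\sterm\gamma}{\tterm\gamma}{\rules}{\hs}{\eqs'}$-step from $s\gamma^\downarrow$ to $t\gamma^\downarrow$, then the reduction $t\gamma \rw^* t\gamma^\downarrow$ traversed backwards.

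For the two outer segments I would reuse the normalisation device already used for \alter\ in the proof of \autoref{lem:completeprop}: put $\gamma^\downarrow = [x := \gamma(x)\!\!\downarrow_\rules \mid x \in \domain(\gamma)]$. This is well defined because $\rules$ is weakly normalising, and it is a \emph{gsc} substitution because $\rules$ is quasi-reductive (ground normal forms are ground semi-constructor terms). Since values are constructors, hence normal forms, $\gamma^\downarrow$ agrees with $\gamma$ on the variables of $\psi$, so $\gamma^\downarrow$ still respects $\psi$ and in fact respects the whole equation context $\eqcon{\sterm}{s}{t}{\tterm}{\psi}$. As $\gamma(x) \rw^* \gamma^\downarrow(x)$ for all $x$, we get $s\gamma \rw^* s\gamma^\downarrow$ and $t\gamma \rw^* t\gamma^\downarrow$ (and similarly $\sterm\gamma \rw^* \sterm\gamma^\downarrow$, $\tterm\gamma \rw^* \tterm\gamma^\downarrow$ when $\sterm,\tterm \ne \bullet$). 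The key check is that each such rewrite step qualifies as a step of $\bconvertsingle{\sterm\gamma}{\tterm\gamma}{\rules}{\hs}{\eqs'}$ via clause \ref{boundconvert:rule}: boundedness of the equation context gives $\sterm\gamma \succeq s\gamma$, and since $\rw$ and $\supterm$ are contained in the (transitive, reflexive) relation $\succeq$ of the bounding pair, every term reachable from $s\gamma$ --- together with every redex and contractum occurring inside it --- is $\preceq \sterm\gamma$; the same holds for $t\gamma$ with $\tterm\gamma$. Symmetry of $\bconvertsingle{\cdot}{\cdot}{\cdot}{\cdot}{\cdot}$ then lets us traverse the third segment backwards.

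For the middle step I would apply \autoref{def:coverset} to the gsc substitution $\gamma^\downarrow$, which respects $s \approx t\ [\psi]$: this yields $(\delta,\varphi) \in \coverset$ and a substitution $\sigma$ with $\gamma^\downarrow(x) = \delta(x)\sigma$ for all $x \in \domain(\gamma^\downarrow)$ and $\sigma$ respecting $\psi\delta \wedge \varphi$. Because $\domain(\gamma^\downarrow)$ contains the variables of $\sterm,s,t,\tterm$ and $\gamma^\downarrow$ is ground, the terms $\sterm\delta\sigma = \sterm\gamma^\downarrow$, $s\delta\sigma = s\gamma^\downarrow$, $t\delta\sigma = t\gamma^\downarrow$, $\tterm\delta\sigma = \tterm\gamma^\downarrow$ are all ground, so restricting $\sigma$ to the variables occurring in $\sterm\delta, s\delta, t\delta, \tterm\delta, \psi\delta, \varphi$ produces a ground substitution $\sigma'$ that respects the equation context $\eqcon{\sterm\delta}{s\delta}{t\delta}{\tterm\delta}{\psi\delta \wedge \varphi} \in \eqs'$ and still satisfies $s\delta\sigma' = s\gamma^\downarrow$, $t\delta\sigma' = t\gamma^\downarrow$. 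The ordering side condition of clause \ref{boundconvert:eqs}, namely $\multiset{\sterm\gamma,\tterm\gamma} \succeqmul \multiset{\sterm\delta\sigma',\tterm\delta\sigma'}$, follows from $\sterm\gamma \succeq \sterm\gamma^\downarrow$ and $\tterm\gamma \succeq \tterm\gamma^\downarrow$ (again using $\rw \subseteq \succeq$; trivially so when $\sterm$ or $\tterm$ is $\bullet$) via the componentwise alternative in the definition of $\succeqmul$. Hence $s\gamma^\downarrow \bconvertsingle{\sterm\gamma}{\tterm\gamma}{\rules}{\hs}{\eqs'} t\gamma^\downarrow$ by clause \ref{boundconvert:eqs} with the empty context, and concatenating the three segments and using transitivity of $\bconvert{\cdot}{\cdot}{\cdot}{\cdot}{\cdot}$ finishes the proof.

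I expect the main obstacle to be not a single computation but the bookkeeping at the seams: verifying that $\gamma^\downarrow$ really is a gsc substitution that still respects $\psi$ and the equation context --- this is precisely where the weak-normalisation and quasi-reductivity hypotheses are needed, and why they appear in the statement --- and turning the substitution $\sigma$ delivered by the cover-set definition, which need not a priori be ground, into a genuine ground substitution respecting the new equation context without disturbing the equalities $s\delta\sigma' = s\gamma^\downarrow$, $t\delta\sigma' = t\gamma^\downarrow$ or the multiset bound. Everything else is the routine transitivity/symmetry manipulation of $\bconvert{\cdot}{\cdot}{\cdot}{\cdot}{\cdot}$ already seen in \autoref{lem:simplifybase}.
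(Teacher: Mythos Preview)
Your proposal is correct and follows essentially the same approach as the paper's proof: the same three-segment decomposition $s\gamma \rw^* s\gamma^\downarrow \bconvertsingle{\sterm\gamma}{\tterm\gamma}{\rules}{\hs}{\eqs'} t\gamma^\downarrow \leftarrow_\rules^* t\gamma$, the same use of $\gamma^\downarrow$ as a gsc substitution to invoke the cover-set definition, and the same verification of the bounding conditions via $\rw,\supterm \subseteq {\succeq}$ and boundedness of the original equation context. You are in fact slightly more careful than the paper in explicitly restricting the cover-set witness $\sigma$ to a ground $\sigma'$ so that clause~\ref{boundconvert:eqs} (which demands a \emph{ground} substitution respecting the equation context) applies; the paper glosses over this point.
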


\begin{proof}
We can write \( \eqs = \eqs_0 \uplus \{ \eqcon{\sterm}{s}{t}{\tterm}{\psi} \} \) and
\( \eqs' = \eqs_0 \cup \{ \eqcon{\sterm\delta}{s\delta}{t\delta}{\tterm\delta}{\psi\delta \wedge
\varphi} \mid (\delta,\varphi) \in \coverset\} \) for \( \coverset \) a cover set of \( s \approx
t\ [\psi] \).
Let \( \gamma \) be a ground substitution that respects
\( \eqcon{\sterm}{s}{t}{\tterm}{\psi} \).
Since our rules are weakly normalising and quasi-reductive, every term has a normal
form, and every ground normal form must be a semi-constructor term.
Let \( \gamma^\downarrow \) be the substitution that maps each \( x \) to a normal form of
\( \gamma(x) \).  Since \( \gamma(x) \) maps all variables in \( \psi \) to values,
\( \gamma^\downarrow(x) = \gamma(x) \) on those variables, so also \( \gamma^\downarrow \) respects
\( \psi \).  Thus we see:
\begin{itemize}[label = $\triangleright$]
\item \( s\gamma \convert{\sterm\gamma}{\tterm\gamma}{\rules}{\hs}{\eqs'} s\gamma^\downarrow \)
  by $0$ or more steps using \ref{boundconvert:rule}, because:
  \begin{itemize}
  \item The assumption that \( \eqcon{\sterm}{s}{t}{\tterm}{\psi} \) is a bounded equation context gives
    \( \sterm\gamma \succeq s\gamma \).
  \item Writing \( s\gamma = u_0 \rw u_1 \rw \dots \rw u_n \), the inclusion of \( \rw \) in \( \succeq \)
    (along with transitivity of \( \succeq \)) ensures that \( \sterm\gamma \succeq u_i \) for all \( i \).
  \item For \( 1 \leq i \leq n \), writing \( u_{i-1} = C[\ell\delta] \) and \( u_i = C[r\delta] \), we
    observe that \( u_{i-1} \succeq \ell\delta \) and \( u_i \succeq r\delta \) because \( \supterm \) is
    included in \( \succeq \).
  \end{itemize}
\item \( s\gamma^\downarrow \bconvertsingle{\sterm\gamma}{\tterm\gamma}{\rules}{\hs}{\eqs'}
  t\gamma^\downarrow \) by \ref{boundconvert:eqs}:
  \begin{itemize}
  \item By definition of cover set, there exist a pair \( (\delta,\varphi) \in
    \coverset \) and a substitution \( \epsilon \) such that \( \epsilon \)
    respects \( \psi\delta \wedge \varphi \) and \( \gamma^\downarrow =
    \delta\epsilon \) on all variables in \( \domain(\gamma^\downarrow) \).
    In particular, this means that \( s\gamma^\downarrow = s\delta\epsilon \)
    and \( t\gamma^\downarrow = t\delta\epsilon \) and
    \( \sterm\gamma^\downarrow = \sterm\delta\epsilon \) and \( \tterm\gamma^\downarrow =
    \tterm\delta\epsilon \).
  \item Hence, we can use the step with \( \eqcon{\sterm\delta}{s\delta}{t\delta}{\tterm\delta}{
    \psi\delta \wedge \varphi} \in \eqs' \).
  \item We have \( \{\sterm\gamma,\tterm\gamma\} \succeq_{mul}
    \{\sterm\gamma^\downarrow,\tterm\gamma^\downarrow\} =
    \{\sterm\delta\epsilon,\tterm\delta\epsilon\} \)
    because by definition of a bounding pair
    (in particular the inclusion of \( \rw \) in \( \succeq \)),
    \( \sterm\gamma \succeq \sterm\gamma^\downarrow \) and
    \( \tterm\gamma \succeq \tterm\gamma^\downarrow \).
  \end{itemize}
\item \( t\gamma^\downarrow \bconvert{\sterm\gamma}{\tterm\gamma}{\rules}{\hs}{\eqs'} t\gamma \) by
  \ref{boundconvert:rule} by the same reasoning as the reduction from \( s\gamma \) to
  \( s\gamma^\downarrow \).
  \qedhere
\end{itemize}
\end{proof}

\begin{lem}\label{lem:deletebase}
\delete\ has the Base Soundness Property.
\end{lem}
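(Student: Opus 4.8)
The plan is straightforward, as \delete\ is the simplest of the deduction rules. First I would observe that \delete\ does not touch the hypothesis set, so $\hs' = \hs$ is immediate from the shape of the rule. Writing $\eqs = \eqs_0 \uplus \{\eqcon{\sterm}{s}{t}{\tterm}{\psi}\}$ and $\eqs' = \eqs_0$, the only element of $\eqs \setminus \eqs'$ is the removed equation context $\eqcon{\sterm}{s}{t}{\tterm}{\psi}$, so it suffices to show $s\gamma \bconvert{\sterm\gamma}{\tterm\gamma}{\rules}{\hs}{\eqs'} t\gamma$ for every ground substitution $\gamma$ that respects $\psi$.

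Next I would split on the two alternatives in the side condition of the rule. If $\psi$ is unsatisfiable, then there is no ground substitution respecting $\psi$, so the required statement holds vacuously. If instead $s = t$, then $s\gamma = t\gamma$ for any $\gamma$, and hence $s\gamma \bconvert{\sterm\gamma}{\tterm\gamma}{\rules}{\hs}{\eqs'} t\gamma$ holds by zero conversion steps, i.e.\ by reflexivity of $\bconvert{\sterm\gamma}{\tterm\gamma}{\rules}{\hs}{\eqs'}$. I do not anticipate any obstacle: the argument uses neither the boundedness hypothesis on the equation contexts in $\eqs$ nor any property of the bounding pair $(\succ,\succeq)$, so the entire proof amounts to this two-way case distinction.
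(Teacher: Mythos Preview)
Your proposal is correct and matches the paper's proof essentially step for step; the only cosmetic difference is that the paper collapses your two cases by observing that the existence of a ground substitution $\gamma$ respecting $\psi$ already forces $\psi$ to be satisfiable, hence $s = t$ must hold, and then concludes with the same zero-step reflexivity argument.
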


\begin{proof}
We can write \( \eqs = \eqs' \uplus \{ \eqcon{\sterm}{s}{t}{\tterm}{\psi} \} \), where either
\( s = t \) or \( \psi \) is unsatisfiable.
Let \( \gamma \) be a ground substitution that respects \( \psi \).
The existence of \( \gamma \) implies that \( \psi \) \emph{is} satisfiable, so necessarily \( s = t \).
But then clearly \( s\gamma \bconvert{\sterm\gamma}{\tterm\gamma}{\rules}{\hs}{\eqs'} t\gamma \),
using \( 0 \) steps.
\end{proof}

\begin{lem}\label{lem:semiconsbase}
\semicons\ has the Base Soundness Property.
\end{lem}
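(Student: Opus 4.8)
The plan is to follow the same template as \autoref{lem:simplifybase}--\autoref{lem:deletebase}. Since \semicons\ does not touch \( \hs \), we have \( \hs' = \hs \), and the only equation context in \( \eqs \setminus \eqs' \) is \( \eqcon{\sterm}{f\ s_1 \cdots s_n}{f\ t_1 \cdots t_n}{\tterm}{\psi} \), while \( \eqs' \) contains the \( n \) contexts \( \eqcon{\sterm}{s_i}{t_i}{\tterm}{\psi} \) for \( 1 \le i \le n \) (with \( n > 0 \) and either \( f \in \Var \) or \( n < \arity(f) \)). So the task reduces to showing, for every ground substitution \( \gamma \) respecting \( \psi \) (and, as is implicit throughout, covering the variables of the equation context), that \( (f\ s_1 \cdots s_n)\gamma \bconvert{\sterm\gamma}{\tterm\gamma}{\rules}{\hs}{\eqs'} (f\ t_1 \cdots t_n)\gamma \).

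First I would fix such a \( \gamma \) and construct the conversion by rewriting the arguments one position at a time. For \( 1 \le i \le n \), take the context \( C_i := (f\gamma)\ (t_1\gamma) \cdots (t_{i-1}\gamma)\ \square\ (s_{i+1}\gamma) \cdots (s_n\gamma) \); this is an admissible context in the formalism of the paper (the hole need not sit at the head of an application) and it is well-typed because \( \typeof(s_i) = \typeof(t_i) \). One checks directly that \( C_1[s_1\gamma] = (f\ s_1 \cdots s_n)\gamma \), that \( C_n[t_n\gamma] = (f\ t_1 \cdots t_n)\gamma \), and that \( C_i[t_i\gamma] = C_{i+1}[s_{i+1}\gamma] \) for \( 1 \le i < n \). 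The key step is then to observe that each single replacement \( C_i[s_i\gamma] \bconvertsingle{\sterm\gamma}{\tterm\gamma}{\rules}{\hs}{\eqs'} C_i[t_i\gamma] \) is an instance of clause \ref{boundconvert:eqs} of \autoref{def:boundconvert}, using the equation context \( \eqcon{\sterm}{s_i}{t_i}{\tterm}{\psi} \in \eqs' \) together with the ground substitution \( \sigma := \gamma \). Its two side conditions hold trivially: \( \gamma \) respects \( \eqcon{\sterm}{s_i}{t_i}{\tterm}{\psi} \), since it respects \( \psi \) and \( \Vars{\sterm,s_i,t_i,\tterm} \subseteq \Vars{\sterm, f\ s_1 \cdots s_n, f\ t_1 \cdots t_n, \tterm} \subseteq \domain(\gamma) \); and the multiset condition \( \multiset{\sterm\gamma,\tterm\gamma} \succeqmul \multiset{\sterm\gamma,\tterm\gamma} \) holds by reflexivity of \( \succeqmul \) (part of \( (\succmul,\succeqmul) \) being an ordering pair). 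Chaining these \( n \) single steps — some possibly trivial, when \( s_i\gamma = t_i\gamma \) — yields \( (f\ s_1 \cdots s_n)\gamma \bconvert{\sterm\gamma}{\tterm\gamma}{\rules}{\hs}{\eqs'} (f\ t_1 \cdots t_n)\gamma \), which is exactly the Base Soundness Property for \semicons.

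I do not expect a genuine obstacle here: the argument uses neither the boundedness hypothesis on \( \eqs \), nor weak normalisation or quasi-reductivity, nor even the side condition \( f \in \Var \) or \( n < \arity(f) \) of the rule (that condition matters for rules such as \hdelete, but not for the Base Soundness Property of \semicons). The only two points needing a moment's care are verifying that the \( C_i \) are legitimate contexts (holes in argument position, of matching type) and noting that it is precisely reflexivity of the size-2 multiset quasi-order that licenses clause \ref{boundconvert:eqs} to fire with the bounding terms \( \sterm\gamma,\tterm\gamma \) unchanged; both are immediate from the definitions in \autoref{sec:multisets} and \autoref{def:boundconvert}.
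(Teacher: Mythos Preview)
Your proof is correct and follows essentially the same approach as the paper: rewrite the arguments one position at a time using clause \ref{boundconvert:eqs} with the equation contexts \( \eqcon{\sterm}{s_i}{t_i}{\tterm}{\psi} \in \eqs' \), observing that the multiset condition holds by reflexivity of \( \succeqmul \). Your additional remarks about which hypotheses are actually needed are accurate and match the paper's (implicit) treatment.
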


\begin{proof}
We can write \( \eqs = \eqs_0 \uplus \{ \eqcon{\sterm}{f\ s_1 \cdots s_n}{f\ t_1 \cdots t_n}{
\tterm}{\psi} \} \) and \( \eqs' = \eqs_0 \cup \{ \eqcon{\sterm}{s_i}{t_i}{\tterm}{\psi} \mid
1 \leq i \leq n \} \) for \( f \) a variable or function symbol.
Let $\gamma$ be some ground substitution that respects the equation context.
Since \( \multiset{\sterm\gamma,\tterm\gamma} \succeqmul \multiset{\sterm\gamma,\tterm\gamma} \),
we use \ref{boundconvert:eqs} to derive
\[
(f\ s_1\ s_2 \cdots s_n)\gamma
\bconvertsingle{\sterm\gamma}{\tterm\gamma}{\rules}{\hs}{\eqs'}
(f\ t_1\ s_2 \cdots s_n)\gamma
\bconvertsingle{\sterm\gamma}{\tterm\gamma}{\rules}{\hs}{\eqs'}
\dots
\bconvertsingle{\sterm\gamma}{\tterm\gamma}{\rules}{\hs}{\eqs'}
(f\ t_1\ t_2 \cdots t_n)\gamma
\]
using a context \( C[\square] = f\ t_1 \cdots t_{i-1}\ \Box\ s_{i+1} \cdots s_n \) for the $i^{\text{th}}$
step.  Together, this exactly gives
\( (f\ s_1\ s_2 \cdots s_n)\gamma \bconvert{\sterm\gamma}{\tterm\gamma}{\rules}{\hs}{\eqs'}
(f\ t_1\ t_2 \cdots t_n)\gamma \) as required.
\end{proof}

\begin{lem}\label{lem:hdeletebase}
\hdelete\ has the Base Soundness Property.
\end{lem}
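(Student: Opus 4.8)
The plan is to exploit that \hdelete\ removes an equation context both of whose sides, $C[\ell\delta]$ and $C[r\delta]$, are instances of the \emph{same} hypothesis $\ell \simeq r\ [\varphi] \in \hs$ under the same substitution $\delta$ inside the same context $C$; hence a single application of case~\ref{boundconvert:hypo} of \autoref{def:boundconvert} should already connect them. Concretely, write $\eqs = \eqs' \uplus \{\eqconsim{\sterm}{C[\ell\delta]}{C[r\delta]}{\tterm}{\psi}\}$. Since \hdelete\ does not change $\hs$, it suffices to fix a ground substitution $\gamma$ respecting $\psi$ and to show $C[\ell\delta]\gamma \bconvert{\sterm\gamma}{\tterm\gamma}{\rules}{\hs}{\eqs'} C[r\delta]\gamma$. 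Because $\bconvertsingle{a}{b}{\rules}{\hs}{\eqs}$ is symmetric and the size-2 multiset $\multiset{\sterm\gamma,\tterm\gamma}$ is unordered, I would reduce at once to the case where the removed context is $\eqcon{\sterm}{C[\ell\delta]}{C[r\delta]}{\tterm}{\psi}$ and the hypothesis in the premise is $\ell \approx r\ [\varphi] \in \hs$; the flipped orientations need no separate treatment, and case~\ref{boundconvert:hypo} anyway admits hypotheses from $\hs$ in both directions.

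For the single convertibility step I would take $\sigma := \delta\gamma$ together with the context $C\gamma$ in case~\ref{boundconvert:hypo}. From $\psi \models^\delta \varphi$ and $\gamma$ respecting $\psi$ it follows that $\sigma$ respects $\varphi$, and the remaining bookkeeping --- that $\sigma$ is a ground substitution respecting $\ell \approx r\ [\varphi]$ --- is routine, exactly as in the earlier Base Soundness lemmas. Since $C[\ell\delta]\gamma = (C\gamma)[\ell\sigma]$ and $C[r\delta]\gamma = (C\gamma)[r\sigma]$, the only remaining obligation is the ordering side condition $\multiset{\sterm\gamma,\tterm\gamma} \succmul \multiset{\ell\sigma,r\sigma}$, i.e.\ $\multiset{\sterm\gamma,\tterm\gamma} \succmul \multiset{(\ell\delta)\gamma,(r\delta)\gamma}$.

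Establishing this multiset comparison is the one substantive point, and it comes from combining the disjunctive premise of \hdelete\ with boundedness of the removed context (which holds by hypothesis of the Base Soundness Property). Boundedness gives $\csucceq{\sterm}{C[\ell\delta]}{\psi}$ and $\csucceq{\tterm}{C[r\delta]}{\psi}$; since $C[\ell\delta] \suptermeq \ell\delta$ and $C[r\delta] \suptermeq r\delta$, with $\supterm$ contained in $\succeq$ and $\succeq$ transitive, this yields $\sterm\gamma \succeq (\ell\delta)\gamma$ and $\tterm\gamma \succeq (r\delta)\gamma$. The premise hands us one of these strictly: if $\csucc{\sterm}{\ell\delta}{\psi}$ then $\sterm\gamma \succ (\ell\delta)\gamma$ while $\tterm\gamma \succeq (r\delta)\gamma$, which is precisely the second clause in the definition of $\succmul$ with $i = j = 1$; if instead $\csucc{\tterm}{r\delta}{\psi}$ then the symmetric choice $i = j = 2$ applies. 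Either way $\multiset{\sterm\gamma,\tterm\gamma} \succmul \multiset{(\ell\delta)\gamma,(r\delta)\gamma}$, so $C[\ell\delta]\gamma \bconvertsingle{\sterm\gamma}{\tterm\gamma}{\rules}{\hs}{\eqs'} C[r\delta]\gamma$ by case~\ref{boundconvert:hypo}, and passing to the reflexive-transitive closure completes the proof; note that $\hs' = \hs$, as required. I do not expect a real obstacle here --- the argument is shorter than those for \simplify\ or \case\ --- and the only place that needs care is keeping the orientation conventions straight and observing that the premise supplies exactly the one strict inequality that $\succmul$ demands on top of the non-strict bounds coming from boundedness.
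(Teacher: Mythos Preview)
Your proposal is correct and follows essentially the same approach as the paper: a single step via case~\ref{boundconvert:hypo} using $\sigma = \delta\gamma$ and context $C\gamma$, with the required multiset inequality obtained by pairing the strict premise on one side with the non-strict bound (from boundedness and $\supterm \subseteq \succeq$) on the other. The paper's proof is slightly terser but uses exactly the same pairings (your $i=j=1$ case is the paper's ``(a') with (d)'', and $i=j=2$ is ``(b') with (c)'').
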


\begin{proof}
We can write \( \eqs = \eqs' \uplus \{ \eqcon{\sterm}{C[\ell\delta]}{C[r\delta]}{\tterm}{\psi} \} \)
for some \( \ell \simeq r\ [\varphi] \) in \( \hs \) such that \( \psi \models^\delta \varphi \), and
we have 
(a) \( \csucc{\sterm}{\ell\delta}{\psi} \) or
(b) \( \csucc{\tterm}{r\delta}{\psi} \).
Let \( \gamma \) be a ground substitution that respects the equation context.
The above gives
(a') \( \sterm\gamma \succ \ell\delta\gamma \) or
  (b') \( \tterm\gamma \succ r\delta\gamma \).
Since equation contexts in \( \eqs \) are bounded, and \( \supterm \) is
included in \( \succeq \), we also know that
(c) \( \sterm\gamma \succeq C[\ell\delta]\gamma \succeq \ell\delta\gamma \) and
(d) \( \tterm\gamma \succeq C[r\delta]\gamma \succeq r\delta\gamma \).
Using (a') with (d), or (b') with (c), we have
\( \multiset{\sterm\gamma,\tterm\gamma} \succmul \multiset{\ell\delta\gamma,r\delta\gamma} \).
So,
\( C[\ell\delta]\gamma = C\gamma[\ell\delta\gamma]
\bconvertsingle{\sterm\gamma}{\tterm\gamma}{\rules}{\hs}{\eqs'}
C\gamma[r\delta\gamma] = C[r\delta]\gamma \) by \ref{boundconvert:hypo}.
\end{proof}

\begin{lem}\label{lem:postulatebase}
\postulate\ has the Base Soundness Property.
\end{lem}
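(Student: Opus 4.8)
The plan is to exploit that \postulate\ is the unique deduction rule of \autoref{fig:boundedRIrules} that purely \emph{extends} the set of equation contexts without removing anything. Concretely, if $(\eqs,\hs) \vdash (\eqs',\hs')$ holds via \postulate, then directly from the shape of the rule we have $\hs' = \hs$ and $\eqs' = \eqs \cup \{\eqcon{\bullet}{s}{t}{\bullet}{\psi}\}$ for some terms $s,t$ and constraint $\psi$.

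From here the two requirements of the Base Soundness Property (\autoref{def:BaseSound}) are both essentially free. The requirement $\hs' = \hs$ is immediate. For the second requirement, I would observe that $\eqs \subseteq \eqs'$, hence $\eqs \setminus \eqs' = \emptyset$; therefore there is no equation context $\eqcon{\sterm}{s}{t}{\tterm}{\psi} \in \eqs \setminus \eqs'$ for which a bounded conversion $s\gamma \bconvert{\sterm\gamma}{\tterm\gamma}{\rules}{\hs}{\eqs'} t\gamma$ has to be exhibited, so the statement holds vacuously. We do not even need the hypothesis that all equation contexts in $\eqs$ are bounded.

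Consequently there is no real obstacle: in contrast to the Base Soundness proofs for \simplify, \case, \semicons\ and \hdelete, which must construct an explicit bounded conversion for each equation context that is deleted, \postulate\ deletes nothing, and the entire content of the lemma is the bookkeeping remark that $\eqs$ only grows. The genuine work of ensuring that postulated equations are ultimately justified is deferred to the overall argument for \autoref{theorem:soundnessRI}, where such equation contexts must eventually be eliminated from $\eqs$ altogether.
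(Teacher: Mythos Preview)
Your proposal is correct and matches the paper's own proof exactly: the paper simply observes that $\hs = \hs'$ and $\eqs \setminus \eqs' = \emptyset$, so there is nothing to prove. Your additional remarks about not needing the boundedness hypothesis and about the work being deferred to the overall soundness argument are accurate but not required for the lemma itself.
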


\begin{proof}
Since \( \hs = \hs' \) and \( \eqs \setminus \eqs' = \emptyset \) in the case of \postulate, there
is nothing to prove.
\end{proof}

\begin{lem}\label{lem:hypothesisbase}
\hypothesis\ has the Base Soundness Property.
\end{lem}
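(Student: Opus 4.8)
The plan is to follow exactly the template of the proof of \autoref{lem:simplifybase} (the \simplify\ case), since \hypothesis\ is its analogue with an equation from $\hs$ in place of a rule from $\rules \cup \calcrules$. First I write $\eqs = \eqs_0 \uplus \{ \eqconsim{\sterm}{C[\ell\delta]}{t}{\tterm}{\psi} \}$ and $\eqs' = \eqs_0 \cup \{ \eqcon{\sterm}{C[r\delta]}{t}{\tterm}{\psi} \}$, where $\ell \simeq r\ [\varphi] \in \hs$, $\psi \models^\delta \varphi$, $\csucc{\sterm}{\ell\delta}{\psi}$, $\csucc{\sterm}{r\delta}{\psi}$ and $\csucceq{\sterm}{C[r\delta]}{\psi}$. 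The requirement $\hs' = \hs$ is immediate from the rule, so only the single equation context in $\eqs \setminus \eqs'$ must be handled. Since $\bconvert{a}{b}{\rules}{\hs}{\eqs'}$ is symmetric both in the direction of conversion and --- being built through the size-2 multiset $\{a,b\}$ --- in its bounding terms, the two orientations subsumed by $\eqconsim$ reduce to one another, so it suffices to treat $\eqcon{\sterm}{C[\ell\delta]}{t}{\tterm}{\psi}$: for a ground substitution $\gamma$ respecting $\psi$, I must show $C[\ell\delta]\gamma \bconvert{\sterm\gamma}{\tterm\gamma}{\rules}{\hs}{\eqs'} t\gamma$.

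This is obtained by two single steps, exactly mirroring \autoref{lem:simplifybase}. For the first, $C[\ell\delta]\gamma \bconvertsingle{\sterm\gamma}{\tterm\gamma}{\rules}{\hs}{\eqs'} C[r\delta]\gamma$ by \ref{boundconvert:hypo}, using the equation $\ell \simeq r\ [\varphi] \in \hs$, the context $C\gamma$, and the substitution $\sigma := \delta\gamma$: from $\psi \models^\delta \varphi$ and $\gamma$ respecting $\psi$ we get that $\sigma$ respects $\varphi$, hence the equation; and the required strict multiset decrease $\{\sterm\gamma,\tterm\gamma\} \succmul \{\ell\sigma,r\sigma\}$ follows from the first clause of $\succmul$, since the side conditions $\csucc{\sterm}{\ell\delta}{\psi}$ and $\csucc{\sterm}{r\delta}{\psi}$ give $\sterm\gamma \succ \ell\delta\gamma = \ell\sigma$ and $\sterm\gamma \succ r\delta\gamma = r\sigma$, so $\sterm\gamma$ strictly dominates both elements. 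For the second step, $C[r\delta]\gamma \bconvertsingle{\sterm\gamma}{\tterm\gamma}{\rules}{\hs}{\eqs'} t\gamma$ by \ref{boundconvert:eqs}, using the newly added equation context $\eqcon{\sterm}{C[r\delta]}{t}{\tterm}{\psi} \in \eqs'$ with $\gamma$ itself, its bounding-term requirement being only $\{\sterm\gamma,\tterm\gamma\} \succeqmul \{\sterm\gamma,\tterm\gamma\}$, true by reflexivity. Composing the two steps yields the claim.

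There is no real obstacle here --- the argument is routine once the bookkeeping is set up. The only point worth flagging is the alignment of side conditions with requirements: the strict conditions $\csucc{\sterm}{\ell\delta}{\psi}$ and $\csucc{\sterm}{r\delta}{\psi}$ of \hypothesis\ are precisely what fires clause \ref{boundconvert:hypo} (a \emph{strict} multiset decrease, strictly stronger than what clause \ref{boundconvert:rule} needs), while the remaining side condition $\csucceq{\sterm}{C[r\delta]}{\psi}$ is not used in this lemma at all --- it serves only to guarantee Preserving Bounds in \autoref{lem:boundedPreserve}. As in the earlier cases, I would not dwell on the variable bookkeeping, taking $\gamma$ to be defined on all variables occurring in the equation contexts in play (extending it arbitrarily if needed), which affects none of the ordering arguments.
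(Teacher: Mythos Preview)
Your proof is correct and follows essentially the same approach as the paper: both arguments split into a \ref{boundconvert:hypo} step (using $\sterm\gamma \succ \ell\delta\gamma$ and $\sterm\gamma \succ r\delta\gamma$ to obtain the strict multiset decrease) followed by a \ref{boundconvert:eqs} step via the newly added equation context with reflexivity of $\succeqmul$. Your additional remarks on the $\simeq$ symmetry and the fact that $\csucceq{\sterm}{C[r\delta]}{\psi}$ is unused here are accurate observations the paper leaves implicit.
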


\begin{proof}
We can write \( \eqs = \eqs_0 \uplus \{ \eqconsim{\sterm}{C[\ell\delta]}{t}{\tterm}{\psi} \} \) and
\( \eqs' = \eqs_0 \cup \{ \eqcon{\sterm}{C[r \delta]}{t}{\tterm}{\psi} \} \) for some
\( \ell \simeq r\ [\varphi] \) in \( \hs \) such that \( \psi \models^\delta \varphi \).
Let \( \gamma \) be a ground substitution that respects the equation context.
Then from
  \( \csucc{\sterm}{\ell\delta}{\psi} \) and
  \( \csucc{\sterm}{r\delta}{\psi} \), we have
  \( \sterm\gamma \succ \ell\delta\gamma \) and
  \( \sterm\gamma \succ r\delta\gamma \), which
  together implies \( \multiset{\sterm\gamma,\tterm\gamma} \succmul \multiset{\ell\delta\gamma,
  r\delta\gamma} \).
Hence, \( C[\ell\delta]\gamma = C\gamma[\ell\delta\gamma] \bconvertsingle{\sterm\gamma}{
\tterm\gamma}{\rules}{\hs}{\eqs'} C\gamma[r\delta\gamma] = C[r\delta]\gamma \) by
\ref{boundconvert:hypo}.
Moreover, we clearly have \( C[r\delta]\gamma \bconvertsingle{\sterm\gamma}{\tterm\gamma}{\rules}{
\hs}{\eqs'} t\gamma \) by \ref{boundconvert:eqs}.
\end{proof}

\begin{lem}\label{lem:generalizebase}
If \( \rules \) is weakly normalising and quasi-reductive, then
\generalize\ and \alter\ have the Base Soundness Property.
\end{lem}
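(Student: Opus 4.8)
The plan is to reuse the template of the proof of \autoref{lem:casebase}: reduce everything to the already‑normalised case and then appeal directly to the definition of generalization. Fix a removed equation context \( \eqcon{\sterm}{s}{t}{\tterm}{\psi} \in \eqs \setminus \eqs' \) and a ground substitution \( \gamma \) respecting \( \psi \) (extending \( \gamma \) if necessary so that it respects the whole equation context — otherwise \( s\gamma \) is not even ground and there is nothing to prove). Let \( \gamma^\downarrow \) send each \( x \in \domain(\gamma) \) to a normal form of \( \gamma(x) \); this exists by weak normalisation, and by quasi-reductivity every ground normal form lies in \( \gsemiCons_\alcstrs \), so \( \gamma^\downarrow \) is a gsc substitution. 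Since \( \gamma \) maps \( \Vars{\psi} \) to values, which are already normal forms, \( \gamma^\downarrow \) agrees with \( \gamma \) on \( \Vars{\psi} \), hence \( \gamma^\downarrow \) still respects \( \psi \) and the equation context. I would then establish \( s\gamma \bconvert{\sterm\gamma}{\tterm\gamma}{\rules}{\hs}{\eqs'} t\gamma \) by concatenating three pieces: \( s\gamma \rw^* s\gamma^\downarrow \), a single step \( s\gamma^\downarrow \bconvertsingle{\sterm\gamma}{\tterm\gamma}{\rules}{\hs}{\eqs'} t\gamma^\downarrow \), and \( t\gamma \rw^* t\gamma^\downarrow \) (read backwards, using symmetry of \( \bconvert{\cdot}{\cdot}{\rules}{\hs}{\eqs'} \)).

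The two outer pieces are handled exactly as in \autoref{lem:casebase}: boundedness of the removed equation context gives \( \sterm\gamma \succeq s\gamma \), and since \( \rw \) and \( \supterm \) are contained in \( \succeq \), every rewrite step \( u_{i-1} = C[\ell\delta] \rw C[r\delta] = u_i \) along \( s\gamma = u_0 \rw \dots \rw u_k = s\gamma^\downarrow \) satisfies \( \sterm\gamma \succeq u_{i-1} \succeq \ell\delta \) and \( \sterm\gamma \succeq u_i \succeq r\delta \), so each step is a legal instance of clause \ref{boundconvert:rule}; the piece involving \( \tterm \) and \( t \) is symmetric. For the middle step I would use that \( \eqcon{\sterm'}{s'}{t'}{\tterm'}{\psi'} \in \eqs' \) generalizes the removed context — this holds for \generalize\ and, \emph{a fortiori}, for its special case \alter. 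Applying \autoref{def:generalize} to the gsc substitution \( \gamma^\downarrow \) yields a substitution \( \delta \) respecting \( \eqcon{\sterm'}{s'}{t'}{\tterm'}{\psi'} \) with \( s'\delta = s\gamma^\downarrow \), \( t'\delta = t\gamma^\downarrow \), \( \sterm\gamma^\downarrow \succeq \sterm'\delta \) and \( \tterm\gamma^\downarrow \succeq \tterm'\delta \); these equalities and \( \succeq \)-comparisons force \( \delta \) to be ground on all relevant variables (and we may take it to be the identity elsewhere), so \( \delta \) is a ground substitution respecting that equation context. From \( \gamma(x) \rw^* \gamma^\downarrow(x) \) and \( \rw\;\subseteq\;\succeq \) we get \( \sterm\gamma \succeq \sterm\gamma^\downarrow \) and \( \tterm\gamma \succeq \tterm\gamma^\downarrow \) (using reflexivity when they coincide), hence by transitivity \( \multiset{\sterm\gamma,\tterm\gamma} \succeqmul \multiset{\sterm'\delta,\tterm'\delta} \), and clause \ref{boundconvert:eqs} with the trivial context delivers \( s\gamma^\downarrow = s'\delta \bconvertsingle{\sterm\gamma}{\tterm\gamma}{\rules}{\hs}{\eqs'} t'\delta = t\gamma^\downarrow \). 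Concatenating the three pieces and using transitivity of \( \bconvert{\sterm\gamma}{\tterm\gamma}{\rules}{\hs}{\eqs'} \) completes the argument; \( \hs' = \hs \) is immediate from the shape of both rules, and \alter\ needs no separate treatment.

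I do not expect a genuine obstacle here, only two points requiring care. The first is that rewriting \( \gamma \) down to \( \gamma^\downarrow \) must stay inside the bounded-convertibility relation; this is precisely where the hypothesis that the incoming equation context is bounded, together with \( \rw,\supterm\;\subseteq\;\succeq \), is essential — and it is also why weak normalisation and quasi-reductivity are needed, since without them \( \gamma^\downarrow \) need not be a gsc substitution and \autoref{def:generalize} could not be invoked. The second is the bookkeeping that the \( \delta \) supplied by \autoref{def:generalize} may be taken ground and respecting the equation context; as indicated, this follows because \( s\gamma^\downarrow, t\gamma^\downarrow \) are ground and \( \succeq \) is a relation on \( T(\Sig,\emptyset) \cup \{\bullet\} \). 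Finally, note that the ordering side conditions \( \csucceq{\sterm'}{s'}{\psi'} \) and \( \csucceq{\tterm'}{t'}{\psi'} \) of the deduction rule play no role in this proof; they are what \autoref{lem:boundedPreserve} uses to guarantee Preserving Bounds.
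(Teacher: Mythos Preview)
Your proof is correct and follows essentially the same approach as the paper: normalise \( \gamma \) to a gsc substitution \( \gamma^\downarrow \), use boundedness together with \( \rw,\supterm \subseteq \succeq \) to justify the two outer reduction segments via clause~\ref{boundconvert:rule}, and bridge the middle by invoking the generalization definition to obtain \( \delta \) and then apply clause~\ref{boundconvert:eqs} with the multiset comparison \( \multiset{\sterm\gamma,\tterm\gamma} \succeqmul \multiset{\sterm'\delta,\tterm'\delta} \). Your additional remarks (that \( \gamma^\downarrow \) still respects \( \psi \), that \( \delta \) can be taken ground, and that the rule's \( \succeq \)-side conditions are irrelevant here) are all correct refinements that the paper leaves implicit.
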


\begin{proof}
In both cases, we have \( \eqs = \eqs_0 \uplus \{ \eqcon{\sterm}{s}{t}{\tterm}{\psi} \} \)
and \( \eqs' = \eqs_0 \cup \{ \eqcon{\sterm'}{s'}{t'}{\tterm'}{\psi'} \}\), where the latter equation
context generalizes the former.
Let \( \gamma \) be a ground substitution that respects
\( \eqcon{\sterm}{s}{t}{\tterm}{\psi} \).
As \( \rules \) is weakly normalizing we can define \( \gamma^\downarrow \) as a substitution
that maps each \( x \) to a normal form of \( \gamma(x) \); since \( \rules \) is quasi-reductive we
know that \( \gamma^\downarrow \) is a gsc substitution.  Hence, by definition of generalization,
there is a substitution \( \delta \) that respects \( \psi' \) such that
(a) \( \sterm\gamma^\downarrow \succeq \sterm'\delta \),
(b) \( \tterm\gamma^\downarrow \succeq \tterm'\delta \),
(c) \( s\gamma^\downarrow = s'\delta \) and
(d) \( t\gamma^\downarrow = t'\delta \).

\begin{itemize}[label = $\triangleright$]
\item
  Clearly, we have both \( s\gamma \bconvert{\sterm\gamma}{\tterm\gamma}{\rules}{\hs}{\eqs'}
  s\gamma^\downarrow \) and \( t\gamma^\downarrow \bconvert{\sterm\gamma}{\tterm\gamma}{\rules}{
  \hs}{\eqs'} t\gamma \) by \ref{boundconvert:rule}: the facts that \( \sterm\gamma \succeq
  s\gamma \), \( \tterm\gamma \succeq t\gamma \), and that both \( \rw \) and \( \supterm \) are
  included in \( \succeq \) ensure the ordering requirements.
\item 
  Since \( \succeq \) includes \( \rw \), we know that \( \sterm\gamma \succeq
  \sterm\gamma^\downarrow \) and \( \tterm\gamma \succeq \tterm\gamma^\downarrow \), so from (a) and
  (b) together we obtain \( \multiset{\sterm\gamma,\tterm\gamma} \succeqmul \multiset{\sterm'\delta,
  \tterm'\delta} \).
  Hence, \( s\gamma^\downarrow = s'\delta \bconvertsingle{\sterm\gamma}{\tterm\gamma}{\rules}{\hs
  }{\eqs'} t'\delta = t\gamma^\downarrow \) by \ref{boundconvert:eqs}.
  \qedhere
\end{itemize}
\end{proof}

\subsection{Soundness of Bounded Rewriting Induction}\label{subsec:soundnessProofBounded}

Now, instead of proving \autoref{theorem:soundnessRI} directly, we will obtain the following,
more general property.

\newcommand{\thmBoundedGroundConvertabilityRI}{%
  Let $\alcstrs$ be a weakly normalizing, quasi-reductive \lcstrs;
  let \( (\succ,\succeq) \) be a bounding pair, and \( \eqs_1 \) a set of bounded equation contexts.
  Suppose
  $
  (\eqs_1, \emptyset)
  \vdash
  (\eqs_2, \hs_2)
  \vdash
  \dots
  \vdash
  (\eqsN, \hs_N)
  $, where each \( \vdash \) is derived using either \induct\ or a derivation rule
  that has both the Preserving Bounds and Base Soundness Properties.
  Consider some $1 \le i \le N$.
  Then every equation context in $\eqs_i$ is $\rules/\hs_i/\eqsN$-bounded ground convertible (where \( \hs_1 = \emptyset \)).
}
\begin{thm}\label{theorem:boundedGroundConvertibilityRI}
\thmBoundedGroundConvertabilityRI
\end{thm}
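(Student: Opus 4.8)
The plan is to prove the statement by downward induction on $i$, from $i = N$ to $i = 1$. For the base case $i = N$ we have $\eqs_N = \emptyset$ (in the intended application), so there is nothing to prove; more precisely, the statement is about equation contexts \emph{in} $\eqs_i$, and when $\eqs_i = \emptyset$ it holds vacuously. In general, even if $\eqs_N$ is nonempty, every equation context $\eqcon{\sterm}{s}{t}{\tterm}{\psi} \in \eqs_N$ is trivially $\rules/\hs_N/\eqs_N$-bounded ground convertible: for a ground substitution $\gamma$ respecting $\psi$, we have $\{\sterm\gamma,\tterm\gamma\} \succeqmul \{\sterm\gamma,\tterm\gamma\}$ by reflexivity, so a single step of type \ref{convert:eqs} gives $s\gamma \bconvertsingle{\sterm\gamma}{\tterm\gamma}{\rules}{\hs_N}{\eqs_N} t\gamma$. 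So the base case is immediate.

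For the inductive step, suppose the claim holds for $i+1$, i.e.\ every equation context in $\eqs_{i+1}$ is $\rules/\hs_{i+1}/\eqs_N$-bounded ground convertible, and consider the step $(\eqs_i,\hs_i) \vdash (\eqs_{i+1},\hs_{i+1})$. First I would note that by \autoref{lem:boundedPreserve} (Preserving Bounds), together with the assumption that $\eqs_1$ consists of bounded equation contexts, all $\eqs_j$ consist of bounded equation contexts, so we may freely invoke the Base Soundness Property and the bounded-equation-context machinery. Now split on whether the step uses \induct\ or one of the other rules. If it is \emph{not} \induct, then the rule has the Base Soundness Property, so $\hs_{i+1} = \hs_i$ and for every $\eqcon{\sterm}{s}{t}{\tterm}{\psi} \in \eqs_i \setminus \eqs_{i+1}$ and ground $\gamma$ respecting $\psi$ we get $s\gamma \bconvert{\sterm\gamma}{\tterm\gamma}{\rules}{\hs_i}{\eqs_{i+1}} t\gamma$. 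This is a conversion over $\eqs_{i+1}$, not over $\eqs_N$; but each single step of type \ref{convert:eqs} uses some equation context $\eqcon{d}{\ell}{r}{e}{\varphi} \in \eqs_{i+1}$ with a ground substitution $\sigma$ respecting it and $\{\sterm\gamma,\tterm\gamma\} \succeqmul \{d\sigma,e\sigma\}$. By the inductive hypothesis, $\ell\sigma \bconvert{d\sigma}{e\sigma}{\rules}{\hs_{i+1}}{\eqs_N} r\sigma$; by \autoref{lem:convertContext} this lifts to the surrounding context, and by \autoref{lem:convertMul} (with $\{\sterm\gamma,\tterm\gamma\} \succeqmul \{d\sigma,e\sigma\}$) the bounding terms can be raised back to $\sterm\gamma,\tterm\gamma$, and since $\hs_{i+1} = \hs_i$ this yields a conversion over $\rules/\hs_i/\eqs_N$. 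Steps of type \ref{convert:rule} and \ref{convert:hypo} are already over $\rules$ and $\hs_i = \hs_{i+1}$, so they carry over verbatim (again using $\hs_i = \hs_{i+1}$). Stitching these together, every equation context in $\eqs_i \setminus \eqs_{i+1}$ is $\rules/\hs_i/\eqs_N$-bounded ground convertible; equation contexts in $\eqs_i \cap \eqs_{i+1}$ are handled by the inductive hypothesis directly (noting $\hs_i = \hs_{i+1}$).

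The genuinely hard case, and the main obstacle, is \induct. Here $(\eqs_i,\hs_i) \vdash (\eqs_{i+1},\hs_{i+1})$ with $\eqs_i = \eqs_0 \uplus \{\eqcon{\sterm}{s}{t}{\tterm}{\psi}\}$, $\eqs_{i+1} = \eqs_0 \cup \{\eqcon{s}{s}{t}{t}{\psi}\}$ and $\hs_{i+1} = \hs_i \cup \{s \approx t\ [\psi]\}$. The subtlety is that the induction hypothesis equation $s \approx t\ [\psi]$ has just been added to $\hs$, so we must now \emph{establish} its bounded ground convertibility, not merely inherit it — and this is where genuine well-founded induction enters. The plan is: fix a ground substitution $\gamma$ respecting $\psi$ and prove $s\gamma \bconvert{\sterm\gamma}{\tterm\gamma}{\rules}{\hs_i}{\eqs_N} t\gamma$ by well-founded induction on the size-2 multiset $\{\sterm\gamma,\tterm\gamma\}$ with respect to $\succmul$ (which is well-founded since $\succ$ is, and $(\succmul,\succeqmul)$ is an ordering pair). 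The new equation context $\eqcon{s}{s}{t}{t}{\psi} \in \eqs_{i+1}$ is $\rules/\hs_{i+1}/\eqs_N$-bounded ground convertible by the inductive hypothesis on $i+1$; so there is a conversion $s\gamma \bconvert{s\gamma}{t\gamma}{\rules}{\hs_{i+1}}{\eqs_N} t\gamma$, where the bounding terms are exactly $s\gamma$ and $t\gamma$. I then transform this into a conversion over $\rules/\hs_i/\eqs_N$ (removing the newly added hypothesis) by handling each single step: steps of type \ref{convert:rule} and \ref{convert:eqs} are unchanged (the $\eqs_N$ and $\rules$ parts are untouched), and the only delicate steps are those of type \ref{convert:hypo} that use the \emph{new} hypothesis $s \approx t\ [\psi]$ with some ground $\sigma$ respecting $\psi$ and $\{s\gamma,t\gamma\} \succmul \{s\sigma,t\sigma\}$. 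For such a step, $\{s\sigma,t\sigma\}$ is strictly $\succmul$-smaller than the original multiset — here $\{\sterm\gamma,\tterm\gamma\} = \{s\gamma,t\gamma\}$ by construction of \induct, and for later iterations we track that the current bounding pair still dominates — so the well-founded induction hypothesis applies to $\sigma$, giving $s\sigma \bconvert{s\sigma}{t\sigma}{\rules}{\hs_i}{\eqs_N} t\sigma$; lift through the context by \autoref{lem:convertContext} and raise bounding terms via \autoref{lem:convertMul}. Replacing every such new-hypothesis step by this sub-conversion, and leaving all other steps alone, we obtain $s\gamma \bconvert{\sterm\gamma}{\tterm\gamma}{\rules}{\hs_i}{\eqs_N} t\gamma$. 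Finally, for an \emph{arbitrary} $\eqcon{\sterm}{s}{t}{\tterm}{\psi} \in \eqs_i$ (including the one removed by \induct) we run exactly the same well-founded-induction argument — the removed one becomes $\eqcon{s}{s}{t}{t}{\psi} \in \eqs_{i+1}$ which is covered by the inductive hypothesis on $i+1$, and the untouched ones in $\eqs_0$ are also in $\eqs_{i+1}$ — and conclude $\rules/\hs_i/\eqs_N$-bounded ground convertibility of everything in $\eqs_i$. The care point to get exactly right in the writeup is the bookkeeping that, across the transformation, each invocation of a \ref{convert:hypo} step on the new hypothesis really does produce a $\succmul$-strict descent matching the measure of the outer well-founded induction; this is where the definitions of \induct\ (which sets $\sterm,\tterm := s,t$) and of $\succmul$ inside \ref{convert:hypo} fit together precisely.
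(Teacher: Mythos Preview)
Your proposal is correct and takes essentially the same approach as the paper: the paper packages the non-\induct\ case as \autoref{lem:invariantSinglestepWithoutInduct} and the well-founded multiset induction for the \induct\ case as \autoref{lem:multisetInduction}, while you inline both.

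One presentational slip to fix before writing it up: you state the well-founded induction on $\{\sterm\gamma,\tterm\gamma\}$ and then claim $\{\sterm\gamma,\tterm\gamma\}=\{s\gamma,t\gamma\}$ ``by construction of \induct'', but this equality holds only for the equation context \emph{added} to $\eqs_{i+1}$, not for the one \emph{removed} from $\eqs_i$ (whose bounding terms $\sterm,\tterm$ are arbitrary). As stated, the measure does not visibly decrease. The clean fix --- which is also what the paper does --- is to carry out the well-founded induction on $\{s\gamma,t\gamma\}$, proving the auxiliary statement $s\gamma \bconvert{s\gamma}{t\gamma}{\rules}{\hs_i}{\eqs_N} t\gamma$ for all $\gamma$ respecting $\psi$; then the descent $\{s\gamma,t\gamma\}\succmul\{s\sigma,t\sigma\}$ coming from a \ref{convert:hypo}-step matches the measure exactly. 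Only \emph{afterwards} do you lift the bounds to $\sterm\gamma,\tterm\gamma$ for the removed equation context, using boundedness ($\sterm\gamma\succeq s\gamma$, $\tterm\gamma\succeq t\gamma$) and \autoref{lem:convertMul}. The same auxiliary statement also supplies the replacement conversions needed to downgrade $\hs_{i+1}$ to $\hs_i$ for the untouched equation contexts in $\eqs_0$.
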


Note that this result implies \autoref{theorem:soundnessRI}, but with the stronger property of
bounded ground convertibility rather than merely ground convertibility.  Moreover, it easily allows
us to extend the system of \autoref{fig:boundedRIrules} with additional deduction rules, and
maintain soundness of the system so long as the new deduction rules satisfy two
properties that are often easy to prove
(Preserving Bounds and the Base Soundness Property).

\medskip
The proof of \autoref{theorem:boundedGroundConvertibilityRI} roughly proceeds by
observing that for \( i = N \), every equation context in \( \eqs_i \) is
\( \rules/\hs_i/\eqsN \)-bounded ground convertible, and then showing that if
this property holds for \( \eqs_i \) with \( i > 0 \), then it also holds for
\( \eqs_{i-1} \).
We first show the inductive step for the deduction rules where \( \hs \) is
unchanged; that is, all rules other than \induct.

\newcommand{\bconvs}[5]{\leftrightarrow^{\{#1,#2\}}_{\rules;#4;#5}}
\newcommand{\bconv}[5]{\leftrightarrow^{\{#1,#2\}*}_{\rules;#4;#5}}
\newcommand{\bconvsx}[5]{\bconvs{#1}{#2}{#3}{#4}{#5}}
\newcommand{\bconvx}[5]{\bconv{#1}{#2}{#3}{#4}{#5}}

To avoid excessive whitespace in the following three proofs (which use these
relations a lot), we will use
\( \bconvsx{a}{b}{\rules}{\hs}{\eqs} \) as an alternative notation for
\( \bconvertsingle{a}{b}{\rules}{\hs}{\eqs} \) and
\( \bconvx{a}{b}{\rules}{\hs}{\eqs} \) for
\( \bconvert{a}{b}{\rules}{\hs}{\eqs} \).

\begin{lem}\label{lem:invariantSinglestepWithoutInduct} 
Let \( (\eqs,\hs) \vdash (\eqs',\hs) \).
Suppose that 
\begin{enumerate*}[(a)]
\item for every \( \eqcon{\sterm}{s}{t}{\tterm}{\psi} \in \eqs \setminus \eqs' \)
  and ground substitution \( \gamma \) that respects it we have
  \( s\gamma \bconvx{\sterm\gamma}{\tterm\gamma}{\rules}{\hs}{\eqs'} t\gamma \);
  and
\item for every \( \eqcon{\sterm}{s}{t}{\tterm}{\psi} \in \eqs' \) and
  ground substitution \( \gamma \) that respects it we have
  \( s\gamma \bconvx{\sterm\gamma}{\tterm\gamma}{\rules}{\hs}{\eqsN} t\gamma \).
\end{enumerate*}

Then for every \( \eqcon{\sterm}{s}{t}{\tterm}{\psi} \in \eqs \) and
ground substitution \( \gamma \) that respects this equation context we have
\( s\gamma \bconvx{\sterm\gamma}{\tterm\gamma}{\rules}{\hs}{\eqsN} t\gamma \).
\end{lem}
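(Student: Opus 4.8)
The plan is to reduce everything to the single case that actually requires work. An equation context $\eqcon{\sterm}{s}{t}{\tterm}{\psi} \in \eqs$ is either in $\eqs \cap \eqs' \subseteq \eqs'$, in which case hypothesis (b) already delivers the conclusion, or it lies in $\eqs \setminus \eqs'$. So the whole argument is about the latter case: hypothesis (a) provides, for each ground substitution $\gamma$ respecting the equation context, a conversion $s\gamma \bconvx{\sterm\gamma}{\tterm\gamma}{\rules}{\hs}{\eqs'} t\gamma$, and I must ``upgrade'' its equation-context index from $\eqs'$ to $\eqsN$ while keeping the bounding terms $\sterm\gamma,\tterm\gamma$ fixed.

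Fixing such a $\gamma$, I would unfold the conversion into finitely many single steps
\[
s\gamma = w_0 \bconvsx{\sterm\gamma}{\tterm\gamma}{\rules}{\hs}{\eqs'} w_1 \bconvsx{\sterm\gamma}{\tterm\gamma}{\rules}{\hs}{\eqs'} \cdots \bconvsx{\sterm\gamma}{\tterm\gamma}{\rules}{\hs}{\eqs'} w_k = t\gamma ,
\]
and show $w_{j-1} \bconvx{\sterm\gamma}{\tterm\gamma}{\rules}{\hs}{\eqsN} w_j$ for each $j$; transitivity of $\bconvx{\sterm\gamma}{\tterm\gamma}{\rules}{\hs}{\eqsN}$ then gives the claim. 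For this I case split on which clause of \autoref{def:boundconvert} justifies the step $w_{j-1} \bconvsx{\sterm\gamma}{\tterm\gamma}{\rules}{\hs}{\eqs'} w_j$. Clauses \ref{boundconvert:rule} and \ref{boundconvert:hypo} do not mention the equation-context component of the relation at all, so such a step is verbatim a step $w_{j-1} \bconvsx{\sterm\gamma}{\tterm\gamma}{\rules}{\hs}{\eqsN} w_j$ and there is nothing to do. For clause \ref{boundconvert:eqs} we have $w_{j-1} = D[\ell\sigma]$ and $w_j = D[r\sigma]$ for some context $D$, some $\eqcon{d}{\ell}{r}{e}{\varphi} \in \eqs'$, and a ground substitution $\sigma$ respecting that equation context, with $\multiset{\sterm\gamma,\tterm\gamma} \succeqmul \multiset{d\sigma,e\sigma}$.

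This is exactly where hypothesis (b) is applied: to the equation context $\eqcon{d}{\ell}{r}{e}{\varphi} \in \eqs'$ with the substitution $\sigma$, giving $\ell\sigma \bconvx{d\sigma}{e\sigma}{\rules}{\hs}{\eqsN} r\sigma$. Then I invoke \autoref{lem:convertContext} to place this conversion under $D$, obtaining $D[\ell\sigma] \bconvx{d\sigma}{e\sigma}{\rules}{\hs}{\eqsN} D[r\sigma]$, and \autoref{lem:convertMul} together with $\multiset{\sterm\gamma,\tterm\gamma} \succeqmul \multiset{d\sigma,e\sigma}$ to raise the bounding terms, yielding $w_{j-1} \bconvx{\sterm\gamma}{\tterm\gamma}{\rules}{\hs}{\eqsN} w_j$ as needed. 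I do not expect a genuine obstacle; the argument is a routine induction on the length of the conversion. The only points that need care are the bookkeeping of the fixed bounding terms $\sterm\gamma,\tterm\gamma$ throughout the chain, the explicit remark that $\eqs \cap \eqs'$-elements are covered directly by (b), and the observation that a clause-\ref{boundconvert:eqs} step may invoke an equation context that is genuinely new (in $\eqs' \setminus \eqs$) — but since (b) ranges over all of $\eqs'$, this causes no difficulty.
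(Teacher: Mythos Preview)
Your proposal is correct and follows essentially the same approach as the paper's proof: split on whether the equation context lies in $\eqs'$ (handled directly by (b)) or in $\eqs\setminus\eqs'$, unfold the conversion from (a) into single steps, and for each step of type~\ref{boundconvert:eqs} invoke (b) on the witnessing equation context from $\eqs'$, then push through the context via \autoref{lem:convertContext} and raise the bounds via \autoref{lem:convertMul}. The only cosmetic difference is that the paper writes the witnessing equation context with $\simeq$ to account for the inherent symmetry of the relation, but this is immaterial.
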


\begin{proof}
Let \( \eqcon{\sterm}{s}{t}{\tterm}{\psi} \in \eqs \) and \( \gamma \) a ground
substitution that respects this equation context \( \psi \).
To start, if \( \eqcon{\sterm}{s}{t}{\tterm}{\psi} \in \eqs' \) then we are done by assumption (b).
Otherwise, \( \eqcon{\sterm}{s}{t}{\tterm}{\psi} \in \eqs \setminus \eqs' \).
We will show that \( s\gamma \bconvx{\sterm\gamma}{\tterm\gamma}{\rules}{\hs}{\eqsN} t\gamma \).

By assumption (a), we may write \( s\gamma = c_0 \), \( t\gamma = c_k \), and there are
\( c_1,\dots,c_{k-1} \) such that \( c_i \bconvsx{\sterm\gamma}{\tterm\gamma}{\rules}{\hs}{\eqs'}
c_{i+1} \) for \( 0 \leq i < k \).  We are done if we can show that in fact
\( c_i \bconvx{\sterm\gamma}{\tterm\gamma}{\rules}{\hs}{\eqsN} c_{i+1} \) for all such \( i \).
To see that this is the case, consider the definition of \( \bconvsx{\sterm\gamma}{\tterm\gamma}{
\rules}{\hs}{\eqs'} \); that is, \autoref{def:boundconvert}.
If the step from \( c_i \) to \( c_{i+1} \) is by \ref{boundconvert:rule} or
\ref{boundconvert:hypo}, then this is also a \( \bconvsx{\sterm\gamma}{\tterm\gamma}{\rules}{\hs}{
\eqsN} \) step, so we are immediately done.
Otherwise, it is by \ref{boundconvert:eqs}, so \( c_i = C[s'\sigma] \) and \( c_{i+1} = C[t'\sigma]
\) for some \( \eqconsim{\sterm'}{s'}{t'}{\tterm'}{\varphi} \in \eqs' \) such that \( \sigma \)
respects \( \varphi \) and \( \multiset{\sterm\gamma,\tterm\gamma} \succeqmul \multiset{\sterm'
\sigma,\tterm'\sigma} \).

By assumption (b), we have that \( s'\sigma \bconv{\sterm'\sigma}{\tterm'\sigma}{\rules}{\hs}{
\eqsN} t'\sigma \).  By \autoref{lem:convertContext}, this also implies that
\( c_i = C[s'\sigma] \bconv{\sterm'\sigma}{\tterm'\sigma}{\rules}{\hs}{\eqsN} C[t'\sigma] =
c_{i+1} \).
Then, from \( \multiset{\sterm\gamma,\tterm\gamma} \succeqmul \multiset{\sterm'\sigma,
\tterm'\sigma} \) we obtain \( c_i \bconv{\sterm\gamma}{\tterm\gamma}{\rules}{\hs}{\eqsN}
c_{i+1} \) by \autoref{lem:convertMul}, as required.
\end{proof}

Next, we obtain a lemma that will be very useful for when the \induct\ rule \emph{is} used.

\begin{lem}\label{lem:multisetInduction}
Suppose \( \csucceq{\sterm}{s}{\psi} \) and \( \csucceq{\tterm}{t}{\psi} \),
let \( \hs \) be a set of equations and \( \hs' = \hs \cup \{ s \approx t\ [\psi] \} \).
Let
\( s\gamma \bconv{s\gamma}{t\gamma}{\rules}{\hs'}{\eqsN} t\gamma \) hold
for all ground substitutions \( \gamma \) that respect
\( \eqcon{\sterm}{s}{t}{\tterm}{\psi} \).

Then for all ground substitutions \( \gamma \) that respect
\( \eqcon{\sterm}{s}{t}{\tterm}{\psi} \):
\( s\gamma \bconv{\sterm\gamma}{\tterm\gamma}{\rules}{\hs}{\eqsN} t\gamma \).

(So using \( \hs \) instead of \( \hs' \), and \( \sterm\gamma,\tterm\gamma \) instead of
\( s\gamma,t\gamma \).)
\end{lem}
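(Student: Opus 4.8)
The plan is to first prove, by well-founded induction, the auxiliary claim $(\dagger)$: \emph{for every ground substitution $\gamma$ respecting $\eqcon{\sterm}{s}{t}{\tterm}{\psi}$ we have $s\gamma \bconvert{s\gamma}{t\gamma}{\rules}{\hs}{\eqsN} t\gamma$} --- that is, the lemma's conclusion but with the tighter bounding terms $s\gamma, t\gamma$ in place of $\sterm\gamma, \tterm\gamma$ (and already with $\hs$, which is precisely what the hypothesis supplies modulo the bound). From $(\dagger)$ the lemma follows immediately: a $\gamma$ respecting the equation context respects $\psi$, so $\csucceq{\sterm}{s}{\psi}$ and $\csucceq{\tterm}{t}{\psi}$ give $\sterm\gamma \succeq s\gamma$ and $\tterm\gamma \succeq t\gamma$, hence $\multiset{\sterm\gamma,\tterm\gamma} \succeqmul \multiset{s\gamma,t\gamma}$; applying $(\dagger)$ and then \autoref{lem:convertMul} to raise the bounds yields $s\gamma \bconvert{\sterm\gamma}{\tterm\gamma}{\rules}{\hs}{\eqsN} t\gamma$.

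To prove $(\dagger)$ I would do well-founded induction on $\multiset{s\gamma,t\gamma}$ ordered by $\succmul$, which is well-founded since $(\succmul,\succeqmul)$ is an ordering pair on size-$2$ multisets. Fix $\gamma$. The hypothesis of the lemma gives a finite conversion $s\gamma \bconvert{s\gamma}{t\gamma}{\rules}{\hs'}{\eqsN} t\gamma$, so it suffices to show that each of its single steps $d \bconvertsingle{s\gamma}{t\gamma}{\rules}{\hs'}{\eqsN} d'$ can be replaced by a (possibly longer) conversion $d \bconvert{s\gamma}{t\gamma}{\rules}{\hs}{\eqsN} d'$, and then chain by transitivity. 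Inspecting \autoref{def:boundconvert}: if the step is justified by clause \ref{convert:rule} (a rule of $\rules$), clause \ref{convert:eqs} (an equation context of $\eqsN$), or clause \ref{convert:hypo} using an equation of $\hs$, then its side conditions never mention $\hs'$, so the very same step is already a $\bconvertsingle{s\gamma}{t\gamma}{\rules}{\hs}{\eqsN}$ step. The only remaining possibility is clause \ref{convert:hypo} using the newly added equation $s \approx t\ [\psi]$, in either orientation: then $d = C[s\sigma]$ and $d' = C[t\sigma]$ (up to swapping the two, which is harmless since these relations are symmetric) for a ground substitution $\sigma$ respecting $s \approx t\ [\psi]$, with $\multiset{s\gamma,t\gamma} \succmul \multiset{s\sigma,t\sigma}$. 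Extend $\sigma$ harmlessly to a ground substitution $\hat\sigma$ respecting the equation context (choosing arbitrary ground terms for the finitely many variables of $\sterm,\tterm$ not already in its domain); since $\Vars{s,t} \subseteq \domain(\sigma)$ this leaves $s\sigma = s\hat\sigma$ and $t\sigma = t\hat\sigma$, so $\multiset{s\gamma,t\gamma} \succmul \multiset{s\hat\sigma,t\hat\sigma}$ and the induction hypothesis applies to $\hat\sigma$, giving $s\sigma \bconvert{s\sigma}{t\sigma}{\rules}{\hs}{\eqsN} t\sigma$. Closing under the context $C$ by \autoref{lem:convertContext}, and then raising the bounds by \autoref{lem:convertMul} (using that $\multiset{s\gamma,t\gamma} \succmul \multiset{s\sigma,t\sigma}$, hence $\succeqmul$), turns this into $d \bconvert{s\gamma}{t\gamma}{\rules}{\hs}{\eqsN} d'$, as required.

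The one genuinely load-bearing step --- and the only one that uses the induction hypothesis --- is this last case, and it goes through precisely because clause \ref{convert:hypo} of \autoref{def:boundconvert} demands a \emph{strict} decrease $\multiset{a,b} \succmul \multiset{\ell\sigma,r\sigma}$ (rather than merely $\succeqmul$): this is exactly what certifies that $\multiset{s\sigma,t\sigma}$ lies strictly $\succmul$-below $\multiset{s\gamma,t\gamma}$, enabling the appeal to the induction hypothesis. Everything else --- the domain extension of $\sigma$ and the two invocations of \autoref{lem:convertContext} and \autoref{lem:convertMul} --- is routine bookkeeping, so I do not anticipate any further obstacle.
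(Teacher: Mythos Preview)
Your proposal is correct and follows essentially the same approach as the paper: well-founded induction on $\multiset{s\gamma,t\gamma}$ ordered by $\succmul$, analyzing each step of the given $\hs'$-conversion and using the induction hypothesis for steps that invoke the new equation $s \approx t\ [\psi]$, together with \autoref{lem:convertContext} and \autoref{lem:convertMul}. Your presentation is in fact slightly more careful than the paper's: you isolate the auxiliary claim $(\dagger)$ with bounds $s\gamma,t\gamma$ and raise them to $\sterm\gamma,\tterm\gamma$ only at the end, and you explicitly handle the domain extension from $\sigma$ (which a priori only respects the equation) to $\hat\sigma$ (respecting the full equation context), a point the paper leaves implicit.
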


\begin{proof}
Let \( \gamma \) be a ground substitution that respects
\( \eqcon{\sterm}{s}{t}{\tterm}{\psi} \).
We prove the lemma by induction on the multiset \( \{s\gamma,t\gamma\} \), ordered with
\( \succ_{mul} \).

Since \( \bconvx{s\gamma}{t\gamma}{\rules}{\hs'}{\eqsN} \) denotes a transitive closure, we have
a sequence \( c_0,c_1,\dots,c_k \) such that \( c_0 = s\gamma \) and \( c_k = t\gamma \) and for
\( 0 \leq i < k \): \( c_i \bconvs{s\gamma}{t\gamma}{\rules}{\hs'}{\eqsN} c_{i+1} \).
Each step from \( c_i \) to \( c_{i+1} \) that uses \( \leftrightarrow_\rules \) or
\( \leftrightarrow_\hs \) or \( \leftrightarrow_{\eqsN} \) is also a
\( \bconvsx{s\gamma}{t\gamma}{\rules}{\hs}{\eqsN} \) step, and therefore a
\( \bconvsx{\sterm\gamma}{\tterm\gamma}{\rules}{\hs}{\eqsN} \) step by \autoref{lem:convertMul}.
So, suppose the step from \( c_i \) to \( c_{i+1} \) uses \( \hs' \setminus \hs \); that is,
there exists a ground substitution \( \delta \) that respects \( s \approx t\ [\psi] \), such that \( \multiset{s\gamma,t\gamma}
\succmul \multiset{s\delta,t\delta} \), and such that either \( c_i = C[s\delta] \) and
\( c_{i+1} = C[t\delta] \), or \( c_i = C[t\delta] \) and \( c_{i+1} = C[s\delta] \).
Either way, by the induction hypothesis,
\( s\delta \bconv{s\delta}{t\delta}{\rules}{\hs}{\eqsN} t\delta \), and
therefore by Lemmas~\ref{lem:convertContext} and~\ref{lem:convertMul},
\( c_i \bconv{\sterm\gamma}{\tterm\gamma}{\rules}{\hs}{\eqsN} c_{i+1} \).
\end{proof}

With this, we have all the preparations needed to prove our primary soundness
result for bounded rewriting induction.  We recall the theorem:

\smallskip\noindent\fbox{\parbox{\textwidth}{\textbf{\autoref{theorem:boundedGroundConvertibilityRI}.}
\thmBoundedGroundConvertabilityRI}}

\begin{proof}
By definition of Preserving Bounds (\autoref{def:boundPreserve}), and the
fact that \induct\ also preserves bounds (\autoref{lem:boundedPreserve}), we
obtain by induction on \( i \) that
(**) each \( \eqs_i \) contains only bounded equation contexts.
Now, we will prove the required result by induction on \( N - i \).
So assume given \( 1 \leq i \leq N \), let \( \eqcon{\sterm}{s}{t}{\tterm}{\psi} \in \eqs_i \),
and let \( \gamma \) be a ground substitution that respects \( \psi \).  We must show:
\( s\gamma \bconv{\sterm\gamma}{\tterm\gamma}{\rules}{\hs_i}{\eqsN} t\gamma \).

If \( N - i = 0 \) this is obvious:
we have
\( s\gamma \bconvs{\sterm\gamma}{\tterm\gamma}{\rules}{\hs_i}{\eqsN} t\gamma \)
by a single step using \ref{boundconvert:eqs}.
If \( N - i > 0 \), so \( 1 \leq i < N \), consider the deduction step
$(\eqs_i, \hs_i) \vdash (\eqs_{i+1}, \hs_{i+1})$.
If this step has the Base Soundness Property, then by (**), property (a) of
\autoref{lem:invariantSinglestepWithoutInduct} is satisfied, while property (b) is satisfied by the
induction hypothesis.  Thus,
\( s\gamma \bconv{\sterm\gamma}{\tterm\gamma}{\rules}{\hs_i}{\eqsN} t\gamma \).
by \autoref{lem:invariantSinglestepWithoutInduct}.

The only remaining case is that the deduction step was performed by \induct.
That is, \( \eqs_i = \eqs \uplus \{\eqcon{a}{u}{v}{b}{\varphi}\} \) and
\( \eqs_{i+1} = \eqs \cup \{\eqcon{u}{u}{v}{v}{\varphi}\} \) and
\( \hs_{i+1} = \hs_i \cup \{u \approx v\ [\varphi] \} \).
We distinguish two cases.
\begin{itemize}[label = $\triangleright$]
\item \( \eqcon{\sterm}{s}{t}{\tterm}{\psi} \in \eqs \).
  Then this equation context is in \( \eqs_{i+1} \), so by induction hypothesis we have
  \( s\gamma \bconv{\sterm\gamma}{\tterm\gamma}{\rules}{\hs_{i+1}}{\eqsN} t\gamma \).
  That is, there exist \( c_0,\dots,c_k \) with \( k \geq 0 \) such that
  \( s\gamma = c_0 \), \( t\gamma = c_k \) and for \( 0 \leq j < k \) we have
  \( c_j \bconvs{\sterm\gamma}{\tterm\gamma}{\rules}{\hs_{i+1}}{\eqsN} c_{j+1} \).
  It suffices if each
  \( c_j \bconv{\sterm\gamma}{\tterm\gamma}{\rules}{\hs_i}{\eqsN} c_{j+1} \) as well.
  So fix \( j \in \{0,\dots,k-1\} \), and consider how the step from \( c_j \) to \( c_{j+1} \)
  is produced.
  If the step from \( c_j \) to \( c_{j+1} \) is by \ref{boundconvert:rule} (so using
  \( \leftrightarrow_\rules \)) or \ref{boundconvert:eqs} (so using \( \leftrightarrow_{\eqsN} \)),
  or if it is by \ref{boundconvert:hypo} with an equation in \( \hs_i \), then we immediately
  have \( c_j \bconvs{\sterm\gamma}{\tterm\gamma}{\rules}{\hs_i}{\eqsN} c_{j+1} \) as well.

  The only alternative is a step by \ref{boundconvert:hypo} using \( \hs_{i+1} \setminus \hs_i =
  \{ u \approx v\ [\varphi] \} \).  That is, there exists a ground substitution \( \sigma \)
  that respects \( \varphi \) and for which \( \multiset{\sterm\gamma,\tterm\gamma} \succmul
  \multiset{u\sigma,v\sigma} \), such that either \( c_j = C[u\sigma] \) and \( c_{j+1} =
  C[v\sigma] \), or \( c_j = C[v\sigma] \) and \( c_{j+1} = C[u\sigma] \).
  Now, observe that for all ground substitutions \( \delta \) that respect
  \( \varphi \) we have \( u\delta \bconv{u\delta}{v\delta}{\rules}{\hs_{i+1}}{\eqsN}
  v\delta \) by the induction hypothesis, because \( \eqcon{u}{u}{v}{v}{\varphi} \in \eqs_{i+1} \).
  Hence we may apply \autoref{lem:multisetInduction} (with \( \csucceq{u}{u}{\varphi} \) and
  \( \csucceq{v}{v}{\varphi} \)) to obtain that
  \( u\delta \bconv{u\delta}{v\delta}{\rules}{\hs_i}{\eqsN} v\delta \).
  By \autoref{lem:convertContext}, this implies
  \( c_j \bconv{u\delta}{v\delta}{\rules}{\hs_i}{\eqsN} c_{j+1} \),
  and by \autoref{lem:convertMul} we have
  \( c_j \bconv{\sterm\gamma}{\tterm\gamma}{\rules}{\hs_i}{\eqsN} c_{j+1} \) as required.

\item \( \eqcon{\sterm}{s}{t}{\tterm}{\psi} \notin \eqs \), so
  \( \eqs_i = \eqs \uplus \{ \eqcon{\sterm}{s}{t}{\tterm}{\psi} \} \) and
  \( \eqs_{i+1} = \eqs \cup \{ \eqcon{s}{s}{t}{t}{\psi} \} \)
  Then the conditions to apply \autoref{lem:multisetInduction} are satisfied:
  \begin{itemize}
  \item by (**) and definition of a bounded equation context, we have
    \( \csucceq{\sterm}{s}{\psi} \) and \( \csucceq{\tterm}{t}{\psi} \)
  \item \( \hs_{i+1} = \hs_i \cup \{ s \approx t\ [\psi] \} \)
  \item for all ground substitutions \( \delta \) that respect \( \psi \) we have
    \( s\delta \bconv{s\delta}{t\delta}{\rules}{\hs_{i+1}}{\eqsN} t\delta \) by the induction
    hypothesis because \( \eqcon{s}{s}{t}{t}{\psi} \in \eqs_{i+1} \)
  \end{itemize}
  Hence, we immediately conclude that \( s\gamma \bconv{\sterm\gamma}{\tterm\gamma}{\rules}{
  \hs_i}{\eqsN} t\gamma \).
\qedhere
\end{itemize}
\end{proof}

With this, we can easily obtain a core soundness claim of bounded rewriting induction:

\begin{thm}\label{thm:boundedRI}
  Let $\alcstrs$ be a weakly normalizing, quasi-reductive \lcstrs\ and let \( \eqs \) be a set of
  bounded equation contexts.
  Let $(\succ, \succeq)$ be some bounding pair, such that $(\eqs,\emptyset) \vdash^* 
  (\emptyset, \hs)$, for some $\hs$ using the derivation rules in \autoref{fig:boundedRIrules}.
  
  Then for every equation context \( \eqcon{\sterm}{s}{t}{\tterm}{\psi} \in \eqs \), and every
  ground substitution \( \gamma \) that respects \( \psi \), we have:
  \( s\gamma \bconvertR{\sterm\gamma}{\tterm\gamma} t\gamma \), where
  \( u \bconvertRsingle{\sterm\gamma}{\tterm\gamma} v \) if 
  \( u \bconvertsingle{\sterm\gamma}{\tterm\gamma}{\rules}{\emptyset}{\emptyset} \)
  by \ref{boundconvert:rule}.
\end{thm}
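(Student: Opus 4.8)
The plan is to obtain this statement as a direct instance of \autoref{theorem:boundedGroundConvertibilityRI}. The first step is the elementary observation that, when both the hypothesis set and the equation-context set are empty, the relation $\bconvert{a}{b}{\rules}{\emptyset}{\emptyset}$ of \autoref{def:boundconvert} is the same relation as $\bconvertR{a}{b}$ in the statement: cases \ref{convert:hypo} and \ref{convert:eqs} of \autoref{def:boundconvert} require an equation in $\hs$, resp.\ an equation context in $\eqs$, hence are vacuous, while case \ref{convert:rule} is exactly the single step $\bconvertRsingle{a}{b}$, and taking reflexive-transitive closures preserves this coincidence. So $\rules/\emptyset/\emptyset$-bounded ground convertibility of an equation context is precisely the conclusion we want.

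Next I would set up the derivation. If $\eqs = \emptyset$ there is nothing to prove; otherwise write $(\eqs,\emptyset) \vdash^* (\emptyset,\hs)$ as $(\eqs_1,\hs_1) \vdash (\eqs_2,\hs_2) \vdash \dots \vdash (\eqsN,\hs_N)$ with $\eqs_1 = \eqs$, $\hs_1 = \emptyset$, $\eqsN = \emptyset$ and $\hs_N = \hs$. To invoke \autoref{theorem:boundedGroundConvertibilityRI} I must check its hypotheses. The system $\alcstrs$ is weakly normalizing and quasi-reductive and $(\succ,\succeq)$ is a bounding pair by assumption; $\eqs_1 = \eqs$ consists of bounded equation contexts by assumption; and every $\vdash$ step uses one of the nine rules of \autoref{fig:boundedRIrules}, each of which is either \induct\ or has both the Preserving Bounds property (\autoref{lem:boundedPreserve}) and the Base Soundness property. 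The latter is established, for every rule other than \induct, by \autoref{lem:simplifybase}--\autoref{lem:generalizebase}; the two cases there carrying a side condition, \case\ and \generalize/\alter, rely exactly on the weak normalization and quasi-reductivity assumptions that we are making here. Hence the hypotheses of \autoref{theorem:boundedGroundConvertibilityRI} are met.

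Finally, applying \autoref{theorem:boundedGroundConvertibilityRI} with $i = 1$ yields that every equation context in $\eqs_1 = \eqs$ is $\rules/\hs_1/\eqsN$-bounded ground convertible, i.e.\ $\rules/\emptyset/\emptyset$-bounded ground convertible. Unfolding \autoref{def:boundedGroundConv}, this says that for every $\eqcon{\sterm}{s}{t}{\tterm}{\psi} \in \eqs$ and every ground substitution $\gamma$ respecting $\psi$ we have $s\gamma \bconvert{\sterm\gamma}{\tterm\gamma}{\rules}{\emptyset}{\emptyset} t\gamma$, which by the first step is exactly $s\gamma \bconvertR{\sterm\gamma}{\tterm\gamma} t\gamma$. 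There is no genuine obstacle beyond the bookkeeping: one only has to confirm that the Base Soundness and Preserving Bounds lemmas jointly cover all nine rules of \autoref{fig:boundedRIrules}, and that the empty $\hs$ and $\eqs$ at the ends of the derivation collapse \autoref{def:boundconvert} to its rule-only fragment. The one mildly delicate point is ensuring that \autoref{theorem:boundedGroundConvertibilityRI} is applied with the \emph{final} equation-context set $\eqsN = \emptyset$ in the subscript position, since it is this that forces the $/\emptyset$ and hence the reduction to $\bconvertR{\cdot}{\cdot}$.
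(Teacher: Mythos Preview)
Your proposal is correct and follows essentially the same route as the paper: cite the Base Soundness lemmas (\autoref{lem:simplifybase}--\autoref{lem:generalizebase}) and the Preserving Bounds lemma (\autoref{lem:boundedPreserve}) to verify the hypotheses of \autoref{theorem:boundedGroundConvertibilityRI}, then instantiate that theorem at $i=1$ with $\hs_1=\emptyset$ and $\eqsN=\emptyset$. Your explicit unpacking of why $\bconvert{a}{b}{\rules}{\emptyset}{\emptyset}$ collapses to $\bconvertR{a}{b}$ is a detail the paper leaves implicit, but the argument is the same.
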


\begin{proof}
By Lemmas \ref{lem:simplifybase}, \ref{lem:casebase}, \ref{lem:deletebase}, \ref{lem:semiconsbase},
\ref{lem:hdeletebase}, \ref{lem:postulatebase}, \ref{lem:hypothesisbase} and
\ref{lem:generalizebase}, all derivation rules of \autoref{fig:boundedRIrules} other than \induct\ 
have the Base Soundness Property.
By \autoref{lem:boundedPreserve} they also all preserve bounds.
Hence, we apply \autoref{theorem:boundedGroundConvertibilityRI} and find that, choosing \(i = 1\) (with $\eqs_1=\eqs$) we conclude that 
every equation context in \(\eqs\) is \( \rules/\emptyset/\emptyset \)-bounded ground
convertible.
\end{proof}

We obtain the original soundness result as a corollary:

\smallskip\noindent\fbox{\parbox{\textwidth}{\textbf{\autoref{theorem:soundnessRI}.}
\theoremsoundnessRI}}

\begin{proof}
Clearly, equation contexts of the form \( \eqcon{\bullet}{s}{t}{\bullet}{\psi} \) are bounded.
Hence, the result immediately follows by \autoref{thm:boundedRI}.
\end{proof}

\subsection{Completeness}\label{sec:completenessProofRI}
To prove \autoref{thm:completeness}, we start with a proposition. 
\begin{prop}\label{prop:completenessBaseCase}
Let $\rules$ be a quasi-reductive, ground confluent \lcstrs\ and let $\eqs$ be a set of equations.
If \( (\eqs,\hs) \vdash \bot \)
then there are an equation context \( \eqcon{\sterm}{s}{t}{\tterm}{\psi} \in \eqs \) and a
ground substitution $\gamma$ that respects \( \eqcon{\sterm}{s}{t}{\tterm}{\psi}
\) for which $\rweq{s \gamma}{t \gamma}{\rules}$ does not hold.
\end{prop}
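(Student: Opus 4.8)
The plan is to unfold the definition of the \disprove\ rule and the definition of \emph{contradictory equation}, and then use ground confluence to derive a contradiction from the assumption that $\rweq{s\gamma}{t\gamma}{\rules}$ holds for the offending instance. Suppose $(\eqs,\hs)\vdash\bot$: by inspection of the derivation rules, the only rule producing $\bot$ is \disprove, so $\eqs = \eqs_0 \cup \{\eqcon{\sterm}{s}{t}{\tterm}{\psi}\}$ where the proof state is complete and $s \approx t\ [\psi]$ is contradictory. By \autoref{def:contradictoryEq} there is a ground substitution $\gamma$ (renaming the $\delta$ and $\gamma$ of the definition appropriately; in the case (2) we take the combined witness) respecting $\psi$ such that either (1) $s\gamma = \symb{f}\ s_1'\cdots s_n'$ and $t\gamma = \symb{g}\ t_1'\cdots t_m'$ with $\symb{f}\ne\symb{g}$, $n < \arity(\symb{f})$, $m < \arity(\symb{g})$; or (2) $s,t \in T(\thSig,\Var)$ and $\termInterpret{(\psi \wedge s \neq t)\gamma} = \top$. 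In either case I claim $\rweq{s\gamma}{t\gamma}{\rules}$ fails.

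First I would handle case (1). Since $\rules$ is ground confluent and $\rweq{s\gamma}{t\gamma}{\rules}$ would hold, there would be a common reduct $w$ with $s\gamma \rw^* w$ and $t\gamma \rw^* w$. But $s\gamma = \symb{f}\ s_1'\cdots s_n'$ with $n < \arity(\symb{f})$: by the observation in the preliminaries (``if $s_1,\dots,s_n$ are ground normal forms and $\afun\in\Sigma$, then $\afun\ s_1\cdots s_n$ is a ground normal form iff $n < \arity(\afun)$''), every reduction of $s\gamma$ happens strictly inside one of the $s_i'$, so every reduct of $s\gamma$ keeps head symbol $\symb{f}$ with exactly $n$ arguments; similarly every reduct of $t\gamma$ has head $\symb{g}$ with $m$ arguments. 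Hence $w$ would have head both $\symb{f}$ and $\symb{g}$, contradicting $\symb{f}\ne\symb{g}$ (and the argument counts also differ). For case (2): $s,t$ are theory terms with only base-type variables in the relevant part, so $s\gamma$ and $t\gamma$ are ground theory terms with $\termInterpret{s\gamma} \ne \termInterpret{t\gamma}$. Ground theory terms reduce only by $\calcrules$, and $\calcrules$-reduction preserves the interpretation $\termInterpret{\cdot}$ (a calculation step $\afun\ x_1\cdots x_m \to y\ [y = \afun\ x_1\cdots x_m]$ replaces a subterm by one with the same value); so any reduct of $s\gamma$ has value $\termInterpret{s\gamma}$ and any reduct of $t\gamma$ has value $\termInterpret{t\gamma}$. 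Thus no common reduct exists, again contradicting ground confluence together with $\rweq{s\gamma}{t\gamma}{\rules}$.

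The main obstacle I anticipate is the bookkeeping in case (2) when $s$ or $t$ contains higher-order variables: \autoref{def:contradictoryEq} first instantiates those via some ground substitution $\delta$ and only then asks for a satisfying $\gamma$ for the resulting constraint, so one must be careful to assemble a single ground substitution $\gamma$ respecting $\psi$ such that $s\gamma, t\gamma$ are ground theory terms with distinct values — and then argue that reductions of these terms stay within $\calcrules$ and preserve value. The value-preservation of $\calcrules$-steps is the key lemma to pin down (it follows directly from the shape of calculation rules and the definition of $\termInterpret{\cdot}$ on ground theory terms), but it should be routine. Note that this proposition does not yet use completeness of the proof state — that ingredient, via \autoref{lem:completeprop} and \autoref{def:completestate}, is what the full proof of \autoref{thm:completeness} will layer on top by induction on derivation length; here we only need that the \emph{final} contradictory equation has an instance that is not $\rules$-convertible.
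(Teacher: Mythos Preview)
Your proposal is correct and follows essentially the same approach as the paper: identify \disprove\ as the only rule yielding $\bot$, use ground confluence to reduce convertibility to joinability, then in case (1) argue that the head symbol is preserved under reduction because $n < \arity(\symb{f})$ blocks root steps, and in case (2) use that calculation steps preserve $\termInterpret{\cdot}$ to show the two sides cannot share a reduct. Your treatment of case (2) is arguably slightly cleaner than the paper's (you argue directly that every reduct of a ground theory term retains its interpretation, whereas the paper computes specific normal forms and compares those), but the substance is the same; your remark that completeness of the proof state is not used here is also correct and matches the paper's structure.
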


\begin{proof}
We first observe that, by definition of ground confluence, we have
\( \rweq{s\gamma}{t\gamma}{\rules} \) exactly if \( \samereduct{s\gamma}{t\gamma}{\rules} \),
where we say \( \samereduct{u}{v}{\rules} \) if there exists \( w \) such that both
\( u \rw^* w \) and \( v \rw^* w \).

The only derivation rule that allows us to deduce \( \bot \) is \disprove, and by definition of
this rule, \( \eqs \) contains an equation context \( \eqcon{\sterm}{s}{t}{\tterm}{\psi} \) such
that \( s \approx t\ [\psi] \) is contradictory.  We consider all possibilities of being
contradictory.

\textbf{(1)}: there exist $\symb{f}, \symb{g} \in \Sig$ with $\symb{f} \ne
\symb{g}$ and a ground substitution $\gamma$ that respects $\psi$ such that
$s\gamma=\symb{f}\ s_1 \cdots s_m$ and
$t\gamma = \symb{g}\ t_1 \ldots t_m$ with $n < \arity(\symb{f})$ and
$m < \arity(\symb{g})$.
\emph{Claim}: $\gamma$ is a substitution for which $\samereduct{s\gamma}{t
\gamma}{\rules}$ does not hold (and therefore neither does $\rweq{s\gamma}{t
\gamma}{\rules}$).
To see this, assume that there exists \( u \) such that both
\( s\gamma \rw^* u \) and \( t\gamma \rw^* u \).
By the arity restriction, $s\gamma$ cannot reduce at the root, and nor can any
of its reducts.  Therefore, \( \head(u) = \symb{f} \).
Similarly, we conclude \( \head(u)=\symb{g} \).
This gives the desired contradiction.

\textbf{(2)}: $s, t \in T(\thSig, \Var)$ and there is a ground substitution 
$\gamma$ that respects $\psi$, with $\termInterpret{(\psi \wedge s \neq
t)\gamma}=\top$.  Hence, $\gamma$ maps every variable in $\Vars{s} \cup
\Vars{t}$ to a ground theory term.
Using only $\to_{\calcrules}$ we reduce $s \gamma$ and $t \gamma$ to their normal forms (which must be values).
We have $\termInterpret{s\gamma} \ne \termInterpret{t \gamma}$, and since calculation rules preserve interpretation, this implies $\termInterpret{(s
\gamma)\!\downarrow} \ne \termInterpret{(t \gamma)\!\downarrow}$. 
But then also $(s
\gamma)\!\downarrow \ne (t
\gamma)\!\downarrow$, as there is a one-to-one correspondence between values and their interpretations.
Hence, \( \samereduct{s\gamma}{t\gamma}{\rules} \) does not hold.
\end{proof}

\autoref{prop:completenessBaseCase} only proves the desired statement for a single-step deduction. 
To obtain the full statement of \autoref{thm:completeness},
we will use an inductive reasoning.

\smallskip\noindent\fbox{\parbox{\textwidth}{\textbf{\autoref{thm:completeness}.} \thmCompleteness}}

\begin{proof}
Assume given a (bounded or general) RI deduction
\[
(\eqs_1, \hs_1)
\vdash 
(\eqs_2, \hs_2)
\vdash^*
(\eqs_n, \hs_n)
\vdash 
\bot
\]
With \( \hs_1 = \emptyset \).
Towards a contradiction, assume that all equations in \( \eqs_1 \) are inductive theorems; that is,
that \( \leftrightarrow_{\eqs_1}\;\subseteq\;\leftrightarrow^*_{\rules} \) on ground terms.
We show that this assumption implies that \( \leftrightarrow_{\eqs_i} \cup \leftrightarrow_{\hs_i}\;\subseteq\;
\leftrightarrow^*_{\rules} \) on ground terms for all \( 1 \le i \le n \), by induction on \( i \) (the contradiction then follows by considering \(i=n\) and \autoref{prop:completenessBaseCase}).

For \( i = 1 \) this is true by assumption (since \( \hs_1 = \emptyset \)).  So let \( i > 1 \).

If the step \( (\eqs_{i-1},\hs_{i-1}) \vdash (\eqs_i,\hs_i) \) uses a deduction rule with the
Completeness Property, then from the induction hypothesis we immediately obtain
\( \leftrightarrow_{\eqs_i} \cup \leftrightarrow_{\hs_i}\;\subseteq\;\leftrightarrow^*_{\rules} \)
on ground terms.
So assume that this step does not use such a deduction rule.

First, let \( s \approx t\ [\psi] \in \hs_i \) and \( \gamma \) be a ground substitution that
respects \( \psi \).  If \( s \approx t\ [\psi] \in \hs_{i-1} \) then \( s\gamma
\leftrightarrow_{\rules}^* t\gamma \) by the induction hypothesis.
Otherwise, note that the only deduction rule that adds elements to the set of induction hypotheses
is \induct; as such, there is some \( \eqcon{\sterm}{s}{t}{\tterm}{\psi} \in \eqs_{i-1} \), and
therefore we also have \( s\gamma \leftrightarrow_\rules^* t\gamma \) by the induction hypothesis.
Hence, either way, \( \leftrightarrow_{\hs_i}\;\subseteq\;\leftrightarrow_\rules^* \) on ground
terms.

Next, we consider \( \eqs_i \).
Observe that \( \eqs_n \) is necessarily a complete proof state, otherwise the \disprove\ 
rule could not have been used on it.  So, there exists \( j \geq i \) such that \( (\eqs_j,\hs_j)
\) is complete; let \( N \) be the smallest such \( j \).  
Then \( (\eqs_{N-1},\hs_{N-1}) \vdash (\eqs_N,\hs_N) \) cannot have been derived using case (b) of
\autoref{def:completestate} (since either \( i = N \) in which case the preceding step does not
satisfy the Completeness Property, or \( i < N \) in which case the preceding proof state is not
complete), nor using (a) (since \( N \geq i > 1 \)), so there must exist some \( k < N \) such that
\( (\eqs_k,\hs_k) \) is complete, and \( \eqs_N \subseteq \eqs_k \).  Since we chose the smallest
possible \( N \), we have \( k < i \), and therefore \( \leftrightarrow_{\eqs_N}\;\subseteq\;
\leftrightarrow_{\eqs_k}\;\subseteq\;\leftrightarrow_\rules^* \) by the induction hypothesis.

Let \( \eqcon{\sterm}{s}{t}{\tterm}{\psi} \in \eqs_i \) and \( \gamma \) a ground substitution that
respects \( \psi \).
Then we observe that by \autoref{theorem:boundedGroundConvertibilityRI} we have
\( s\gamma \bconvert{\sterm\gamma}{\tterm\gamma}{\rules}{\hs_i}{\eqs_N} t\gamma \), and since
we have already seen that both \( \leftrightarrow_{\hs_i} \) and \( \leftrightarrow_{\eqs_N} \) are
included in \( \leftrightarrow_\rules^* \), we conclude \( \rweq{s\gamma}{t\gamma}{\rules} \) as
required.
\end{proof}

Note that this proof only uses that derivation rules other than \induct\ 
and \disprove\ don't change \( \hs \).  Hence, like
\autoref{theorem:boundedGroundConvertibilityRI},
the result is applicable to extensions of \autoref{fig:boundedRIrules}.

\section{Ground Confluence} 
\label{sec:GroundConfluence} 

We now turn our eye to \emph{ground confluence}.  This property is both necessary to deduce that
an equation is \emph{not} an inductive theorem using rewriting induction
(using \autoref{subsec:completeness}),
and interesting to study in its own right -- since, in a
terminating and ground confluent system, reduction essentially computes a unique result for every
ground term.
In~\cite{aot:toy:16}, it is illustrated that (traditional) rewriting induction can be used as a tool
to prove ground confluence of a terminating TRS.  We will see now that the same holds for bounded
RI.

\subsection{Critical pairs}\label{sec:CriticalPairs}

\emph{Critical pairs} are a standard tool for proving confluence.  They have also been defined for
first-order \lctrss~\cite{sch:mid:23}.  This definition relies on the notion of \emph{positions},
which we here adapt to our (higher-order) definition of terms:

\begin{defi}\label{def:positions}
For a term $t = a\ t_1 \cdots t_n$ with $a \in \Sig \cup \Var$ and $n \geq 0$, 
the set of positions $\Pos(t)$ is defined as
$\Pos(t) = (\bigcup_{i=0}^n \{ \star i \}) \cup (\bigcup_{i=1}^n \{ i \cdot p \mid p \in \Pos(t_i) \})$.

Define the subterm $t|_p$ of $t$ at position $p \in \Pos(t)$ as follows:
\begin{itemize}[label=$\triangleright$]
\item $(a\ t_1 \cdots t_n)|_{\star i} = (a\ t_1 \cdots t_{n-i})$
\item $(a\ t_1 \cdots t_n)|_{i \cdot p} = t_i|_p$
\end{itemize}
If $t$ is a term, $p \in \Pos(t)$, and $s$ a term of the same type as $t|_p$, then let $t[s]_p$ be
the term obtained from $t$ by replacing $t|_p$ by $s$.
We will denote the position \( \star 0 \) as \( \epsilon \).
\end{defi}

\autoref{def:positions} differs from the definition for first-order rewriting by including positions
for partial applications; in first-order terms, we only have positions of the form \( i_1 \cdots i_n
\star 0 \).\footnote{
  This definition also differs from the usual presentation of positions in higher-order rewriting,
  which does not use the \( \star \) notation but rather sets \( (s\ t)|_{1p} = s|_p \) and
  \( (s\ t)_{2p} = t|_p \).  This difference is not significant; the presentation above is chosen
  for human reasons: in practice (and in our tool) it is convenient if the positions used for
  first-order rewriting refer to the same subterm when viewing the term as higher-order.
}

With this preparation, we recall the first-order definition of critical pairs from~\cite{sch:mid:23}:

\begin{defi}[Critical pairs]\label{def:critPair}
An \emph{overlap} in an \lctrs\ $\rules$ is a triple $\langle \rho_1, p, \rho_2 \rangle$ with rules  
$\rho_1 := \ell_1 \to r_1 \ [\varphi_1]$ and $\rho_2 := \ell_2 \to r_2 \ [\varphi_2]$ and position
\( p \), satisfying 
\begin{itemize}[label=$\triangleright$]
\item $\rho_1$ and $\rho_2$ are variable-disjoint renamings of rewrite rules in $\rules \cup \calcrules$
\item $p \in \Pos(\ell_2)$ with $\ell_2|_p$ not a variable
\item $\sigma = mgu(\ell_1, \ell_2|_p)$ 
and $\sigma(\Vars{\varphi_1} \cup \Vars{\varphi_2}) \subseteq \Val \cup \Var$
\item $\varphi_1 \sigma \wedge \varphi_2 \sigma$ is satisfiable 
\item If $p=\epsilon$ and $\rho_1$ and $\rho_2$ are renamings of the same rule, then $\Vars{r_1} \setminus \Vars{\ell_1} \ne \emptyset$.
\end{itemize}
The critical pair of $\langle \rho_1, p, \rho_2 \rangle$ is the equation 
\[
\ell_2[r_1]_p\sigma \approx r_2 \sigma \ [\varphi_1 \sigma \wedge \varphi_2 \sigma]
\] 
\end{defi} 

The ``overlap'' here refers to the fact that $\ell_2$ has a subterm that unifies with $\ell_1$. 
In particular,  $\ell_2 \sigma$ can be reduced in two ways: if $\sigma$ respects $\varphi_1 \wedge \varphi_2$ then, as $\ell_2 \sigma = (\ell_2 \sigma)[\ell_2|_p \sigma]_p=(\ell_2 \sigma)[\ell_1 \sigma]_p$, we apply $\rho_1$ to obtain $\ell_2 \sigma \rw (\ell_2 \sigma)[r_1 \sigma]_p$. 
Of course, we also have   
$\ell_2 \sigma \rw r_2 \sigma$. 

\begin{exa}\label{ex:critPairFirstOrder}
Let us apply \autoref{def:critPair} to \autoref{example:nonEquivalence}. 
We have the following variable-disjoint copies of $\symb{H}$-rules, which are not renamings of each other 
\[
\begin{aligned}
\rho_1
:=
\symb{H} \ f \ n\ m\ x 
&\to 
\symb{H} \ f \ (n-\symb{1})\ m\ (f\ x) 
&& [n > \symb{0}]\\
\rho_2
:=
\symb{H} \ g \ i\ j\ y 
&\to 
\symb{H} \ g \ (j-\symb{1})\ i\ (g\ y) 
&& 
[j > \symb{0}]
\end{aligned}
\] 
There is an overlap $\langle \rho_1, \epsilon, \rho_2 \rangle$ with 
$\sigma = [g:=f, i:=n, j:=m, y:=x]$.
This yields 
\[
\critpairs{\rules}
=
\{
\symb{H} \ f \ (n-\symb{1})\ m\ (f\ x)
\approx 
\symb{H} \ f \ (m-\symb{1})\ n\ (f\ x)
\
[n>\symb{0} \wedge m>\symb{0}]
\}
\] 
\end{exa} 

While this definition was designed for \lctrss\ -- so first-order systems where variables cannot
be applied, and partially applied function symbols do not occur -- our updated definition of
positions allows the definition to be used unaltered for (higher-order) \lcstrss!

\begin{exa}\label{ex:critPairHigherOrder}
Consider an \lcstrs\ containing the following variable-disjoint rules 
\[
\rho_1:=\symb{f} \to \symb{g}\quad\quad\quad
\rho_2:=\symb{u}\ (\symb{f}\ x) \to \symb{h}\ x
\]
We have an overlap \( (\rho_1,1 \star 1,\rho_2) \) with \( \sigma \) the empty substitution.  This yields the
critical pair \( \symb{u}\ (\symb{g}\ x) \approx \symb{h}\ x \), because
\( \ell_2\sigma = \symb{u}\ (\symb{f}\ x) \) reduces to both sides.
\end{exa}

As it turns out, it is useful to also preserve the source $\ell_2 \sigma$.
We do so using critical peaks:

\begin{defi}[Critical peaks]
The critical peak of an overlap $\langle \rho_1, p, \rho_2 \rangle$ is the tuple
\[
    \peak{\ell_2 \sigma}
         {\ell_2[r_1]_p\sigma}
         {r_2 \sigma}
         {\varphi_1 \sigma \wedge \varphi_2 \sigma}
\]
The set of all critical peaks in an \lcstrss\ is denoted by $\critpairs{\rules}$. 
\end{defi}

\begin{exa}
Following \autoref{ex:critPairFirstOrder}, we have $\ell_2 \sigma = \symb{H}\ f\ n\ m\ x$,
so the corresponding critical peak is given by:
\[
\peak{\symb{H}\ f\ n\ m\ x}
     {\symb{H} \ f \ (n-\symb{1})\ m\ (f\ x)}
     {\symb{H} \ f \ (m-\symb{1})\ n\ (f\ x)}
     {n>\symb{0} \wedge m>\symb{0}}
\]

In \autoref{ex:critPairHigherOrder} we obtain a critical peak:
\[
\peak{\symb{u}\ (\symb{f}\ x)}
     {\symb{u}\ (\symb{g}\ x)}
     {\symb{h}\ x}
     {\strue}
\]
\end{exa}

Critical pairs rely on positions, which we have otherwise not used in our definitions.  To relate
these to our notion of reduction, we use the following helper result:

\begin{lem}
\label{lem:forCriticalPair}
Let $\gamma$ be a substitution, $X$ a finite set of variables, and $s,t \in \Terms$ such that
$s \gamma \rw t$ and $X \supseteq \Vars{s}$.
Then one of the following holds.
\begin{enumerate}[(i). ]
\item There exist a substitution $\delta$, variable-renamed rule $\myrule \in \rules \cup \calcrules$
  with variables disjoint from $X$, and $p \in \Pos(s)$ such that $\delta(x) = \gamma(x)$ for all $x
  \in X$, $\delta$ respects $\varphi$, $s|_p \notin \Var$, $s|_p\delta = \ell\delta$ and
  $t=s[r]_p\delta$.
\item There exist $x \in \Vars{s}$ and substitution $\gamma'$ such that $\gamma(x) \rw \gamma'(x)$,
  and $\gamma'(y)=\gamma(y)$ for all other $y$ and $t \rw^* s \gamma'$.
\end{enumerate}
\end{lem}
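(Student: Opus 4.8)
The plan is to prove this by structural induction on \( s \), after first rewriting the given reduction into a convenient form; the statement is essentially the standard ``substitution lemma'' of term rewriting, but care is needed for the applicative setting where a variable of \( s \) may be instantiated by a term with a function-symbol head. Since \( s\gamma \rw t \), there are a rule \( \myrule \in \rules \cup \calcrules \), a substitution \( \sigma \) respecting \( \varphi \), and a position \( q \in \Pos(s\gamma) \) with \( (s\gamma)|_q = \ell\sigma \) and \( t = (s\gamma)[r\sigma]_q \) (this is just the position reformulation of the context-based definition of \( \rw \), which already allows holes at application heads). Renaming the rule, we may assume its variables are disjoint from \( X \cup \Vars{s} \) and from the variables of \( \gamma \); I will then take \( \delta \) to be the substitution agreeing with \( \gamma \) on \( X \) and with \( \sigma \) on the variables of the rule (identity elsewhere), which is well defined and respects \( \varphi \). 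I will also use repeatedly that \( \rw \) is closed under contexts, and hence that if \( \gamma(x) \rw \gamma'(x) \) and \( \gamma = \gamma' \) on all other variables then \( u\gamma \rw^{*} u\gamma' \) for every term \( u \).

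The base case \( s = x \in \Var \) is immediate: \( s\gamma = \gamma(x) \rw t \), so case (ii) holds with \( x \) and \( \gamma' := \gamma[x := t] \), for which \( s\gamma' = t \). For the inductive step write \( s = a\ s_1 \cdots s_n \) with \( a \in \Sig \cup \Var \); since \( s \) is not a variable we have \( n \geq 1 \) when \( a \in \Var \). Put \( \gamma(a) = b\ u_1 \cdots u_m \) (with \( m = 0 \), \( b = a \) when \( a \in \Sig \)), so \( s\gamma = b\ u_1 \cdots u_m\ (s_1\gamma) \cdots (s_n\gamma) \) is an application of \( b \) to \( m+n \) arguments; note that, because every left-hand side has a function-symbol head, the redex \( (s\gamma)|_q \) has a function-symbol head, which rules out some degenerate subcases below.

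I then split on \( q \). If \( q = i \cdot q' \) and the \( i \)-th argument of \( s\gamma \) is one of the \( s_j\gamma \) (i.e.\ \( i > m \), with \( j = i-m \); simply \( j = i \) when \( a \in \Sig \)), then \( s_j\gamma \rw t_j := (s_j\gamma)[r\sigma]_{q'} \) and \( t \) is \( s\gamma \) with its \( j \)-th argument replaced by \( t_j \); as \( \Vars{s_j} \subseteq X \), the induction hypothesis applies to \( s_j,t_j \), and I obtain (i) for \( s \) by prepending \( j \cdot (-) \) to the position and reusing \( \delta \) and the rule, or (ii) for \( s \) with the same variable \( y \in \Vars{s} \) and \( \gamma' \), since then \( t \rw^{*} s\gamma' \) (the \( j \)-th argument reduces to \( s_j\gamma' \) by hypothesis, and the head and the other arguments reduce appropriately because \( \gamma(y) \rw \gamma'(y) \)). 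If \( q = i \cdot q' \) with \( i \leq m \), the redex lies strictly inside \( \gamma(a) \) (so \( a \in \Var \)): then \( \gamma(a) \rw w \) for the corresponding reduct, \( t = w\ (s_1\gamma) \cdots (s_n\gamma) \), and (ii) holds with \( a \) and \( \gamma' := \gamma[a:=w] \). Finally, if \( q = \star k \) the redex is the prefix of \( s\gamma \) formed by its head and first \( m+n-k \) arguments. When \( m+n-k \leq m \) this prefix lies inside \( \gamma(a) \), so \( a \in \Var \), its head \( b \) is a function symbol, \( \gamma(a) \) reduces at position \( \star(k-n) \) to some \( w \), \( t = w\ (s_1\gamma) \cdots (s_n\gamma) \), and again (ii) holds with \( a \) and \( \gamma[a:=w] \). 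When \( m+n-k > m \), forcing \( k < n \), the redex is exactly \( (a\ s_1 \cdots s_{n-k})\gamma = \ell\sigma \), and (i) holds via \( p := \star k \in \Pos(s) \): one checks \( s|_{\star k} = (\text{head of } s)\ s_1 \cdots s_{n-k} \notin \Var \) (it carries at least one argument, or has a function-symbol head), \( s|_p\delta = \ell\delta \), and \( s[r]_p\delta = (r\ s_{n-k+1} \cdots s_n)\delta = t \).

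The one genuinely delicate subcase is the last one, \( q = \star k \) with \( k < n \): here the redex of \( s\gamma \) straddles material coming from \( \gamma(a) \) and from the arguments \( s_1,\dots,s_{n-k} \), and the naive reflex — to descend to the variable \( a \), which on its own need not be reducible — fails. The point is that the \( \star \)-notation for positions is arranged precisely so that such a straddling redex still corresponds to a legitimate position \( \star k \) of \( s \) at which \( s \) itself is an application, and so slots cleanly into case (i). Everything else is routine once this is recognised: renaming rule variables apart and folding the matcher \( \sigma \) into \( \delta \), transporting positions along \( j \cdot (-) \), and propagating \( \gamma \rw \gamma' \) through the head and arguments using closure of \( \rw^{*} \) under contexts; the two subcases producing case (ii) are essentially the first-order argument.
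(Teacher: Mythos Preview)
Your approach is essentially the same as the paper's: structural induction on $s$, with a case split according to where the redex falls relative to the decomposition $s = a\ s_1 \cdots s_n$. The paper organises the cases slightly differently (it does not expand $\gamma(a) = b\ u_1 \cdots u_m$ and work with the position $q$ in $s\gamma$; instead it directly distinguishes ``redex inside $\gamma(a)$'', ``redex at the head of $s\gamma$'', and ``redex inside some $s_i\gamma$''), but the content matches yours.

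There is one small slip in your case analysis. In the subcase $q = \star k$ with $m+n-k \le m$ you conclude ``so $a \in \Var$'', but this fails when $a \in \Sig$: then $m = 0$, so $m+n-k \le m$ forces $k = n$, and the redex is the bare symbol $a$ itself (a rule whose left-hand side is a nullary function symbol is permitted). Here $a$ is not a variable, so you cannot form $\gamma[a := w]$ and invoke case~(ii). This boundary case belongs to case~(i) instead, with $p := \star n$ and $s|_p = a \in \Sig$ (hence $s|_p \notin \Var$); the verifications $s|_p\delta = a = \ell\delta$ and $s[r]_p\delta = (r\ s_1 \cdots s_n)\delta = t$ are immediate. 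The paper handles exactly this: its ``Base case~2'' with $i = 0$ routes to~(ii) only when $a$ is a variable, and otherwise takes $p = \star n$ for~(i). Once this single edge case is reclassified, your argument goes through.
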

\begin{proof}
Assume $s \gamma \rw t$. 
By induction on the term shape of $s$ we prove that either (i) or (ii) holds.
Any $s \in \Terms$ can be written as $s=a\ s_1 \cdots s_n$ with $a \in \Sigma \cup \Var$ and $n \ge 0$. 

\begin{itemize}[label=$\triangleright$]
\item \emph{Base case 1:} $a \in \Var$ and $\gamma(a) \rw u$ and $t = u\ (s_1\gamma) \cdots (s_n\gamma)$.
  Define $\gamma'$ as follows:
  \[
  \begin{array}{rclcrcll}
  \gamma'(a) & = & u & \quad\quad &
  \gamma'(y) & = & \gamma(y) & \text{if}\ y \neq a
  \end{array}
  \]
  Then clearly $t = u\ (s_1\gamma) \cdots (s_n\gamma) \rw^* u\ (s_1\gamma') \cdots (s_n\gamma') = s\gamma'$;
  thus, (ii) holds.

\item \emph{Base case 2:} $s \gamma = (\ell \sigma)\ (s_{i+1} \gamma) \cdots (s_n \gamma) \rw (r \sigma)\ (s_{i+1} \gamma) \cdots (s_n \gamma)=t$
  for some $i \in \{0,\dots,n\}$, and some $\sigma$ which respects some $\myrule \in \rules \cup \calcrules$.
  If $i = 0$ and $a$ is a variable we are in the first case and are done.
  Otherwise, let $p = \star (n-i)$.
  As we can safely assume (by renaming) that the variables in \( \myrule \) do not occur in \( X \), we can
  define $\delta(x) = \gamma(x)$ for $x \in X$ and $\delta(x) = \sigma(x)$ for all other $x$.
  Then (i) holds.

\item \emph{Induction case:} there is some $1 \le i \le n$ with $s_i \gamma \rw t_i$ and $t=(a\gamma)\ (s_1\gamma)
  \cdots t_i \cdots (s_n\gamma)$.
  By the induction hypothesis there are two possibilities
  \begin{enumerate}[(i). ]
  \item There exist \( \delta,\myrule \) and \( p_i \in \Pos(s_i) \) such that \( \delta \) respects \( \varphi \),
    $\delta(x) = \gamma(x)$ for all $x \in X$ (and therefore $s\gamma = s\delta$), $s_i|_{p_i} \notin \Var$,
    $s_i|_{p_i} \delta = \ell\delta$ and $t_i = s_i[r]_{p_i}\delta$.
    Let $p=i\ p_i$.
    Then (i) also holds for $s \gamma \rw t$, since $s|_p = s_i|_{p_i}$, and $t = (a\gamma)\ (s_1\gamma) \cdots t_i
    \cdots (s_n\gamma) = (a\delta)\ (s_1\delta) \cdots (s_i[r]_{p_i}\delta) \cdots (s_n\delta) =
    s[r]_p\sigma$.
  \item There exist $x \in \Vars{s_i}$ and substitution $\gamma'$ such that $\gamma(x) \rw \gamma'(x)$,
    $\gamma'(y)=\gamma(y)$ for all $y \in \domain(\gamma)\setminus \{x\}$ and $t_i \rw^* s_i \gamma'$.
    We have $s_j \gamma \rw^* s_j \gamma'$ for all $j \ne i$.
    Conclude $t = (a\gamma)\ (s_1\gamma) \cdots t_i \cdots (s_n\gamma) \rw^* 
    (a\gamma)\ (s_1\gamma) \cdots (s_i\gamma') \cdots (s_n\gamma) \rw^*
    (a\gamma')\ (s_1\gamma') \cdots t_i \cdots (s_n\gamma') = s\gamma'$.
    \qedhere
  \end{enumerate}
\end{itemize}
\end{proof}

In analogy to the Critical Pair Lemma~\cite[Lemma 6.2.3]{baa:nip:98}
we introduce the Critical Peak Lemma.
\begin{lem}[Critical Peak Lemma]\label{lem:CriticalPair}
If $s \rw t_i$, $i=1,2$, then either 
\begin{enumerate}[(i). ]
\item There exists a term $t$ such that $t_1 \xrightarrow[\rules]{*} t \xleftarrow[\rules]{*}t_2$. 
\item 
There exist a critical peak 
\(
  \peak{a}{b}{c}{\varphi}
\), context $C$ and substitution $\delta$ that respects  
$\varphi$
such that
$s = C[a\delta]$, $t_1 = C[b\delta]$ and $t_2 = C[c\delta]$.
\end{enumerate}
\end{lem}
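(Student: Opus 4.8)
The plan is to follow the classical proof of the Critical Pair Lemma \cite[Lemma 6.2.3]{baa:nip:98}, adapted to the context/position setup of the excerpt. First I would unfold both reductions using the definition of $\rw$: for $i = 1, 2$, this gives a variable-renamed rule $\ell_i \to r_i\ [\varphi_i] \in \rules \cup \calcrules$ -- which I choose so that $\rho_1, \rho_2$ are variable-disjoint and fresh w.r.t.\ $s$ -- a substitution $\gamma_i$ respecting $\varphi_i$, and a position $p_i \in \Pos(s)$ with $s|_{p_i} = \ell_i\gamma_i$ and $t_i = s[r_i\gamma_i]_{p_i}$. By the symmetry of the statement in $t_1, t_2$, I may assume either that $p_1$ and $p_2$ are parallel, or that $s|_{p_1}$ occurs as a subterm of $s|_{p_2}$, say at position $q$ within $\ell_2\gamma_2$; I will use \autoref{lem:forCriticalPair} to analyse this second case.

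In the parallel case, the two redexes occupy disjoint subterms of $s$, so the two steps commute and $t_1, t_2$ share the reduct $s[r_1\gamma_1]_{p_1}[r_2\gamma_2]_{p_2}$; this is conclusion (i). Otherwise, applying \autoref{lem:forCriticalPair} to the step $\ell_2\gamma_2 \rw (\ell_2\gamma_2)[r_1\gamma_1]_q$ leaves two possibilities. (a) The contracted redex $\ell_1\gamma_1$ lies inside the instance $\gamma_2(x)$ of a variable $x$ of $\ell_2$; then $x \notin \Vars{\varphi_2}$ (else $\gamma_2(x)$ is a value, hence irreducible), so the substitution $\gamma_2'$ obtained from $\gamma_2$ by reducing $\gamma_2(x)$ everywhere it occurs still respects $\varphi_2$, and $t_1 \rw^* s[\ell_2\gamma_2']_{p_2} \rw s[r_2\gamma_2']_{p_2}$ while $t_2 = s[r_2\gamma_2]_{p_2} \rw^* s[r_2\gamma_2']_{p_2}$: conclusion (i) again. (b) We have $q \in \Pos(\ell_2)$ with $\ell_2|_q \notin \Var$; then $(\ell_2|_q)\gamma_2 = \ell_1\gamma_1$, so, since $\gamma_1$ and $\gamma_2$ have disjoint domains, $\gamma := \gamma_1 \cup \gamma_2$ unifies $\ell_1$ and $\ell_2|_q$.

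In case (b) I would take $\sigma$ to be the mgu of $\ell_1$ and $\ell_2|_q$, write $\gamma = \sigma\tau$, and verify that $\langle \ell_1 \to r_1\ [\varphi_1],\, q,\, \ell_2 \to r_2\ [\varphi_2]\rangle$ is an overlap in the sense of \autoref{def:critPair}: the pattern and arity requirements are immediate; $\sigma(\Vars{\varphi_1} \cup \Vars{\varphi_2}) \subseteq \Val \cup \Var$ because $\sigma(z)\tau = \gamma(z)$ is a value for every constraint variable $z$, which forces $\sigma(z)$ itself to be a value or a variable; $\varphi_1\sigma \wedge \varphi_2\sigma$ is satisfied by a ground extension of $\tau$; and the degenerate exclusion -- $q = \epsilon$ with $\rho_1, \rho_2$ renamings of one rule having no fresh right-hand variable -- is exactly the situation where $t_1 = t_2$, so conclusion (i) holds there. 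For a genuine overlap, it determines the critical peak $\peak{\ell_2\sigma}{\ell_2[r_1]_q\sigma}{r_2\sigma}{\varphi_1\sigma \wedge \varphi_2\sigma}$, and -- using that replacement at a position commutes with substitution -- with $C := s[\square]_{p_2}$ and $\delta := \tau$ one computes $s = C[\ell_2\sigma\delta]$, $t_1 = C[\ell_2[r_1]_q\sigma\delta]$ and $t_2 = C[r_2\sigma\delta]$, which is conclusion (ii).

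The step I expect to be the main obstacle is this last one: arranging that the variable-disjointness of the two renamed rules, the side condition $\sigma(\Vars{\varphi_1}\cup\Vars{\varphi_2}) \subseteq \Val\cup\Var$, and the satisfiability condition of \autoref{def:critPair} all fall out of the single unifier $\gamma = \sigma\tau$ and the fact that each $\gamma_i$ respects its constraint; and, throughout, the bookkeeping around the $\star$-positions of \autoref{def:positions} -- checking, for instance, that ``$p_1$ not parallel to $p_2$'' really does force one of $s|_{p_1}, s|_{p_2}$ to be a subterm of the other, even when $p_1$ and $p_2$ share a prefix but diverge in a $\star$-component, and that $(\ell_2\gamma_2)|_q = (\ell_2|_q)\gamma_2$ holds at such positions, so that the case analysis above is exhaustive. (I also note that, in the ``wrong'' orientation of the subterm relation chosen in paragraph 1, conclusion (ii) comes out with the two reducts of the critical peak matched to $t_1, t_2$ in the opposite order, which is harmless.)
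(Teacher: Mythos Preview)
Your proposal is correct and uses the same key ingredients as the paper --- \autoref{lem:forCriticalPair} to lift the inner redex into the pattern of the outer rule, then an mgu to extract the critical peak --- but the paper organises the case analysis differently. Rather than a position-based parallel/nested split, it proceeds by induction on the applicative shape $s = f\ s_1\cdots s_n$ and distinguishes (1) both reductions inside arguments $s_i,s_j$ (induction hypothesis when $i=j$, commutation when $i\neq j$) versus (2) one reduction at the head, where the extra arguments beyond the rule's arity are carried explicitly as a tail $\vec{s}=s_{k+1}\cdots s_n$ and the context is $C=\square\ \vec{s}$. This sidesteps precisely the $\star$-position bookkeeping you flag as the main obstacle: the ``parallel'' case of a head redex against a tail argument, and the root-overlap case where both redexes sit at $\star$-positions, are written out by hand rather than subsumed under a generic dichotomy. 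Your route works too, but you would indeed have to verify that dichotomy and the commutation for mixed $\star$/argument positions (note that after replacing at $p_1=\star i$ the position $p_2=j\cdot p$ is no longer literally valid, so the joint reduct is not the na\"ive $s[r_1\gamma_1]_{p_1}[r_2\gamma_2]_{p_2}$); the paper's term-shape induction absorbs that work. One minor further difference: the paper applies \autoref{lem:forCriticalPair} to a single argument $v_i$ of the outer left-hand side rather than to all of $\ell_2$.
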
  

\begin{proof}
By induction on the term-shape of $s$ we prove that either (i) or (ii) holds.
Any $s \in \Terms$ can be written as $s=f\ s_1 \cdots s_n$ with $f \in \Sigma \cup \Var$ and $n \ge 0$. 
We consider all possibilities of $t_1 \xleftarrow{\rules} s \xrightarrow{\rules}t_2$.
\begin{enumerate}[(1). ]
\item There exist $1 \le i,j \le n$ and terms $v_i, w_j$ such that $s_i \rw v_i$, $s_j \rw w_j$ and 
\[
t_1=s[v_i]_i
\qquad \text{ and }\qquad 
t_2=s[w_j]_j
\]
\begin{itemize} 
\item 
If $i\neq j$ then the reductions are in parallel positions; (i) holds with
  $t=t_1[w_j]_j=t_2[v_i]_i$.
\item If $i=j$ then $v_i \xleftarrow{\rules}s_i\xrightarrow{\rules} w_i$. 
By induction hypothesis we either have 
\begin{itemize}
\item There is a term $t'$ such that $v_i \xrightarrow[\rules]{*} t' \xleftarrow[\rules]{*}w_i$. 
Then (i) holds: take $t=s[t']_i$. 
\item There exist a critical peak 
\(
  \peak{b}{a}{c}{\varphi}
\), context $C'$ and substitution $\delta$ that respects
$\varphi$
such that 
$s_i = C'[b\delta]$, $v_i = C'[a\delta]$ and $w_i = C'[c\delta]$.
Then (ii) holds: take $C[\square]=s[C'[\square]]_i = f\ s_1 \cdots s_{i-1}\ 
C[\square]\ s_{i+1} \cdots s_n$, and $p=i\ p_i$. 
\end{itemize} 
\end{itemize}
\item One of the reductions is at the head; without loss of generality, we assume it is the step $s \rw t_1$.
  That is, there exist a rule $\rho_1:=\myrule \in \rules \cup \calcrules$ with $\ell = \afun \ v_1 \cdots
  v_k$ and a substitution $\gamma$ which respects $\varphi$, and $\vec{s}=s_{k+1} \cdots s_n$ such that
  $k=\arity(\afun) \le n$ and  
\[
s
=
(\ell \gamma)\vec{s}
=
\afun\ 
(v_1 \gamma) \cdots (v_{k} \gamma) \
\vec{s}
\rw 
(r \gamma)
\vec{s}
=
t_1 
\]
We consider the possibilities for $s \rw t_2$
\begin{itemize} 
\item $s_i \rw s'$, for some $i > k$, and $t_2 = s[s']_i$.
  Then (i) holds, as we can see by choosing
  $t=(r\gamma)\ s_1 \cdots s_{i-1}\ s'\ s_{i+1} \cdots s_n$.
\item $s_i = v_i \gamma \rw s'$, with $1 \le i \le k$, and $t_2 = s[s']_i$. 
By \autoref{lem:forCriticalPair}, applied with $X = \Vars{\ell} \cup \Vars{r} \cup \Vars{\varphi} \supseteq
\Vars{v_i}$, there are two possibilities
\begin{itemize}
\item $s_i = v_i\gamma = v_i\delta$ for some $\delta$ that respects a renamed rule $\rho_2:=\ell' \to r'\ [\varphi']$
  with $\ell'=\symb{g}\ l'_1 \cdots l'_m$ and there is a $p' \in \Pos(v_i)$ such that $v_i|_{p'} \notin \Var$,
  $v_i|_{p'} \delta = \ell'\delta$ and $s'=v_i[r']_{p'}\delta$; moreover, $\delta(x) = \gamma(x)$ on all $x \in
  X$.
  Then $t_2 = \afun\ (v_1\gamma) \cdots (v_{i-1}\gamma)\ v_i[r']_{p'}\delta$
  $(v_{i+1}\gamma) \cdots (v_k\gamma)\ 
  \vec{s} = (\afun\ v_1 \cdots v_{i-1}\ v_i[r']_{p'}\ v_{i+1} \cdots v_k)\delta\ \vec{s}$
  and (ii) holds: let $C= \square\ \vec{s}$ and $p=i\ p'$.
  We have
\begin{center}
\begin{tikzcd}[column sep=small]
& 
s
=
C[\ell \delta]
\arrow[dl, "\rules", " "', " " near end] \arrow[dr, " ", "\rules"', "" near end]  
& 
\\
t_2 
= 
C[\ell[r']_p \delta] 
&
& 
t_1=C[r \delta] 
\end{tikzcd} 
\end{center}

Since \( \delta \) is a unifier of \( \ell|_p = v_i|_{p'} \) and \( \ell' \), there exist a most
general unifier \( \sigma \) and a substitution \( \chi \) so that \( \delta = \sigma\chi \).
Since \( \delta \) respects both \( \varphi \) (because it corresponds with \( \gamma \) on \( X
\supseteq \Vars{\varphi} \)) and \( \varphi' \), clearly \( (\varphi' \wedge \varphi)\sigma \) is
satisfiable (by the substitution \( \chi \)).
Hence, the figure above is an instance of the critical peak
$\peak{\ell\sigma}{\ell[r']_p\sigma}{r\sigma}{\varphi' \wedge \varphi}$.

\item There exist $x \in \Vars{v_i}$ and substitution $\gamma'$ such that $\gamma(x) \rw \gamma'(x)$,
$\gamma'(y)=\gamma(y)$ for all $y \neq x$ and $s' \rw^* v_i \gamma'$.
Then (i) holds: choosing $t=(r \gamma')\vec{s}$ we have:
\begin{center}
\begin{tikzcd}[column sep=small]
& 
s= (\ell \gamma)\vec{s}
\arrow[dl, " ", "\rules "', " " near end] 
\arrow[dr, "\rules ", ""', "" near end]  
& 
\\
t_1=(r \gamma)\vec{s} 
\arrow[ddr, " ", "\rules"', "*"]  
&
& 
t_2=(\ell \gamma)[s']_i \vec{s} = \afun\ (v_1\gamma) \cdots s' \cdots (v_n\gamma)\ \vec{s}
\arrow{d} 
\arrow[d, "*", "\rules"', "" near end] 
\\
&
& 
(\ell \gamma')\vec{s}=\afun\ (v_1\gamma') \cdots (v_i\gamma') \cdots (v_n\gamma')\ \vec{s}
\arrow[dl, "\rules", " "', " " near end]
\\
& 
t=(r \gamma')\vec{s} 
\end{tikzcd}
\end{center}
\end{itemize}
\item The other reduction is also at the head; since each function symbol has a fixed arity, this
  means there exist a rule $\rho_2:=\ell'\to r' \ [\varphi']$ with $\ell'=\afun \ v'_1 \cdots v'_k$
  and substitution $\gamma'$ which respects $\varphi'$ such that 
$s = (\ell' \gamma')\ \vec{s}$ and 
$t_2 = (r' \gamma')\ \vec{s}$.
If $r\gamma = r'\gamma'$ then (i) holds, choosing $t = t_1 = t_2$.
Otherwise, (ii) holds with $C[\square] = \square\ \vec{s}$:
by renaming variables we can safely assume that $\domain(\gamma) \cap \domain(\gamma') = \emptyset$,
so $\delta = \gamma \cup \gamma'$ is well-defined.
Then  

\begin{center}
\begin{tikzcd}[column sep=small]
& 
s=C[\ell' \delta]
\arrow[dl, "\rules", " "', " " near end] \arrow[dr, " ", "\rules"', "" near end]  
& 
\\
t_1=C[r \delta] 
=C[\ell'[r]_\epsilon \delta]
&
& 
t_2 = C[r' \delta] 
\end{tikzcd} 
\end{center}
Let $\sigma$ be an mgu between $\ell$ and $\ell'$, and $\chi$ such that $\delta = \sigma\chi$.
Then the figure above corresponds to the critical peak $\peak{\ell'\sigma}{r\sigma}{r'\sigma}{
\varphi\sigma \wedge \varphi'\sigma}$.
(If $\rho_1$ and $\rho_2$ are renamings of the same rule, then $\ell\sigma = \ell'\sigma$ implies
$r\sigma = r'\sigma$ \emph{unless} $\Vars{r} \subseteq \Vars{\ell}$ does not hold, so the last
requirement of an overlap is also satisfied.)
\qedhere
\end{itemize} 
\end{enumerate}
\end{proof}

The critical peak lemma is valuable both for proving (local) confluence and ground (local)
confluence.  In this paper, we focus on ground confluence.  The following lemma gives a way to
prove this property:

\begin{lem}\label{lem:GroundConfluenceExistsMinimal}
Assume given an \lcstrs\ $\alcstrs=(\Sigma, \rules)$ which is not ground locally confluent, and  
let $(\succ, \succeq)$ be a bounding pair on $T(\Sig,\emptyset)$. 
Then 
\begin{enumerate}[(i). ]
\item There is a $\succ$-minimal ground term $s$ for which there exist terms $t_1, t_2$ with $t_1 \xleftarrow[\rules]{}s\xrightarrow[\rules]{} t_2$ but such that there is no $t$ with $t_1 \xrightarrow[\rules]{*}t\xleftarrow[\rules]{*}t_2$. 
\\
Moreover, there exist a critical peak
$\peak{u}{v_1}{v_2}{\varphi}$ and ground substitution $\gamma$ that respects $\varphi$ 
such that $s=u \gamma$, $t_1=v_1 \gamma$ and $t_2=v_2\gamma$. 
\item If in addition $\rw$ is terminating, then for all ground terms $r$ with $s \succ r$
  or $s \rw^+ r$:
  if there are $q_1, q_2$ with
  $q_1 \xleftarrow[\rules]{*}r\xrightarrow[\rules]{*} q_2$ then there is a term $q$
  such that $q_1 \xrightarrow[\rules]{*}q\xleftarrow[\rules]{*}q_2$. 
\end{enumerate} 
\end{lem}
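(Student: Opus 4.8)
The plan is to derive both parts from the Critical Peak Lemma (\autoref{lem:CriticalPair}) together with a minimal counterexample argument, using the well-foundedness of the bounding pair. For part (i), since $\alcstrs$ is not ground locally confluent, the set $B$ of ground terms $s$ that admit a ground \emph{peak} $t_1 \leftarrow_\rules s \to_\rules t_2$ with no common reduct is nonempty. Because $\succ$ is well-founded on $T(\Sig,\emptyset)$, $B$ has a $\succ$-minimal element $s$; fix the associated peak $t_1 \leftarrow_\rules s \to_\rules t_2$. Now apply \autoref{lem:CriticalPair} to this peak. Case (i) of that lemma is impossible by choice of $s$ (it would give a common reduct directly), so case (ii) applies: there are a critical peak $\peak{u}{v_1}{v_2}{\varphi}$, a context $C$, and a substitution $\delta$ respecting $\varphi$ with $s = C[u\delta]$, $t_1 = C[v_1\delta]$, $t_2 = C[v_2\delta]$. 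The remaining work is to argue that $C$ must be the empty context and $\delta$ can be taken ground. For the latter: $s$ is ground, and $\delta$ is only constrained on the variables of $u,v_1,v_2,\varphi$; since $u\delta$ is a (ground) subterm of $s$, the relevant part of $\delta$ already maps those variables to ground terms, so we may restrict/extend $\delta$ to a genuine ground substitution. For $C = \square$: if $C$ were a proper context, then writing $s = C[u\delta] \supterm u\delta$ with $u\delta \neq s$, and noting that the peak $v_1\delta \leftarrow_\rules u\delta \to_\rules v_2\delta$ inside the hole is itself a ground peak, we get $u\delta \in B$; but $u\delta \neq s$ and $s \supterm u\delta$ together with the bounding-pair property $s \succeq u\delta$ — in fact we need $s \succ u\delta$ — contradict minimality of $s$. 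The delicate point here is that a bounding pair only guarantees $s \succeq u\delta$, not $s \succ u\delta$; so the argument must instead observe that if $u\delta$ already admits a peak with no common reduct, then since $s \rw$ both $t_1$ and $t_2$ in parallel with the peak in $u\delta$, any common reduct of $u\delta$ would lift (by congruence of $\rw$) to a common reduct of $t_1,t_2$, forcing $u\delta \in B$; and then $C = \square$ follows because otherwise we could replace $s$ by the strictly smaller $u\delta$ — the strictness coming not from $\succ$ but from a secondary induction on term size, or by phrasing minimality of $s$ as minimality in $\succ$ \emph{refined by} the subterm order. I would present it with the subterm order as tiebreaker: pick $s$ minimal in $B$ first w.r.t.\ $\succ$, then among those w.r.t.\ $\supterm$; the proper subterm $u\delta$ then contradicts this refined minimality because $s \succeq u\delta$ (bounding pair) and $s \supterm u\delta$ strictly.

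For part (ii), assume additionally that $\rw$ is terminating, and let $r$ be any ground term with $s \succ r$ or $s \rw^+ r$. Suppose $q_1 \leftarrow_\rules^* r \to_\rules^* q_2$; we must produce a common reduct. The plan is a standard Newman-style induction: since $\rw$ is terminating on ground terms, we do well-founded induction on $r$ w.r.t.\ $\rw^+$, \emph{simultaneously} using that every ground term $r'$ with $s \succ r'$ or $s \rw^+ r'$ is \emph{locally} ground confluent — which holds because $s$ was chosen $\succ$-minimal among all ground terms admitting a bad local peak, and $r' \rw^+ r''$ implies $s \rw^+ r''$ (transitivity) while $s \succ r' \rw^+ r''$ implies $s \succ r''$ (using that $\rw\ \subseteq\ \succeq$ and $\succ\cdot\succeq\ \subseteq\ \succ$ from the ordering-pair axioms), so the local confluence property is inherited by all reducts and $\succ$-smaller terms that arise in the diagram chase. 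Concretely: if $r$ is in normal form the claim is trivial; otherwise take one-step reducts $r \to r_1'$ and $r \to r_2'$ on the paths to $q_1$ and $q_2$, join $r_1'$ and $r_2'$ via local confluence at $r$ (which holds since $r$ is not $\succ$-below/reachable-from $s$ in the bad set, hence its local peaks \emph{can} be joined), then fill in the two resulting tiles by the induction hypothesis applied to $r_1'$ and $r_2'$ (both are $\rw^+$-smaller than $r$ and still satisfy $s \succ \cdot$ or $s \rw^+ \cdot$). Assembling the tiles yields the desired $q$ with $q_1 \rw^* q \leftarrow_\rules^* q_2$.

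The main obstacle I anticipate is the interplay in part (i) between $\succ$ and the subterm relation: a bounding pair only gives $\succeq$ (not $\succ$) between a term and its strict subterms, so the naive "pick a $\succ$-minimal counterexample and peel off a subterm" argument does not immediately strictly decrease $\succ$. The fix — ordering the counterexamples lexicographically by $(\succ, \supterm)$, or equivalently by $(\succ\ \cup\ \supterm)^+$ which is well-founded whenever $\succ$ is a bounding pair (since $\succ$ is well-founded, $\supterm$ is well-founded, and $\succ$ absorbs $\succeq\supseteq\supterm$ on the left) — is the key technical move and needs to be stated carefully so that the "no smaller counterexample" step in (i) and the inheritance of local confluence in (ii) are both valid. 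Everything else (the context-congruence diagram chases, restricting $\delta$ to a ground substitution, and the Newman-style tiling in (ii)) is routine once that ordering is fixed.
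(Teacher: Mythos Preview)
Your overall strategy matches the paper's: pick a minimal counterexample, invoke the Critical Peak Lemma for part~(i), and run a Newman-style tiling for part~(ii). You also correctly identified the delicate point in part~(i), and your fix --- minimising over $(\succ \cup \supterm)$ rather than $\succ$ alone --- is exactly what the paper does (the paper calls this auxiliary relation $\sqsupset$).

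There is, however, a genuine gap in your plan for part~(ii). Your minimality relation $(\succ \cup \supterm)^+$ does \emph{not} contain $\rw^+$: a bounding pair only gives $\rw\,\subseteq\,\succeq$, not $\rw\,\subseteq\,\succ$ or $\rw\,\subseteq\,\supterm$. So from ``$s$ is $(\succ \cup \supterm)^+$-minimal in $B$'' you cannot conclude that a term $r$ with $s \rw^+ r$ lies outside $B$. Concretely, take $\succ = \emptyset$ and $\rules = \{a \to b,\ a \to c,\ b \to d,\ b \to e\}$ with $c,d,e$ normal and pairwise unjoinable: both $a$ and $b$ are in $B$ and both are $(\succ \cup \supterm)^+$-minimal, so you may pick $s = a$; but then $a \rw b$ with $b \in B$, and your claim ``every $r'$ with $s \rw^+ r'$ is locally ground confluent --- which holds because $s$ was chosen $\succ$-minimal'' fails. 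The subsequent clauses in your paragraph (about $r' \rw^+ r''$ propagating the hypotheses) concern applicability of the induction hypothesis, not the local-confluence step itself, so they do not close the gap.

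The paper's fix is simple and localised: when $\rw$ is terminating, enlarge the minimality relation to $\sqsupset$ defined by $a \sqsupset b$ iff $a \succ b$ or $a \rw^+\!\unrhd b$ or $a \supterm b$. This is still well-founded (an infinite $\sqsupset$-chain with infinitely many $\succ$-steps collapses to an infinite $\succ$-chain via $\succ \cdot \succeq \subseteq \succ$; one with only finitely many reduces to an infinite $(\rw^+\!\unrhd \cup \supterm)$-chain, impossible when $\rw$ terminates). A $\sqsupset$-minimal $s$ is still $\succ$-minimal, so part~(i) goes through unchanged, and now $s \rw^+ r$ genuinely implies $s \sqsupset r$, giving $r \notin B$ as needed for the Newman tiling. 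One also checks that $s \sqsupset r \rw w$ implies $s \sqsupset w$ (a short case analysis), so the induction hypothesis applies at each tile.
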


\begin{proof}
If $\rw$ is terminating, let $\sqsupset$ be the relation with $a \sqsupset b$ if $a \succ b$ or
$a \rw^+ \suptermeq b$ or $a \supterm b$.  
If $\rw$ is not terminating, then let $\sqsupset$ be the relation with $a \sqsupset b$ if $a \succ
b$ or $a \supterm b$.
Either way, this relation is well-founded, because $a \rw b$ and $a \supterm b$ both imply $a
\succeq b$, and if $\rw$ is terminating then so is the union of $\rw^+ \suptermeq$ and $\supterm$.

\begin{enumerate}[(i). ]
\item
  By definition of not being ground locally confluent: there are ground terms $s, t_1, t_2$ with
  $t_1 \xleftarrow[\rules]{}s\xrightarrow[\rules]{} t_2$, for which there is no $t$ such that $t_1
  \xrightarrow[\rules]{*}t\xleftarrow[\rules]{*}t_2$.
  Hence, we can take $s$ to be a term with this property that is minimal with respect to $\sqsupset$.

  By \autoref{lem:CriticalPair} there exist a critical peak $\peak{u}{v_1}{v_2}{\varphi}$,
  context $C$ and substitution $\gamma$ that respects $\varphi$ 
  such that $s=C[u \gamma]$, $t_1=C[v_1 \gamma]$ and $t_2=C[v_2\gamma]$.
  Since $u\gamma$ is ground -- and therefore its reducts $v_1\gamma$ and $v_2\gamma$ as well --
  necessarily $\gamma$ is ground on $\Vars{u,v_1,v_2}$; we can safely assume that $\gamma$ is ground
  overall.
  Furthermore, we must have $C=\square$ because otherwise $s \sqsupset u \gamma$, contradicting
  minimality of $s$.  

\item
  Assume $\rw$ is terminating.
  Let $r, q_1, q_2$ be ground terms with $s \sqsupset r$ and
  $q_1 \xleftarrow[\rules]{*}r\xrightarrow[\rules]{*} q_2$.
  We show that there is a term $q$ such that $q_1 \xrightarrow[\rules]{*}q\xleftarrow[\rules]{*}q_2$, using induction on $\rw^+$. 
  
  If $q_1=r$ or $q_2=r$ then there is nothing to prove, so we might assume there are terms $w_1, w_2$ such that
  $q_1 \xleftarrow[\rules]{*}w_1\xleftarrow[\rules]{}r\xrightarrow[\rules]{}w_2\xrightarrow[\rules]{*} q_2$. 
  Since $s \sqsupset r$ and $s$ is a minimal term, there exists a term $w$ such that $w_1 \xrightarrow[\rules]{*}w\xleftarrow[\rules]{*}w_2$. 

  But then $q_1 \xleftarrow[\rules]{*}w_1\xrightarrow[\rules]{*}w$ and $s \sqsupset w_1$
  (because $\rw \ \subseteq\ \succeq$), so by induction hypothesis there is a term $a$
  such that $q_1 \xrightarrow[\rules]{*}a\xleftarrow[\rules]{*}w$.
  In a similar way, we can apply the induction hypothesis to find a term $b$ such that $w\xrightarrow[\rules]{*}b\xleftarrow[\rules]{*}q_2$. 
  We complete the proof like in the diagram below: since both $r \rw^* a$ and $r \rw^* b$, one additional application of
  the induction hypothesis will give us the term $q$.

\begin{center}
\begin{tikzcd}[column sep=small]
&  
&
r
\arrow[dl, "\rules", ""', " " near end] \arrow[dr, "", "\rules"', "" near end]
& 
& 
\\
&
w_1
\arrow[dl, "\rules", "*"', " " near end]
\arrow[dr, dashed, "*", "\rules"', "" near end]
& 
&
w_2
\arrow[dr, "*", "\rules"', "" near end]
\arrow[dl, dashed, "\rules", "*"', " " near end]
&
\\
q_1
\arrow[dr, dashed, "*", "\rules"', "" near end]
&    
& 
w
\arrow[dr, dashed, "*", "\rules"', "" near end]
\arrow[dl, dashed, "\rules", "*"', " " near end]
&
&
q_2
\arrow[dl, dashed, "\rules", "*"', " " near end]
\\
&
a
\arrow[dr, dashed, "*", "\rules"', "" near end]   
& 
&
b
\arrow[dl, dashed, "\rules", "*"', " " near end]
&
\\
&     
&
q
&
&
\end{tikzcd}
\end{center}  
\end{enumerate}
\end{proof}

\subsection{RI for ground confluence}\label{sec:RIforGroundConfluence}

A first-order unconstrained term rewriting system is ground confluent when all its
critical pairs are bounded ground convertible~\cite{aot:toy:16}. 
We can use rewriting induction as a method to prove bounded ground convertibility.
We will generalize this result to \lcstrss\ with bounded rewriting induction.

Analogously to the Critical Pair Theorem~\cite[Theorem 6.2.4]{baa:nip:98} we
introduce the Ground Critical Peak Theorem.     

\begin{thm}[Ground Critical Peak Theorem]\label{thm:groundCriticalPair}
Let $\rules$ be a terminating \lcstrs\ and let $(\succ, \succeq)$ be a bounding pair.
Then $\rules$ is ground confluent if for all its critical peaks \( \peak{u}{v_1}{v_2}{\varphi} \),
the equation context \( \eqcon{u}{v_1}{v_2}{u}{\varphi} \) is
\( \rules/\emptyset/\emptyset \)-bounded ground convertible using $(\succ,\succeq)$.
\end{thm}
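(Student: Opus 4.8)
The plan is to prove ground confluence by contradiction: assume $\rules$ is a terminating \lcstrs\ that is \emph{not} ground confluent, and derive that some critical peak fails the bounded-ground-convertibility hypothesis. Since $\rules$ is terminating, ground confluence coincides with ground local confluence (by Newman's Lemma), so it suffices to assume $\rules$ is not ground locally confluent. First I would invoke \autoref{lem:GroundConfluenceExistsMinimal}: there is a $\succ$-minimal (really $\sqsupset$-minimal, with $\sqsupset$ as constructed in that lemma) ground term $s$ admitting a non-joinable local peak $t_1 \xleftarrow{\rules} s \xrightarrow{\rules} t_2$, and moreover this peak is a ground instance of a critical peak: there are a critical peak $\peak{u}{v_1}{v_2}{\varphi} \in \critpairs{\rules}$ and a ground substitution $\gamma$ respecting $\varphi$ with $s = u\gamma$, $t_1 = v_1\gamma$, $t_2 = v_2\gamma$. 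Part (ii) of the same lemma gives the crucial side fact: every ground term strictly $\sqsupset$-below $s$ (in particular every ground term $\succ$-below $s$ or reachable from $s$ by $\rw^+$) has all its local peaks joinable, hence is confluent on its reducts.

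Next I would unfold the hypothesis applied to the critical peak under consideration: the equation context $\eqcon{u}{v_1}{v_2}{u}{\varphi}$ is $\rules/\emptyset/\emptyset$-bounded ground convertible, so by \autoref{def:boundedGroundConv} we have $v_1\gamma \bconvert{u\gamma}{u\gamma}{\rules}{\emptyset}{\emptyset} v_2\gamma$, i.e.\ $t_1 \bconvert{s}{s}{\rules}{\emptyset}{\emptyset} t_2$. Since $\hs = \eqs = \emptyset$, every single step in this conversion is of type \ref{boundconvert:rule}: a rewrite step $C[\ell\sigma] \leftrightarrow_{\rules} C[r\sigma]$ in which \emph{both} $\ell\sigma$ and $r\sigma$ are $\succeq$-bounded by $s$ (the only available bounding term, as $u\gamma = s$ on both sides). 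Because $\supterm \subseteq \succeq$ and $\succeq$ is transitive, $C[\ell\sigma] \succeq \ell\sigma$, so the whole term being rewritten at such a step is $\succeq s$; combined with $\succ \cdot \succeq \subseteq \succ$ and $\succeq \cdot \succ \subseteq \succ$, the intermediate terms in the conversion all sit at $\sqsupset$-level $\preceq s$, and every redex/contractum occurring is $\sqsupset$-below or equal to $s$. The aim is then a standard "fill in the valley" argument: I would show by induction along the conversion $t_1 = w_0 \leftrightarrow w_1 \leftrightarrow \cdots \leftrightarrow w_k = t_2$ that all the $w_j$ are joinable with $t_1$ (equivalently: the conversion can be turned into a rewrite valley $t_1 \rw^* w \mathrel{{}^*{\leftarrow}} t_2$). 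For a forward step $w_j \rw w_{j+1}$ this is immediate; for a backward step $w_j \leftarrow w_{j+1}$, i.e.\ $w_j \rw w_{j+1}$ read the other way—wait, one must be careful—so the real content is: whenever the conversion "turns around", i.e.\ we have a peak $w_{j-1} \xleftarrow{\rules} w_j \xrightarrow{\rules} w_{j+1}$, joinability follows because $w_j \sqsupset$-below $s$ (strictly, unless $w_j = s$, which can only happen at the very ends where $w_j \in \{t_1, t_2\}$ is already a reduct of $s$, not a peak apex), so part (ii) of \autoref{lem:GroundConfluenceExistsMinimal} applies and we may locally join; then we recurse, each recursion working on strictly smaller terms (in $\rw^+$) so it terminates since $\rw$ is terminating.

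The main obstacle I expect is the bookkeeping in this last step: translating "bounded conversion between two reducts of $s$, all terms $\sqsupset$-below-or-equal $s$" into an actual common reduct, i.e.\ replaying the classic argument that a conversion between two terms, each step of which is joinable by a strictly-smaller-than-$s$ confluent subsystem, collapses to a valley. The care needed is (a) to see that the apex $s$ itself appears in the conversion at most as an endpoint-adjacent term and never as the top of an internal peak that we need part (ii) for — because part (ii) gives joinability strictly below $s$, not at $s$; here the boundedness data is exactly what rules out an internal term equal to $s$ being a fresh source of divergence, since such a term would have to be reached by a $\succeq$-non-increasing path and any rewrite from it lands strictly below, keeping us in the scope of (ii) — and (b) to run the induction on $\rw^+$ (well-founded by termination) cleanly, exactly as in the diagram-chase at the end of the proof of \autoref{lem:GroundConfluenceExistsMinimal}(ii). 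Once $t_1$ and $t_2$ are shown joinable, this contradicts the choice of $s$ as a non-joinable peak, so no such $s$ exists, $\rules$ is ground locally confluent, and by termination plus Newman's Lemma, ground confluent.
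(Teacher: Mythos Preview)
Your high-level structure matches the paper's: argue by contradiction, invoke \autoref{lem:GroundConfluenceExistsMinimal} to obtain a minimal non-joinable ground peak instantiating a critical peak, unfold the bounded-convertibility hypothesis to get a conversion $t_1 = c_0 \leftrightarrow \cdots \leftrightarrow c_k = t_2$ consisting only of type-\ref{boundconvert:rule} steps, and then turn that conversion into a valley. The gap is in the last part.

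The error is the claim that ``the intermediate terms in the conversion all sit at $\sqsupset$-level $\preceq s$''. Case~\ref{boundconvert:rule} of \autoref{def:boundconvert} bounds only the \emph{redex} $\ell\sigma$ and the \emph{contractum} $r\sigma$ by $s$; it says nothing about the whole term $C[\ell\sigma]$. Since the context $C$ is arbitrary, an intermediate $c_j = C[\ell\sigma]$ can be strictly $\succ$-larger than $s$ (for instance if $C$ stacks constructors on top). Your derivation ``$C[\ell\sigma] \succeq \ell\sigma$, so the whole term \ldots'' combines $C[\ell\sigma] \succeq \ell\sigma$ with $s \succeq \ell\sigma$, but those two facts point in the same direction and give no relation between $C[\ell\sigma]$ and $s$. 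Hence you cannot apply part (ii) of \autoref{lem:GroundConfluenceExistsMinimal} to an internal peak $c_{j-1} \leftarrow c_j \to c_{j+1}$: the apex $c_j$ is not known to be below $s$.

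The paper never tries to bound the whole intermediate terms. Instead it works at the redex level throughout, packaging each conversion step as $c_j \brightRp{s} c_{j+1}$ or $c_j \bleftRp{s} c_{j+1}$ (recording the context together with a bounded redex/contractum pair), and proves a separate commutation lemma (\autoref{lem:commonReduct}): from $v \bleftRp{s} u \brightRp{s} w$ one can produce $v \brightRp{s} u' \bleftRp{s} w$. That lemma does a case split (\autoref{lem:contextShape}) on how the two rewrite positions relate: if disjoint, the steps commute trivially; if one redex sits inside the other, the diverging reductions take place \emph{inside a single redex that is bounded by $s$}, and only at that point is part (ii) of \autoref{lem:GroundConfluenceExistsMinimal} invoked. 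The outer induction on $k$ then repeatedly applies this commutation to push the leftward steps past the rightward ones until the conversion collapses to a valley. This localisation-to-the-redex argument is the missing ingredient in your sketch; once you have it, the rest of your plan goes through.
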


In \autoref{thm:boundedRI} we showed that bounded rewriting induction proves \( \rules/\emptyset/\emptyset \)-bounded ground convertibility. 
Therefore, we conclude the following result:

\begin{cor}[Bounded RI for ground confluence]
\label{cor:RIforGroundConfluence}
Let $\rules$ be a terminating, quasi-reductive \lcstrs, $(\succ, \succeq)$ a bounding pair,
and $\eqs$ the set $\{ \eqcon{u}{v_1}{v_2}{u}{\varphi} \mid \peak{u}{v_1}{v_2}{\varphi} \in
\critpairs{\rules} \}$.
If $(\eqs,\emptyset) \vdash^* (\emptyset,\hs)$ for some set $\hs$, then $\rules$ is ground confluent.
\end{cor}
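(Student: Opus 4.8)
The plan is to obtain the corollary as an immediate combination of \autoref{thm:groundCriticalPair} (the Ground Critical Peak Theorem) and \autoref{thm:boundedRI}; these two results do all the real work, so the proof reduces to checking that their hypotheses are met for the set \( \eqs \) at hand.

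First I would note that a terminating \lcstrs\ is in particular weakly normalizing, so \( \rules \) — being also quasi-reductive — satisfies the standing assumptions of \autoref{thm:boundedRI}. Next I would verify that \( \eqs \) consists of \emph{bounded} equation contexts, which is what \autoref{thm:boundedRI} requires of its input. Fix a critical peak \( \peak{u}{v_1}{v_2}{\varphi} \in \critpairs{\rules} \) and the associated context \( \eqcon{u}{v_1}{v_2}{u}{\varphi} \in \eqs \). By the construction of critical peaks (see \autoref{def:critPair} and the discussion following it), the source \( u \) rewrites to both \( v_1 \), applying \( \rho_1 \) at position \( p \), and \( v_2 \), applying \( \rho_2 \) at the root, under any substitution respecting \( \varphi \); hence for every ground substitution \( \gamma \) respecting \( \varphi \) we have \( u\gamma \rw v_1\gamma \) and \( u\gamma \rw v_2\gamma \). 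Since \( \rw\;\subseteq\;\succeq \) holds for any bounding pair (\autoref{def:boundpair}), this gives \( \csucceq{u}{v_1}{\varphi} \) and \( \csucceq{u}{v_2}{\varphi} \), so \( \eqcon{u}{v_1}{v_2}{u}{\varphi} \) is a bounded equation context in the sense of \autoref{defi:bounded-eq-cont}, and thus \( \eqs \) is a set of bounded equation contexts.

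With this in place, applying \autoref{thm:boundedRI} to the assumed derivation \( (\eqs,\emptyset) \vdash^* (\emptyset,\hs) \) yields, for every \( \eqcon{u}{v_1}{v_2}{u}{\varphi} \in \eqs \) and every ground \( \gamma \) respecting \( \varphi \), a conversion \( v_1\gamma \bconvertR{u\gamma}{u\gamma} v_2\gamma \) using only rule steps bounded by \( u\gamma \). This is precisely \( \rules/\emptyset/\emptyset \)-bounded ground convertibility of \( \eqcon{u}{v_1}{v_2}{u}{\varphi} \) in the sense of \autoref{def:boundedGroundConv}: with \( \hs = \emptyset \) and an empty equation-context parameter, the only applicable clause of \autoref{def:boundconvert} is \ref{boundconvert:rule}. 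Since this holds for every critical peak of \( \rules \), the hypothesis of \autoref{thm:groundCriticalPair} is satisfied, and we conclude that \( \rules \) is ground confluent. I do not expect a genuine obstacle here: the content is entirely in \autoref{thm:boundedRI} and \autoref{thm:groundCriticalPair}, both of which we may assume. The only points requiring a moment's care are bookkeeping — confirming that the notion of \( \rules/\emptyset/\emptyset \)-bounded ground convertibility output by \autoref{thm:boundedRI} is verbatim the input demanded by \autoref{thm:groundCriticalPair}, and checking that the equation contexts \( \eqcon{u}{v_1}{v_2}{u}{\varphi} \) are bounded, which rests on the elementary fact that the source of a critical peak reduces to both of its components.
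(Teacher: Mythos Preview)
Your proposal is correct and follows the same approach as the paper, which simply invokes \autoref{thm:boundedRI} and \autoref{thm:groundCriticalPair} in tandem without spelling out the details. Your additional verification that each \( \eqcon{u}{v_1}{v_2}{u}{\varphi} \) is a bounded equation context (via \( u\gamma \rw v_i\gamma \) and \( \rw\,\subseteq\,\succeq \)) is exactly the small piece of bookkeeping the paper leaves implicit, and it is correct.
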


\begin{exa}\label{example:groundConfluent}
In the \lcstrs\ from \autoref{example:nonEquivalence} we have    
$ 
\critpairs{\rules}
=
\{
\peak{\symb{H}\ f\ n\ m\ x}
     {\symb{H} \ f \ (n-\symb{1})\ m\ (f\ x)}
     {\symb{H} \ f \ (m-\symb{1})\ n\ (f\ x)}
     {n>\symb{0} \wedge m>\symb{0}}
\}
$. 
For $\eqs = \{ \eqcon{\symb{H}\ f\ n\ m\ x}{\symb{H} \ f \ (n-\symb{1})\ m\ (f\ x)}
{\symb{H} \ f \ (m-\symb{1})\ n\ (f\ x)}{\symb{H}\ f\ n\ m\ x}{n>\symb{0} \wedge m>\symb{0}}\}$,
it is easy to show that
$
(\eqs, \emptyset)
\vdash^*
(\emptyset, \hs)
$ for some set $\hs$. 
By \autoref{cor:RIforGroundConfluence} $\rules$ is ground confluent. 
\end{exa}

Towards a proof of \autoref{thm:groundCriticalPair} we introduce two lemmas. 

\begin{lem}\label{lem:contextShape}
Assume $C_1[a_1]=C_2[a_2]$. 
Then at least one of the following cases holds 
\begin{enumerate}[(i). ]
\item $C_1[\square]=D[\square, a_2]$ and $C_2=D[a_1, \square]$ for some context $D[\square, \square]$
\item $a_1=D[a_2]$ for some context $D[\square]$
\item $a_2=D[a_1]$ for some context $D[\square]$
\end{enumerate}
\end{lem}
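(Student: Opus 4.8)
The plan is to prove the statement by structural induction on the term $t := C_1[a_1] = C_2[a_2]$. This is the applicative analogue of the classical fact that two holes in a term are either at parallel positions (case (i)) or one is nested inside the other (cases (ii), (iii)); the only extra subtlety over the first-order setting is that a context may have its hole at the head of an application, as \autoref{def:positions} and the definition of contexts explicitly allow, so a ``hole position'' can correspond to a partial application $t|_{\star k}$ rather than to a genuine argument subterm.

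First I would dispose of the degenerate cases: if $C_1[\square] = \square$ then $a_1 = t = C_2[a_2]$, giving (ii) with $D = C_2$; symmetrically if $C_2[\square] = \square$, we obtain (iii). So assume neither context is the bare hole. Writing such a context $C[\square] \neq \square$ as $b\ w_1 \cdots w_m$ with $b \in \Sig \cup \Var \cup \{\square\}$, exactly one of two shapes occurs: a \emph{head-hole} context $\square\ v_1 \cdots v_k$ with $k \ge 1$, or an \emph{argument-hole} context $f\ w_1 \cdots w_n$ with $n \ge 1$, whose head $f$ must then coincide with $\head(t)$ and whose hole lies inside a single argument $w_j$. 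I would then run through the four combinations of shapes for $C_1$ and $C_2$. When both are head-holes, say $\square\ v_1\cdots v_k$ and $\square\ v_1'\cdots v_{k'}'$, unique readability of applicative terms forces $a_1 = t|_{\star k}$ and $a_2 = t|_{\star k'}$, and comparing $k$ with $k'$ yields (ii) or (iii) with $D$ a (possibly trivial) head-hole context. When both are argument-holes with holes in arguments $j$ and $j'$: if $j \neq j'$ the holes are in parallel arguments and I can directly assemble $D[\square_1,\square_2]$ by keeping the sub-contexts of $w_j$ and $w_{j'}'$ side by side; if $j = j'$ then $w_j[a_1] = w_j'[a_2] = t_j$, a \emph{strict} subterm of $t$, so I apply the induction hypothesis to these smaller contexts and lift the resulting $D$ back up under $f\ t_1 \cdots t_{j-1}\ \square\ t_{j+1} \cdots t_n$. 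Finally the mixed case (one head-hole $\square\ v_1\cdots v_k$, one argument-hole with hole in $w_{j'}'$) splits on whether $j' \le n - k$ (the second hole lies inside the partial application $a_1$, giving (ii)) or $j' > n - k$ (the second hole is inside one of the trailing arguments $v_\ell$, giving (i)); the symmetric mixed case is analogous, yielding (iii) or (i).

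The main obstacle I anticipate is bureaucratic rather than conceptual: verifying, in the head-hole and mixed cases, that the context $D$ I reconstruct genuinely satisfies $C_1[\square] = D[\square, a_2]$ and $C_2[\square] = D[a_1, \square]$ --- that is, keeping straight which arguments of $t$ belong to the partial application $a_1$ and which are the trailing $v_\ell$, and checking that plugging $a_2$ (resp.\ $a_1$) back into $D$ reproduces exactly $C_1[\square]$ (resp.\ $C_2[\square]$). This is where the $\star$-position convention must be handled with care, but it is a routine unique-readability argument once the argument bookkeeping is set up, and the induction is clearly well-founded since the only recursive call is on the strict subterm $t_j$.
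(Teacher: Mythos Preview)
Your proposal is correct and follows essentially the same approach as the paper: both identify the two possible shapes of a non-trivial context (head-hole $\square\ v_1\cdots v_k$ versus argument-hole $f\ s_1\cdots C'[\square]\cdots s_n$) and then perform a case analysis on the four combinations, recursing only in the ``same-argument'' sub-case. The only cosmetic difference is that the paper phrases the induction as being on the shape of $C_1$ rather than on the common term $t$, but the recursive call in either formulation is on the strictly smaller pair $(C_1',C_2')$ inside a common argument, so the two inductions are equivalent.
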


\begin{proof}
By induction on the shape of $C_1$.
Any term can be written as $f\ s_1 \cdots s_n$ with $f \in \Sigma \cup \Var$ and $n \ge 0$.
Therefore, any context $C[\square]$ has one of the following shapes
\begin{enumerate}[(A). ]
\item $C[\square ] = \square\ s_{i+1} \cdots s_n$, for some $1 \le i \le n$
\item $C[\square] = f\ s_1 \cdots s_{i-1}\ C'[\square]\ s_{i+1} \cdots s_n$, for some $1 \le i \le n$ and context $C'$
\end{enumerate} 
Now, assume $C_1[a_1]=C_2[a_2]$. 
We consider the four combinations for $C_1[\square]$, $C_2[\square]$
\begin{description}
\item[(AA)]  
$C_1[\square] = \square \ s_{i+1} \cdots s_n$ and 
$C_2[\square] = \square \ s_{j+1} \cdots s_n$
\begin{itemize}[label = $\triangleright$]
\item If $i=j$ then $C_1=C_2$ and $a_1=a_2=f\ s_1 \cdots s_n$. In particular (ii) holds with $D = \square$.
\item  If $i>j$ then $a_1 = f\ s_1 \cdots s_j \cdots s_i$ and $a_2=f\ s_1 \cdots s_j$. 
Then (ii) holds with $D[\square] = \square\ s_{j+1} \cdots s_i$ (then $a_1=D[a_2]$).
\item Similarly, if $i<j$ then (iii) holds by a symmetrical reasoning.
\end{itemize}
\item[(AB)] $C_1[\square] = \square\ s_{i+1} \cdots s_n$ and $C_2[\square] = f\ s_1 \cdots s_{j-1}\ C'[\square]\ s_{j+1} \cdots s_n$. 
In particular $a_1=f\ s_1 \cdots s_i$ and $C'[a_2]=s_j$.
\begin{itemize}[label = $\triangleright$]
\item If $i \ge j$ then (ii) holds: $a_1 = D[a_2]$ with $D[\square] = f\ s_1 \cdots s_{j-1}\ C'[\square]\ s_{j+1} \cdots s_i$. 
\item If $j>i$ then (i) holds: $D[\square_1, \square_2]=\square_1\ s_{i+1} \cdots s_{j-1} \ C'[\square_2] \ s_{j+1}\  \cdots s_n$.   
\end{itemize}
\item[(BA)] Symmetrical to $\textbf{(AB)}$. 
\item[(BB)] $C_1[\square] = f\ s_1 \cdots s_{i-1}\ C'_1[\square]\ s_{i+1} \cdots s_n$ and $C_2[\square] = f\ s_1 \cdots s_{j-1}\ C'_2[\square]\ s_{j+1} \cdots s_n$. 
In particular $C'_1[a_1]=s_i$ and $C'_2[a_2]=s_j$.
\begin{itemize}[label = $\triangleright$]
\item If $i = j$ then $C'_1[a_1]=C_2'[a_2]$ and we apply the induction hypothesis; in case
  (ii) and (iii) we are done, and if the induction hypothesis gives us $D'$
  we let $D[\square_1,\square_2] = f\ s_1 \cdots s_{i-1}\ D'[\square_1,\square_2]\ s_{i+1} \cdots s_n$.
\item If $j>i$ then (i) holds, choosing $D[\square_1, \square_2]=
  f\ s_1 \cdots s_{i-1} \ C'_1[\square_1] \ s_{i+1}\ \cdots s_{j-1} \ C'_2[\square_2]$\ 
  $\ s_{j+1}\ \cdots s_n$.  Similarly, if $i > j$ we have (i) by taking
  $D[\square_1, \square_2]=
  f\ s_1 \cdots s_{j-1} \ C'_2[\square_2]$\ $s_{j+1}\ \cdots s_{i-1} \ C'_1[\square_1] \ s_{i+1}\  \cdots s_n$.
\qedhere
\end{itemize}
\end{description}
\end{proof}

In the following, let $u \brightRp{s} v$ if we can write $u = C[a]$, $v = C[b]$,
$s \succ a$, $s \succ b$, and $a \rw^* b$.
Let $u \bleftRp{s} v$ if $v \brightRp{s} u$.  We use $u \bconvertRp{s} v$ if either
$u \brightRp{s} v$ or $u \bleftRp{s} v$.

\begin{lem}\label{lem:commonReduct}
Let $\rules$ be a terminating, non-ground locally confluent \lcstrs\ and
$(\succ, \succeq)$ a bounding pair.
Let $s$ be the $\succ$-minimal term that exists by \autoref{lem:GroundConfluenceExistsMinimal}.
Suppose $v \bleftRp{s} u \brightRp{s} w$.
Then there is a term $u'$ such that $v \brightRp{s} v' \bleftRp{s} w$.
\end{lem}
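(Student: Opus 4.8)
The plan is to unpack the hypothesis as $u = C_1[a_1] = C_2[a_2]$ with $s \succ a_1$, $s \succ a_2$, $a_1 \rw^* b_1$, $a_2 \rw^* b_2$, $v = C_1[b_1]$ and $w = C_2[b_2]$ (so $v \bleftRp{s} u$ means $u \brightRp{s} v$, etc.), and then to split into cases according to \autoref{lem:contextShape} applied to the equality $C_1[a_1] = C_2[a_2]$: either (i) the two contracted occurrences are at parallel positions, or (ii) the occurrence contracted by the $C_2$-step lies inside $a_1$, or (iii) the occurrence contracted by the $C_1$-step lies inside $a_2$. I also note that the statement's ``$v'$'' in the conclusion is a typo for the witness $u'$ just quantified, so the goal is to produce $u'$ with $v \brightRp{s} u' \bleftRp{s} w$.

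In case (i), \autoref{lem:contextShape} supplies a two-hole context $D$ with $C_1[\square] = D[\square,a_2]$ and $C_2[\square] = D[a_1,\square]$, whence $u = D[a_1,a_2]$, $v = D[b_1,a_2]$ and $w = D[a_1,b_2]$. Setting $u' := D[b_1,b_2]$, one reads $v = (D[b_1,\square])[a_2]$ and $u' = (D[b_1,\square])[b_2]$, so $v \brightRp{s} u'$ since $s \succ a_2$, $s \succ b_2$ and $a_2 \rw^* b_2$; symmetrically $w \brightRp{s} u'$, and we are done with no appeal to minimality. In case (ii), \autoref{lem:contextShape} gives a context $D$ with $a_1 = D[a_2]$, the hole of $D$ marking exactly the occurrence of $a_2$ contracted by the $C_2$-step (this identification is what the position-tracking proof of \autoref{lem:contextShape} delivers), so that $C_2[\square] = C_1[D[\square]]$ and hence $w = C_1[D[b_2]]$. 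Now $a_1 = D[a_2] \rw^* D[b_2]$ and $a_1 \rw^* b_1$; moreover $a_1$ is a ground term with $s \succ a_1$ (recall $\succ$ relates only ground terms), so since $\rw$ is terminating we may apply \autoref{lem:GroundConfluenceExistsMinimal}(ii) with $r := a_1$, $q_1 := D[b_2]$, $q_2 := b_1$ to obtain a term $q$ with $D[b_2] \rw^* q$ and $b_1 \rw^* q$. Then $u' := C_1[q]$ works: $v = C_1[b_1] \brightRp{s} C_1[q]$ because $b_1 \rw^* q$, $s \succ b_1$, and $s \succ q$ (from $b_1 \succeq q$, as $\rw\ \subseteq\ \succeq$, together with $\succ \cdot \succeq\ \subseteq\ \succ$); and $w = C_1[D[b_2]] \brightRp{s} C_1[q]$ because $D[b_2] \rw^* q$, $s \succ D[b_2]$ (from $a_1 \succeq D[b_2]$ and $s \succ a_1$), and $s \succ q$ as before. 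Case (iii) is the mirror image, swapping the roles of $(a_1,b_1,C_1,v)$ and $(a_2,b_2,C_2,w)$ and applying \autoref{lem:GroundConfluenceExistsMinimal}(ii) to $r := a_2$.

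The one genuinely delicate point is the bookkeeping in the nested cases (ii)/(iii): one must be careful that the occurrence of $a_2$ inside $a_1 = D[a_2]$ is precisely the one the $C_2$-step contracts, so that the resulting term really is $C_1[D[b_2]]$ and lies between $a_1$'s reducts; this is exactly the information carried by the proof of \autoref{lem:contextShape}, so it is more a matter of phrasing than of real difficulty. Checking the hypotheses of \autoref{lem:GroundConfluenceExistsMinimal}(ii) is immediate from $s \succ a_1$ (resp. $s \succ a_2$), and all the remaining side conditions on $\succ$ reduce to the ordering-pair axioms and $\rw\ \subseteq\ \succeq$. I therefore expect the write-up to be short, with the parallel case trivial and each nested case a single invocation of the minimality lemma.
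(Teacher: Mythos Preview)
Your proposal is correct and follows essentially the same approach as the paper: unpack the two $\brightRp{s}$ steps, apply \autoref{lem:contextShape} to split into the parallel and nested cases, use $u' = D[b_1,b_2]$ in the parallel case, and in the nested cases invoke \autoref{lem:GroundConfluenceExistsMinimal}(ii) on the inner term to obtain a common reduct. Your observation about the ``$v'$'' typo and the occurrence-tracking subtlety in \autoref{lem:contextShape} (that the $D$ produced really satisfies $C_2[\square] = C_1[D[\square]]$) is apt; the paper relies on this implicitly without stating it in the lemma.
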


\begin{proof}
There are contexts $C_1$, $C_2$ and terms $a_1$, $a_2$, $b_1$, $b_2$ such that 
\begin{itemize}[label = $\triangleright$]
\item $u = C_1[a_1]=C_2[a_2]$, $v=C_1[b_1]$ and $w=C_2[b_2]$
\item $a_1 \rw^* b_1$, $a_2 \rw^* b_2$ 
\item $s \succ a_1$, $s \succ b_1$, $s \succ a_2$, $s \succ b_2$.
\end{itemize}
By \autoref{lem:contextShape}
there are three options:
\begin{enumerate}[(i). ]
\item $C_1[\square]=D[\square, a_2]$ and $C_2=D[a_1, \square]$ for some context $D[\square, \square]$:  
then 
\[
D[b_1, a_2] = v \bleftRp{s} u = D[a_1, a_2] \brightRp{s} w = D[a_1, b_2]
\]
We can take $u'=D[b_1, b_2]$. 
We have $s \succ b_1$ and $s \succ b_2$, as required. 

\item $a_1=D[a_2]$ for some context $D[\square]$: then 
\[
C_1[b_1] = v \bleftRp{s} u = C_1[a_1] = C_1[D[a_2]]
\brightRp{s} w =
C_1[D[b_2]]
\]
Since $s \succ a_1=D[a_2]$ and 
\(
b_1 \xleftarrow[\rules]{+} D[a_2] \xrightarrow[\rules]{+} D[b_2]
\)
we apply part (ii) of \autoref{lem:GroundConfluenceExistsMinimal} to obtain a term $t$ such that
\(
b_1 \xrightarrow[\rules]{*} t \xleftarrow[\rules]{*} D[b_2]
\).
Now, let $u'=C_1[t]$. 
Then 
\(
v = C_1[b_1] \brightRp{s} u'=C_1[w] \bleftRp{s} C_1[D[b_2]] = w \). 
We will check the necessary inequalities. By assumption $s \succ b_1$, and from $b_1 \rw^* w$
(and $\rw\;\subseteq\;\succeq$) $s \succ w$ follows.
Also by assumption $s \succ a_1 = D[a_2]$ and from $D[a_2] \rw^* D[b_2]$,
$s \succ D[b_2]$ follows.
\item $a_2=D[a_1]$ for some context $D[\square]$: similar to (ii). 
  \qedhere
\end{enumerate}
\end{proof}

With this, we are ready to prove the Ground Critical Peak Theorem:

\begin{proof}[Proof of \autoref{thm:groundCriticalPair}]
In a terminating system, ground confluence is implied by ground local confluence.  So, towards a
contradiction, assume all equation contexts corresponding to critical pairs are bounded ground
convertible but that $\alcstrs$ is not ground locally confluent. 

By \autoref{lem:GroundConfluenceExistsMinimal}.(i) there is a $\succ$-minimal
ground term $s$ for which there are terms $t_1, t_2$ such that $t_1 \xleftarrow[\rules]{}s\xrightarrow[\rules]{} t_2$ but
there is no $t$ with $t_1 \xrightarrow[\rules]{*}t\xleftarrow[\rules]{*}t_2$. 
There are also a critical peak $\peak{u}{v_1}{v_2}{\varphi}$ and ground substitution
$\gamma$ that respects $\varphi$ such that $s=u\gamma$, $t_1=v_1 \gamma$ and $t_2=v_2\gamma$.
By assumption, 
$\eqcon{u}{v_1}{v_2}{u}{\varphi}$ is
$\rules/\emptyset/\emptyset$-bounded ground convertible.  This implies that
$v_1\gamma \bconvert{u\gamma}{u\gamma}{\rules}{\emptyset}{\emptyset} v_2\gamma$, so there exist
$c_0, \ldots, c_k$ such that $v_1\gamma = c_0$ and $v_2\gamma = c_k$, and each
$c_i \bconvertsingle{u\gamma}{u\gamma}{\rules}{\emptyset}{\emptyset} c_{i+1}$.  Since only case
\ref{boundconvert:rule} of the definition can be used (and $s = u\gamma$,
$t_1 = v_1\gamma$ and $t_2 = v_2\gamma$), we can conclude that
  $c_0 = t_1$ and  $c_k = t_2$ and
  $(\forall 0 \le i < k )$: $c_i \bconvertRp{s} c_{i+1}$.

To obtain the contradiction, we show that if the properties above are satisfied, then there is a
$z$ such that $t_1 \xrightarrow[\rules]{*} z \xleftarrow[\rules]{*} t_2$.
We do so, using induction on $k$.

\begin{itemize}[label = $\triangleright$]
\item $k = 0$.  Then $t_1 = t_2$ and we can choose $z := t_1$.
\item $k = 1$.  Then $t_1 \rw^* t_2$ or $t_2 \rw^+ t_1$, so it trivially holds.
\item $k \geq 2$. We distinguish the following cases:
  \begin{itemize}
  \item If $t_1 = c_0 \brightRp{s} c_1$ then the IH provides $z$ such that
    $t_1 \rw^* c_1 \rw^* z$ and $t_2 \rw^* z$.
  \item If $c_{k-1} \bleftRp{s} c_k = t_2$ then the IH provides $z$ such that
    $t_1 \rw^* z$ and $t_2 \rw^* c_{k-1} \rw^* z$.
  \item Otherwise, there exists $m > 0$ such that $c_0 \bleftRp{s} \dots \bleftRp{s} c_{m-1}
    \bleftRp{s} c_m \brightRp{s} c_{m+1}$.  We use a second induction on $m$.
    Observe that, by \autoref{lem:commonReduct}, there exists $c_m'$ such that
    $c_{m-1} \brightRp{s} c_m' \bleftRp{s} c_{m+1}$.  Then the sequence
    $c_0,\dots,c_{m-1},c_m',c_{m+1},\dots,c_k$ still satisfies the properties above.
    If $m = 1$, we are have $t_1 \rw^* c_1'$ and we complete by the first induction hypothesis;
    if $m > 1$ then we observe that now $c_0 \bleftRp{s} \dots \bleftRp{s} c_{m-2}
    \brightRp{s} c_m$, so we complete by the second induction hypothesis.
    \qedhere
  \end{itemize}
\end{itemize}
\end{proof}

\section{Finding a Bounding Pair}\label{sec:howToFindOrdering}

The bounding pair $(\succ,\succeq)$ is an important component of bounded
rewriting induction.  Yet, how to find it in practice?  As we have seen, we do
not have to fix the pair in advance, but can use the process of bounded
rewriting induction to accumulate requirements.
Let us say the process delivers a set $\reqs$ of requirements of the form
$\csucc{s}{t}{\varphi}$ or $\csucceq{s}{t}{\varphi}$ with $s \neq t$ (if
$s = t$, then we already know \( \csucceq{s}{t}{\varphi} \) is true and
\( \csucc{s}{t}{\varphi} \) is false if \( \varphi \) is satisfiable).
But having done so, we are still left with the difficulty of proving that a
bounding pair for those requirements exists.  We discuss two different
approaches to define such a pair.

\subsection{Using a reduction ordering}\label{sec:redord}
A \emph{reduction ordering} is a monotonic (i.e. $s \sqsupset t$ implies $C[s]
\sqsupset C[t]$ for every context $C[\square]$) well-founded partial ordering
on the set of terms.
Monotonicity does not imply that \( s \sqsupset t \) whenever \( s \supterm t \)
but we \emph{can} use a reduction ordering that includes \( \rw \) to construct
a bounding pair, as follows:
\begin{itemize}[label = $\triangleright$]
\item define $a \succ_1 b$ if $a \sqsupset b$ or $a \supterm b$;
  define $a \succ b$ if $a \succ_1^+ b$, and $a \succeq b$ if $a \succ_1^* b$;
\item require that $\ell\gamma \sqsupset r\gamma$ for each $\ell \arrz r\ 
  [\varphi] \in \rules$ and ground substitution $\gamma$ that respects $\varphi$
\end{itemize}
The monotonicity requirement ensures that well-foundedness of $\sqsupset$ is
preserved in $\succ$: if $s \sqsupset t = C[t']$ and $t' \sqsupset u$, then $s
\sqsupset C[t'] \sqsupset C[u]$, so any infinite sequence of $\succ$ steps can
be converted into an infinite sequence of $\sqsupset$ steps.

There are various methods to define a reduction ordering in traditional term
rewriting, with recursive path orderings and polynomial interpretations being
among the most well-known.  Unfortunately, there are not as many methods for
higher-order or constrained term rewriting, and even fewer for the combination
(although a variant of the recursive path ordering for LCSTRSs exists
\cite{guo:kop:24}).
That being said, while the method of Bounded RI is defined on LCSTRSs, it can
also be used on unconstrained first-order term rewriting systems, and there we
have many methods at our disposal.

\paragraph{Using a rewrite relation}
However, it is worth noting that if $\Q$ is a set of rules such that $\rules
\cup \Q$ is a terminating rewrite system, then $\arrz_{\Q \cup \rules}^+$ is by
definition a reduction ordering.  So, if every $\csucc{s}{t}{\varphi}$ or
$\csucceq{s}{t}{\varphi}$ in $\reqs$ gives a valid rewrite rule $s \arrz t\ 
[\varphi]$, we can simply let $\Q$ be this set of rules, and use any method to
prove termination.  Thus, we could in particular use the dependency pair
framework \cite{art:gie:00,gie:thi:sch:05}, which allows reduction orderings to
be used in a more liberal way by requiring \emph{weak} instead of full
\pagebreak
monotonicity.

A downside of this approach is that it is possible to encounter requirements
that are not valid rules, and this is especially common in higher-order
rewriting since the left-hand side of a rule may not have a variable as the
head, and must have the same type as the right-hand side.
For example, suppose $\reqs \ni \csucc{\afun\ (F\ x)\ y}{y-\symb{1}}{x > y}$,
with $\afun :: \int \arrtype \int \arrtype \lijst$.  We cannot include a rule
$\afun\ (F\ x)\ y \arrz y - \symb{1}\ [x > y]$ in $\Q$, because the two sides
have a different type ($\lijst$ versus $\int$).  Even if they had the same
type, the only dependency pair framework for LCSTRSs that has thus far been
defined \cite{guo:hag:kop:val:24} does not support rules whose left-hand side
contains applied variables, so the subterm $F\ x$ is problematic.

A solution to this issue is to not derive $\Q$ from the set $\reqs$ directly,
but merely to ensure that $s\gamma\ (\arrz_{\Q \cup \rules} \cup
\supterm)^+\ t\gamma$ for all $\csucc{s}{t}{\varphi}$ or
$\csucceq{s}{t}{\varphi}$ in $\reqs$ and $\gamma$ that respect
$\varphi$.  For example, if we introduce a new function symbol
$\symb{inttolist} :: \int \arrtype \lijst$, then the above requirement is
handled by including a rule $\afun\ z\ y \arrz \symb{inttolist}\ 
(y-\symb{1})\ [\strue]$ in $\Q$.
While it is not immediately obvious how to build such an abstraction in
general, at least it is easy to avoid the issue of different types through
the introduction of constructors like $\symb{inttolist}$.

\medskip
An advantage of using a reduction relation is that it generates a bounding pair
that satisfies the premise of \autoref{lem:stronglyBoundedPreserve}.  Hence,
by taking some care with the use of \hypothesis, \alter\ and \generalize, we
can ensure that equation contexts are always \emph{strongly} bounded.  This
yields substantially fewer ordering requirements -- and thus an easier proof
search.

\subsection{Separating top and inner steps}

The idea explored above -- of deriving our bounding pair from a single
reordering -- is quite powerful and allows us to reuse existing termination
methods.  In particular the approach of letting $\sqsupset$ be a terminating
relation $\arrz_{\Q \cup \rules}$ is reminiscent of traditional ways of
using RI (e.g., \cite{red:90,fal:kap:12,fuh:kop:nis:17,hag:kop:24}), where
the induction hypotheses in $\hs$ are oriented as rules, and we prove
termination of $\arrz_{\rules \cup \hs}$.

However, in these approaches we do not fully take advantage of the weaker requirements in the
present setting.  In particular, the elements of $\reqs$ do not need to be oriented with a
\emph{monotonic} relation. For example, if $\rules = \{ \afun\ \symb{a} \arrz \afun\ \symb{b} \}$
and $\reqs = \{ \symb{b} \succ \symb{a} \}$, there is no reduction ordering $\sqsupset$
that orients $\reqs$, since then we would need both $\afun\ \symb{a} \sqsupset \afun\ 
\symb{b}$ and $\afun\ \symb{b} \sqsupset \afun\ \symb{a}$ (by monotonicity).  Yet, there \emph{are}
bounding pairs for these requirements: e.g., the relation $(\arrztop{\{\symb{b} \arrz \symb{a}\ 
[\strue]\}} \cup \arrz_\rules \cup \supterm)^+$, whose subrelation $\arrztop{\Q}$ is defined by:

\begin{defi}\label{def:arrztop}
For a set $\Q$ of triples $(\ell,r,\varphi)$ of two terms and a constraint, let $s \arrztop{\Q} t$
if there exist $(\ell,r,\varphi) \in \Q$ and a substitution $\gamma$ that respects $\varphi$, such that $s = \ell\gamma$ and
$t = r\gamma$.
\end{defi}

In terms of the (first-order) dependency pair framework, the difference can roughly be
summarised as follows: termination of $\arrz_{\Q \cup \rules}$ coincides with finitess of the DP
problem $(\DP(\Q \cup \rules),\Q \cup \rules)$, whereas termination of $\arrztop{\Q} \cup
\arrz_\rules \cup \rhd$ coincides with finiteness of 
$(\DP(\Q \cup \rules),\rules)$.
In higher-order rewriting, there are multiple dependency pair frameworks, and their full introduction
requires some theoretical development that is beyond the scope of this paper.  However, we will
show how the question whether a suitable bounding pair exists can be reduced to the existence of
something much like a dependendency pair chain.

\paragraph{Dependency-pair-like chains}
In the following, we assume that \( \reqs \) contains only requirements of the
form \( \afun\ \vec{\ell} \succ r\ [\varphi] \) or \( \afun\ \vec{\ell} \succeq
r\ [\varphi] \) with \( \afun \in \Sig \).
We define:
\begin{itemize}
\item \( \reqs_\succeq = \{ (\ell,r,\varphi) \mid \csucceq{\ell}{r}{\varphi} \in
  \reqs \}\) and \( \reqs_\succ = \{ (\ell,r,\varphi) \mid \csucc{\ell}{r}{\varphi}
  \in \reqs \}\).
\item $\rules^\prime$ is the set $\{ (\ell\ \avar_1 \cdots \avar_i,r\ \avar_1
  \cdots \avar_i,\varphi) \mid \ell \arrz r\ [\varphi] \in \rules \wedge \ell ::
  \atype_1 \arrtype \dots \arrtype \atype_n \arrtype \asort$ ($\asort \in
  \Sorts$) $\wedge\ 0 \leq i \leq n \wedge \avar_1 :: \atype_1, \dots,\avar_i ::
  \atype_i$ fresh variables$\}$.
\item \( \Defs \) is the set of pairs \( (\afun,n) \) such that there exists
  \( (\afun\ \ell_1 \cdots \ell_n,r,\varphi) \in \rules' \cup \reqs_\succ \cup
  \reqs_\succeq \).
\item A \emph{candidate} is a term \( s \) of the form \( f\ s_1 \cdots s_n \)
  with either \( f \in \Var \) and \( n > 0 \), or \( (f,n) \in \Defs \).
\item Let $\Pweak = \{ \ell \arrz r\ [\varphi] \in \reqs_\succeq \cup
  \rules^\prime \mid r$ is a candidate$\}$ and
  $\Pstrong = \{ \ell \arrz r\ [\varphi] \in \Q_1 \mid r$ is a candidate$\}
  \cup \{ \ell \arrz p\ [\varphi] \mid \ell \arrz r\ [\varphi] \in \reqs_\succ
  \cup \reqs_\succeq \cup \rules^\prime \wedge r \supterm p \wedge p$ is a
  candidate$\}$.
\item A \emph{strong chain} is an infinite sequence $s_i \arrztop{\Pstrong}
  t_i\ (\arrz_\rules \cup \arrztop{\Pweak})^*\ s_{i+1}$ for all $i$.
\end{itemize}

The elements of $\Pweak \cup \Pstrong$ are essentially the \emph{dynamic
dependency pairs} of $\reqs \cup \rules^\prime$ (see \cite{kop:raa:12}),
except that we did not mark the head symbol of the left-hand side; such a
marking can always be added at a later stage in a termination proof.
Correspondinglyy, a strong chain is a dependency pair chain in
which the DPs in $\Pstrong$ occur infinitely often.

\begin{lem}\label{lem:strongchain}
For ground terms \( s,t \), let \( s \succeq t \) if
$s\ (\arrztop{\reqs_\succ \cup \reqs_\succeq} \cup \arrz_\rules \cup
\supterm)^*\ t$ and \( s \succ t \) if there exist \( u,v \) such that
\( s \succeq u \arrztop{\reqs_\succ} v \succeq t \).
Suppose no strong chain exists.

Then \( (\succ,\succeq) \) is a bounding pair that
orients all requirements in \( \reqs \).
\end{lem}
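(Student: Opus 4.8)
The plan is to verify each clause in the definition of a bounding pair (\autoref{def:boundpair}) together with the orientation claim, isolating well-foundedness of $\succ$ as the only substantial point.

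First the routine parts. By construction $\succeq$ is the reflexive–transitive closure of a relation on $T(\Sig,\emptyset)$, so it is a quasi-order, and it contains $\arrz_\rules$ and $\supterm$ since both occur among its generating steps. The inclusion $\succ\;\subseteq\;\succeq$ holds because $\succ$ is $\succeq$ composed on both sides with $\arrztop{\reqs_\succ}$, and $\arrztop{\reqs_\succ}\;\subseteq\;\succeq$. For the compatibility laws, suppose $s \succ t$, say $s \succeq u \arrztop{\reqs_\succ} v \succeq t$; if moreover $t \succeq q$ then $v \succeq q$ by transitivity of $\succeq$, so $s \succ q$, and symmetrically $q \succeq s \succ t$ yields $q \succ t$. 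In particular $\succ$ is transitive (using $\succ\;\subseteq\;\succeq$), and once $\succ$ is shown well-founded, irreflexivity and anti-symmetry follow, since $a \succ a$ would give an infinite descending chain and $a \succ b \succ a$ would give $a \succ a$. Finally, for orientation: if $\csucc{\ell}{r}{\varphi} \in \reqs$ and $\gamma$ is a ground substitution respecting $\ell,r,\varphi$, then $\ell\gamma \arrztop{\reqs_\succ} r\gamma$ by \autoref{def:arrztop}, so $\ell\gamma \succeq \ell\gamma \arrztop{\reqs_\succ} r\gamma \succeq r\gamma$, i.e.\ $\ell\gamma \succ r\gamma$; and $\csucceq{\ell}{r}{\varphi} \in \reqs$ directly gives the $\succeq$-step $\ell\gamma \arrztop{\reqs_\succeq} r\gamma$.

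It remains to prove $\succ$ well-founded, and this is where the work lies. Assume it is not; then there is an infinite descending $\succ$-chain, and unfolding it through the definitions of $\succ$ and $\succeq$ produces an infinite sequence of ground terms linked by steps from $\arrztop{\reqs_\succ}$, $\arrztop{\reqs_\succeq}$, $\arrz_\rules$ and $\supterm$, containing infinitely many $\arrztop{\reqs_\succ}$-steps and with only finitely many steps between consecutive $\arrztop{\reqs_\succ}$-steps. From such a sequence I would build a strong chain, contradicting the hypothesis. The construction adapts the classical argument that an infinite (suitably marked) reduction yields an infinite dynamic-dependency-pair chain \cite{kop:raa:12}: one tracks a ``focused'' candidate subterm along the sequence. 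An $\arrz_\rules$-step strictly below the focus stays an $\arrz_\rules$-step; an $\arrz_\rules$-step at the head of the focus, a top application of a $\reqs_\succeq$-rule, or a top application of a $\reqs_\succ$-rule, is matched to a step of $\rules'$, $\reqs_\succeq$, or $\reqs_\succ$ respectively, followed by a descent into the candidate subterm of the right-hand side that carries the next top activity; by the definitions of $\Pweak$ and $\Pstrong$ such a composite step is an $\arrztop{\Pweak}$-step, except when the rule came from $\reqs_\succ$, in which case it is an $\arrztop{\Pstrong}$-step. The $\supterm$-steps of the original sequence are absorbed using $\supterm\cdot\arrz_\rules\;\subseteq\;\arrz_\rules\cdot\supterm$, transitivity of $\supterm$, and the fact that the ``$r \supterm p$'' clauses of $\Pstrong$ encode precisely the admissible descents into proper candidate subterms of right-hand sides. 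Since the original sequence has infinitely many $\arrztop{\reqs_\succ}$-steps, the resulting chain has infinitely many $\Pstrong$-steps and finite blocks between them, so it is a strong chain.

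The main obstacle is exactly this transformation. The delicate points are: (a) keeping the focus on a candidate term, which is subtle because activity may move into a substituted (possibly higher-order) variable --- here the ``$f \in \Var$, $n > 0$'' clause of the candidate definition, together with the hypothesis that left-hand sides in $\reqs$ are headed by symbols of $\Sig$, is what keeps the bookkeeping consistent; (b) matching rewrite steps at arbitrary positions to the padded set $\rules'$ (and a similar treatment of $\calcrules$-steps); and (c) checking that within each finite block the collected $\supterm$-steps can be commuted past the $\arrz_\rules$-steps without creating new top activity. Everything else is mechanical.
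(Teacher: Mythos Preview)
Your routine verifications (quasi-order, compatibilities, inclusion of $\arrz_\rules$ and $\supterm$ in $\succeq$, orientation of $\reqs$) are correct and match the paper. The substantive issue is the well-foundedness argument, and there your sketch has a genuine gap.

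The paper does \emph{not} argue by commuting $\supterm$ past $\arrz_\rules$. Instead it uses the \emph{minimal non-terminating} (MNT) technique: a ground term is MNT if it heads an infinite $(\arrztop{\reqs_\succ\cup\reqs_\succeq}\cup\arrz_\rules\cup\supterm)$-sequence with infinitely many $\arrztop{\reqs_\succ}$-steps, but every strict subterm does not. The crucial observation is that if $a=\ell\gamma$ is MNT for a top triple $(\ell,r,\varphi)$, then every $\gamma(x)$ with $x\in\Vars{\ell}$ is terminating (it is a subterm of some terminating $\ell_i\gamma$). Hence the non-termination of $r\gamma$ must be carried by a \emph{non-variable} subterm $p$ of $r$ whose instance $p\gamma$ is again MNT and (by a further observation) a candidate. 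Moreover, from an MNT term a $\supterm$-step is impossible before the next top step --- it would land on a terminating strict subterm --- so between consecutive top steps only inner $\arrz_\rules$-steps remain. This is what manufactures the strong chain.

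Your commutation plan does not supply this argument. After pushing the $\supterm$-steps to the right you are left with a residual block $t'\supterm^{+}s$ immediately preceding a top step, and you want to realise it as an ``$r\supterm p$'' descent built into some pair of $\Pstrong$ (or $\Pweak$). But $s$ may lie entirely inside $\gamma(x)$ for a base-type variable $x$ occurring in $r$; then no non-variable subterm $p$ of $r$ satisfies $p\gamma=s$ (or even has $p\gamma$ a candidate above $s$), and no dependency pair accounts for the descent. Your point~(a) invokes the ``$f\in\Var$, $n>0$'' clause of the candidate definition, but that clause only covers \emph{applied} higher-order variable subterms of $r$; it does nothing for unapplied base-type variables, and it does not prevent descent \emph{below} an applied variable into the substituted term itself. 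What actually rules these cases out is the termination of every $\gamma(x)$, which one only obtains via the MNT bookkeeping (or an equivalent minimality device) that your sketch omits. Citing \cite{kop:raa:12} is the right pointer, but the argument there is precisely the MNT one, not a commutation-and-absorb construction.
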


\begin{proof}
Clearly $\succeq$ is transitive and reflexive, as it is the
transitive-reflexive closure of the relation $\succeq_{\text{base}} :=
\arrztop{\reqs_\succ \cup \reqs_\succeq} \cup \arrz_\rules \cup \supterm$.
We have $\succ\;\subseteq\;\succeq$ because $\arrztop{\reqs_\succ}\;
\subseteq\;\succeq_{\text{base}}\;\subseteq\;\succeq$ and $\succeq$ is
transitive, while transitivity of $\succ$ follows because $a \succeq \cdot
\arrztop{\reqs_\succ} \cdot \succeq b \succeq \cdot \arrztop{\reqs_\succ}
\cdot \succeq c$ implies $a \succeq \cdot \arrztop{\Q_1} \cdot \succeq
\cdot \succeq \cdot \succeq_{\text{base}} \cdot \succeq c$ and $\succeq$ is
transitive.  The properties $\succ \cdot \succeq\;\subseteq\;\succeq$ and
$\succeq \cdot \succ\;\subseteq\;\succeq$ follow similarly.
It is also obvious that all requirements in \( \reqs \) are oriented.
All that remains to be shown is well-foundedness of \( \succ \).

Towards a contradiction, suppose that \( \succ \) is not well-founded; that is,
there is a sequence of \( \arrztop{\reqs_\succ} \cup \arrztop{\reqs_\succeq}
\cup \arrz_\rules \cup \supterm \) steps in which \( \arrztop{\reqs_\succ} \)
appears infinitely often.
We define that a term \( s \) is \emph{non-terminating} if there is an
infinite such sequence starting in \( s \), \emph{terminating} if there is
not, and \emph{minimal non-terminating} (MNT) if \( s \) is non-terminating
but all its strict subterms terminate; i.e., \( t \) is terminating whenever
\( s \supterm t \).

Since $\succeq$ includes both $\supterm$, we first make the following important
observation:

\emph{(OBS1)} if $u$ is terminating and $u \supterm q$ then also $q$ is
terminating.

Using this property along with an induction on the size of $s$, we easily
observe:

\emph{(OBS2) if $s\gamma$ is non-terminating but for all $x \in \Vars{s}$ the
term $\gamma(x)$ is terminating, then there is a non-variable term $t$ with $s
\suptermeq t$ such that $t\gamma$ is minimal non-terminating.}

If an MNT term $s = f\ u_1 \cdots u_n$ is reduced with $\arrztop{\reqs_\succ
\cup \reqs_\succeq}$ then $(f,n) \in \Defs$.  If it is reduced with
$\arrz_\rules$ at the top or head, then it could be reduced with
$\arrztop{\rules'}$, so also $(f,n) \in \Defs$.  If $s \supterm t$ then by
definition $t$ is no longer MNT.  And if $s \arrz_\rules t$ by some reduction
not at the head, then $t = f\ u_1 \cdots u_i' \cdots u_n$ with $u_i \arrz_\rules
u_i'$; so $t$ has the same outer shape, and is terminating or MNT.
This last point holds because $u_i'$ and each $u_j$ are terminating, as is
$f\ u_1 \cdots u_i' \cdots u_{n-1}$ (being equal to, or a reduct of, the
terminating subterm $f\ u_1 \cdots u_{n-1}$ of $s$).
Hence we conclude:

\emph{(OBS3) if $s = f\ u_1 \cdots u_n$ is MNT then $(f,n) \in \Defs$}

Finally, again using (OBS1) we have:

\emph{(OBS4) for $(\afun\ \ell_1 \cdots \ell_k,r,\varphi) \in \reqs_\succ
\cup \reqs_\succeq \cup \rules^\prime$ and substitution $\gamma$: if
$\ell_i\gamma$ is terminating, then so is each $\gamma(\avar)$ with $\avar
\in \Vars{\ell_i}$.}

\medskip
Towards obtaining the required contradiction, we will construct a strong
chain.  From the assumption that \( \succ \) is not well-founded, we know that
a non-terminating term exists, and use (OBS2) with $\gamma = []$ to obtain a
MNT term $t_0$.
In the following, denote $\arrzin{\rules}$ for a step \emph{not} at the top or
head of a term, i.e., $f\ s_1 \cdots s_j \cdots s_n \arrzin{\rules} f\ s_1
\cdots s_j' \cdots s_n$ if $s_j \arrz_\rules s_j'$.

Now, for natural number $i$, assume given a minimal non-terminating term $t_i$.
Then, denoting $\leadsto$ for the relation $(\arrztop{\reqs_\succeq} \cup
\arrz_\rules \cup \supterm)$, non-termination of $t_i$ implies that there is a
reduction $t_i = a_1 \leadsto a_2 \leadsto \dots \leadsto a_m
\arrztop{\reqs_\succ} b$, with $b$ still non-terminating.
We prove, by induction on $m-j$ for $j \in \{1,\dots,m\}$, that if $a_j$ is MNT
then there exist MNT terms  $s_{i+1}$ and $t_{i+1}$ such that $a_j\ 
(\arrzin{\rules} \cup \arrztop{\Pweak})^*\ s_{i+1}$ and $s_{i+1}
\arrztop{\Pstrong} t_{i+1}$.
\begin{itemize}[label=$\triangleright$]
\item If $j = m$ then $a_j = \ell\gamma$ for some $(\ell,r,\varphi) \in
  \reqs_\succ$; write $\ell = \afun\ \ell_1 \cdots \ell_k$.  Then $(\afun,k)
  \in \Defs$, and each $\ell_n\gamma$ is terminating by minimality.  By (OBS4),
  also all $\gamma(\avar)$ are terminating.  As $b = r\gamma$, by (OBS2),
  there is a non-variable subterm $r \suptermeq p$ such that $p\gamma$ is MNT.
  By (OBS3), $p\gamma$ is a candidate, and because $p$ is not itself a
  variable, this means $p$ must be a candidate.  Hence, we can let $s_{i+1} :=
  a_j = \ell\gamma$ and $t_{i+1} := p\gamma$.
\item If $j < m$ and $a_j \supterm a_{j+1}$, then $a_j' \suptermeq a_{j+1}$ for
  some immediate argument term $a_j'$ of $a_j$, and therefore $a_{j+1}$ would be
  terminating by definition of $a_j$ being MNT.  This is not possible as
  $a_{j+1}$ reduces to the non-terminating term $b$.
\item If $j < m$ and $a_j \arrz_\rules a_{j+1}$ by a step \emph{not} at the
  head, then we can write $a_j = \afun\ u_1 \cdots u_n \cdots u_m$ and
  $a_{j+1} = \afun\ u_1 \cdots u_n' \cdots u_m$.
  All arguments $u_l$ are terminating, and therefore so is $u_n'$.
  Also, $\afun\ u_1 \cdots u_{m-1}$ is terminating (as a strict subterm of
  $a_j$), so if $n' < m$ then $\afun\ u_1 \cdots u_n' \cdots u_{m-1}$ is
  terminating as well.  Hence, all strict subterms of $a_{j+1}$ are
  terminating, and $a_{j+1}$ is still MNT.
  Hence, by the induction hypothesis, $a_j \arrzin{\rules} a_{j+1}\ 
  (\arrzin{\rules} \cup \arrztop{\Pweak})^*\ s_{i+1}$ and $s_{i+1}
  \arrztop{\Pstrong} t_{i+1}$ for suitable $s_{i+1},t_{i+1}$.
\item Finally, if $j < m$ and either $a_j \arrz_\rules a_{j+1}$ at the head, or
  $a_j \arrztop{\reqs_\succeq} a_{j+1}$, then there is some $(\ell,r,\varphi)
  \in \rules' \cup \reqs_\succeq$ such that $a_j = \ell\gamma$ and $a_{j+1} =
  r\gamma$ and $\gamma$ respects $\varphi$.  Then by (OBS4) and minimality of
  $a_j$, all $\gamma(x)$ are terminating.  Since $a_{j+1}$ is necessarily
  non-terminating (as it reduces to the non-terminating term $b$), clearly
  $r$ is not a variable.  By (OBS2) there is a non-variable term $p$ with $r
  \suptermeq p$ such that $p\gamma$ is MNT.  By (OBS3) and the fact that $p$
  is not a variable, $p$ is a candidate.  There are two options:
  \begin{itemize}
  \item If $r = p$, then $(\ell,r,\varphi) \in \Pweak$.  Hence, using the
    induction hypothesis, $a_j \arrztop{\Pweak} a_{j+1}\ (\arrzin{\rules} \cup
    \arrztop{\Pweak})^*\ s_{i+1}$ and $s_{i+1} \arrztop{\Pstrong} t_{i+1}$ for
    some MNT $s_{i+1},t_{i+1}$.
  \item If $r \supterm p$, then $(\ell,p,\varphi) \in \Pstrong$.  Hence, we let
    $s_{i+1} := \ell\gamma$ and $t_{i+1} := p\gamma$.
  \end{itemize}
\end{itemize}
In particular (for $j = 1$), $t_i\ (\arrz_\rules \cup \arrztop{\Pweak})^*
s_{i+1}$ and $s_{i+1} \arrztop{\Pstrong} t_{i+1}$.  We have thus constructed a
strong chain, which contradicts the lemma's assertion that no strong chain
exists.  We conclude a contradiction with the assumption that $\succ$ is not
well-founded.
\end{proof}

Overall, if all elements of $\reqs$ have a shape $\csucc{\afun\ s_1 \cdots s_n}{
t}{\varphi}$ or $\csucceq{\afun\ s_1 \cdots s_n}{t}{\varphi}$ with $\afun$ not a
variable, then we know that a suitable bounding pair for $\reqs$ exists so long
as we can prove the absence of a strong chain.

If the elements of $\reqs$ do not all have this form, we may still be able to
apply a similar approach (by changing $\reqs$ so the original requirements are
captured by the pair $(\succ,\succeq)$ defined in
\autoref{lem:strongchain}), but the techniques to do so are beyond the scope of
this section.

\paragraph*{Reduction triples}
To directly investigate the presence of strong chains we can use an extension of
the notion of reduction ordering:

\newcommand{\redgr}{>}
\newcommand{\redgeq}{\geq}
\newcommand{\redsim}{\gtrsim}

\begin{cor}\label{cor:redtriple}
A reduction triple is a triple $(\redgr,\redsim,\redgeq)$ of three relations such that:
\begin{itemize}
\item $\redgr$ is a well-founded partial ordering on the set of terms
\item $\redsim$ is a quasi-ordering on the set of terms, such that $s \redgr t \redsim u$ implies
  $s \redgr u$
\item $\redgeq$ is a monotonic quasi-ordering on the set of terms, such that $s \redgeq t$ implies
  $s \redsim t$
\end{itemize}
There exists no infinite strong chain if we can find a reduction triple
$(\redgr,\redsim,\redgeq)$ such that:
\begin{itemize}
\item for all $(\ell,p,\varphi) \in \Pstrong$ and all $\gamma$ that respect $\varphi$:
  $\ell\gamma \redgr p\gamma$
\item for all $(\ell,p,\varphi) \in \Pweak$ and all $\gamma$ that respect $\varphi$:
  $\ell\gamma \redsim p\gamma$
\item for all $\ell \arrz r\ [\varphi] \in \rules$ and all $\gamma$ that respect $\varphi$:
  $\ell\gamma \redgeq r\gamma$
\end{itemize}
\end{cor}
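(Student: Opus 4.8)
The plan is to assume, towards a contradiction, that a reduction triple $(\redgr,\redsim,\redgeq)$ satisfying the three compatibility conditions exists \emph{and} that there is an infinite strong chain
\[
s_0 \arrztop{\Pstrong} t_0\ (\arrz_\rules \cup \arrztop{\Pweak})^*\ s_1 \arrztop{\Pstrong} t_1\ (\arrz_\rules \cup \arrztop{\Pweak})^*\ s_2 \arrztop{\Pstrong} \cdots,
\]
and to extract from it an infinite $\redgr$-descending sequence, contradicting well-foundedness of $\redgr$.

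First I would read off, for each kind of step in the chain, the corresponding relation of the reduction triple. Unfolding \autoref{def:arrztop}, a step $u \arrztop{\Pstrong} v$ means $u = \ell\gamma$ and $v = p\gamma$ for some $(\ell,p,\varphi) \in \Pstrong$ and some $\gamma$ respecting $\varphi$, so the first condition gives $u \redgr v$; likewise $u \arrztop{\Pweak} v$ gives $u \redsim v$ by the second condition (both are root steps, so no context is in play). For a step $u \arrz_\rules v$, write $u = C[\ell\gamma]$, $v = C[r\gamma]$ with $\ell \to r\ [\varphi] \in \rules$ (handling $\calcrules$ as well --- see below) and $\gamma$ respecting $\varphi$; the third condition gives $\ell\gamma \redgeq r\gamma$, monotonicity of $\redgeq$ lifts this to $C[\ell\gamma] \redgeq C[r\gamma]$, and $\redgeq\;\subseteq\;\redsim$ then gives $u \redsim v$.

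Next I would chain these observations together. Since $\redsim$ is a quasi-ordering, the sub-sequence $t_i\ (\arrz_\rules \cup \arrztop{\Pweak})^*\ s_{i+1}$ collapses to $t_i \redsim s_{i+1}$ for every $i$; combined with $s_i \redgr t_i$ and the compatibility property $s \redgr t \redsim u \Rightarrow s \redgr u$ of a reduction triple, this yields $s_i \redgr s_{i+1}$ for all $i$. Hence $s_0 \redgr s_1 \redgr s_2 \redgr \cdots$ is an infinite $\redgr$-descending sequence, which is impossible since $\redgr$ is well-founded. This contradiction establishes that no infinite strong chain can exist.

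The argument is short and largely mechanical; the two points that need care are (a) getting \autoref{def:arrztop} right so that $\Pstrong$- and $\Pweak$-steps are treated as root steps while only the $\arrz_\rules$-steps require monotonicity to pass the triple relation through the context $C$, and (b) the mismatch between $\arrz_\rules$, which by definition also admits rules from $\calcrules$, and the stated compatibility condition, which only mentions $\rules$: for the interpretation-based reduction triples used in practice the calculation rules are automatically compatible with $\redgeq$ (they preserve the interpretation), but this should either be stated explicitly or $\calcrules$ should be folded into the condition.
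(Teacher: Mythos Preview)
Your proof is correct and is exactly the standard argument one would expect; the paper states this result as a corollary without proof, and your write-up is the intended straightforward derivation. Your observation about $\calcrules$ is a genuine technical gap in the statement (the third bullet should also cover calculation rules, or $\rules$ should be read as $\rules \cup \calcrules$ here), but it does not affect the soundness of your argument.
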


The advantage of a reduction triple compared to a reduction ordering is that the
monotonicity requirement is replaced by \emph{weak} monotonicity: only $\geq$
needs to be monotonic.  It also allows us to more easily take advantage of the
difference between $\succ$ and $\succeq$ requirements.

Reduction triples can be constructed using for instance argument filterings or
weakly monotonic algebras.  While there has not yet been extensive research in
this direction for higher-order constrained rewriting, this is a natural
direction for future work.

\section{Obtaining new deduction rules more easily}\label{sec:newDedEasy}

In \autoref{sec:proofs}, we set out to prove soundness and completeness of
Bounded RI in a modular way: rather than merely proving that the system as given
in \autoref{fig:boundedRIrules} satisfies \autoref{theorem:soundnessRI}, we have
hown the stronger \autoref{theorem:boundedGroundConvertibilityRI}.  This means
that any extension of the system with derivation rules that satisfy both the
Preserving Bounds Property (\autoref{def:boundPreserve}) and Base Soundness
Property (\autoref{def:BaseSound}), is still sound; and if these rules satisfy
the Completeness Property (\autoref{def:complete}) they can also be used in a
non-equivalence proof.

Let us discuss two naturally arising situations for which we can exploit this approach to introduce a new deduction rule, and establish its soundness relatively easy. 

\paragraph{Calculations}
In the definition of Bounded RI, we follow the approach of~\cite{hag:kop:24} for
the definition of the \simplify\ (and \hypothesis) rules: these rules are much
more basic than the corresponding rule in the first-order RI definition for
LCTRSs~\cite{fuh:kop:nis:17} which uses a more sophisticated definition of
``constrained reduction''.  Essentially, a slightly weaker version of the
\alter\ step is included in the \simplify\ rule of~\cite{fuh:kop:nis:17}.

The price for this simplicity, however, is that we often require extra steps,
especially when it comes to calculation rules.  For example, in
\autoref{sec:example} we encountered the equation (for the sake of clarity we
omit the bounding terms of the equation context):
\[
f\ i\ (\recdown\ f\ n\ (i-\symb{1})\ a) 
\approx 
\tailup \ f\ (n+\symb{1})\ i\ (f\ n\ a)\ 
[i \ge n]
\]
Here, we first had to apply \alter\ to obtain the equation
\[
f\ i\ (\recdown\ f\ n\ (i-\symb{1})\ a) 
\approx 
\tailup \ f\ (n+\symb{1})\ i\ (f\ n\ a)\ 
[i' = i - \symb{1} \wedge n' = n + \symb{1} \wedge i \ge n]
\]
before we could apply two \simplify\ steps, using calculation rules $i-\symb{1} \to i'\ [i' = i-\symb{1}]$ and $n+\symb{1} \to n'\ [n' = n+\symb{1}]$,
turning the equation into 
\[
f\ i\ (\recdown\ f\ n\ i'\ a)
\approx 
\tailup \ f\ n'\ i\ (f\ n\ a)\ 
[i' = i - \symb{1} \wedge n' = n + \symb{1} \wedge i \ge n]
\]
With constrained reductions, we could do the \simplify\ steps directly to obtain
the same result.
Fortunately, we can regain this facility by observing that the following deduction rule
\\ \\
\textbf{(Calc)}
\newcommand{\RULEcalc}{
  \DEDUCRULE{(\eqs \uplus \{ \eqcon{\sterm}{C[s_1, \ldots, s_n]}{t}{\tterm}{\psi} \}, \hs)}
            {\begin{aligned}
            &n > 0 \\
            &s_1, \ldots, s_n \text{ theory terms}\\
            &x_1, \ldots, x_n \text{ distinct fresh}\\
            &\text{variables}
        \end{aligned}}
            {(\eqs \cup \{ \eqcon{\sterm}{C[x_1, \ldots, x_n]}{t}{\tterm}{\psi \! \wedge \! (x_1 = s_1) \! \wedge \! \ldots \! \wedge \! (x_n = s_n)} \}, \hs)}
}
\RULEcalc
\ \\ \\
can be considered as a shortcut for an \alter\ step, followed by some \simplify\ steps, possibly followed again by some \alter\ steps. (Provided all variables in
each $s_i$ are in $\Vars{\psi}$, or theory sorts are inextensible.)
Therefore, it immediately satisfies the two invariants. 

\begin{exa}
We may use \calc\ to replace an equation \( \afun\ (x+(y+\symb{1})) \approx t \)
by \( \afun\ z \approx t\ [z=x+(y+\symb{1})] \). To do so without \calc\ would
require four steps:
\[
\begin{aligned}
&\textbf{(E0)}\quad 
&\afun\ (x+(y+\symb{1}))
\approx t 
\quad 
&& &
\\
&\textbf{(E1)}\quad 
&\afun\ (x+(y+\symb{1}))
\approx t 
\quad 
&[q=y+\symb{1} \wedge z=x+q] 
&&\quad \text{by }\alter
\\
&\textbf{(E2)}\quad 
&\afun\ (x+q)
\approx t 
\quad 
&[q=y+\symb{1} \wedge z=x+q] 
&&\quad \text{by }\simplify
\\
&\textbf{(E3)}\quad 
&\afun\ z
\approx t 
\quad 
&[q=y+\symb{1} \wedge z=x+q] 
&&\quad \text{by }\simplify
\\
&\textbf{(E4)}\quad 
&\afun\ z
\approx t 
\quad 
&[z=x+(y+\symb{1})] 
&&\quad \text{by }\alter
\end{aligned}
\] 
This cannot be done with only one \alter\ step because the rules in $\calcrules$
do not have composite terms on their right-hand sides; there is no rule
$x+(y+1) \to z\ [z=x+(y+\symb{1})]$.
\end{exa}

\paragraph{Axioms}
We can use the two properties to go beyond just composite rules, though.
A particular example, from a rule that is implemented in
\emph{incremental Rewriting Induction (iRI)}~\cite[Section 5]{aot:06}, is the
use of \emph{axioms}: equations that are known to be consequences of $\rules$,
but for which we may not have a proof using Bounded RI.
Let $\mathcal{A}$ be a set of such axioms.
We distinguish the following two cases:

\begin{itemize}[label = $\triangleright$]
\item $\rules$ is ground confluent, and all equations in \( \mathcal{A} \) are
  ground convertible; that is, \( s\gamma \leftrightarrow_\rules^* t\gamma \)
  for all \( s \approx t\ [\varphi] \in \mathcal{A} \) and ground substitutions
  \( \gamma \) that respect the equation.
\item $\mathcal{A}$ is a set of equations such that every equation context in
  $\eqs_{\mathcal{A}} = \{ \eqcon{\ell}{\ell}{r}{r}{\varphi}\mid \ell \approx
  r\ [\varphi]\in \mathcal{A}\}$ is $\rules/\emptyset/\emptyset$-bounded ground
  convertible with respect to the \emph{same} ordering pair $(\succ, \succeq)$
  that is used in the RI derivation under consideration.
\end{itemize}

Either way, we introduce the following deduction rule: \\\ 

\newcommand{\RULEaxiom}{
  \DEDUCRULE{(\eqs \uplus \{\eqconsim{\sterm}{C[\ell\delta]}{t}{\tterm}{\psi}\}, \hs)}
            {
            \begin{aligned}
            &
            \ell \simeq r\ [\varphi]\in \mathcal{A} \text{ and } \psi \models^\delta \varphi\\
            &\sterm \succeq C[r \delta]\ [\psi]
            \end{aligned} 
            }
            {(\eqs \cup \{\eqcon{\sterm}{C[r \delta]}{t}{\tterm}{\psi}\}, \hs)}
}
\textbf{(Axiom)}
\RULEaxiom

\begin{lem}
The \axiom\ rule preserves bounds.
\end{lem}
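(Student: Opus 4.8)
The plan is to unfold the definition of \textbf{(Axiom)} and the definition of \emph{preserving bounds}, and verify the boundedness invariant directly, exactly as was done for \hypothesis\ in the proof of \autoref{lem:boundedPreserve}. Concretely, suppose $(\eqs,\hs) \vdash (\eqs',\hs')$ by \axiom, so that $\eqs = \eqs_0 \uplus \{\eqconsim{\sterm}{C[\ell\delta]}{t}{\tterm}{\psi}\}$ and $\eqs' = \eqs_0 \cup \{\eqcon{\sterm}{C[r\delta]}{t}{\tterm}{\psi}\}$, with $\ell \simeq r\ [\varphi] \in \mathcal{A}$, $\psi \models^\delta \varphi$, and the side condition $\csucceq{\sterm}{C[r\delta]}{\psi}$. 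Assume every equation context in $\eqs$ is bounded. Every equation context in $\eqs' \cap \eqs = \eqs_0$ is bounded by hypothesis, so only the fresh context $\eqcon{\sterm}{C[r\delta]}{t}{\tterm}{\psi}$ needs checking.

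For that context I must show both $\csucceq{\sterm}{C[r\delta]}{\psi}$ and $\csucceq{\tterm}{t}{\psi}$. The first is precisely the side condition of the deduction rule, so it holds by assumption. For the second, I use that the input context was bounded: since $\eqconsim{\sterm}{C[\ell\delta]}{t}{\tterm}{\psi}$ is a bounded equation context, we have $\csucceq{\tterm}{t}{\psi}$ directly (and symmetrically $\csucceq{\sterm}{C[\ell\delta]}{\psi}$, though that is not what we need here). One subtlety to note: the notation $\eqconsim{\sterm}{C[\ell\delta]}{t}{\tterm}{\psi}$ abbreviates either orientation, but in either reading of $\simeq$ the bound associated with the side $t$ is $\tterm$, so boundedness of the input gives $\csucceq{\tterm}{t}{\psi}$ regardless. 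Hence the fresh equation context is bounded, and therefore so is every context in $\eqs'$.

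There is essentially no main obstacle here: unlike \simplify, where one needs the inclusion $\rw \subseteq \succeq$ together with transitivity of $\succeq$ to propagate the bound across the rewrite step, the \axiom\ rule bakes the required inequality $\csucceq{\sterm}{C[r\delta]}{\psi}$ directly into its premise. So the proof is a one-line appeal to the side condition for the $\sterm$-side and a one-line appeal to boundedness of the input for the $\tterm$-side. The only thing to be careful about is stating clearly that this argument does not depend on which of the two sets $\mathcal{A}$ considered (the ground-confluent case or the bounded-ground-convertible case) we are in — the preservation-of-bounds property is purely syntactic in the sense that it only concerns the shape of the resulting equation context and the side condition, not the semantic justification for the axioms. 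Thus the proof should be presented uniformly, mirroring the \hypothesis\ bullet of \autoref{lem:boundedPreserve} almost verbatim.

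\begin{proof}
Suppose \( (\eqs,\hs) \vdash (\eqs',\hs') \) by \axiom\ and all equation
contexts in \( \eqs \) are bounded.  Then \( \hs' = \hs \), and we may write
\( \eqs = \eqs_0 \uplus \{\eqconsim{\sterm}{C[\ell\delta]}{t}{\tterm}{\psi}\} \)
and \( \eqs' = \eqs_0 \cup \{\eqcon{\sterm}{C[r\delta]}{t}{\tterm}{\psi}\} \),
for some \( \ell \simeq r\ [\varphi] \in \mathcal{A} \) with \( \psi
\models^\delta \varphi \) and \( \csucceq{\sterm}{C[r\delta]}{\psi} \).
All equation contexts in \( \eqs' \cap \eqs = \eqs_0 \) are bounded by
assumption, so it suffices to show that the fresh equation context
\( \eqcon{\sterm}{C[r\delta]}{t}{\tterm}{\psi} \) is bounded as well; that is,
that both \( \csucceq{\sterm}{C[r\delta]}{\psi} \) and \( \csucceq{\tterm}{t}{
\psi} \) hold.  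The former is exactly a premise of the deduction rule.  For the
latter, note that the input equation context
\( \eqconsim{\sterm}{C[\ell\delta]}{t}{\tterm}{\psi} \) is bounded by
assumption, so in particular \( \csucceq{\tterm}{t}{\psi} \).  Hence the fresh
equation context is bounded, and therefore so is every equation context in
\( \eqs' \).
\end{proof}
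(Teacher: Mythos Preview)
Your proof is correct and takes essentially the same approach as the paper's own proof: use the side condition $\csucceq{\sterm}{C[r\delta]}{\psi}$ for the $\sterm$-side and boundedness of the input equation context for the $\tterm$-side. The paper's version is simply terser, stating these two observations in two sentences.
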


\begin{proof}
If $\eqconsim{\sterm}{C[\ell\delta]}{t}{\tterm}{\psi}$ is a bounded equation
context, then $\tterm \succeq t\ [\psi]$.  Since the derivation rule requires
$\sterm \succeq C[r \delta]\ [\psi]$, also 
$\eqcon{\sterm}{C[r\delta]}{t}{\tterm}{\psi}$ is bounded.
\end{proof}

\begin{lem}
\axiom\ satisfies the Base Soundness Property.
\end{lem}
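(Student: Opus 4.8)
The plan is to follow the template of the Base Soundness proofs for \simplify\ and \hypothesis\ (Lemmas~\ref{lem:simplifybase} and~\ref{lem:hypothesisbase}): an \axiom\ step behaves like a packaged sequence of rule--conversion steps, all of which lie below a single bound. Write \( \eqs = \eqs_0 \uplus \{\eqconsim{\sterm}{C[\ell\delta]}{t}{\tterm}{\psi}\} \) and \( \eqs' = \eqs_0 \cup \{\eqcon{\sterm}{C[r\delta]}{t}{\tterm}{\psi}\} \), where \( \ell \simeq r\ [\varphi] \in \mathcal{A} \), \( \psi \models^\delta \varphi \) and \( \csucceq{\sterm}{C[r\delta]}{\psi} \). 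Since \axiom\ leaves \( \hs \) untouched, the first clause of \autoref{def:BaseSound} is immediate; by symmetry I treat only the orientation \( \eqcon{\sterm}{C[\ell\delta]}{t}{\tterm}{\psi} \). Fix a ground substitution \( \gamma \) that respects \( \psi \) (hence the whole equation context) and set \( \sigma := \delta\gamma \); exactly as in the proof of \autoref{lem:simplifybase}, \( \psi \models^\delta \varphi \) together with \( \gamma \) respecting \( \psi \) imply that \( \sigma \) respects \( \varphi \). It then remains to show \( C[\ell\delta]\gamma \bconvert{\sterm\gamma}{\tterm\gamma}{\rules}{\hs}{\eqs'} t\gamma \).

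The first substantial step is to produce a conversion \( \ell\sigma \bconvert{\ell\sigma}{r\sigma}{\rules}{\emptyset}{\emptyset} r\sigma \). In the second of the two situations distinguished just before the lemma (where \( \eqs_{\mathcal{A}} \) is \( \rules/\emptyset/\emptyset \)-bounded ground convertible with respect to the very pair \( (\succ,\succeq) \) in play), this is immediate: \( \eqcon{\ell}{\ell}{r}{r}{\varphi}\in \eqs_{\mathcal{A}} \) and \( \sigma \) respects \( \varphi \). In the first situation (\( \rules \) ground confluent, \( \mathcal{A} \) ground convertible) we instead have \( \ell\sigma \leftrightarrow_\rules^* r\sigma \) and, by ground confluence (as used in the proof of \autoref{prop:completenessBaseCase}), a common reduct \( w \) with \( \ell\sigma \rw^* w \) and \( r\sigma \rw^* w \); every step of the resulting conversion is a case~\ref{boundconvert:rule} step of \( \bconvertsingle{\ell\sigma}{r\sigma}{\rules}{\emptyset}{\emptyset} \), because each of the two distinguished subterms occurring in such a step is a subterm of a reduct of \( \ell\sigma \) or of \( r\sigma \) and hence \( \preceq \ell\sigma \) or \( \preceq r\sigma \) (using that \( \succeq \) contains \( \rw \) and \( \suptermeq \)). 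Now observe that \( \sterm\gamma \succeq C[\ell\delta]\gamma = C\gamma[\ell\sigma] \succeq \ell\sigma \) by boundedness of the input context, and \( \sterm\gamma \succeq C[r\delta]\gamma = C\gamma[r\sigma] \succeq r\sigma \) by the side condition \( \csucceq{\sterm}{C[r\delta]}{\psi} \). Consequently, in every case~\ref{boundconvert:rule} step of the conversion, the distinguished subterms \( p,q \) satisfy \( \ell\sigma \succeq p \) or \( r\sigma \succeq p \) (and likewise for \( q \)), so by transitivity \( \sterm\gamma \succeq p \) and \( \sterm\gamma \succeq q \); the same step is therefore also a valid case~\ref{boundconvert:rule} step of \( \bconvertsingle{\sterm\gamma}{\tterm\gamma}{\rules}{\hs}{\eqs'} \). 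Placing the whole conversion under the context \( C\gamma \) (\autoref{lem:convertContext}) gives \( C[\ell\delta]\gamma = C\gamma[\ell\sigma] \bconvert{\sterm\gamma}{\tterm\gamma}{\rules}{\hs}{\eqs'} C\gamma[r\sigma] = C[r\delta]\gamma \), and a final case~\ref{boundconvert:eqs} step via \( \eqcon{\sterm}{C[r\delta]}{t}{\tterm}{\psi}\in\eqs' \) (using \( \multiset{\sterm\gamma,\tterm\gamma}\succeqmul\multiset{\sterm\gamma,\tterm\gamma} \)) reaches \( t\gamma \), as required.

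The one point I expect to need care is the bound adjustment: one cannot simply invoke \autoref{lem:convertMul} to pass from the bounds \( \multiset{\ell\sigma,r\sigma} \) to \( \multiset{\sterm\gamma,\tterm\gamma} \), since we only know that the single term \( \sterm\gamma \) dominates both \( \ell\sigma \) and \( r\sigma \), and only via \( \succeq \), which is not enough to conclude \( \multiset{\sterm\gamma,\tterm\gamma}\succeqmul\multiset{\ell\sigma,r\sigma} \). The step-by-step argument above sidesteps this, relying on the elementary fact that a conversion built solely from case~\ref{boundconvert:rule} whose distinguished subterms all lie \( \preceq \) some fixed term \( c \) is automatically a \( \bconvert{c}{d}{\rules}{\hs}{\eqs'} \) conversion for any \( d \). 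The remaining details --- well-definedness of \( \sigma \) on \( \Vars{\ell,r} \) under the implicit well-formedness assumptions on the \axiom\ instance, the Church--Rosser consequence of ground confluence on ground terms, and the symmetric treatment of the \( \simeq \) orientation --- are routine and mirror the existing arguments in \autoref{subsec:boundedconvertibility}.
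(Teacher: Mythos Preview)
Your proof is correct and follows the same overall strategy as the paper: produce a rule-only conversion \( C[\ell\delta]\gamma \bconvert{\cdot}{\cdot}{\rules}{\emptyset}{\emptyset} C[r\delta]\gamma \) and then append a single case~\ref{boundconvert:eqs} step to \( t\gamma \).

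The one place where you and the paper diverge is the bound adjustment, and here your version is actually the more careful of the two. The paper first obtains \( \ell\delta\gamma \bconvert{\ell\delta\gamma}{r\delta\gamma}{\rules}{\emptyset}{\emptyset} r\delta\gamma \), lifts it under \( C\gamma \) via \autoref{lem:convertContext}, and then invokes \autoref{lem:convertMul} to raise the bounds to \( \multiset{\sterm\gamma,\tterm\gamma} \); for this it asserts \( \tterm\gamma \succeq C[r\delta]\gamma \succeq r\delta\gamma \) ``by definition of the rule''. But the rule's side condition is \( \csucceq{\sterm}{C[r\delta]}{\psi} \), not \( \csucceq{\tterm}{C[r\delta]}{\psi} \), so what is actually available is only \( \sterm\gamma \succeq \ell\delta\gamma \) and \( \sterm\gamma \succeq r\delta\gamma \), from which \( \multiset{\sterm\gamma,\tterm\gamma} \succeqmul \multiset{\ell\delta\gamma,r\delta\gamma} \) does \emph{not} follow under the paper's definition of \( \succeqmul \). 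Your step-by-step argument avoids this cleanly: every step in the conversion is of type~\ref{boundconvert:rule}, and that clause only asks that \emph{some} element of the bounding multiset dominate each distinguished subterm, so having \( \sterm\gamma \) dominate both \( \ell\delta\gamma \) and \( r\delta\gamma \) (hence, by transitivity, every distinguished subterm along the way) already suffices. Your explicit remark about why \autoref{lem:convertMul} is not directly applicable is exactly on point.
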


\begin{proof}
To see this, we will show that if
$\eqconsim{\sterm}{C[\ell\delta]}{t}{\tterm}{\psi}$ is a bounded equation
context and $\gamma$ a gsc substitution which respects it, then
\( C[\ell\delta] \gamma \bconvert{\sterm\gamma}{\tterm\gamma}{\rules}{\hs}{
\eqs'}
C[r\delta] \gamma \bconvertsingle{\sterm\gamma}{\tterm\gamma}{\rules}{\hs}{
\eqs'} t \gamma \) holds for any \( \eqs, \hs \) and
\( \eqs' = \eqs \cup \{\eqcon{\sterm}{C[r \delta]}{t}{\tterm}{\psi}\} \).

The latter step is obvious: by \autoref{def:boundconvert}.\ref{boundconvert:eqs}
we only have to show that $\{\sterm \gamma, \tterm \gamma\} \succeq_{\text{mul}}
\{\sterm \gamma, \tterm \gamma\}$, which is trivially true.  
So consider the former step.

In the first case, where \( \rules \) is gound confluent and
\( \rweq{\ell\chi}{r\chi}{\rules} \) for all ground substitutions \( \chi \)
that respect the equation, we have in particular that
\( \ell \delta \gamma \rw^* \cdot \rwleft^* r \delta \gamma\).
Since \( \rw \) is included in \( \succeq \) by definition of a bounding pair,
\( \ell\delta\gamma \bconvert{\ell\delta\gamma}{r\delta\gamma}{\rules}{
\emptyset}{\emptyset} r\delta\gamma \) holds by
\autoref{def:boundconvert}.\ref{boundconvert:rule}.
In the second case, where all equation contexts in $\eqs_{\mathcal{A}}$ are
\( \rules/\emptyset/\emptyset \)-bounded ground convertible, we have this
property by definition.
Either way, \( C[\ell\delta]\gamma \bconvert{\ell\delta\gamma}{r\delta\gamma}{
\rules}{\emptyset}{\emptyset} C[r\delta]\gamma \) holds by
\autoref{lem:convertContext}, and since this derivation can only use steps with
\( \rules \),
\( C[\ell\delta]\gamma \bconvert{\ell\delta\gamma}{r\delta\gamma}{\rules}{\hs}{
\eqs'} C[r\delta]\gamma \) follows regardless of \( \hs \) and \( \eqs' \).
Observe that \( \sterm\gamma \succeq C[\ell\delta]\gamma \succeq
\ell\delta\gamma \) since the original equation context is bounded and \(
\succeq \) includes \( \supterm \); and that \( \tterm\gamma \succeq C[r\delta]
\gamma \succeq r\delta\gamma \) by definition of the rule.  Hence, we apply
\autoref{lem:convertMul} to increase the bounds and obtain
\( C[\ell\delta]\gamma \bconvert{\sterm\gamma}{\tterm\gamma}{\rules}{\hs}{
\eqs'} C[r\delta]\gamma \).
\end{proof}

\section{Implementation}\label{sec:implementation}

We have implemented bounded rewriting induction in our tool \cora~\cite{cora} as
an interactive proving procedure.  Here, the user has to supply the proof steps,
while the tool keeps track of all equation contexts, ensures that deduction
rules are applied correctly, and searches for a bounding pair.
We choose the strategy from \autoref{sec:redord}: we maintain \emph{strongly}
bounded equation contexts by viewing each occurrence of
\( \csucceq{s}{t}{\psi} \) in \autoref{fig:boundedRIrules} as
\( s \succeq^! t\ [\psi] \), and use the results of
\cite{guo:hag:kop:val:24} to, essentially, prove termination of the LCSTRS
\( \reqs \cup \rules \).

Validity and satisfiability checks are delegated to an external SMT-solver (by
default \texttt{Z3}~\cite{z3}
, but the user can set other preferences),
on the theory of (first-order) integer arithmetic (\texttt{QF\_NIA})~\cite{smtlib}.  For the two
highly undecidable derivation rules \generalize\ and \alter, only limited
versions are available to handle the most common cases:
\begin{itemize}
\item for \generalize, \cora\ allows the user to change the constraint to an
  implied one, or to supply an equation that the current goal is an instance
  of;
\item for \alter, \cora\ allows the user to supply an equivalent constraint, or
  add definitions \( x = u \) into the constraint with \( x \) a fresh variable;
  in addition, case \ref{alter:substitute} combined with a \delete\ step can be
  accessed through the \texttt{eq-delete} command.
\end{itemize}

\begin{figure}[tp]
\includegraphics[width=\textwidth]{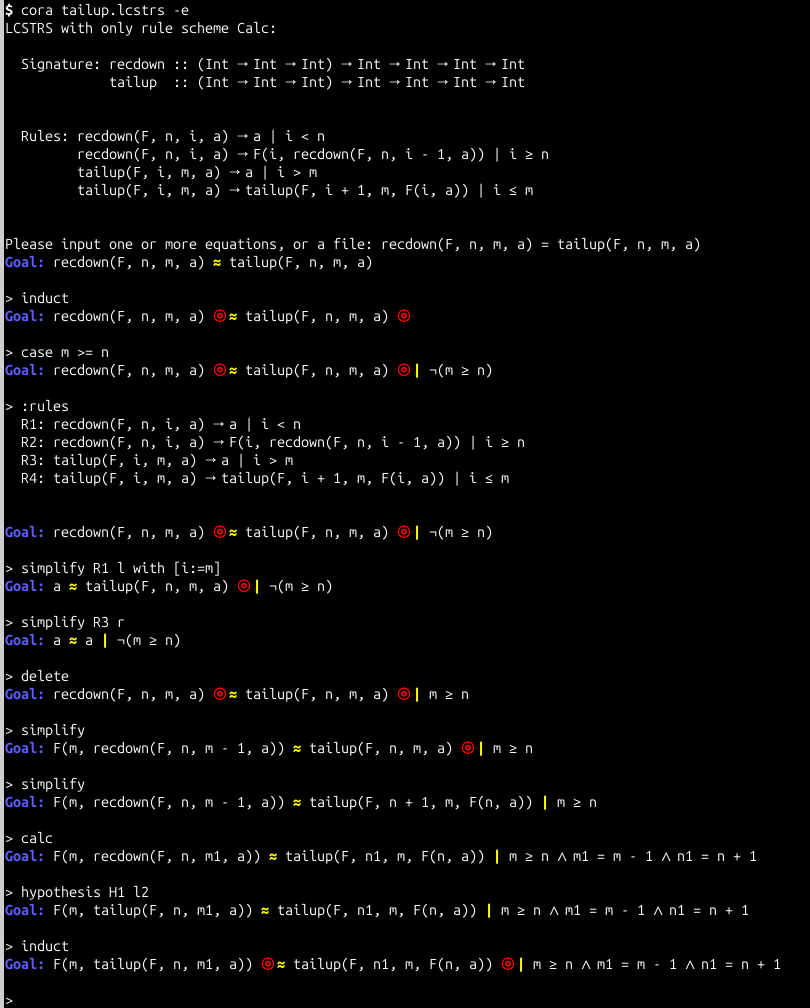}
\caption{Basic usage of \cora}
\label{fig:corabase}
\end{figure}

The use of \cora\ is illustrated in \autoref{fig:corabase}.  By default, \cora\ 
shows only the equation part of an equation context, coupled with a red
$\circledcirc$ if a side of the equation is the same as its bounding term.
(The full equation context can be queried using the command \texttt{:equations full}.)
The deduction rules are implemented through commands like \texttt{simplify} and
\texttt{delete}, and explained by a \texttt{:help} command. Termination checks
are done once the proof is complete, but can also be forced earlier by the user
executing \texttt{:check}.

While \cora\ does require user guidance, there is some automation.
As shown in \autoref{fig:corabase}, commands like \texttt{simplify} can be called
with more or less information: \texttt{simplify} without arguments rewrites with
an arbitrary rule at some innermost position, but the user can also supply the
rule, position and substitution to be used.
An \texttt{auto} command is supplied to automatically do simplify, calc, delete,
hdelete, eq-delete, disprove and semiconstructor steps as far as possible.
This use of automation is shown in \autoref{fig:coraauto}.

\begin{figure}[tp]
\includegraphics[width=\textwidth]{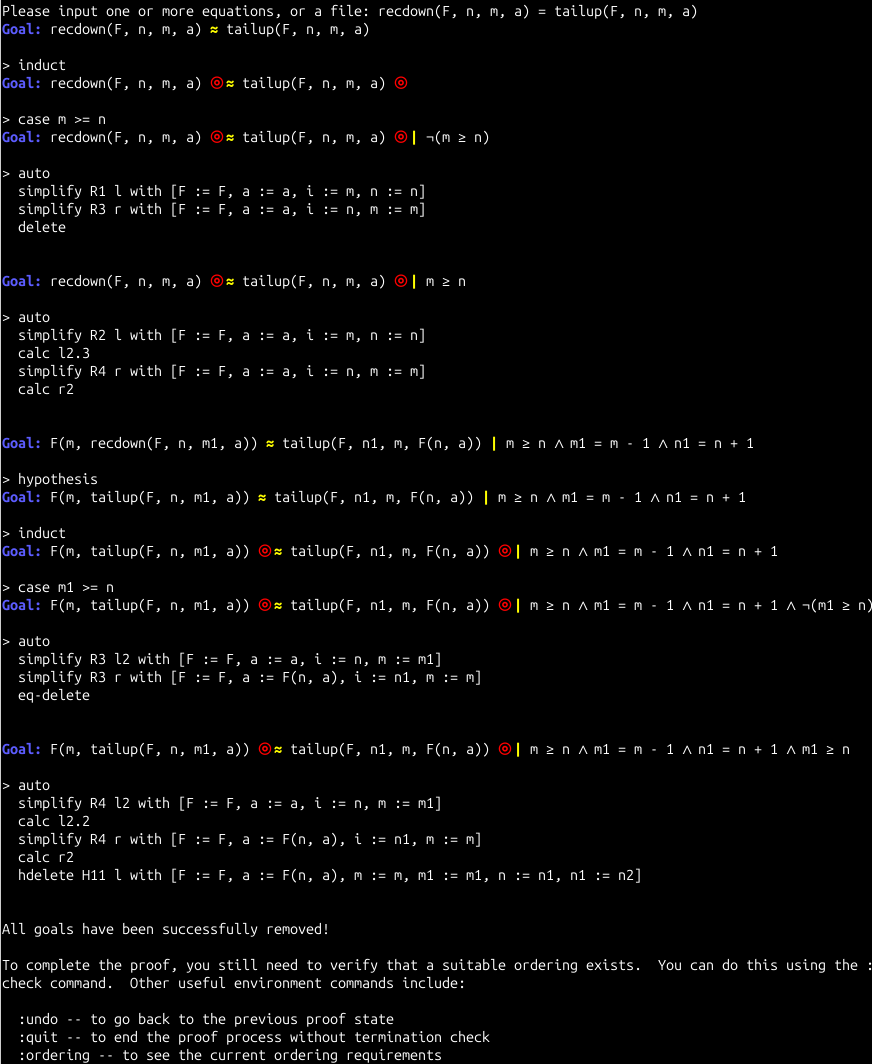}
\caption{Using the \texttt{auto} command in \cora}
\label{fig:coraauto}
\end{figure}

In the future, we hope to further automate the rewriting induction process.
However, a fully automatic process will need to include a powerful lemma
generation engine for the \postulate\ and \generalize\ commands, so this is a
large project.

An evaluation page with savefiles that provide full proofs for the inductive
theorems in this paper (including two proofs for inductive theorems in our
earlier work, which we will briefly discuss in \autoref{sec:conclusionFuture})
is provided at

\begin{center}
\url{https://cs.ru.nl/~cynthiakop/experiments/lmcs26/}
\end{center}

Here, you will also find a pre-compiled version of \cora, and instructions on
how to start the proof process.

\section{Comparison to related work}\label{sec:RelatedWork}
As briefly mentioned in \autoref{sec:RI}, improvements on the termination requirements for the basic RI system~\cite{red:90} were already introduced in~\cite{aot:06,aot:08a,aot:08b}. 
These works also explore ways to have more flexibility in the construction of a well-founded ordering $\succ$, either fixed beforehand (e.g., the lexicographic path ordering), or constructed
during or after the proof.
Essentially, they already employed a bounding pair $(\succ, \succeq)$, using a second (milder) ordering $\succeq$, to allow for reduction steps
with an induction hypothesis to be oriented with $\succeq$ rather than the default
$\succ$.
However, there are differences compared to our approach using equation contexts and bounding terms.  
In our case, we do not orient induction hypotheses themselves: we only require that a particular instance of the induction hypothesis is strictly dominated by the bounding terms of the equation context under consideration. 
Another notable difference is that their approach imposes more bureaucracy, since derivation rules rely on several steps being done at once, by reasoning \emph{modulo} a set of induction hypotheses.
This makes it quite hard to use especially when the relation $\succ$ is not fixed beforehand but
constructed on the fly.
Finally, a difference compared to our work is that we do not impose a ground totality requirement, allowing us, for instance, to choose $\succ \ = (\to_{A}
\cup \supterm)^+$, or to use a construction based on dependency pairs (see~\autoref{sec:howToFindOrdering}).
This is particularly important in higher-order rewriting, where very few
ground-total orderings exist.

\paragraph{\expand} 
Many of the existing RI systems in the literature include a deduction rule called
\expand, used to perform induction proofs.
At a first encounter, the \expand\ rule can be difficult to grasp, since it
combines many conceptual steps simultaneously — something that is also undesirable from the perspective of modularity.
This motivated us to decompose the rule, making the individual proof steps explicit, which in turn naturally led to the introduction of two new deduction
rules: \induct\ and \case.
In the setting of Bounded RI, \expand\ can be viewed as the consecutive
execution of three deduction rules:

\begin{enumerate}[(1). ]
\item \induct.
  Given a proof state $\mathcal{P}_0 = (\eqs \uplus
  \{\eqcon{\sterm}{s}{t}{\tterm}{\psi}\}, \hs)$ we apply \induct\ to obtain
  \( \mathcal{P}_1 = (\eqs \cup \{\eqcon{s}{s}{t}{t}{\psi}\}, \hs \cup \{s
  \approx t\ [\psi]\}) \).

  However, in most of the literature, the set \( \hs \) contains \emph{oriented}
  equations, and \( \rules \cup \hs \) is required to be terminating.  In our
  system we can model this by choosing one side of the equation where we will
  apply \expand, and orienting the equation in that direction.
  For example, if we choose the left-hand side \( s \), then we require
  \( \csucc{s}{t}{\psi} \).
  (In this setting, we let \( \succeq \) be the reflexive closure of
  \( \succ \) and impose \( \rw\,\subseteq\,\succ \).)
\item\label{expand:case} \case.
  In the \expand\ rule, the user provides a position $p$ in the chosen side of
  the equation (in our example: in $s$), such that the subterm $s|_p$ at this
  position has a form $\f \ s_1 \cdots s_n$ with $\f \in \Defined$, $n \ge
  \arity(\f)$, and such that $s_i$ is a semi-constructor term for all $1 \le i
  \le \arity(\f)$.  With these ingredients we define a cover set as follows:
  \[
  \coverset_{\texttt{EXP}}
  =
  \left\{ 
  (\delta, \varphi \delta)
  \middle|
  \begin{array}{l}
  \ell \to r \ [\varphi] \in \rules,\
  \delta = \text{mgu}(\f\ s_1 \cdots s_k, \ell), \\
  \delta(\Vars{\varphi} \cup \Vars{\psi}) \subseteq \Val \cup \Var
  \end{array}
  \right\}
  \]
  Due to the assumption of quasi-reductivity, this really is a cover set.
  We apply \case\ to $\mathcal{P}_1$, using $\coverset_{\texttt{EXP}}$ to obtain
  the following proof state
  \[
  \mathcal{P}_2
  =
  (\eqs \cup \{\eqcon{s\delta}{s\delta}{t\delta}{t\delta}{\psi\delta \wedge \varphi} \mid (\delta,\varphi) \in \coverset_{\texttt{EXP}}\}, \hs \cup \{s \approx t\ [\psi]\})
  \]
\item \simplify\ on each of the equations obtained in (2), using the
  corresponding rule $\ell \to r \ [\varphi] \in \rules$.  Due to the choice of
  cover set, the conditions of \simplify\ are satisfied.
  This yields \( \mathcal{P}_3 = (\eqs \cup \eqs^{\rw}_{\texttt{EXP}},
  \hs \cup \{s \approx t\ [\psi]\}) \) with \( \eqs^{\rw}_{\texttt{EXP}} = \)
  \[
  \left\{ 
  \eqcon{s\delta}{s[r \ s_{k+1} \cdots s_n]_p \delta}{t\delta}{t\delta}{(\psi \delta)
              \wedge
              (\varphi \delta)}\ 
  \middle|
  \begin{array}{l}
  \ell \to r \ [\varphi] \in \rules,\\
  \delta = \text{mgu}(\f\ s_1 \cdots s_k, \ell), \\
  \delta(\Vars{\varphi, \psi} \cup \Vars{\psi}) \subseteq \Val \cup \Var
  \end{array}
  \right\}
  \]
\end{enumerate}
Hence, we can still do the traditional \expand\ rule in our work by using
multiple steps, and improve on it by not requiring $\csucc{s}{t}{\psi}$.
Instead, a typically weaker requirement is imposed when we apply the induction
hypothesis $s \approx t\ [\psi]$ in a \hypothesis\ or \hdelete\ step.

\section{Conclusion \& Future Work}
\label{sec:conclusionFuture}
With the introduction of Bounded Rewriting Induction, we revised the existing
RI system for \lcstrss, making fundamental changes with the aim of reducing termination requirements. 
We replaced a well-founded order $\succ$ or terminating relation
$\to_{\rules \cup \hs}$ with a \emph{bounding pair} $(\succ, \succeq)$ allowing
for less strict ordering requirements with $\succeq$.  
We replaced equations by \emph{equation contexts}, which contain \emph{bounding terms}, keeping track of induction bounds efficiently. 
As a byproduct, we have obtained a more intuitive proof system with a higher degree of modularity,
which makes it easy to add new deduction rules to the system.

\paragraph{Easier induction proofs}
In Bounded RI, we deviate from our earlier work by not requiring the induction hypotheses themselves
to be oriented.  In particular, the \hdelete\ rule is beneficial, as it often allows us to complete
a proof with minimal termination requirements (e.g., requirements that are immediately satisfied if
$\arrz_\rules$ is included in $\succ$).  This allows us to entirely avoid challenges we encountered
in our previous work.  We provide two examples.
\begin{enumerate}[(1). ]
\item The paper~\cite{hag:kop:23} proves $\symb{sum2}\ x \approx \symb{sum3}\ x$ in the \lcstrs  
\[
\begin{aligned}
\symb{sum2}\ x & \to \symb{add}\ x\  (\symb{sum2}\ (x - \symb{1})) && [x > \symb{0}] 
&
\symb{sum3}\ x & \to \symb{v}\ x\ \symb{0}
&&
\\
\symb{sum2}\ x & \to \symb{return}\ \symb{0} && [x \le \symb{0}]
&
\symb{v}\ x\ a & \to \symb{v}\ (x - \symb{1})\ (a + x) && [x > \symb{0}]
\\
\symb{add}\ x\ (\symb{return}\ y)
&
\to
\symb{return}\ (x+y) 
&&&
\symb{v}\ x\ a & \to \symb{return}\ a && [x \le \symb{0}]
\end{aligned}
\] 
As explained in~\cite{hag:kop:23}, it is easy to find the lemma equation $\symb{add}\ x\  (\symb{v}\ y\ z) \approx \symb{v}\ y\ a\ [a = x+ z]$
that is needed for a rewriting induction proof to succeed.
In the RI system of~\cite{hag:kop:23}, however, this is not a suitable induction hypothesis, as
neither the rule $\symb{v}\ y\ a \to \symb{add}\ x\ (\symb{v}\ y\ z) \  [a = x+ z]$ nor
$\symb{add}\ x\  (\symb{v}\ y\ z) \to \symb{v}\ y\ a \  [a = x+ z]$ is terminating.
Further effort needs to be spent to find invariants $x>0$ and $z \ge 0$ to obtain a terminating induction rule. 
In Bounded RI this is not necessary: we can directly use the equation as induction hypothesis, since \hdelete\ imposes lower requirements. 
\item The paper~\cite{hag:kop:24} proves 
$
\sumfun \ f \ n
\approx
\fold \ \prefix{+}\ \symb{0} \ (\map \ f\ (\init \ n))
[n \ge \symb{0}]
$ in the \lcstrs 
\[
\begin{aligned}
&
\fold \ g \ v \ \nil
\to
v
&
&
\map \ f \ \nil
\to
\nil
\\
&
\fold \ g \ v \ (h : t)
\to
\fold
\
g \ (g \ v \ h) \ t
&
&
\map \ f \ (h : t)
\to
(f \ h)
:
\map
\
f \ t
\\
&
\init \ n
\to
\nil
\quad
[n < \symb{0}]
&
&
\init \ n
\to n : \init \ (n-\symb{1})
\quad
[n \ge \symb{0}]
\end{aligned}
\]
The proof requires the induction rule 
\(
  x + (\fold\ \prefix{+}\ y\ l) \arrz  \fold \ \prefix{+}\ z \ l\ [z = x + y]
\) which is non-standard, as it has a theory symbol $+$ as root symbol on the left. 
Termination \emph{can} be proven, but requires a very advanced method that is not easy to
use automatically.
In practice, we would like to avoid such induction rules as much as possible. 
Again, Bounded RI provides a solution, since the application of \hdelete\ with induction hypotheses \(
  x + (\fold\ \prefix{+}\ y\ l) \approx \fold \ \prefix{+}\ z \ l\ [z = x + y]
\) does not impose such a requirement.
\end{enumerate}

\paragraph{Ground confluence}
Since Bounded RI proves \emph{bounded} ground convertibility, rather than merely ground convertibility as in, e.g., \cite{hag:kop:24}, we can use the system to prove ground confluence:
we defined critical pairs for \lcstrss\ and showed that a terminating \lcstrs\ is ground confluent if all its critical peaks are bounded ground convertible.  
Finally, as a natural application of this result, we showed that for ground confluent \lcstrss, Bounded RI can be extended into a system for disproving inductive theorems. 

\medskip
We would like to conclude with some practically motivated directions for future research.

\paragraph{Global rewriting induction}
A property not discussed in this work, but a major topic in~\cite{aot:yam:chi:11, hag:kop:24}, is \emph{extensibility}.
This means that if an equation is an inductive theorem in $\alcstrs$, it remains an inductive theorem in any ``reasonable''
extension of $\alcstrs$ (intuitively, a reasonable extension is one that represents a real-world program, in which $\alcstrs$ is a separate module).
Such inductive theorems are called \emph{global inductive theorems}.
In terms of functional programming, if two functions are equivalent this should not
change when these functions are used inside a larger program.
Extensibility is a way to express that local reasoning extends globally, at least for such reasonable \lcstrss. 
From a perspective of software verification this is a desirable property: to prove properties about a small part of a larger system, we only need to consider the rules that are directly related.

\emph{Global Rewriting Induction}~\cite{hag:kop:24} is an extension of RI, designed to prove global inductive theorems. 
An interesting question is of course whether we can apply a similar construction to Bounded RI.
The major challenge in this will be to guarantee that a bounding pair $(\succ, \succeq)$ on $\Terms$ can be extended to a bounding pair $(\succ', \succeq')$ on $T(\Sigma', \Var')$, for any reasonable extension $\alcstrs'=(\rules', \Sigma')$ of $\alcstrs=(\rules, \Sigma)$. 
In particular, we will have to reconsider every deduction rule in \autoref{fig:boundedRIrules} that has an ordering requirement, including \hypothesis, \hdelete, \generalize\ and \alter.  

\paragraph{Ground convertibility modulo axioms}
In practice, we encounter inductive theorems which are only provable once we are allowed to utilize our knowledge about the instantiation of certain variables. 
For example, an equivalence $\symb{P}_1\ f \approx \symb{P}_2\ f$ may hold only
under the assumption that  $f :: \int \to \int \to \int$ is instantiated by a
commutative function.
So instead of considering convertibility under \emph{all} ground substitutions, we restrict to those ground substitutions that instantiate $f$ as a commutative function.  

Designing a corresponding RI system is non-trivial, because this new kind of equivalence is no longer based on the standard notion of inductive theorem, but on a restricted form of ground convertibility. 
This likely brings us into a setting similar to term rewriting modulo a set $\mathcal{A}$ of axioms.  
A natural first step is therefore to examine how the soundness proofs from \autoref{sec:proofs} can be adapted to the setting of rewriting modulo $\mathcal{A}$. 

\paragraph{Automating RI proofs}
Currently, any proof by Bounded RI is (almost) completely human driven. 
Ideally, we would like to incorporate some tactics in \texttt{Cora} for automatically proving inductive theorems with Bounded RI, building on what has been implemented in \texttt{Ctrl}~\cite{fuh:kop:nis:17} for first-order \lctrss.
As observed in \autoref{sec:implementation}, a major obstacle in this is finding
lemmas automatically. 
Here, existing generalization methods can be employed, but it remains a challenge to determine when and how to apply them effectively.
Of course, as program equivalence in general is undecidable, we cannot expect them to be exhaustive. 

\paragraph{Unbounded RI}
The definition of Bounded RI takes advantages of the restriction that all our
equation contexts are bounded (and remain so because our derivation rules have
the Preserving Bounds property).  However, this restriction is not fundamental:
as shown by the proofs in \autoref{sec:proofs}, what matters most is that the
right multiset ordering conditions are imposed.  To obtain these conditions we
currently use the property that equation contexts are bounded, but we could
alternatively require it directly in the derivation rules.  This would for
instance replace the \hypothesis\ rule by
\DEDUCRULE{(\eqs \uplus \{\eqconsim{\sterm}{C[\ell\delta]}{t}{\tterm}{\psi}\}, \hs)}
          {\ell \simeq r\ [\varphi]\in \hs \text{ and } \psi \models^\delta \varphi \text{ and} \\
            &\multiset{\sterm,\tterm} \succmul \multiset{\ell\delta,r\delta}\ [\psi]}
          {(\eqs \cup \{\eqcon{\sterm}{C[r \delta]}{t}{\tterm}{\psi}\}, \hs)}
And \induct\ would become:
\DEDUCRULE{(\eqs \uplus \{\eqcon{\sterm}{s}{t}{\tterm}{\psi}\}, \hs)}
          {\multiset{\sterm,\tterm} \succeqmul \multiset{s,t}\ [\psi]}
          {(\eqs \cup \{\eqcon{s}{s}{t}{t}{\psi}\}, \hs \cup \{s \approx t\ [\psi]\})}
On the upside, we could then weaken the restrictions for \alter/\generalize\ 
(since it is no longer necessary to preserve bounds).  We have checked all the
proofs for this altered system, and it is sufficient to derive ground
convertibility, though not \emph{bounded} ground convertiblity.

The advantage of such a change is that we can further weaken the ordering
requirements: instead of requiring that \( \rw \) is included in \( \succeq \)
it would suffice if \( s \succeq s\downarrow_\rules \) for all ground terms
\( s \).
However, a downside is that it is not obviously usable for proving ground
confluence.

The most important complication is that six different combinations of $\succ$
and $\succeq$ are admissible when realizing a multiset ordering.  This is not a
problem if the bounding pair \( (\succ,\succeq) \) is fixed in advance, but
\emph{is} challenging if the search for a bounding pair is done afterwards.
Naïvely exploring the entire search space -- i.e. considering all options at
every deduction step that imposes requirements on the ordering -- would lead to
exponential growth in the length of the deduction sequence.
From this perspective, Bounded RI can be seen as a heuristic that provides
guidance on which ordering requirements to impose.
In future work, it would be interesting to further explore Unbounded RI and
determine if there are strategies to take advantage of the greater generality,
and still find a bounding pair easily.

\newpage
\bibliographystyle{alphaurl}
\bibliography{references}
\end{document}